  \DeclareFontFamily{U}{mathx}{\hyphenchar\font45}
  \DeclareFontShape{U}{mathx}{m}{n}{
    <5> <6> <7> <8> <9> <10>
    <10.95> <12> <14.4> <17.28> <20.74> <24.88>
    mathx10
  }{}
  \DeclareSymbolFont{mathx}{U}{mathx}{m}{n}
  \DeclareMathAccent{\widecheck}     {0}{mathx}{"71}
\def\idty{{\leavevmode\rm 1\mkern -5.4mu I}} % identity operator
\def\id{{\rm id}}                            % ideal channel
\def\Rl{{\mathbb R}}\def\Cx{{\mathbb C}}     % ReaL CompleX
\def\Nl{{\mathbb N}}     % IntegeR, NaturaL
\def\ketbra #1#2{\vert #1\rangle \langle #2\vert}
\def\kettbra#1{\ketbra{#1}{#1}}
\def\tr{\mathop{\rm Tr}\nolimits}
\mathchardef\ree="023C \mathchardef\imm="023D  % former \Re and \Im
\def\BB{{\mathcal B}}
\def\HH{{\mathcal H}}
\def\Forall{\qquad \text{for all }}
\newcommand{\beq}{\begin{equation}}
\newcommand{\eeq}{\end{equation}}
\newcommand{\rem}[1]{}  % Befehl zum Auskommentieren ganzer Bloecke
\newtheorem{thm}{Theorem}[section]
\newtheorem{lem}[thm]{Lemma}
\newtheorem{prop}[thm]{Proposition}
\newtheorem{cor}[thm]{Corollary}
\newtheorem{defn}[thm]{Definition}
\newtheorem{remark}[thm]{Remark}
\def\dom{{\rm dom}\,}
\newcommand{\sw}{\mathfrak{S}}
\newcommand{\sfunc}[1]{\sw(\Rl^{#1})}
\newcommand{\sfu}{\sw(\phaseS)}
\newcommand{\sop}{\sw(\HH)}
\newcommand{\sopL}{\sw_L(\HH)}
\newcommand{\sopR}{\sw_R(\HH)}
\newcommand{\sopnot}{\sw_0(\HH)}
\newcommand{\ssec}[1]{\mathfrak{s}_{#1}}
\newcommand{\sopd}{\sw'(\HH)}
\newcommand{\sfud}{\sw'(\phaseS)}
\newcommand{\phaseS}{X}
\newcommand{\Htot}{H_{\rm tot}}
\newcommand{\indm}[1]{I_{#1}}
\newcommand{\hsbe}[1]{E_{#1}}
\newcommand{\schtn}[1]{\mathcal{T}_{#1}(\HH)}
\newcommand{\tc}{\mathcal{T}(\HH)}
\newcommand{\hs}{\mathcal{HS}(\HH)}
\newcommand{\bh}{\BB(\HH)}
\newcommand{\ch}{\mathcal{C}(\HH)}
\newcommand{\kernel}[1]{K^{#1}}
\newcommand{\matr}[1]{{#1}}
\newcommand{\wq}[1]{{\mathfrak{W}[#1]}}
\newcommand{\wf}[1]{{\mathfrak{W^{-1}}[#1]}}
\newcommand{\otimesA}{\otimes}
\newcommand{\otimesH}{\otimes_{\HH}}
\newcommand{\otimesS}{\otimes_{s}}
\newcommand{\SP}{\operatorname{span}}
\newcommand{\myrho}{T}
\begin{document}
\title{Schwartz operators}
\author{\large M. Keyl\\[-.2em]
{\normalsize TU M\"unchen, Fakult\"at Mathematik,}\\[-.2em] {\normalsize
  Boltzmannstr. 3, 85748 Garching, Germany}\\[-.2em] {\normalsize \texttt{Michael.Keyl@tum.de}}\\[.5em]
\large J. Kiukas\\[-.2em]
{\normalsize School of Math. Sci., Univ. Nottingham,}\\[-.2em] {\normalsize
  University Park, Nottingham, NG7 2RD, UK}\\[-.2em] {\normalsize \texttt{Jukka.Kiukas@nottingham.ac.uk}}\\[.5em]
\large R. F. Werner\\[-.2em]
{\normalsize Inst. Theor. Physik, Leibniz Univ.}\\[-.2em] {\normalsize
  Hannover, Appelstr. 2, 30167 Hannover, Germany}\\[-.2em] {\normalsize \texttt{Reinhard.Werner@itp.uni-hannover.de}}}
\date{\large \today}
\maketitle
%\tableofcontents
%  \cleardoublepage

\begin{abstract}
  In this paper we introduce Schwartz operators as a non-commutative analog of
  Schwartz functions and provide a detailed discussion of their
  properties. We equip them in particular with a number of different (but
  equivalent) families of seminorms which turns the space of Schwartz
  operators into a Frechet space. The study of the topological dual leads to
  non-commutative tempered distributions which are discussed in detail as well. We show
  in particular that the latter can be identified with a certain class of
  quadratic forms, therefore making operations like products with bounded (and
  also some unbounded) operators and quantum harmonic analysis available to
  objects which are otherwise too singular for being a Hilbert space
  operator. Finally we show how the new methods can be applied by studying
  operator moment problems and convergence properties of fluctuation operators.
\end{abstract}

\section{Introduction}
The theory of tempered distributions is used extensively in various areas of mathematical physics, in order to regularise singular objects, most notably "delta-functions" that often appear as a result of some convenient idealisation (e.g. plane wave, or point interaction). The well-known intuitive idea is to make sense of the functional
\begin{equation}\label{functional}
\phi(f)=\int f(x)\phi(x)dx, \quad f\in \mathcal S(\mathbb R^{2N})
\end{equation}
to cases where $\phi$ is no longer a function, by making use of the highly regular behaviour of the Schwartz functions $f\in S(\mathbb R^{2N})$. Considering applications to quantum theory, tempered distributions have proved a powerful tool in e.g. field theoretical settings, and quantisation, where the space of Schwartz functions typically appears as a dense subspace of the relevant Hilbert space, leading to rigged Hilbert space constructions \cite{antoine,hennings}. However, the approach based on the direct quantum analogy of \eqref{functional} seems to be completely missing in literature. In order to demonstrate this analogy, take $\mathbb R^{2N}$ to be the phase space (of e.g. a classical $N$-particle system), so that the associated quantum system is given by the standard representation of the Canonical Commutation Relations on the Hilbert space $\mathcal H=L^2(\mathbb R^N)$. In view of quantum-classical correspondence theory \cite{Folland,Werner}, it is clear that the proper analogue of $\phi(f)$ should be
$$
\Phi(S)={\rm tr}[S\Phi],\quad S\in \mathcal S(\mathcal H),
$$
with a suitable class $\mathcal{S(H)}$ of "very regular" \emph{operators} taking the role of Schwartz functions. As it stands, the trace makes sense e.g. if $\Phi$ is a bounded operator and $S$ a trace class operator on $\mathcal H$ (analogous to the above classical integral making sense for integrable $f$ and bounded $\phi$.) With $S$ sufficiently regular, we can relax the requirements for $\Phi$, so that the functional $\Phi(S)$ still gives a finite value.

It is intuitively clear that the appropriate class $\mathcal{S(H)}$ should be
those $S$ for which $Q_i^kP_j^lSP_{j'}^{k'}Q_{i'}^{l'}$ remains bounded for
arbitrary powers $k,l,k',l'\in \mathbb N$, where $Q_i$ and $P_j$,
$i,j=1,\ldots,N$ are the canonical coordinate operators with
$[Q_i,P_j]=i\delta_{ij}\idty$. In the present paper, we investigate this class
of \emph{Schwartz operators} in detail, showing that it indeed becomes a
Frechet space when equipped with the seminorms $S\mapsto
\|Q_i^kP_j^lSP_{j'}^{k'}Q_{i'}^{l'}\|_\infty$; cf. Section
\ref{sec:schwartz-operators}.

As the first application of the theory developed
up to that point we will discuss in Sect. \ref{sec:oper-moment-probl} an \emph{operator moment
problem} which can be regarded as the non-commutative analog of the Hamburger
moment problem which is well known in measure theory. More precisely, the main
question to be answered here is under which conditions a positive Schwartz
operator $\myrho$ is uniquely determined by its moments $\tr(Q^\alpha P_\beta
\myrho)$. In Sect. \ref{sec:temp-distr} we interpret the topological dual
$\mathcal S'(\mathcal H)$ of $\mathcal{S}(\mathcal{H})$ as the quantum
analogue of the space of tempered distributions. We then prove that the
Weyl transform maps $\mathcal S'(\mathcal H)$ bijectively onto $\mathcal S'(\mathbb R^{2N})$,
making the standard theory of distributions available for this quantum
setting. We develop basic harmonic analysis concepts (e.g. Fourier-Weyl
transform and convolutions) in this setting, also providing a natural formulation of the well-known Wigner-Weyl quantisation, which is known to exist as a map from $\mathcal S'(\mathbb R^{2N})$ to the space of certain quadratic forms (see e.g. \cite{Daub80}). Finally, in
Sect. \ref{sec:spectr-dens-distr} we consider physically motivating examples from
mean-field theory where this formalism has already proved to be useful
\cite{Fluct}.

\section{Preliminaries and notations}

\subsection{CCR and phase space correspondence theory}

We begin by recalling basic facts about the phase space formulation of quantum-classical correspondence. We fix the number of degrees of freedom to be $N$, and consider the \emph{phase space} $\phaseS:=\mathbb R^{2N}$ of position-momentum pairs $(q,p)$. It is equipped with the symplectic form $$\{(q,p),(q',p')\}:= q'\cdot p-q\cdot p',$$ and acts irreducibly on the associated quantum system via the standard representation of the CCR relations
$$
W(x)W(y) = e^{i\{x,y\}/2}W(x+y)
$$
on the (position) Hilbert space $\HH=L^2(\Rl^N,dq)$ given by the \emph{Weyl operators} $$W(x) := e^{\frac i2q\cdot p} e^{-iq\cdot P}e^{ip\cdot Q}, \qquad x=(q,p)\in \phaseS,$$
defined via the usual position and momentum operators $Q_i$ and $P_i$, $i=1,\ldots,N$. Recall that $Q_i:\dom (Q_i)\to L^2(\Rl^N)$ is the operator of multiplication by the coordinate $q_i$, and $P_i=-id/(dq_i):\dom(P_i)\to L^2(\Rl^N)$. The domain of $Q_j$ is given by $\dom(Q_i) := \{ \psi\in L^2(\Rl^N)\mid \int |q_i|^2|\psi(q)|^2 d^Nq<\infty\}$, and the domain $\dom(P_i)$ can be defined by its image under the unitary extension of the Fourier transform $F:\sfunc N \to \sfunc N$,
\begin{align*}
(F\psi)(p) = \frac{1}{\sqrt{(2\pi)^N}} \int_{\Rl^N} e^{-ip\cdot q} \psi(q)\, d^Nq.
\end{align*}
In particular, we have $P_i = F^*Q_i F$ holding for each $i$. We denote $Q:=(Q_1,\ldots, Q_N)$ and $P:=(P_1,\ldots,P_N)$, and e.g. $q\cdot P:=q_1P_1+\cdots + q_NP_N$.
Since $(e^{iq\cdot P}\psi)(x) = \psi(x+q)$ for $\psi\in \HH_N$, the Weyl operators act as
\begin{align}\label{Weylrule}
(W(q,p)\psi)(q') &= e^{-\frac i2q\cdot p} e^{ip\cdot q'}\psi(q'-q), & \psi&\in \HH.
\end{align}
We let $\bh$, $\hs$, and $\tc$ denote the set of bounded, Hilbert-Schmidt, and trace class operators on $\HH$, respectively.

In the phase space $\phaseS$ we use the measure $dx:=(2\pi)^{-N}dq dp$. With this choice of normalization,\footnote{Note that the choice of measure means, in particular, that $L^2(\phaseS)=L^2(\phaseS, dx)$. This distinguishes that space from $L^2(\Rl^{2N})=L^2(\Rl^{2N}, dqdp)$ which we are also using.} the \emph{symplectic Fourier transform} of an $f\in L^2(\phaseS)\cap L^1(\phaseS)$, defined as
\begin{align}\label{symplecticF}
&\widehat{f}:\phaseS\to \Cx,& \widehat{f}(x) &= \int e^{-i\{x,y\}} f(y) dy,
\end{align}
extends to a unitary operator on $L^2(\phaseS)$. Similarly, the \emph{Weyl transform} of an operator $T\in \tc$, defined via
\begin{align}\label{Weyltransform}
&\widehat{T}:\phaseS\to \Cx, & \widehat{T}(x) &= \tr [W(x)T],
\end{align}
extends to a unitary operator $\hs\to L^2(\phaseS)$. The symplectic Fourier
transform is its own inverse, and we reserve the symbol $\widecheck {f}$ for the \emph{inverse Weyl transform} of an $f\in L^2(\phaseS)$. This is explicitly given by
\begin{equation}\label{inverseWeyl}
\widecheck{f} = \int W(-y) f(y)dy,
\end{equation}
where the integral exists (e.g.) in the weak-* topology of $\bh$.

\emph{Convolutions} between trace class operators $T$ and functions $f\in L^1(\phaseS)$ are defined \cite{Werner} as follows:
\begin{align*}
f*T :=T*f &:= \int f(x) W(x) T W(x)^*dx, & (S*T)(x) &:= {\rm tr}[SW(x)T_-W(x)^*],
\end{align*}
where $T_-:=\Pi T\Pi$, and $\Pi$ is the parity operator. In general, for any integrable functions and trace class operators, we have $f*T\in \tc$, and $S*T\in L^1(\phaseS)$. Moreover, the convolutions are commutative and associative, and satisfy
\begin{align}\label{fourierconv}
\widehat{f*T} &= \widehat{f}\widehat{T}, & \widehat{S*T} &= \widehat{S}\widehat{T}.
\end{align}

Such maps provide correspondence \cite{Werner} of classical variables (functions on $\phaseS$) and quantum observables (operators on $\HH$).

\subsection{Schwartz functions}

We will start by reiterating the basic facts about Schwartz functions (see e.g. \cite{RSI}), in a way that emphasizes the parallels to Schwartz operators, which we define in the following section.

Given $n\in \mathbb N$, the Schwartz space is the set $\sfunc n$ of infinitely differentiable functions $\varphi:\Rl^n\to \Cx$, for which
\begin{equation}\label{basicseminorm}
\sup_{q\in \Rl^n}|q^{\alpha}(D^{\beta}\varphi)(x)| <\infty, \qquad \text{for all } \alpha,\beta\in \indm n,
\end{equation}
where $\indm n:=\{ \alpha=(\alpha_1,\ldots,\alpha_n)\mid \alpha_i\in \Nl\cup \{0\} \text{ for all } i=1,\ldots,n\}$
is the set of multi-indices, and
\begin{align}\label{xd}
q^\alpha &:=x_1^{\alpha_1}\cdots x_n^{\alpha_n}, &
D^\beta &:=\frac{\partial^{|\beta|}}{\partial q_1^{\beta_1}\cdots \partial q_n^{\beta_n}},
\end{align}
and $|\beta|:=\sum_{i=1}^n \beta_i$. The expressions in \eqref{basicseminorm} define a locally convex metrizable topology on $\sfunc n$, and it is a standard exercise to prove its completeness. There are many other natural choices for a family of seminorms inducing this topology.

In order to avoid confusion, we stress here that Schwartz functions appear in conceptually distinct roles in this paper: as elements of the Hilbert space $\HH$ (i.e. functions on the configuration space $\Rl^N$), as functions on the cartesian product $\Rl^{N}\times \Rl^N$ of two copies of the configuration space, and finally as functions on the phase space $\phaseS=\mathbb R^{2N}$.

\subsubsection{Schwartz functions on the configuration space}

We now consider the class $\sfunc N$, as a subspace of the (configuration) Hilbert space $L^2(\Rl^N, dq)$ where the Weyl representation acts. %In view of this identification, it will be convenient to use the seminorms associated to the identification of $\sfun
%In particular, it is enough to consider sequences rather than nets in the context of convergence. A sequence $(\varphi_n)$ converges in $\sfunc N$ to $\varphi\in \sfunc N$ if and only if $\|\varphi_n-\varphi\|_{\alpha,\beta,\infty}\longrightarrow 0$ for each $\alpha,\beta\in \indm N$. The metric space $\sfunc N$ is complete, that is, if $(\varphi_n)$ is a Cauchy sequence with respect to each seminorm $\|\cdot\|_{\alpha,\beta,\infty}$, then it converges in $\sfunc N$.
%In general, the domain of an unbounded operator $A$ is denoted by $\dom(A)$. For two operators $A$ and $B$ we have $\dom(AB) = \{\varphi\in \dom(B) \mid B\varphi \in \dom(A)\}$, and $\dom(A+B) =\dom(A)\cap \dom(B)$. The operators $AB$ and $A+B$ then act on these domains in the obvious way. Inductively, one defines any polynomial of arbitrary operators $A_1,\ldots, A_m$.
Since $\prod_{i=1}^N(1+q_i^2)^{-1}\in L^2(\Rl^N)\cap L^\infty(\Rl^N)$, it follows easily that $\sfunc N \subset L^2(\Rl^N)\cap L^\infty(\Rl^N)$, and the topology is induced by the natural seminorms
\begin{equation}\label{hilbertschwartznorms1}
\|\varphi\|_{\alpha, \beta} := \left(\int |q^\alpha (D^\beta \varphi)(q)|^2\, dq\right)^{\frac 12}=\|Q^\alpha P^\beta \varphi\|,
\end{equation}
where
\begin{align}\label{QP}
Q^\alpha &=Q_1^{\alpha_1}\cdots Q_N^{\alpha_N}, &
P^\beta &=P_1^{\beta_1}\cdots P_N^{\beta_N}.
\end{align}
This allows us to state the following characterisation of Schwartz functions, which does not a priori assume differentiability:
\begin{lem}\label{schwartzvector} Let $\varphi\in L^2(\Rl^N)$. Then $\varphi\in \sfunc N$ if and only if for each $\alpha,\beta\in \indm N$, the $\sfunc N$-continuous linear functional
$$\sfunc N \ni \psi\mapsto \langle \varphi |P^\beta Q^\alpha \psi \rangle\in \Cx$$
is Hilbert space bounded, that is,
\begin{equation}\label{sfhilbert}
\|\varphi\|_{\alpha, \beta} := \sup\{ |\langle P^\beta Q^\alpha \psi |\varphi\rangle| \mid \psi\in \sfunc N,\, \|\psi\|\leq 1\}<\infty.
\end{equation}
For $\varphi\in \sfunc N$ this expression in fact equals \eqref{hilbertschwartznorms1}.
\end{lem}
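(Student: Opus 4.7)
The plan is to split into the easy direction (if $\varphi\in\sfunc N$, the functional is bounded and its value equals \eqref{hilbertschwartznorms1}) and the harder converse. For the easy direction, I would observe that both $Q^\alpha$ and $P^\beta$ preserve $\sfunc N$, so $Q^\alpha P^\beta \varphi\in\sfunc N\subset L^2(\Rl^N)$, and a standard integration by parts on Schwartz vectors gives
$$
\langle P^\beta Q^\alpha \psi | \varphi\rangle=\langle \psi | Q^\alpha P^\beta \varphi\rangle,\qquad \psi\in\sfunc N.
$$
Cauchy--Schwarz then yields the upper bound $\|\varphi\|_{\alpha,\beta}\le \|Q^\alpha P^\beta\varphi\|$, and the matching lower bound follows by approximating the unit vector $Q^\alpha P^\beta\varphi/\|Q^\alpha P^\beta\varphi\|\in L^2(\Rl^N)$ from within the $L^2$-dense subspace $\sfunc N$. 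This simultaneously gives the finiteness asserted in \eqref{sfhilbert} and the equality with \eqref{hilbertschwartznorms1}.

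For the converse, suppose \eqref{sfhilbert} holds for every pair $\alpha,\beta\in\indm N$. Density of $\sfunc N$ in $L^2(\Rl^N)$ lets me extend $\psi\mapsto \langle P^\beta Q^\alpha \psi | \varphi\rangle$ to a bounded anti-linear functional on $L^2(\Rl^N)$, so by the Riesz representation theorem there exists $f_{\alpha,\beta}\in L^2(\Rl^N)$ with
$$
\langle P^\beta Q^\alpha \psi | \varphi\rangle=\langle \psi | f_{\alpha,\beta}\rangle\qquad \text{for all } \psi\in\sfunc N.
$$
Reading $\varphi$ as an element of $\sfdistr N$ via $L^2(\Rl^N)\subset \sfdistr N$, this identity asserts that the distributional composition $Q^\alpha P^\beta \varphi$ coincides with the $L^2$ function $f_{\alpha,\beta}$. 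Hence for every multi-index pair, $Q^\alpha P^\beta \varphi\in L^2(\Rl^N)$ in the distributional sense.

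The key remaining step --- which I expect to be the main obstacle --- is to deduce from this joint $L^2$ regularity that $\varphi\in\sfunc N$ in the sense of the original Schwartz definition \eqref{basicseminorm}. I would argue in two stages. Taking $\alpha=0$, every distributional derivative $P^\beta\varphi$ lies in $L^2(\Rl^N)$, so iterated Sobolev embedding (or equivalently the Fourier-side characterisation, using that the derivatives of $F\varphi$ are multiplied by polynomials) provides a $C^\infty$ representative of $\varphi$, and the classical products $q^\alpha (D^\beta\varphi)(q)$ then agree with their distributional counterparts and lie in $L^2(\Rl^N)$ for all $\alpha,\beta$. To upgrade these $L^2$ bounds to the $L^\infty$ bounds \eqref{basicseminorm}, I would apply Sobolev embedding once more to each $g_{\alpha,\beta}:=q^\alpha D^\beta\varphi$: the partial derivatives $D^\gamma g_{\alpha,\beta}$ are finite linear combinations of functions $q^{\alpha'}D^{\beta'}\varphi$ with $|\alpha'|\le|\alpha|$ and $|\beta'|\le|\beta|+|\gamma|$, all in $L^2$ by hypothesis, so the embedding $H^{|\gamma|}(\Rl^N)\hookrightarrow L^\infty(\Rl^N)$ for $|\gamma|>N/2$ forces $g_{\alpha,\beta}\in L^\infty(\Rl^N)$. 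This shows $\varphi\in\sfunc N$, and the equality of the expression in \eqref{sfhilbert} with \eqref{hilbertschwartznorms1} for the resulting Schwartz vector is then exactly the forward direction.
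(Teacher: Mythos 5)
Your proof is correct, and its overall skeleton matches the paper's: the easy direction via moving $P^\beta Q^\alpha$ onto $\varphi$ plus Cauchy--Schwarz and density, and the converse by representing the bounded functional by an $L^2$ element identified with $q^\alpha D^\beta\varphi$. Where you diverge is in the mechanism and in the level of detail of the last step. The paper makes the identification operator-theoretically: boundedness of $\psi\mapsto\langle P^\beta\psi|\varphi\rangle$ on the core $\sfunc N$ plus selfadjointness of $P^\beta$ puts $\varphi\in\dom(P^\beta)$, then the same argument with $Q^\alpha$ gives $\varphi\in\dom(Q^\alpha P^\beta)$ with $Q^\alpha P^\beta\varphi=i^{|\beta|}q^\alpha D^\beta\varphi$, after which membership in $\sfunc N$ is asserted (the paper leans on the previously stated fact that the $L^2$ seminorms \eqref{hilbertschwartznorms1} characterise the Schwartz class, and only remarks that domain membership ``implies differentiability to all orders''). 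You instead use Riesz representation and read the identity distributionally, and then you actually supply the step the paper glosses over: the Leibniz expansion of $D^\gamma(q^\alpha D^\beta\varphi)$ into terms $q^{\alpha'}D^{\beta'}\varphi\in L^2$ together with the embedding $H^{|\gamma|}\hookrightarrow L^\infty$ for $|\gamma|>N/2$, which upgrades the $L^2$ bounds to the sup bounds \eqref{basicseminorm}. The paper's route buys brevity and stays entirely within the selfadjoint-operator/core language that the rest of the paper reuses (notably in the definition of Schwartz operators); yours is more self-contained and makes the $L^2$-to-$L^\infty$ regularity argument explicit. One small bookkeeping point: when passing from $\langle P^\beta Q^\alpha\psi|\varphi\rangle=\langle\psi|f_{\alpha,\beta}\rangle$ to ``$Q^\alpha P^\beta\varphi=f_{\alpha,\beta}$ distributionally,'' note that this uses the formal selfadjointness of $P^\beta$ and $Q^\alpha$ on $\sfunc N$ (the conjugations and the factors of $-i$ in $P$ must cancel correctly); this is routine but worth a sentence.
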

\begin{proof} If $\varphi\in L^2(\Rl^N)$, and $\|\varphi\|_{0,\beta}<\infty$ for all $\beta$, then it follows from the selfadjointness of $P^\beta$, and the fact that $\sfunc N$ is a core for $P^\beta$, that $\varphi$ is in the domain of each $P^\beta$, which in particular implies differentiability to all orders. If $\|\varphi\|_{\alpha,\beta}<\infty$ for all $\alpha,\beta$, then the same argument applied to $Q^\alpha$ establishes that $P^\beta\varphi$ is in the domain of $Q^\alpha$, which just means that $\varphi$ is in the domain of $Q^\alpha P^\beta$, and we have $Q^\alpha P^\beta \varphi = i^{|\beta|} q^\alpha D^\beta\varphi$. Hence, the definition \eqref{sfhilbert} is the same as \eqref{hilbertschwartznorms1}, and we have $\varphi\in \sfunc N$. On the other hand, if $\varphi\in \sfunc N$, then $\varphi$ is in the domain of each $Q^\alpha P^\beta$, which implies that $\|\varphi\|_{\alpha,\beta}<\infty$ in \eqref{sfhilbert}.
\end{proof}

\begin{remark}\rm
Of course, the condition \eqref{sfhilbert} just means that $\|\varphi\|_{\alpha, \beta} = \|Q^\alpha P^\beta \varphi\|<\infty$, with the understanding that this means $\varphi$ belonging to the domain of $Q^\beta P^\alpha$. It is however useful to express this explicitly in terms of the subspace $\sfunc N$ itself; in fact, this becomes necessary when we define Schwartz operators in the next section.
\end{remark}

In order to introduce other useful families of seminorms for $\sfunc N$, we define the selfadjoint operators
\begin{align*}
H_i&= \frac 12 (Q_i^2+P_i^2), & \Htot &= \sum_{i=1}^N H_i.
\end{align*}
These have a (common) complete set of eigenfunctions $|\alpha\rangle\in L^2(\Rl^N)$, $\alpha\in \indm N$, given by $|\alpha\rangle(q)= h_{\alpha_1}(q_1)\cdots h_{\alpha_N}(q_N)$, where $h_n$, $n=0,1,\ldots$ are the Hermite functions. The eigenvalues are $H_i|\alpha\rangle = (\alpha_i+\frac 12)|\alpha\rangle$, and $H|\alpha\rangle = \sum_{i=1}^N (\alpha_i+\frac 12) |\alpha\rangle$. We call $\{ |\alpha\rangle\mid \alpha\in \indm N\}$ the \emph{number basis} of $L^2(\Rl^N)$.

For any $N$-tuple $A:=(A_1,\ldots,A_N)$ of operators $A_i$, each acting only on the tensor factor corresponding to the $i$:th coordinate, we let $A^\alpha := A_1^{\alpha_1}\cdots A_N^{\alpha_N}$ for $\alpha\in \indm N$. In particular, $H=(H_1,\ldots,H_N)$, with $H^\alpha= H_1^{\alpha_1}\cdots H_N^{\alpha_N}$, while $\Htot^m$, $m\in \Nl$ is the usual power of the single operator $\Htot$.

The following families of seminorms all induce the topology of $\sfunc N$.
\begin{enumerate}
\item %Using the commutation relations $[Q_i,P_j] = i\delta_{ij}\idty$, which hold for vectors in $\sfunc N$, we see that a larger family of seminorms for the topology of $\sfunc N$ is obtained by taking
$\|\varphi\|_f :=\| f(Q,P)\varphi\|$, where $f:\Rl^N\times \Rl^N\to \Cx$ goes through all polynomials. \item Letting $A_i = \frac{1}{\sqrt 2}(Q_i+iP_i)$ denote the annihilation operators, we get the seminorms $\varphi\mapsto \|f(A,A^*)\varphi\|$, with $f$ as above.
\item $\varphi\mapsto \|H^{\alpha} \varphi\|$, where $\alpha\in \indm N$. Indeed, if $f(A,A^*)$ is any monomial of $A_i,A_i^*$, $i=1,\ldots,N$, with $m_i$ factors equal to either $A_i$ or $A_i^*$ for each $i$, then $\|f(A,A^*)\varphi\|\leq \left\|\prod_{i=1}^N (H_i-\frac 12 +m_i)^{[m_i/2]}\varphi\right\|$, where $[m_i/2]$ is an integer $\geq m/2$.
\item $\varphi\mapsto \|\Htot^n\varphi\|$, where $n=0,1,2,\ldots$. This induces the topology,
because $\|H^{\alpha}\varphi\|\leq \|\Htot^{|\alpha|} \varphi\|$ for every multiindex $\alpha$.
\end{enumerate}
Finally, we introduce an equivalent family of seminorms based on the expansion of $\varphi\in L^2(\Rl^N)$ in the orthonormal basis $\{|\alpha\rangle\}$ of $L^2(\Rl^N)$. In fact, for each $\alpha\in \indm N$ we define
\begin{equation}\label{Nrep}
\|(a_\beta)\|_{\alpha}:= \sqrt{\sum_{\beta\in \indm N} (\beta+1)^{2\alpha} |\langle \beta|\varphi\rangle|^2}, \qquad (a_\beta)\in \ell^2(\indm N),
\end{equation}
where $(\beta+1)^{2\alpha}:=\prod_{i=1}^N (\beta_i+1)^{2\alpha_i}$. We then define the space of \emph{Schwartz (multi)sequences}
\begin{equation}\label{seqSchwartz}
\ssec{N}:= \{(a_\beta) \in \ell^2(\indm N)\mid  \|(a_\beta)\|_{\alpha}<\infty\},
\end{equation}
and equip this with the topology given by the seminorms
$\|\cdot\|_\alpha$. The map
\begin{displaymath}
\varphi\mapsto (a^\varphi_\alpha),\quad \text{where}\quad a_\alpha^\varphi = \langle \alpha |\varphi\rangle,
\end{displaymath}
can easily shown to be a continuous bijection between $\sfunc N$ and $\ssec{N}$.
%Now for any $\varphi\in \sfunc N$ we have
%$$\|(H+\tfrac 12)^\alpha\varphi\|^2 = \sum_{\beta} |\langle \beta |(H+\tfrac 12)^\alpha\varphi\rangle|^2 = \sum_\beta(\beta+1)^{2\alpha}|\langle \beta|\varphi\rangle|^2=\|(\langle \alpha |\varphi\rangle)\|_{\alpha}^2,$$
%where $H+\tfrac 12:=(H_1+\frac 12,\ldots,H_N+\frac 12)$. Hence the topologies are indeed equivalent.
This representation of the Schwartz functions is called \emph{$N$-representation}. The following result will be useful:
\begin{prop}\label{sfdensity}
\begin{itemize}
\item[(a)] Let $\varphi\in L^2(\Rl^N)$. Then $\varphi\in \sfunc N$ if and only if the basis expansion $\varphi = \sum_\alpha  \langle \alpha |\varphi\rangle |\alpha\rangle$ converges in the topology of $\sfunc N$.
\item[(b)] ${\rm span} \{ |\alpha\rangle \mid \alpha\in \indm N\}$ is dense in $\sfunc N$ in the topology of $\sfunc N$.
\end{itemize}
\end{prop}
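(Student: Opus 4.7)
The plan is to reduce both parts of the proposition to the $N$-representation recalled just before the statement, where the topology of $\sfunc N$ is realized by the explicit sequence-space seminorms $\|\cdot\|_\alpha$ defined in \eqref{Nrep}, and the basis expansion $\sum_\beta \langle \beta|\varphi\rangle |\beta\rangle$ becomes the coordinate expansion of a sequence. Under the continuous bijection $\varphi \mapsto (\langle \beta|\varphi\rangle)$ between $\sfunc N$ and $\ssec N$, both claims reduce to standard facts about convergence in weighted $\ell^2$-type spaces.

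For (a), the ``if'' direction follows because the $\sfunc N$-topology dominates the $L^2$-topology: the seminorm $\|\cdot\|_0$ from \eqref{Nrep} is already the $L^2$ norm. So if the series converges in $\sfunc N$, the limit agrees with the $L^2$-limit of the same partial sums, which is $\varphi$, and therefore $\varphi\in\sfunc N$.

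For the converse, fix $\varphi\in\sfunc N$ and a multiindex $\alpha$, and consider the partial sums $\varphi_S := \sum_{\beta\in S}\langle\beta|\varphi\rangle|\beta\rangle$ indexed by finite sets $S\subset\indm N$. The key observation is that in the $N$-representation
\begin{equation*}
\|\varphi - \varphi_S\|_\alpha^2 = \sum_{\beta\notin S} (\beta+1)^{2\alpha} |\langle\beta|\varphi\rangle|^2,
\end{equation*}
which is the tail of the convergent series defining $\|\varphi\|_\alpha^2 < \infty$; hence it tends to zero as $S$ exhausts $\indm N$. Convergence in each seminorm is convergence in the topology of $\sfunc N$, which is what we wanted.

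Part (b) is then an immediate corollary of (a): every $\varphi\in\sfunc N$ is the $\sfunc N$-limit of the finite linear combinations $\varphi_S$, each of which lies in $\SP\{|\alpha\rangle\mid \alpha\in \indm N\}$. I do not anticipate any real obstacle; the only point requiring care is the identification of the $\sfunc N$-limit with $\varphi$ itself in the ``if'' direction, which is handled by the dominance of the Schwartz topology over the $L^2$ topology.
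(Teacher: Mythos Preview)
Your proof is correct and follows essentially the same approach as the paper's: both use the $N$-representation seminorms $\|\cdot\|_\alpha$, handle the ``if'' direction of (a) via the fact that $\|\cdot\|_0$ is the $L^2$-norm, prove the ``only if'' direction by showing the tail $\sum_{\beta\notin S}(\beta+1)^{2\alpha}|\langle\beta|\varphi\rangle|^2$ of a convergent series vanishes, and deduce (b) directly from (a).
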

\begin{proof} If $\varphi\in L^2(\Rl^N)$, and $\sum_\alpha  \langle \alpha |\varphi\rangle |\alpha\rangle$ converges in $\sfunc N$, then it converges also in $L^2(\Rl^N)$ because $\|\cdot \|_{0}=\|\cdot\|$. Let $\varphi\in \sfunc N$. For each finite set $F\subset \indm N$ define the truncation
$$\varphi_F := \sum_{\beta\in F} \langle \beta |\varphi\rangle |\beta\rangle,$$ which is in the linear span of the number basis. We need to show that the net $\{\varphi_F\mid F\subset \indm N, \, F \text{ finite }\}$ converges to $\varphi$ in the topology of $\sfunc N$. To that end, fix $\alpha\in \indm N$. Since the series defining $\|(a^\varphi_\beta)\|_{\alpha}$ consists of positive terms, we have
$$\|(a^\varphi_\beta)\|_{\alpha}^2 = \sup_{F\subset \indm N, F\, {\rm finite}} \,\sum_{\beta\in F}  (\beta+1)^{2\alpha} |a^\varphi_\beta|^2<\infty.$$
Hence, given $\epsilon>0$ there exists an finite set $F_0\subset \indm N$, such that
$$\sum_{\beta\in \indm N\setminus F_0}  (\beta+1)^{2\alpha} |a^\varphi_\beta|^2<\epsilon.$$
Now if $F\supset F_0$, we have $\|a^{\varphi_F}_\beta-a^\varphi_\beta\|_{\alpha}^2<\epsilon$, because $\langle \beta |\varphi_F-\varphi \rangle=0$ for $\beta \in F$, and $\langle \beta |\varphi_F-\varphi \rangle=\langle\beta|\varphi\rangle$ for $\beta \in \indm N\setminus F$. Hence, the expansion converges. We have now proved (a) and (b).
\end{proof}

\subsubsection{Tensor products of Schwartz functions}
The above discussion of Schwartz functions was done with the identification $\sfunc N\subset \HH=L^2(\Rl^N, dq)$. Now kernel operators on $\HH$ are specified by functions on $\Rl^{2N}$ (understood as cartesian product of two copies of the configuration space), and as it will turn out later, kernels of Schwartz operators are Schwartz functions. For this reason we now briefly review tensor products of Schwartz spaces.

In order to conveniently denote the multi-indices, we map $\indm N\times \indm N$ bijectively onto $\indm{2N}$ via
\begin{equation}\label{alphavee}
(\alpha,\beta)\mapsto \alpha\vee\beta:=(\alpha_1,\ldots,\alpha_N,\beta_1,\ldots,\beta_N).
\end{equation}
This notation will be used frequently in the rest of the paper. The coordinate products $q^{\alpha\vee\beta}$ and derivatives $D^{\alpha\vee\beta}$ will be understood accordingly. Since the \emph{Hilbert space tensor product} $\HH\otimesH \HH$ is just $L^2(\Rl^{2N}, dqdq')$ via the usual identification, we can also use the corresponding notations $Q^{\alpha\vee\beta}$ and $P^{\alpha\vee\beta}$ for operators.

Concerning now the tensor products of Schwartz spaces, we have the following result.
\begin{prop}
\begin{itemize}
\item[(a)] We inject
\begin{align*}
\sfunc N \otimesA \sfunc N &\subset \sfunc {2N}, & \varphi\otimesA \psi\mapsto ((q,q')\mapsto \varphi(q)\psi(q')),
\end{align*}
where algebraic tensor product is meant. The set $\sfunc N \otimesA \sfunc N$ is dense in the topology of $\sfunc {2N}$.
\item[(b)] If $A,B:\sfunc N\to \sfunc N$ are continuous and linear, there exists a unique continuous linear map $A\otimesS B:\sfunc {2N}\to \sfunc {2N}$, such that
$$
(A\otimesS B)(\varphi\otimes \psi) = A\varphi\otimesA B\psi\Forall \varphi,\psi\in \sfunc N.
$$
\item[(c)] If $A,B$ are bounded operators on $L^2(\Rl^N)$, which keep $\sfunc N$ invariant, and their restrictions to $\sfunc N$ are continuous in the topology of $\sfunc N$, we have
$$(A\otimesS B)(\psi)  = (A\otimesH B)(\psi)\Forall \psi\in \sfunc{2N}.$$
\end{itemize}
\end{prop}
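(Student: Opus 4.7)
For part (a), the inclusion $\sfunc N\otimesA \sfunc N\subset \sfunc {2N}$ is immediate from the defining seminorms \eqref{basicseminorm}: under the identification $(\alpha,\beta)\mapsto \alpha\vee\beta$ one has $q^{\alpha\vee\beta} D^{\gamma\vee\delta}(\varphi\otimes\psi)(q,q') = [q^\alpha(D^\gamma\varphi)(q)]\,[(q')^\beta(D^\delta\psi)(q')]$, so every basic seminorm of $\varphi\otimes\psi$ factors into a product of seminorms of $\varphi$ and $\psi$. For density, I would invoke Proposition~\ref{sfdensity}(b) applied to $\sfunc {2N}$: the span of the number basis $\{|\gamma\rangle\mid \gamma\in \indm{2N}\}$ is dense in $\sfunc {2N}$, and under the canonical identification $L^2(\Rl^{2N}) = L^2(\Rl^N)\otimesH L^2(\Rl^N)$ each basis vector satisfies $|\alpha\vee\beta\rangle = |\alpha\rangle\otimesA|\beta\rangle$, so these vectors already lie in $\sfunc N\otimesA \sfunc N$.

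For part (b), uniqueness is immediate from (a). For existence I would use the $N$-representation: expand $\psi\in \sfunc {2N}$ as $\psi = \sum_{\alpha,\beta} c_{\alpha\beta}\,|\alpha\rangle\otimesA|\beta\rangle$ with $c_{\alpha\beta}=\langle \alpha\vee\beta|\psi\rangle$, and set
\[
(A\otimesS B)\psi := \lim_F \sum_{(\alpha,\beta)\in F} c_{\alpha\beta}\, A|\alpha\rangle\otimesA B|\beta\rangle,
\]
the limit being taken in $\sfunc {2N}$ along the net of finite subsets $F\subset \indm N\times \indm N$. The heart of the argument is a continuity estimate of the form $\|\sum_{(\alpha,\beta)\in F} c_{\alpha\beta}\,A|\alpha\rangle\otimesA B|\beta\rangle\|_{\mu\vee\nu}\leq C\,\|\psi\|_{(\mu'+e)\vee(\nu'+e)}$ uniform in $F$, with $e:=(1,\ldots,1)\in \indm N$. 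This is obtained by combining: (i) continuity of $A,B$ in the $N$-representation seminorms \eqref{Nrep}, yielding multi-indices $\mu',\nu'$ and constants with $\|A|\alpha\rangle\|_\mu\leq C_\mu(\alpha+1)^{\mu'}$ and $\|B|\beta\rangle\|_\nu\leq C'_\nu(\beta+1)^{\nu'}$; (ii) the product identity $\|\xi\otimesA\eta\|_{\mu\vee\nu} = \|\xi\|_\mu\|\eta\|_\nu$, immediate from \eqref{Nrep}; and (iii) a Cauchy--Schwarz step with the summable weight $(\alpha+1)^{-2e}(\beta+1)^{-2e}$ converting the resulting $\ell^1$-sum into an $\ell^2$-sum. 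The analogous tail estimate, together with Proposition~\ref{sfdensity}(a) for $\psi$, shows that the net is $\sfunc {2N}$-Cauchy; completeness then delivers $(A\otimesS B)\psi$, and the uniform estimate gives $\sfunc {2N}$-continuity.

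For part (c), by construction $A\otimesS B$ and $A\otimesH B$ agree on $\sfunc N\otimesA\sfunc N$. Given $\psi\in \sfunc {2N}$, pick via (a) a sequence $\psi_n\in \sfunc N\otimesA\sfunc N$ with $\psi_n\to \psi$ in $\sfunc {2N}$; by (b) we then have $(A\otimesS B)\psi_n\to (A\otimesS B)\psi$ in $\sfunc {2N}$, and hence in $L^2(\Rl^{2N})$ since $\|\cdot\|_0$ equals the $L^2$-norm, while boundedness of $A\otimesH B$ on $L^2(\Rl^{2N})$ gives $(A\otimesH B)\psi_n\to(A\otimesH B)\psi$ in $L^2(\Rl^{2N})$; since the two sequences coincide, the limits agree. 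The main technical obstacle is the uniform estimate in (b): the $N$-representation of $\sfunc {2N}$ does not factor over the two $\sfunc N$-factors at the level of a single seminorm, and the auxiliary summable weight is what is needed to convert the pointwise polynomial bounds on $\|A|\alpha\rangle\|_\mu$ and $\|B|\beta\rangle\|_\nu$ into a global $\sfunc {2N}$-seminorm bound.
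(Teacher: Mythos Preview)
Your proof is correct and follows the same overall strategy as the paper: for (a) you use that the Hermite basis vectors $|\alpha\vee\beta\rangle=|\alpha\rangle\otimes|\beta\rangle$ lie in $\sfunc N\otimesA\sfunc N$ and are dense by Proposition~\ref{sfdensity}; for (c) you argue by density and comparison of the two topologies, exactly as the paper does.

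The only substantive difference is in (b). The paper dispatches this in one line (``Part (b) is now clear because $\sfunc{2N}$ is complete''), implicitly relying on the fact that $\sfunc N$ is nuclear, so that $\sfunc{2N}$ is the completed projective tensor product $\sfunc N\hat\otimes_\pi\sfunc N$ and continuous maps automatically tensor. You instead supply a direct, self-contained argument in the $N$-representation: the pointwise bounds $\|A|\alpha\rangle\|_\mu\leq C_\mu(\alpha+1)^{\mu'}$, the factorisation $\|\xi\otimes\eta\|_{\mu\vee\nu}=\|\xi\|_\mu\|\eta\|_\nu$, and the Cauchy--Schwarz trick with the summable weight $(\alpha+1)^{-2e}(\beta+1)^{-2e}$ combine to give the required uniform seminorm estimate. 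This is more elementary than invoking nuclearity and makes explicit the mechanism by which the two one-variable continuity bounds merge into a two-variable one. One small point you leave implicit: that your limit actually equals $A\varphi\otimes B\psi$ on elementary tensors follows by restricting to product index sets $F=F_1\times F_2$ (which are cofinal) and using continuity of $A,B$ together with the seminorm factorisation; this is routine but worth stating.
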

\begin{proof}
Concerning (a), we have, in particular, $|\alpha\rangle\otimes |\beta\rangle= |\alpha\vee\beta\rangle$ (see \eqref{alphavee}). Since the basis expansion of an arbitrary $\varphi\in \sfunc {2N}$ converges in $\sfunc{2N}$ by Proposition \ref{sfdensity} (a), it follows that $\varphi$ is in the $\sfunc{2N}$-closure of $\sfunc N\otimesA \sfunc N$. This proves (a). Part (b) is now clear because $\sfunc {2N}$ is complete. Part (c) follows from (b) and Prop. \ref{sfdensity} (a), because the number basis expansion for any $\psi\in \sfunc{2N}$ converges in both topologies, and any sequence converging in $\sfunc {2N}$ also converges in $L^2(\Rl^{2N})$.
\end{proof}

\subsubsection{Schwartz functions on the phase space}

Recall that the phase space is $\phaseS=\Rl^N\times\Rl^N$, where now the first factor is the configuration space, and the second is the momentum space; this "physical" instance of $\Rl^{2N}$ should be kept conceptually separate from the "double configuration space" introduced above for technical reasons. In particular, for $x\in \phaseS$ we have $$x^{\alpha\vee\beta}=q^\alpha p^\beta=q_1^{\alpha_1}\cdots q_N^{\alpha_N}p_1^{\beta_1}\cdots p_N^{\beta_N},$$ and similarly for the derivatives, defining the Schwartz class $\sfu$. From the point of view of the present paper, this constitutes the classical analogy of the class of Schwartz operators to be defined in Sect. \ref{sec:schwartz-operators}. We will then make this analogy more concrete in terms of Fourier-Weyl correspondence and Wigner quantisation.
%It is often convenient to consider vectors with only a finite number nonzero terms in their expansion in the number basis. We denote
%\begin{equation}\label{basisspan}
%\fbs N:= {\rm span} \{|\alpha\rangle \mid \alpha\in \indm N\},
%\end{equation}
%(where algebraic span is meant.) Clearly, $\fbs N\subset \sfunc N$.

\subsection{Hilbert-Schmidt operators, their kernels, and the unitary Weyl transform}

As we have seen, the Hilbert space seminorms make the Schwartz functions easier to work with. Anticipating the introduction of Schwartz operators, it is not difficult to guess that an essential role is played by the class of Hilbert-Schmidt operators. Indeed, this is a Hilbert space with the scalar product $\langle T| S\rangle_{\hs} := \tr [T^*S]$, and the Weyl transform is well-known to be a one-to-one map between $\hs$ and $L^2(\phaseS)$.

Before looking at this class, we introduce the more general \emph{Schatten classes}. We first fix some general notations. Let $\HH$ be an arbitrary (complex separable) Hilbert space, let $\bh$ denote the set of bounded operators on $\HH$, and $\|\cdot \|$ the operator norm on $\bh$. The set of compact operators is denoted by $\ch$, and for $p\in 1,2,\ldots$ the corresponding Schatten class is denoted by $\schtn p$. It is a Banach space of those $T\in \bh$ with finite $p$-norm $\|T\|_p := (\tr |T|^p)^{1/p}$, where $|T| := \sqrt{T^*T}$. In particular, $\hs:=\schtn 2$ and $\tc:=\schtn 1$ are the Hilbert-Schmidt and trace classes, respectively. We have $\|T\| \leq \|T\|_p \leq \|T\|_{p-1}\leq \cdots \|T\|_2\leq \|T\|_1$, and the inclusions $\tc \subset \hs \subset \schtn 3 \cdots \schtn p \subset \ch\subset \bh$.

Each compact operator $T\in \ch$ has the (operator norm convergent) \emph{singular value decomposition}
\begin{equation}\label{singularvalue}
T=\sum_{k=1}^\infty c_k |\varphi_k\rangle\langle \psi_k|, \quad c_n\geq 0, \, \lim_n c_n = 0,\quad (\varphi_k) \text{ and }(\psi_k) \text{ orthonormal bases},
\end{equation}
where the $c_k\neq 0$ are \emph{singular values} of $T$, i.e. the eigenvalues of the positive compact operator $|T|=\sqrt{T^*T}$. In particular, $\|T\|_p= \left(\sum_k c_k^p\right)^{1/p}$, so $T\in \schtn p$ if and only if $\sum_k c_k^p<\infty$, and in that case, the series in \eqref{singularvalue} converges in $\|\cdot\|_p$. Clearly, a compact operator $T$ is of \emph{finite rank}, i.e. has finite-dimensional range, if and only if $\{ k\mid c_k\neq 0\}$ is a finite set. Consequently, the set of finite rank operators is dense in each $\schtn p$. We actually need a slightly stronger result:

\begin{lem}\label{schwartzdensity} The set of operators of the form
\begin{equation}\label{finiteschwartz}
T=\sum_{k=1}^m |\varphi_k\rangle \langle \psi_k|, \qquad m\in \Nl, \, \, \psi_k,\varphi_k\in \sfunc N \,\, \text{for all } k=1,\ldots,m
\end{equation}
is $\|\cdot\|_p$-dense in each $\schtn p$.
\end{lem}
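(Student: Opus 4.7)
The natural route is to combine the singular value decomposition with the density of $\sfunc N$ in $L^2(\Rl^N)$. Given $T\in\schtn p$ and $\veps>0$, the plan is first to truncate the SVD \eqref{singularvalue} at some large index $m$ so that the finite-rank tail
\[
T_m := \sum_{k=1}^m c_k\, \ketbra{\varphi_k}{\psi_k}
\]
satisfies $\|T-T_m\|_p<\veps/2$; this is possible because $\sum_k c_k^p<\infty$ and the truncated series converges to $T$ in $\|\cdot\|_p$. Thus it suffices to approximate $T_m$ by an operator of the form \eqref{finiteschwartz} within $\veps/2$ in $\|\cdot\|_p$.

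The key observation is the rank-one identity $\|\ketbra{\varphi}{\psi}\|_p = \|\varphi\|\,\|\psi\|$, valid for every $p\geq 1$ (the operator $\ketbra{\varphi}{\psi}$ has a single nonzero singular value $\|\varphi\|\,\|\psi\|$). From this, the estimate
\[
\bigl\|\ketbra{\varphi}{\psi} - \ketbra{\varphi'}{\psi'}\bigr\|_p
\;\leq\; \|\varphi-\varphi'\|\,\|\psi\| + \|\varphi'\|\,\|\psi-\psi'\|
\]
follows by inserting $\pm\ketbra{\varphi'}{\psi}$ and using the triangle inequality. Now since $\sfunc N$ is dense in $L^2(\Rl^N)$ (for instance, the number basis $\{\ket\alpha\}$ lies in $\sfunc N$ and is an orthonormal basis), we may choose for each $k=1,\ldots,m$ some $\varphi_k',\psi_k'\in\sfunc N$ with $\|c_k\varphi_k-\varphi_k'\|$ and $\|\psi_k-\psi_k'\|$ so small that the above estimate, summed over $k$, stays below $\veps/2$. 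The operator $\sum_{k=1}^m \ketbra{\varphi_k'}{\psi_k'}$ is then of the required form and approximates $T_m$ within $\veps/2$.

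No real obstacle is anticipated: the only slightly delicate point is checking the rank-one $\|\cdot\|_p$ identity, but this is immediate from the SVD of a rank-one operator. The rest is a standard two-step density argument, and the result holds uniformly in $p\in\{1,2,\ldots\}$ because the rank-one $p$-norm does not depend on $p$.
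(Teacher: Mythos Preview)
Your argument is correct and is essentially the same as the paper's: truncate the singular value decomposition to a finite-rank operator, then approximate each rank-one summand using the density of $\sfunc N$ in $\HH$ together with the estimate $\|\ketbra{\varphi}{\psi}-\ketbra{\varphi'}{\psi'}\|_p\leq \|\varphi-\varphi'\|\|\psi\|+\|\varphi'\|\|\psi-\psi'\|$. The only cosmetic difference is that the paper bounds the rank-one difference in $\|\cdot\|_1$ and then invokes $\|\cdot\|_p\leq\|\cdot\|_1$, whereas you compute the rank-one $p$-norm directly.
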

\begin{proof}
Let $T\in \schtn p$ and $\epsilon>0$. Since \eqref{singularvalue} converges in the $\|\cdot \|_p$-norm, we can find a finite rank operator $T'$ with $\|T-T'\|_p<\epsilon$. On the other hand, $\sfunc N$ is dense in $\HH$, and for arbitrary $\varphi,\psi,\varphi',\psi'\in \HH$ we have
$$\||\varphi\rangle \langle \psi |-|\varphi'\rangle \langle \psi' |\|_1\leq \||\varphi-\varphi'\rangle\langle \psi|\|_1+\||\varphi'\rangle\langle \psi-\psi'|\|_1\leq \|\varphi-\varphi'\|\|\psi\|+\|\psi-\psi'\|\|\varphi'\|.$$
Thus we can find an operator $T''$ of the form \eqref{finiteschwartz} with $\|T'-T''\|_p\leq \|T'-T''\|_1<\epsilon$. This completes the proof.
\end{proof}

We now review the special properties of the Hilbert-Schmidt operators on $L^2(\Rl^N)$, which are conveniently characterised by their kernel functions, as elements of $L^2(\Rl^{2N})$. For this purpose we recall that the basic unitary equivalence
\begin{align}\label{tensorhilbert}
L^2(\Rl^N)\otimesH L^2(\Rl^N)&\simeq L^2(\Rl^{2N}), & \varphi\otimesH \psi&\mapsto ((q,q')\mapsto \varphi(q)\psi(q')).
\end{align}
is conveniently characterized via the identification $$|\alpha\rangle \otimesH |\alpha'\rangle =|\alpha\vee\alpha'\rangle, \Forall \alpha,\alpha'\in \indm N$$
of the associated number bases. Moreover, we can also construct an orthonormal basis $\{\hsbe\alpha\mid \alpha\in \indm{2N}\}$ of the Hilbert space $\mathcal{HS}(L^2(\Rl^N))$ by putting $$\hsbe{\alpha\vee\alpha'} :=|\alpha\rangle\langle \alpha'|, \Forall \alpha,\alpha'\in \indm N.$$
The suggestive identification
$$
\hsbe{\alpha}\mapsto |\alpha\rangle, \quad \alpha\in \indm{2N}
$$
plays an important technical role in this paper, the starting point given by the following well-known Lemma, which summarises the properties of the Hilbert-Schmidt operators and their kernels and matrix representations.

\begin{lem}[Hilbert-Schmidt operators]\label{hslemma}
\begin{itemize}
\item[(a)] Let $T$ be a bounded operator on $L^2(\Rl^N)$. Then the following conditions are equivalent:
\begin{itemize}
\item[(i)] $T$ is Hilbert-Schmidt.
\item[(ii)] There exists a unique $\kernel{T}\in L^2(\Rl^{2N})$, called the \emph{kernel} of $T$, such that
$$
\langle \psi |T\varphi\rangle = \int_{\Rl^{2N}} \overline{\psi(q)}\kernel{T}(q,q')\varphi(q') \, dq' dq' =\langle \psi\otimes C\varphi| \kernel{T}\rangle \Forall \psi,\varphi\in \HH,
$$
where $C:L^2(\Rl^N)\to L^2(\Rl^N)$ is the complex conjugation map.
\item[(iii)] We have $(\matr{T}_{\alpha})\in \ell^2(\indm{2N})$, where
$$
\matr{T}_{\alpha\vee\alpha'}:=\langle \alpha|T|\alpha'\rangle \Forall \alpha,\alpha'\in \indm N.
$$
$(\matr{T}_{\alpha})_{\alpha\in \indm{2N}}$ is called the \emph{matrix} of $T$.
\end{itemize}
\item[(b)]
For a $T\in \hs$, we have %$\|T\|_2 = \|K^T\|_2 = \|(a^T_\alpha)\|_2$, and
$
\langle \alpha|T|\alpha'\rangle = \langle \hsbe{\alpha\vee\alpha'} |T\rangle_{\hs} = \matr{T}_{\alpha\vee\alpha'}=\langle \alpha\vee\alpha' |\kernel{T}\rangle.
$
The maps
\begin{align*}
\hs \ni T&\mapsto \kernel{T}\in L^2(\Rl^{2N}), & \hs\ni T&\mapsto (\matr{T}_\alpha)\in \ell^2(\indm{2N})
\end{align*}
are unitary; in particular, $\hs \simeq L^2(\Rl^{2N}) \simeq \ell^2(\indm{2N})$.
\end{itemize}
\end{lem}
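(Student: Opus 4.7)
The plan is to establish all three equivalences through a common intermediary: the expansion with respect to the number basis of $L^2(\Rl^N)$ and of $L^2(\Rl^{2N})$.

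First I would verify (i) $\Leftrightarrow$ (iii) directly from the Hilbert--Schmidt scalar product. A short calculation gives
\[
\langle \hsbe{\alpha\vee\alpha'}|T\rangle_{\hs} = \tr[|\alpha'\rangle\langle\alpha| T] = \langle\alpha|T|\alpha'\rangle = \matr{T}_{\alpha\vee\alpha'},
\]
so Parseval's identity for the orthonormal basis $\{\hsbe\alpha\mid \alpha\in \indm{2N}\}$ of $\hs$ yields $\|T\|_{\hs}^2 = \sum_{\alpha\in \indm{2N}} |\matr{T}_\alpha|^2$. Hence $T\in \hs$ if and only if $(\matr{T}_\alpha)\in \ell^2(\indm{2N})$, and the map $T\mapsto (\matr{T}_\alpha)$ is an isometric isomorphism $\hs\to \ell^2(\indm{2N})$.

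Next I would produce the kernel using the tensor product identification $|\alpha\rangle\otimesH|\alpha'\rangle = |\alpha\vee\alpha'\rangle$. Given $T\in \hs$ with matrix $(\matr{T}_\alpha)$, define
\[
\kernel{T} := \sum_{\alpha\in \indm{2N}} \matr{T}_\alpha\, |\alpha\rangle \in L^2(\Rl^{2N}),
\]
which converges because $\{|\alpha\rangle\mid \alpha\in \indm{2N}\}$ is an orthonormal basis of $L^2(\Rl^{2N})$; the assignment $T\mapsto \kernel{T}$ is then automatically unitary by Parseval. To verify the integral formula in (ii), I would expand $\psi,\varphi\in \HH$ in the number basis of $L^2(\Rl^N)$ and compute
\[
\langle \psi|T\varphi\rangle = \sum_{\alpha,\alpha'\in \indm N} \overline{\langle\alpha|\psi\rangle}\, \matr{T}_{\alpha\vee\alpha'}\,\langle\alpha'|\varphi\rangle.
\]
Since the Hermite functions are real, one has $\overline{\langle\alpha'|C\varphi\rangle}=\langle\alpha'|\varphi\rangle$, and evaluating $\langle \psi\otimes C\varphi|\kernel{T}\rangle$ term by term produces the same sum. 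Uniqueness of $\kernel{T}$ follows because vectors of the form $\psi\otimes C\varphi$ span a total set in $L^2(\Rl^{2N})$. This establishes (i) $\Rightarrow$ (ii); conversely, (ii) $\Rightarrow$ (iii) is immediate, since the formula gives $\langle\alpha|T|\alpha'\rangle = \langle\alpha\vee\alpha'|\kernel{T}\rangle$, and these coefficients lie in $\ell^2$ by Parseval applied to $\kernel{T}$.

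Part (b) is then merely a collection of the identities already proved, and the unitarity of both maps is contained in the Parseval computations above. The only slightly delicate step is the handling of the complex conjugation $C$ in the kernel formula; the reality of the Hermite functions collapses this into the clean statement that the matrix elements $\matr{T}_\alpha$ coincide with the Fourier coefficients of $\kernel{T}$ with respect to the number basis of $L^2(\Rl^{2N})$, which is what makes all three representations simultaneously unitary.
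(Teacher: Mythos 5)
Your argument is correct, but it takes a genuinely different route from the paper's. The paper proves (i)$\Rightarrow$(ii) via the singular value decomposition, writing $\kernel{T}(q,q')=\sum_k c_k\varphi_k(q)\overline{\psi_k(q')}$, proves (ii)$\Rightarrow$(iii) by a Fubini argument applied directly to the kernel, and proves (iii)$\Rightarrow$(i) by reconstructing the operator from its matrix and computing $\sum_\alpha\|T|\alpha\rangle\|^2$; part (b) is then settled by checking that both maps send the basis $\hsbe{\beta}$ to orthonormal bases of the respective spaces. You instead funnel everything through the number basis: (i)$\Leftrightarrow$(iii) from $\|T\|_2^2=\sum_\alpha|\matr{T}_\alpha|^2$, the kernel defined directly by its Hermite expansion $\kernel{T}=\sum_\alpha\matr{T}_\alpha|\alpha\rangle$, the pairing identity verified term by term using the reality of the Hermite functions (so that $\langle C\varphi|\alpha'\rangle=\langle\alpha'|\varphi\rangle$), and uniqueness from totality of the vectors $\psi\otimes C\varphi$. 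This organization is legitimate and arguably more economical: it avoids both the SVD and the Fubini step, and it makes part (b) an immediate by-product of the same Parseval computations. What the paper's route buys in exchange is a basis-free description of $\kernel{T}$ in terms of the singular vectors, and a (ii)$\Rightarrow$(iii) argument that works directly with an arbitrary kernel function rather than through its coefficient sequence. One small point of rigor in your version: for the direction ``matrix in $\ell^2$ implies $T\in\hs$'' you cannot literally invoke Parseval in $\hs$, since that presupposes $T\in\hs$; what you need is the standard fact that for any bounded $T$ one has $\tr(T^*T)=\sum_\alpha\|T|\alpha\rangle\|^2=\sum_{\alpha\vee\alpha'}|\matr{T}_{\alpha\vee\alpha'}|^2$ as an identity in $[0,\infty]$, which then gives the equivalence; with that phrasing your argument is complete.
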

\begin{proof} Part (a): If (i) holds, then $T$ has the singular value decomposition \eqref{singularvalue} with $\sum_k c_k^2 = \|T\|_2^2<\infty$. Now $\{ (q,q')\mapsto \varphi_k(q)\overline{\psi_k(q')}\}$ is an orthonormal set in $L^2(\Rl^N\times \Rl^N,dqdq')$, so $\kernel{T}(q,q') := \sum_k c_k \varphi_k(q)\overline{\psi_k(q')}\in L^2(\Rl^{2N})$ because $\sum_k c_k^2<\infty$, the series converging in the norm. The scalar product of $\kernel{T}$ against a vector $\left((q,q')\mapsto \psi(q)\overline{\varphi(q')}\right)\in L^2(\Rl^N\times \Rl^N)$ is then
$$
\langle \psi |T\varphi \rangle = \langle \psi\otimes C\varphi| \kernel{T}\rangle= \sum_k c_k \langle \psi |\varphi_k\rangle \langle \psi_k|\varphi\rangle = \int_{\Rl^N\times \Rl^N} \overline{\psi(q)}\kernel{T}(q,q')\varphi(q') \, dq dq';
$$
hence (ii) holds. Assuming (ii), the function $q\mapsto \int \kernel{T}(q,q')|\alpha\rangle(q')\, dq'$ is in $L^2(\Rl^N)$ because $\kernel T$ is square integrable. The scalar product of this vector against $|\alpha'\rangle$ is just the integral expression given in (ii), so that
$$
\sum_{\alpha'} |\matr{T}_{\alpha\vee\alpha'}|^2 = \int dq \left| \int \kernel T(q,q')|\alpha\rangle(q')\, dq'\right|^2
$$
On the other hand, $\sum_\alpha \left|\int \kernel{T}(q,q')|\alpha\rangle(q')\, dq'\right|^2=\int |\kernel T(q,q')|^2dq'$, which is finite for almost all $q$ by Fubini's theorem. Hence, $\sum_{\alpha\vee\alpha'} |\matr{T}_{\alpha\vee\alpha'}|^2=\|\kernel T\|_2^2<\infty$, so (iii) holds. Assuming (iii), we take $\varphi\in L^2(\Rl^N)$, and define a multisequence $([T\varphi]_\alpha)_{\alpha\in\indm N}$ via
$$[T\varphi]_\alpha := \sum_{\alpha'} \matr{T}_{\alpha\vee\alpha'}\langle \varphi |\alpha'\rangle,$$
where the series converges by Cauchy-Schwartz inequality and (iii). In fact we have
$|[T\varphi]_\alpha|^2\leq $ $\sum_{\alpha'}|\matr{T}_{\alpha\vee\alpha'}|^2\|\varphi\|^2.$
From this we see that
$\sum_{\alpha} |[T\varphi]_\alpha|^2\leq \sum_{\alpha\vee\alpha'} |\matr{T}_{\alpha\vee\alpha'}|^2 \, \|\varphi\|^2$, so $T\varphi:=\sum_\alpha [T\varphi]_\alpha |\alpha\rangle \in L^2(\Rl^N)$, and this defines a bounded operator $T:L^2(\Rl^N)\to L^2(\Rl^N)$. Now $T|\alpha\rangle =  \sum_{\alpha'} \matr{T}_{\alpha\vee\alpha'}|\alpha\rangle$ by definition, so
$$\| T\|_2^2 = \tr T^*T = \sum_\alpha \|T|\alpha\rangle\|^2 = \sum_{\alpha\vee\alpha'} |\matr{T}_{\alpha\vee\alpha'}|^2<\infty.$$
This proves that $T$ is Hilbert-Schmidt, i.e. (i) holds. We have now proved (a).
%For any $\psi,\varphi\in \sfunc N$ we have
%$$
%\langle \psi |T\varphi\rangle = \sum_{\alpha\vee \alpha'}‚\Phi_{\alpha\vee\alpha'},
%$$
%where in the last step, we have used the continuity of $\Phi$ in the topology of $\sfunc N$, as well as Prop. \ref{sfdensity} (a). Hence (i) is true, and we have proved (a).

Concerning (b), %it clearly follows from the above proof that $\|T\|^2_2 = \|K^T\|_2^2 = \|a^T\|_2$.
the maps $T\mapsto \kernel T$ and $T\mapsto (\matr{T}_\alpha)$ are obviously linear. Since
$$\langle \alpha |\kernel{\hsbe{\beta}} \rangle = a^{\hsbe{\beta}}_{\alpha}= \langle \hsbe{\alpha}|\hsbe{\beta}\rangle = \delta_{\alpha,\beta}$$
for each $\alpha,\beta\in \indm{2N}$, it follows that $T\mapsto \kernel{T}$ maps $\hsbe{\beta}\mapsto |\beta\rangle$, and $T\mapsto (\matr{T}_\alpha)$ maps $\hsbe{\beta}\mapsto (\alpha\mapsto \delta_{\alpha,\beta})$ so these transform orthonormal bases of the respective Hilbert spaces bijectively onto each other. Hence they are unitary.
\end{proof}

It is a well-known fact that the unitary extension of the Weyl transform to the Hilbert-Schmidt class can be explicitly written in terms of kernels using Lemma \ref{hslemma}. Since we need an explicit formula for this correspondence, we formulate this fact as a second lemma. \emph{We emphasise the following convention}: while the kernel function of an operator is naturally a function on $\Rl^{N}\times\Rl^N=\Rl^{2N}$ (the cartesian product of two copies of the configuration space), its Weyl transform is a function on the phase space $X$ (product of configuration space and the momentum space), with the renormalised measure $dx$. This conceptual difference is reflected in the following notations.

\begin{lem}\label{wtunitary}
The trace class Weyl transform $T\mapsto \widehat{T}$ extends uniquely to the unitary operator
\begin{align*}
&\hs \to L^2(\phaseS),& T&\mapsto \widehat{T} :=U(\idty\otimesH F^*)V\kernel{T},
\end{align*}
where unitary operators $U$ and $V$ are given by
\begin{align*}
U&:L^2(\Rl^{2N})\to L^2(\phaseS), & (U\psi)(q,p) &:= (2\pi)^{N/2}e^{-\frac i2 q\cdot p} \psi(q,p)\\
V&:L^2(\Rl^{2N})\to L^2(\Rl^{2N}),  & (V\psi)(q,q') &:= \psi(q'-q,q').
\end{align*}
\end{lem}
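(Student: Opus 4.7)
The plan is to verify the claimed explicit formula $\widehat{T} = U(\idty \otimes_{\HH} F^*) V \kernel{T}$ directly on a dense class of operators where both sides can be evaluated, and then deduce that it gives the unique unitary extension of the trace class Weyl transform.

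First, I would establish unitarity of each building block. The map $V$ on $L^2(\Rl^{2N})$ is induced by the linear substitution $(q,q') \mapsto (q'-q, q')$, and a direct check shows $V^2 = \idty$, so $V$ is an involutive unitary. The factor $\idty \otimes_{\HH} F^*$ is unitary on $L^2(\Rl^{2N})$ since $F$ is unitary on $L^2(\Rl^N)$. Finally, $U:L^2(\Rl^{2N}) \to L^2(\phaseS)$ is an isometry because the pointwise factor $e^{-iq\cdot p/2}$ has modulus one and the prefactor $(2\pi)^{N/2}$ compensates exactly for the normalisation $dx = (2\pi)^{-N} dq\,dp$; it is manifestly invertible. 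Composing these with the kernel unitary $T\mapsto \kernel T$ from Lemma \ref{hslemma}(b) produces a unitary
\[
\Phi: \hs \to L^2(\phaseS), \qquad \Phi(T) := U(\idty \otimes_{\HH} F^*) V\kernel T.
\]

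Next, I would show that $\Phi(T)=\widehat T$ on a $\|\cdot\|_1$-dense subset of $\tc$, chosen so that all manipulations are justified by Schwartz decay. The natural choice is rank-one operators $T = |\varphi\rangle\langle\psi|$ with $\varphi, \psi \in \sfunc N$, for which $\kernel T(q,q') = \varphi(q)\overline{\psi(q')}$. Using \eqref{Weylrule} one computes, with Fubini fully justified,
\[
\tr[W(x) T] = \langle \psi \mid W(x)\varphi\rangle = e^{-\frac i2 q\cdot p}\int_{\Rl^N} e^{ip\cdot q'}\,\varphi(q'-q)\overline{\psi(q')}\,dq'.
\]
The integrand is exactly $(V\kernel T)(q,q')$, so with the Fourier convention of the paper the integral equals $(2\pi)^{N/2}[(\idty \otimes_{\HH} F^*)(V\kernel T)](q,p)$; multiplying by the remaining phase produces precisely $[U(\idty \otimes_{\HH} F^*)V\kernel T](q,p)$, which is $\Phi(T)(x)$. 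By Lemma \ref{schwartzdensity}, finite linear combinations of such rank-ones are $\|\cdot\|_1$-dense in $\tc$. For a general $T \in \tc$ approximated by such $T_n$, one has $\Phi(T_n) \to \Phi(T)$ in $L^2(\phaseS)$ (since $\|\cdot\|_2 \leq \|\cdot\|_1$ and $\Phi$ is an $L^2$-isometry on $\hs$), while $\tr[W(\cdot)T_n] \to \tr[W(\cdot)T]$ uniformly (since $|\tr[W(x)(T_n-T)]| \leq \|T_n-T\|_1$); passing to a subsequence where both limits coincide almost everywhere identifies $\widehat T$ with $\Phi(T)$ in $L^2(\phaseS)$. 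Uniqueness of the extension is then automatic from continuity and density of $\tc$ in $\hs$.

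The main substantive step is the bookkeeping in the rank-one calculation: the phase in \eqref{Weylrule}, the $1/\sqrt{(2\pi)^N}$ in $F$, and the measure normalisation $(2\pi)^{-N}$ defining $L^2(\phaseS)$ must fit together to produce exactly the prefactor $(2\pi)^{N/2}$ and phase $e^{-iq\cdot p/2}$ in $U$. Beyond this, the argument uses only standard density/continuity reasoning and Lemmas \ref{schwartzdensity} and \ref{hslemma}.
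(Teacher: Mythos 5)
Your proposal is correct and follows essentially the same route as the paper's proof: verify the formula on rank-one operators with Schwartz vectors via the explicit Weyl-operator computation, use the trace-norm density of such operators (Lemma \ref{schwartzdensity}) together with the uniform bound $|\tr[W(x)(T_n-T)]|\leq\|T_n-T\|_1$ and the $L^2$-isometry property to identify $\widehat{T}$ with $U(\idty\otimesH F^*)V\kernel{T}$ on all of $\tc$, and conclude uniqueness of the unitary extension by density of $\tc$ in $\hs$. The only addition is your explicit check that $U$, $V$, and $\idty\otimesH F^*$ are unitary, which the paper treats as evident.
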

\begin{proof} Suppose first that $T=| \psi\rangle \langle \varphi |$, where $\varphi,\psi\in \sfu$. Then $\kernel{T}(q,q')=\psi(q)\overline{\varphi(q')}$, so using \eqref{Weylrule}, we get
\begin{align*}
\widehat{T}(q,p) &=\langle \varphi |W(q,p)\psi\rangle=\int_{\Rl^N} dq' \,\,\overline{\varphi(q')} e^{-\frac i2 q\cdot p} e^{ip\cdot q'}\psi(q'-q) \\
&=(2\pi)^{N/2}e^{-\frac i2 q\cdot p}\frac{1}{\sqrt {(2\pi)^N}}\int_{\Rl^N} dq' e^{ip\cdot q'} (V\kernel{T})(q,q')\\
&= (U(\idty\otimesH F^*)V\kernel{T})(q,p).
\end{align*}
By linearity, we conclude that $\widehat{T} = U(\idty\otimesH F^*)V\kernel{T}$ for each $T$ of the form \eqref{finiteschwartz}. Moreover, since $U(\idty\otimesH F^*)V$ is a unitary operator $L^2(\Rl^{2N})\to L^2(\phaseS)$, we have $\|\widehat{T}\|_2 =\|\kernel{T}\|_2 = \|T\|_2<\infty$ by Lemma \ref{hslemma}, so $\widehat{T}\in L^2(\phaseS)$. Let now $T\in \tc$ be arbitrary. By Lemma \ref{schwartzdensity}, we can find a sequence $T_n$ of operators of the form \eqref{finiteschwartz}, with $\lim_n \|T-T_n\|_1=0$. This implies that $\lim_{n} \widehat{T_n}(x)= \widehat{T}(x)$ for each $x$ because $|\tr[TW(x)]|\leq \|T\|_1$. On the other hand, the sequence $(\widehat{T_n})$ converges in $L^2(\phaseS)$ to the limit $U(\idty\otimesH F^*)V\kernel{T}$, because
$$\|\widehat{T_n}-U(\idty\otimes_{\HH} F^*)V\kernel{T}\|_2= \|U(\idty\otimesH F^*)V\kernel{T_n-T}\|_2 = \|T_n-T\|_2\leq \|T_n-T\|_1.$$ This implies that $\widehat{T} = U(\idty\otimesH F)V\kernel{T}$. The proof is complete.
\end{proof}

\section{Schwartz operators}
\label{sec:schwartz-operators}

The idea behind Schwartz operators is the desire to find bounded operators $T$
such that expectation values of the form $\tr(P_LTP_R)$ with two polynomials
$P_{L/R}$ in $P$ and $Q$ are well defined and finite. Intuitively, this is a quantum analogue of the requirement of $\int_{\phaseS} P_L(x) f(x) dx$ being well-defined and finite for a Schwartz function $f$. One should pay attention to how non-commutativity of the quantum case makes it necessary to have \emph{two} polynomials instead of one. One might also wonder how to "quantise" the additional requirement involving the \emph{derivatives} of $f$. As we will see below, derivatives in the quantum case are just polynomial multiplications as well; hence the existence of their expectation values do not require additional conditions.

Since $P_{L/R}$ are unbounded operators, the product $P_LTP_R$ might suffer from domain
problems which we need to address. Accordingly, it is appropriate to follow Lemma
\ref{schwartzvector} and define Schwarz operators in terms of quadratic
forms. We will discuss this approach in detail in Subsection
\ref{sec:definition}. Other topics to be presented in this section include:

\begin{itemize}
\item \textbf{Topological properties.} In Subsection \ref{sec:topol-basic-prop} we will
  show that Schwartz operators form a Frechet space; thereby establishing the analogy to the Schwartz functions from the topological perspective. We will give several
  equivalent families of seminorms.
\item
  \textbf{Alternative characterizations.} Apart from the definition we will discuss
  several alternative characterizations of Schwartz operators: In terms of their
  matrices and their Hilbert-Schmidt kernels, shown to be Schwartz functions
  (Subsect. \ref{sec:kern-matr-repr}), and their ranges (Subsect.
  \ref{sec:range-schw-oper}).
\item
\textbf{Harmonic analysis.} Weyl transforms, convolutions, and Wigner functions of Schwartz operators
  (Subsect. \ref{sec:basic-quant-harm}).
\item
  \textbf{Applications of the range theorem.} The results about the range of a Schwartz
  operator (Thm. \ref{SopRange}) have several interesting and useful
  applications, including the square root of a Schwartz operator,
  regularizations of certain unbounded operators, and a ``cycle under the
  trace'' formula; cf. Subsect. \ref{sec:appl-range-theor}
\item
  \textbf{Operations on Schwartz operators} are finally considered in Subsection
  \ref{sec:oper-schw-oper}. This includes products with polynomially bounded
  operators and differentials.
\end{itemize}

\subsection{Definition}
\label{sec:definition}

It is clear how to formulate the definition of Schwartz operators in analogy with the characterization Lemma \ref{schwartzvector} of Schwartz vectors: for each $\alpha,\beta,\alpha',\beta'\in \indm N$, and $T\in \bh$, the sesquilinear form
\begin{align*}
\sfunc N\times \sfunc N\ni (\psi,\varphi)\mapsto \langle P^\beta Q^\alpha \psi |T  P^{\beta'}Q^{\alpha'}\varphi\rangle\in \Cx
\end{align*}
is clearly well defined and jointly continuous. (Continuity is apparent from the seminorms $\|\cdot\|_{\alpha,\beta}$ of Lemma \ref{schwartzvector}). If this form is Hilbert space bounded, i.e.
\begin{align*}
\|T\|_{\alpha,\alpha',\beta,\beta'} := \sup \left\{\,|\langle  P^\beta Q^\alpha\psi \,|\,T P^{\beta'}Q^{\alpha'}\varphi\rangle|\,\,\big|\,\, \psi,\varphi\in \sfunc N, \, \, \|\psi\|\leq 1, \, \|\varphi\|\leq 1\right\}<\infty,
\end{align*}
then there exists a unique $T_{\alpha,\alpha',\beta,\beta'}\in \bh$ such that
\begin{equation}
\langle P^\beta Q^\alpha \psi |T P^{\beta'} Q^{\alpha'}\varphi\rangle= \langle \psi | T_{\alpha,\alpha',\beta,\beta'}\varphi\rangle,\qquad \psi,\varphi\in \sfunc N.
\end{equation}

\begin{defn} Let $T\in \bh$. If $\|T\|_{\alpha,\alpha',\beta,\beta'}<\infty$ for all $\alpha,\alpha',\beta,\beta'\in \indm N$, we say that $T$ is a \emph{Schwartz operator}. The set of Schwartz operators is denoted by $\sop$.
\end{defn}

It is important to stress that $\|T\|_{\alpha,\alpha',\beta,\beta'}$ is defined as the Hilbert norm of a quadratic form on the subspace of Schwartz functions, instead of just formally setting
\begin{equation}\label{formalnorm}
\|T\|_{\alpha,\alpha',\beta,\beta'} = \|Q^\alpha P^\beta T P^{\beta'}Q^{\alpha'}\|.
\end{equation}
For one thing, the operator $Q^\alpha P^\beta T P^{\beta'}Q^{\alpha'}$ is
\emph{a priori} not necessarily well-defined on any dense domain, because $T$
could map outside the domain of $P^\beta$. (This fact will become even more
relevant when we consider \emph{distributions} in Sect. \ref{sec:temp-distr}) Therefore, the formal definition can easily lead to confusion when trying to determine if $\|T\|_{\alpha,\alpha',\beta,\beta'}<\infty$. On the other hand, \emph{if} $T\in \sop$, then the use of \eqref{formalnorm} is permitted, because in that case $Q^\alpha P^\beta T P^{\beta'}Q^{\alpha'}$ \emph{is} well-defined on $\sfunc N$, and $T_{\alpha,\alpha',\beta,\beta'}$ is its bounded extension. This is a consequence of the following simple lemma:

\begin{lem}\label{schwartzinv} If $\|T\|_{\alpha,0,\beta,0}<\infty$ for all $\alpha,\beta\in \indm N$ then ${\rm Ran} (T)\subset \sfunc N$, and $T:\HH\to \sfunc N$ is continuous.
\end{lem}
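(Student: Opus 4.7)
The plan is to use Lemma \ref{schwartzvector} to characterize membership in $\sfunc N$ by Hilbert space boundedness of the relevant sesquilinear forms, first showing that $T$ maps $\sfunc N$ continuously into itself with each seminorm controlled by the $\HH$-norm, and then extending to all of $\HH$ by density of $\sfunc N$ in $\HH$ together with completeness of the Fr\'echet space $\sfunc N$.

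For the first step, fix $\alpha,\beta\in \indm N$ and $\varphi\in \sfunc N$. The hypothesis $\|T\|_{\alpha,0,\beta,0}<\infty$ says exactly that
\[
|\langle P^\beta Q^\alpha \psi\,|\,T\varphi\rangle|\leq \|T\|_{\alpha,0,\beta,0}\,\|\psi\|\,\|\varphi\|\Forall \psi\in \sfunc N.
\]
Taking the supremum over $\psi\in \sfunc N$ with $\|\psi\|\leq 1$ and invoking Lemma \ref{schwartzvector} (whose characterization requires precisely this kind of boundedness for every $\alpha,\beta$) yields $T\varphi\in \sfunc N$ together with the bound $\|T\varphi\|_{\alpha,\beta}\leq \|T\|_{\alpha,0,\beta,0}\,\|\varphi\|$. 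Hence the restriction $T|_{\sfunc N}:\sfunc N\to \sfunc N$ is linear and dominated in each seminorm of the target by the $\HH$-norm on the domain.

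To extend to all of $\HH$, let $\varphi\in \HH$ and choose $(\varphi_n)\subset \sfunc N$ with $\varphi_n\to \varphi$ in $\HH$. Applying the bound above to $\varphi_n-\varphi_m$ shows that $(T\varphi_n)$ is Cauchy in every seminorm $\|\cdot\|_{\alpha,\beta}$, hence Cauchy in the Fr\'echet topology of $\sfunc N$. By completeness there is some $\chi\in \sfunc N$ with $T\varphi_n\to \chi$ in $\sfunc N$; since the Fr\'echet topology dominates the $\HH$-topology through $\|\cdot\|_{0,0}=\|\cdot\|_{\HH}$, we also have $T\varphi_n\to \chi$ in $\HH$. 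On the other hand, $T\in \bh$ is $\HH$-continuous, so $T\varphi_n\to T\varphi$ in $\HH$, whence $T\varphi=\chi\in \sfunc N$. Passing to the limit in the seminorm bound yields $\|T\varphi\|_{\alpha,\beta}\leq \|T\|_{\alpha,0,\beta,0}\|\varphi\|$ for every $\varphi\in \HH$, which is the required continuity of $T:\HH\to \sfunc N$ (continuity into a Fr\'echet space being equivalent to continuity into each of its defining seminorms). There is no substantial obstacle beyond bookkeeping; the only delicate point is to distinguish the $\HH$-topology from the stronger Fr\'echet topology of $\sfunc N$ when identifying the two limits of $T\varphi_n$.
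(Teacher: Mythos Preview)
Your proof is correct and uses the same essential ingredients as the paper (Lemma \ref{schwartzvector} and density of $\sfunc N$ in $\HH$), but you take a slightly longer route. The paper avoids your second step entirely: rather than first proving $T\varphi\in\sfunc N$ for $\varphi\in\sfunc N$ and then extending via Cauchy sequences and Fr\'echet completeness, it observes that for \emph{any} $\varphi\in\HH$ the functional $\psi\mapsto \langle P^\beta Q^\alpha\psi\,|\,T\varphi\rangle$ is already $\HH$-continuous in $\varphi$ (since $T$ is bounded), so the bound $|\langle P^\beta Q^\alpha\psi\,|\,T\varphi\rangle|\leq \|T\|_{\alpha,0,\beta,0}\|\psi\|\|\varphi\|$ passes directly from $\varphi\in\sfunc N$ to all $\varphi\in\HH$ by density. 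Lemma \ref{schwartzvector} then applies immediately to each $T\varphi$, giving $T\varphi\in\sfunc N$ and $\|T\varphi\|_{\alpha,\beta}\leq\|T\|_{\alpha,0,\beta,0}\|\varphi\|$ in one stroke. Your approach works fine but invokes completeness of $\sfunc N$ where it is not needed; the paper's argument is a line shorter and conceptually cleaner because it extends the \emph{inequality} rather than the \emph{membership}.
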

\begin{proof}
Since $\|T\|_{\alpha,0,\beta,0}<\infty$ and $\sfunc N$ is dense, we have
$$\sup_{\psi\in \sfunc N, \|\psi\|\leq 1} |\langle  P^\beta Q^\alpha\psi |T\varphi\rangle|<\infty$$
for any $\varphi\in \HH$.
Hence, $T\varphi\in \sfunc N$, with $\|T\varphi\|_{\alpha,\beta}\leq \|T\|_{\alpha,0,\beta,0} \|\varphi\|$ for all $\varphi\in \HH$ by Lemma \ref{schwartzvector}.
\end{proof}

\subsection{Topology and basic properties}
\label{sec:topol-basic-prop}

Clearly, each $T\mapsto \|T\|_{\alpha,\alpha',\beta,\beta'}$ is a seminorm. Since they obviously separate points of $\sop$, and because there are countably many of them, they make $\sop$ a metrizable locally convex topological space.
\begin{prop}\label{Frechet} $\sop$ is a Fr\'echet space.
\end{prop}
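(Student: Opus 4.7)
The statement to prove is that $\sop$ is a Fréchet space. Metrizability and local convexity are already established in the preceding paragraph (the family of seminorms is countable and separates points), so the only thing left to verify is \emph{completeness}. My plan is to take a Cauchy sequence $(T_n)\subset\sop$ and construct its limit by first producing a candidate $T\in\bh$ via the $\|\cdot\|_{0,0,0,0}=\|\cdot\|$ seminorm, then showing that $T$ carries all the quadratic-form data one expects in the limit.

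First I would observe that since the operator norm $\|\cdot\|$ is the seminorm $\|\cdot\|_{0,0,0,0}$, the sequence $(T_n)$ is Cauchy in $\bh$, so it converges in operator norm to some $T\in\bh$. Next, for each fixed multi-index tuple $(\alpha,\alpha',\beta,\beta')$, the sequence of bounded operators $(T_n)_{\alpha,\alpha',\beta,\beta'}$ associated to the Cauchy quadratic forms is itself Cauchy in $\bh$ (this is essentially the definition of $\|\cdot\|_{\alpha,\alpha',\beta,\beta'}$ being the operator norm of the representing operator), so it converges in $\bh$ to some $S_{\alpha,\alpha',\beta,\beta'}\in\bh$.

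The key step is then to identify $S_{\alpha,\alpha',\beta,\beta'}$ as the operator representing the quadratic form of $T$, which will simultaneously show $T\in\sop$. For $\psi,\varphi\in\sfunc N$, the vectors $P^\beta Q^\alpha\psi$ and $P^{\beta'}Q^{\alpha'}\varphi$ are fixed elements of $\HH$, so norm convergence $T_n\to T$ gives
\begin{equation*}
\langle P^\beta Q^\alpha\psi\,|\,T\,P^{\beta'}Q^{\alpha'}\varphi\rangle
=\lim_n \langle P^\beta Q^\alpha\psi\,|\,T_n\,P^{\beta'}Q^{\alpha'}\varphi\rangle
=\lim_n\langle \psi\,|\,(T_n)_{\alpha,\alpha',\beta,\beta'}\varphi\rangle
=\langle\psi\,|\,S_{\alpha,\alpha',\beta,\beta'}\varphi\rangle,
\end{equation*}
where the second equality uses the defining identity for the form-representing operator of each $T_n$, and the third uses $\bh$-convergence $(T_n)_{\alpha,\alpha',\beta,\beta'}\to S_{\alpha,\alpha',\beta,\beta'}$. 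Thus the sesquilinear form associated to $T$ is Hilbert-space bounded with representing operator $T_{\alpha,\alpha',\beta,\beta'}=S_{\alpha,\alpha',\beta,\beta'}$, proving $\|T\|_{\alpha,\alpha',\beta,\beta'}<\infty$ for every tuple, i.e.\ $T\in\sop$. Finally, convergence $T_n\to T$ in the Fréchet topology follows by applying the same identification to $T_n-T_m$ and taking $m\to\infty$ (or equivalently by noting $(T_n-T)_{\alpha,\alpha',\beta,\beta'}=(T_n)_{\alpha,\alpha',\beta,\beta'}-S_{\alpha,\alpha',\beta,\beta'}\to 0$ in $\bh$).

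I expect no serious obstacle; the only point requiring care is that the definition of $\|T\|_{\alpha,\alpha',\beta,\beta'}$ uses quadratic forms on $\sfunc N$ rather than the \emph{a priori} ill-defined operator product $Q^\alpha P^\beta T P^{\beta'} Q^{\alpha'}$, so one must be disciplined about passing to the limit at the level of matrix elements with $\psi,\varphi\in\sfunc N$ (exploiting that $P^\beta Q^\alpha\psi$ and $P^{\beta'}Q^{\alpha'}\varphi$ lie in $\HH$) rather than attempting to manipulate unbounded operator products directly.
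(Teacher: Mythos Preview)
Your proposal is correct and follows essentially the same approach as the paper's own proof: construct the limit $T$ via the $\|\cdot\|_{0,0,0,0}$ seminorm, obtain the limits $S_{\alpha,\alpha',\beta,\beta'}$ of the representing operators in $\bh$, and identify them with $T_{\alpha,\alpha',\beta,\beta'}$ via matrix elements on Schwartz vectors. Your remark about working at the level of quadratic forms rather than operator products is exactly the care the paper takes as well.
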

\begin{proof}
Let $(T_n)$ be a Cauchy sequence in $\sop$. This means that each $(Q^\alpha P^\beta T_n P^{\beta'}Q^{\alpha'})$ is a Cauchy sequence in $\bh$, and hence converges to some $S_{\alpha,\alpha',\beta,\beta'}\in \bh$. In particular, $(T_n)$ converges to $T:=S_{0,0,0,0}$ in the operator norm of $\bh$. Now fix $\alpha,\alpha',\beta,\beta'\in \indm N$. Since norm convergence implies weak convergence, we have
$$
\langle P^\beta Q^\alpha\psi |T P^{\beta'}Q^{\alpha'} \varphi\rangle= \lim_{n\rightarrow\infty} \langle P^\beta Q^\alpha \psi |T_n P^{\beta'}Q^{\alpha'}\varphi\rangle = \langle \psi | S_{\alpha,\alpha',\beta,\beta'}\varphi\rangle
$$
for each $\psi,\varphi\in \sfunc N$. But $S_{\alpha,\alpha',\beta,\beta'}$ is a bounded operator and $\sfunc N$ is dense in $\HH$, so
$\|T\|_{\alpha,\alpha',\beta,\beta',\infty}= \|S_{\alpha,\alpha',\beta,\beta'}\|<\infty$. Hence, $T\in \sop$, and
$Q^\alpha P^\beta T P^{\beta'} Q^{\alpha'}=S_{\alpha,\alpha',\beta,\beta'}$. Consequently, $\|T-T_n\|_{\alpha,\alpha',\beta,\beta'}\rightarrow 0$.
\end{proof}

We can also use a larger family of seminorms:

\begin{prop}\label{schwartznorms} Let $T\in \sop$.
\begin{itemize}
\item[(a)] If $f_L(Q,P)$ and $f_R(Q,P)$ are arbitrary polynomials of $Q$ and $P$, then
$$f_L(Q,P)Tf_R(Q,P)$$ is well-defined on $\sfunc N$, and has a unique bounded extension to $\HH$. We will use $f_L(Q,P)Tf_R(Q,P)$ to denote also the extension.
\item[(b)] The seminorms $\|T\|_{f_L,f_R} := \|f_L(Q,P)Tf_R(Q,P)\|$, where $f_L$ and $f_R$ go through all polynomials, induce the topology of $\sop$.
\end{itemize}
\end{prop}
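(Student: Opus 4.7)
The strategy is to reduce everything to the basic seminorms by using the canonical commutation relations $[Q_i,P_j]=i\delta_{ij}\idty$ to reorder any polynomial in $Q,P$ into a preferred normal form. Since $\sfunc N$ is invariant under both multiplication by $q_i$ and differentiation, every polynomial operator $f(Q,P)$ preserves $\sfunc N$, and the CCR reorderings give genuine operator identities on $\sfunc N$ (not merely formal polynomial identities). In particular, any $f_L(Q,P)$ can be rewritten on $\sfunc N$ as a finite sum $\sum_I c_I\, Q^{\alpha_I}P^{\beta_I}$ (the ``$Q$-before-$P$'' ordering matching the left side of the basic seminorm), and $f_R(Q,P)$ as $\sum_J d_J\, P^{\beta_J'}Q^{\alpha_J'}$ (the ``$P$-before-$Q$'' ordering matching the right side).

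For part (a), I would first observe that since $T\in\sop$, the hypothesis $\|T\|_{\alpha,0,\beta,0}<\infty$ of Lemma \ref{schwartzinv} is satisfied for all $\alpha,\beta$, so $T$ maps $\HH$ continuously into $\sfunc N$, and in particular $T(\sfunc N)\subset \sfunc N$. Then for $\varphi\in\sfunc N$ the chain $\varphi\mapsto f_R(Q,P)\varphi\mapsto Tf_R(Q,P)\varphi\mapsto f_L(Q,P)Tf_R(Q,P)\varphi$ stays inside $\sfunc N$ at each step, so $f_L(Q,P)Tf_R(Q,P)$ is unambiguously defined on $\sfunc N$. Using the two normal-form expansions described above, one obtains on $\sfunc N$ the identity
\begin{equation*}
f_L(Q,P)\,T\,f_R(Q,P)\,\varphi \;=\; \sum_{I,J} c_I d_J\, Q^{\alpha_I}P^{\beta_I}\,T\,P^{\beta_J'}Q^{\alpha_J'}\,\varphi,
\end{equation*}
and each operator $Q^{\alpha_I}P^{\beta_I}TP^{\beta_J'}Q^{\alpha_J'}$ coincides on $\sfunc N$ with the restriction of the bounded operator $T_{\alpha_I,\alpha_J',\beta_I,\beta_J'}$, whose norm is $\|T\|_{\alpha_I,\alpha_J',\beta_I,\beta_J'}<\infty$. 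The triangle inequality then gives
\begin{equation*}
\|f_L(Q,P)\,T\,f_R(Q,P)\varphi\| \;\le\; \Bigl(\sum_{I,J} |c_I d_J|\,\|T\|_{\alpha_I,\alpha_J',\beta_I,\beta_J'}\Bigr)\|\varphi\|,
\end{equation*}
and density of $\sfunc N$ in $\HH$ yields a unique bounded extension.

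For part (b), the inclusion of topologies in one direction is trivial: the basic seminorms $\|T\|_{\alpha,\alpha',\beta,\beta'}$ are exactly $\|T\|_{f_L,f_R}$ for the monomial choices $f_L(Q,P)=Q^\alpha P^\beta$ and $f_R(Q,P)=P^{\beta'}Q^{\alpha'}$ (using that $P^\beta Q^\alpha$ is the adjoint of $Q^\alpha P^\beta$ on $\sfunc N$), so the polynomial seminorms dominate the basic ones. Conversely, the estimate from part (a) immediately yields
\begin{equation*}
\|T\|_{f_L,f_R} \;\le\; \sum_{I,J}|c_I d_J|\,\|T\|_{\alpha_I,\alpha_J',\beta_I,\beta_J'},
\end{equation*}
a \emph{finite} linear combination of basic seminorms, so each polynomial seminorm is controlled by finitely many basic ones. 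The two families therefore induce the same locally convex topology.

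The only real subtlety — rather than a genuine obstacle — is bookkeeping around the CCR reordering: one must verify that the commutator identities, which are identities of operators on any common invariant domain containing $\sfunc N$, transform the composition $f_L(Q,P)Tf_R(Q,P)$ termwise into the expansion above, and that each of the resulting unbounded-looking compositions agrees on $\sfunc N$ with the bounded operator $T_{\alpha,\alpha',\beta,\beta'}$ furnished by $T\in\sop$. Both facts are immediate from the invariance of $\sfunc N$ under $Q_i,P_j$ and from the defining property of the quadratic-form norms in Definition~\ref{sec:definition}.
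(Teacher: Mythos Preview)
Your proposal is correct and follows essentially the same approach as the paper: use Lemma~\ref{schwartzinv} to see that the composition is well-defined on $\sfunc N$, then apply the CCR on $\sfunc N$ to rewrite $f_L(Q,P)Tf_R(Q,P)$ as a finite linear combination of the canonical terms $Q^\alpha P^\beta T P^{\beta'}Q^{\alpha'}$, from which both (a) and (b) follow. The paper's proof is simply a terser version of what you wrote.
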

\begin{proof} The operator $f_R(Q,P)$ maps $\sfunc N$ into itself, so from Lemma \ref{schwartzinv} it follows that $f_L(Q,P)Tf_R(Q,P)$ is defined on $\sfunc N$. Using the commutation relations $[Q_i,P_i]=\delta_{ij}\idty$, which hold on $\sfunc N$, we can write $f_L(Q,P)Tf_R(Q,P)$ on $\sfunc N$ as a linear combination of terms of the form $Q^\alpha P^\alpha T P^{\alpha'}Q^{\alpha'}$. This proves (a) and (b).
\end{proof}

The following lemma gives some basic properties of Schwartz operators. There we use the notation
$$\sopnot:=\{ T\in\sop\mid T \text{ has finite rank}\}.$$

\begin{lem}\label{basiclem}
\begin{itemize}
\item[(a)] If $T\in \sop$ then $T^*\in \sop$. The map $T\mapsto T^*$ is a topological isomorphism.
\item[(b)] If $T,S\in \sop$ and $A\in \bh$, then $TAS\in \sop$. The map
\begin{align*}
(T,S)&\mapsto TAS,  & \sop\times \sop &\to \sop
\end{align*}
is continuous (in the product topology).
\item[(c)] If $T\in \sop$, we have $\varphi_k,\psi_k\in \sfunc N$ whenever $c_k\neq 0$, in the singular value decomposition \eqref{singularvalue} of $T$.
\item[(d)] $T\in \sopnot$ if and only if $T$ is of the form \eqref{finiteschwartz}.%$T(\HH_N)\subset \sfunc N$ and $T^*(\HH_N)\subset \sfunc N$.
\end{itemize}
\end{lem}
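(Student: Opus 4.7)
The plan is to handle (a)--(d) in a way that exploits the sesquilinear-form definition plus the already-established results, especially Proposition \ref{schwartznorms}(a) and Lemma \ref{schwartzinv}.

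For (a), I would start directly from the definition:
\[
\|T^*\|_{\alpha,\alpha',\beta,\beta'} = \sup_{\|\psi\|,\|\varphi\|\leq 1}\bigl|\langle P^\beta Q^\alpha\psi\,|\,T^* P^{\beta'}Q^{\alpha'}\varphi\rangle\bigr|,
\]
and move the adjoint across the inner product; this gives $\|T^*\|_{\alpha,\alpha',\beta,\beta'} = \|T\|_{\alpha',\alpha,\beta',\beta}$, so the involution only permutes indices. Hence $T\mapsto T^*$ is an involutive isometry for each seminorm (after relabelling), so a topological isomorphism. Doing this carefully is just a complex-conjugation of the sesquilinear form, with no domain subtlety because the supremum is taken on $\sfunc N$.

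For (b), the cleanest way is to use the equivalent family $\|\cdot\|_{f_L,f_R}$ from Proposition \ref{schwartznorms}. Given $T,S\in \sop$ and $A\in\bh$, Proposition \ref{schwartznorms}(a) tells us that $f_L(Q,P)T$ (the case $f_R=1$) and $Sf_R(Q,P)$ (the case $f_L=1$) both extend to bounded operators on $\HH$. On $\sfunc N$ we then have
\[
f_L(Q,P)\,TAS\,f_R(Q,P) = \bigl[f_L(Q,P)T\bigr]\,A\,\bigl[Sf_R(Q,P)\bigr],
\]
and the right-hand side is a composition of three bounded operators on $\HH$, whose norm is bounded by
\[
\|TAS\|_{f_L,f_R} \leq \|T\|_{f_L,1}\,\|A\|\,\|S\|_{1,f_R}.
\]
This proves both $TAS\in\sop$ and the joint continuity of $(T,S)\mapsto TAS$ in the product topology. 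The mild point here is checking that the bounded extensions really do compose to give the bounded extension of $f_L\,TAS\,f_R$, but this is immediate from $\sfunc N$ being dense in $\HH$.

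For (c), fix the singular value decomposition $T=\sum_k c_k|\varphi_k\rangle\langle\psi_k|$. Since $T\in\sop$ satisfies $\|T\|_{\alpha,0,\beta,0}<\infty$ for all $\alpha,\beta$, Lemma \ref{schwartzinv} yields $\operatorname{Ran}(T)\subset \sfunc N$; from $T\psi_k=c_k\varphi_k$ we conclude $\varphi_k\in \sfunc N$ whenever $c_k\neq 0$. Applying (a) to get $T^*\in\sop$ and using $T^*\varphi_k=c_k\psi_k$, the same argument gives $\psi_k\in\sfunc N$.

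For (d), the ``only if'' direction is immediate from (c): a finite-rank $T\in\sop$ has a terminating singular value decomposition with $\varphi_k,\psi_k\in\sfunc N$, which is precisely the form \eqref{finiteschwartz}. For the converse, it suffices to show each rank-one operator $|\varphi\rangle\langle\psi|$ with $\varphi,\psi\in\sfunc N$ lies in $\sop$, as $\sop$ is a vector space. A direct computation gives
\[
\bigl\langle P^\beta Q^\alpha \xi \,\big|\, |\varphi\rangle\langle\psi|\,P^{\beta'}Q^{\alpha'}\eta\bigr\rangle = \langle P^\beta Q^\alpha \xi\,|\,\varphi\rangle\,\langle\psi\,|\,P^{\beta'}Q^{\alpha'}\eta\rangle,
\]
and Lemma \ref{schwartzvector} bounds each factor in absolute value by $\|\varphi\|_{\alpha,\beta}\|\xi\|$ and $\|\psi\|_{\alpha',\beta'}\|\eta\|$ respectively, so the form is Hilbert-space bounded. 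Finite rank is obvious.

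The only step needing care is the composition issue in (b); the rest is bookkeeping with the seminorms and the two preparatory lemmas. I do not foresee any genuine obstacle.
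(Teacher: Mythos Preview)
Your proof is correct and follows essentially the same approach as the paper for parts (a), (b), and (d). For (c), you take a slightly more direct route: the paper first uses (b) to show that $T|\psi_k\rangle\langle\varphi_k|T = c_k^2\,|\varphi_k\rangle\langle\psi_k|\in\sop$ and then applies Lemma~\ref{schwartzinv} to this rank-one operator, whereas you apply Lemma~\ref{schwartzinv} directly to $T$ and use $T\psi_k=c_k\varphi_k$ (and $T^*\varphi_k=c_k\psi_k$), which is cleaner and avoids the detour through (b).
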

\begin{proof}
Part (a) is obvious from the definition; in fact, $\|T\|_{\alpha,\alpha',\beta,\beta'} = \|T^*\|_{\alpha',\alpha,\beta',\beta}$. Part (b): If $T,S\in \sop$ then both keep $\sfunc N$ invariant, and the operators $T^*P^\beta Q^\alpha $ and $S P^{\beta'}Q^{\alpha'}$ extend to bounded operators by Prop. \ref{schwartznorms}. Since $A$ is bounded, we have
$$
|\langle T^*P^\beta Q^\alpha \psi |A S P^{\beta'}Q^{\alpha'}\varphi\rangle|\leq \|A\|\, \|T^*P^\beta Q^\alpha\|\,\|S P^{\beta'}Q^{\alpha'}\|\|\psi\|\,\|\varphi\|,\quad \text{for all } \psi,\varphi\in \sfunc N,
$$
so that $TAS\in \sop$, and $$\|TAS\|_{\alpha,\alpha',\beta,\beta'}\leq \|A\| \|T\|_{\alpha,0,\beta,0}\|S\|_{0,\alpha',0,\beta'}.$$ Part (c): Fix $k$. Since $T|\psi_k\rangle\langle \varphi_k | T =|c_k|^2|\varphi_k\rangle\langle \psi_k|$, it follows from (b) that $|\varphi_k\rangle\langle \psi_k|\in \sop$ if $c_k\neq 0$. Let $\varphi\in \sfunc N$ be such that $\langle \psi_k|\varphi\rangle \neq 0$. Then $\varphi_k\in \sfunc N$ by Prop. \ref{schwartzinv}. Since $|\varphi_k\rangle\langle \psi_k|^*= |\psi_k\rangle\langle \varphi_k|\in \sop$ by part (a), the same argument shows that $\psi_k\in \sfunc N$. Part (d): Suppose that $T\in \sop$ has finite rank. Then $c_n\neq 0$ only for finite number of $k$ in the singular value decomposition \eqref{singularvalue}, so by (c), $T$ is of the form \eqref{finiteschwartz}. Conversely, suppose $T$ is of this form. Now
$$
|\langle P^\beta Q^\alpha \psi |\varphi_k\rangle\langle \psi_k|P^{\beta'} Q^{\alpha'}\varphi\rangle|\leq \|Q^\alpha P^\beta\varphi_k\|\, \|Q^{\alpha'} P^{\beta'}\psi_k\| \|\psi\|\, \|\varphi\|,
$$
so $|\varphi_k\rangle\langle \psi_k|\in \sop$ for each $k$. Since $T$ is a linear combination of these, also $T\in \sop$.
\end{proof}

The following lemma establishes that each Schwartz operator is in fact trace class, hence also an element of each Schatten class $\schtn p$.
\begin{lem}\label{traceclasslemma} $\sop\subset \tc$.
\end{lem}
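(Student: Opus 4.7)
The plan is to factor an arbitrary $T \in \sop$ as $T = B_k \cdot \Htot^{-k}$, where $B_k \in \bh$ is bounded and $\Htot^{-k}$ is trace class for $k$ sufficiently large. Since the product of a bounded operator with a trace class operator is trace class, this factorization immediately yields $T \in \tc$.

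For the trace class factor, $\Htot$ is positive self-adjoint with the number basis as eigenvectors, $\Htot|\alpha\rangle = (|\alpha|+N/2)|\alpha\rangle$ for $\alpha \in \indm N$. Because $\Htot \geq (N/2)\idty$, each power $\Htot^{-k}$ is bounded and positive, and
\begin{equation*}
\tr(\Htot^{-k}) = \sum_{\alpha\in \indm N} (|\alpha|+N/2)^{-k} = \sum_{n=0}^\infty \binom{n+N-1}{N-1}(n+N/2)^{-k},
\end{equation*}
which converges whenever $k > N$, since the degeneracy $\binom{n+N-1}{N-1}$ grows polynomially of degree $N-1$ in $n$. Fix such a $k$ for the remainder.

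For the bounded factor, observe that $\Htot^k$ is a polynomial in $Q$ and $P$, so Proposition \ref{schwartznorms} applied with $f_L = 1$ and $f_R = \Htot^k$ gives that $T\Htot^k$, a priori only defined on $\sfunc N$, extends uniquely to a bounded operator $B_k \in \bh$. To conclude, I need to verify the identity $T = B_k \Htot^{-k}$ on all of $\HH$. Using the $N$-representation one checks that $\Htot^{-k}$ maps $\sfunc N$ into $\sfunc N$: multiplication of the basis coefficients by the bounded sequence $(|\alpha|+N/2)^{-k}$ preserves rapid decay, so $\|\Htot^{-k}\phi\|_\alpha \leq (N/2)^{-k}\|\phi\|_\alpha$ for every seminorm. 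Hence for $\phi \in \sfunc N$,
\begin{equation*}
B_k\Htot^{-k}\phi = T\Htot^k\Htot^{-k}\phi = T\phi,
\end{equation*}
so $T$ and $B_k\Htot^{-k}$ are bounded operators on $\HH$ agreeing on the dense subspace $\sfunc N$, hence they coincide everywhere. The only subtle point in the argument is precisely this last one — the formal factorization $T = (T\Htot^k)\Htot^{-k}$ must be upgraded from a statement on the Schwartz domain to an equality of bounded operators, and it is the $\sfunc N$-invariance of $\Htot^{-k}$ that makes this bootstrap go through.
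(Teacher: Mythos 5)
Your proof is correct and takes essentially the same route as the paper: both factor $T$ into a trace class negative power of the oscillator Hamiltonian times a bounded operator supplied by Prop.~\ref{schwartznorms}. The paper uses the product $H^{-2}=H_1^{-2}\cdots H_N^{-2}$ placed on the \emph{left}, so $T=H^{-2}(H^2T)$ holds immediately because ${\rm Ran}(T)\subset\sfunc N\subset\dom(H^2)$, while your right-sided factorization with $\Htot^{-k}$, $k>N$, requires exactly the $\sfunc N$-invariance and density bootstrap you supply --- a cosmetic rather than substantive difference.
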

\begin{proof} First note that $H^{-2}:=H_1^{-2}H_2^{-2}\cdots H_N^{-2}$ is a positive trace class operator. In fact, $H^{-2}|\alpha\rangle = \prod_{i=1}^N (\alpha_i+\tfrac 12)^{-2}\, |\alpha\rangle$, so $\tr H^{-2} = c^N$ where $c=\sum_{n=0}^\infty (n+\tfrac 12)^{-2}<\infty$. Now if $T$ is a Schwartz operator, it follows from Prop. \ref{schwartznorms} (a) that the operator $H^2 T$ is bounded, where $H^2= H_1^2\cdots H_N^2$. Hence $T = H^{-2} (H^2 T)\in \tc$.
\end{proof}

Using the trace class operator $H^{-2}=H_1^{-2}\cdots H_N^{-2}$ as in the proof of the above Lemma, we immediately see that for a Schwartz operator $T$, each of the bounded operators $Q^\alpha P^\beta T P^{\beta'}Q^{\alpha'}$ is actually in the trace class, and
$$\|Q^\alpha P^\beta T P^{\beta'}Q^{\alpha'}\|_1\leq \|H^{-2}\|_1 \, \|H^2 Q^\alpha P^\beta T P^{\beta'}Q^{\alpha'}\|.$$
On the other hand, since $\|\cdot \|\leq \|\cdot \|_p\leq \|\cdot \|_1$ for each $1\leq p\leq \infty$, it follows that each seminorm
\begin{equation}\label{pseminorms}
\|T\|_{\alpha,\alpha',\beta,\beta',p}:=\|Q^\alpha P^\beta T P^{\beta'}Q^{\alpha'}\|_p
\end{equation}
is finite for $T\in \sop$, and for each fixed $p=1,2,\ldots$, the family $\{ \|\cdot\|_{\alpha,\alpha',\beta,\beta',p}\mid \alpha,\alpha',\beta,\beta'\in \indm N\}$ induces the topology of $\sop$.

Since boundedness already implies $T\in \sop$, the trace class condition appears to be superfluous as a necessary condition for $T\in \sop$. Same holds for other values of $p$. However, the fact that the Hilbert-Schmidt seminorms
\begin{equation}%\label{pseminorms}
\|T\|_{\alpha,\alpha',\beta,\beta',2}=\|Q^\alpha P^\beta T P^{\beta'}Q^{\alpha'}\|_2
\end{equation}
induce the topology of $\sop$ is actually especially useful, because the Hilbert space $\hs$ is in a natural way equivalent to $L^2(\phaseS)$ by Lemma \ref{hslemma}. Using this equivalence, Schwartz operators become Schwartz vectors in $L^2(\phaseS)$. This is the topic of the next section.

\subsection{Kernel and matrix representations}

\label{sec:kern-matr-repr}

The following proposition characterises Schwartz operators in terms of their matrix representations and kernels. In essence, it reduces Schwartz operators to Schwartz functions.
\begin{prop}\label{isomorphisms} Let $T\in \hs$. The following conditions are equivalent:
\begin{itemize}
\item[(i)] $T\in \sop$.
\item[(ii)] $\kernel T\in \sfunc{2N}$, i.e. the kernel of $T$ is a Schwartz function.
\item[(iii)] $(\matr{T}_\alpha)\in \ssec{2N}$, i.e. the matrix of $T$ is a Schwartz sequence.
%\item[(iv)] $\widehat{T}\in \sfunc{2N}$, i.e. the Weyl transform of $T$ is a Schwartz function.
\end{itemize}
Moreover, the maps
\begin{align*}
\sop\ni T&\mapsto \kernel{T}\in \sfunc{2N},  & \sop\ni T &\mapsto a^T\in \ssec{2N}% & \sop\ni T\mapsto \widehat{T}\in \sfunc{2N}
\end{align*}
(see Lemma \ref{hslemma}) are topological isomorphisms.
\end{prop}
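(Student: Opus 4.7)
The plan is to carry the defining seminorms of $\sop$ through the unitary isomorphisms $\hs \simeq L^2(\Rl^{2N}) \simeq \ell^2(\indm{2N})$ of Lemma \ref{hslemma}. The paragraph preceding the statement notes that the Hilbert--Schmidt variants $\|T\|_{\alpha,\alpha',\beta,\beta',2} := \|Q^\alpha P^\beta T P^{\beta'}Q^{\alpha'}\|_2$ are a defining family of seminorms for $\sop$; so both equivalences and the topological isomorphism property will follow once I identify these norms with the natural Schwartz seminorms on $\sfunc{2N}$ and $\ssec{2N}$.

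For (i)$\Leftrightarrow$(ii), I would first derive a formula for the kernel of $Q^\alpha P^\beta T P^{\beta'}Q^{\alpha'}$ in terms of $\kernel T$. For $T\in\sop$ both $T$ and $T^*$ map $\HH$ continuously into $\sfunc N$ by Lemma \ref{schwartzinv}, and Prop. \ref{schwartznorms} guarantees that the full product is bounded and in fact Hilbert--Schmidt (via the $H^{-2}$ trick of Lemma \ref{traceclasslemma}). Pairing against test vectors in $\sfunc N \otimes \sfunc N$, inserting the integral representation of $T$, and integrating by parts in the $q$ and $q'$ variables yields
$$
\kernel{Q^\alpha P^\beta T P^{\beta'}Q^{\alpha'}}(q,q') = (-i)^{|\beta|}\, i^{|\beta'|}\, q^\alpha (q')^{\alpha'}\, \partial_q^\beta \partial_{q'}^{\beta'} \kernel{T}(q,q'),
$$
where the derivatives are understood distributionally and lie \emph{a posteriori} in $L^2(\Rl^{2N})$. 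Taking $\|\cdot\|_2$ and using $\|S\|_2 = \|\kernel S\|_{L^2}$ gives
$$
\|T\|_{\alpha,\alpha',\beta,\beta',2} = \bigl\| x^{\alpha\vee\alpha'}\, D^{\beta\vee\beta'} \kernel T \bigr\|_{L^2(\Rl^{2N})}.
$$
By Lemma \ref{schwartzvector} applied in dimension $2N$, the right-hand side runs over a defining family of seminorms for $\sfunc{2N}$. Hence $T\in\sop$ iff $\kernel T\in\sfunc{2N}$, and the linear bijection $T\mapsto \kernel T$ (from Lemma \ref{hslemma}) matches up defining families of seminorms, which is exactly what is needed for a topological isomorphism.

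For (i)$\Leftrightarrow$(iii) the cleanest route is to compose the isomorphism already obtained with the topological isomorphism $\sfunc{2N}\simeq\ssec{2N}$ induced by expansion in the number basis (the map stated below \eqref{seqSchwartz}); via $|\alpha\rangle\otimesH|\alpha'\rangle = |\alpha\vee\alpha'\rangle$ this composition is precisely $T\mapsto (\matr T_\alpha)$. A direct verification is also available: by Prop. \ref{schwartznorms} the family $\|H^\gamma T H^{\gamma'}\|_2$ with $\gamma,\gamma'\in\indm N$ defines the topology of $\sop$, and evaluating in the joint eigenbasis of $H_1,\dots,H_N$ yields
$$
\|H^\gamma T H^{\gamma'}\|_2^2 = \sum_{\mu,\mu'\in\indm N}\prod_{i=1}^N \bigl(\mu_i+\tfrac12\bigr)^{2\gamma_i}\bigl(\mu'_i+\tfrac12\bigr)^{2\gamma'_i}\,|\matr T_{\mu\vee\mu'}|^2,
$$
uniformly comparable to $\|(\matr T_\alpha)\|_{\gamma\vee\gamma'}^2$ of \eqref{Nrep}.

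The main obstacle is rigorously justifying the displayed kernel formula for $P^\beta T P^{\beta'}$: \emph{a priori} $\kernel T$ is only in $L^2(\Rl^{2N})$, so pointwise differentiation is not meaningful. I would resolve this by working with matrix elements against test vectors $\psi\otimes\varphi$ with $\psi,\varphi\in\sfunc N$, where integration by parts is legitimate; the existence of a genuine $L^2$ kernel for $P^\beta T P^{\beta'}$ is guaranteed independently by Prop. \ref{schwartznorms} together with Lemma \ref{hslemma}, and the formal computation identifies that kernel on the dense subspace $\sfunc N\otimes \sfunc N$ of $L^2(\Rl^{2N})$, fixing it uniquely.
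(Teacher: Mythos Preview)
Your approach is essentially the same as the paper's: identify the Hilbert--Schmidt seminorms $\|T\|_{\alpha,\alpha',\beta,\beta',2}$ with the Schwartz seminorms $\|\kernel T\|_{\alpha\vee\alpha',\beta\vee\beta'}$ of the kernel via the pairing identity $\langle P^\beta Q^\alpha\psi\,|\,TP^{\beta'}Q^{\alpha'}\varphi\rangle = (-1)^{|\beta'|}\langle P^{\beta\vee\beta'}Q^{\alpha\vee\alpha'}(\psi\otimes C\varphi)\,|\,\kernel T\rangle$ (your displayed kernel formula is exactly this identity read through Lemma~\ref{hslemma}), and then obtain (iii) by composing with the known isomorphism $\sfunc{2N}\simeq\ssec{2N}$. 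One small caveat: your alternative ``direct verification'' of (iii) invokes that the family $\{\|H^\gamma TH^{\gamma'}\|_2\}$ already defines the topology of $\sop$ by Prop.~\ref{schwartznorms}, but that proposition only gives the full polynomial family; in the paper the sufficiency of the $H$-seminorms is in fact derived \emph{from} Prop.~\ref{isomorphisms}, so stick to the compositional route to avoid circularity.
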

\begin{proof} We first prove the equivalence of (i) and (ii), and that $T\mapsto \kernel{T}$ is an isomorphism. Recall the characterization of Schwartz vectors: $K\in \sfunc {2N}$ if and only if
\begin{equation}\label{schwartz2N}
\sup\{ |\langle P^\beta Q^\alpha \eta | K\rangle| \mid \eta\in \sfunc{2N},\, \|\eta\|\leq 1 \}<\infty \Forall \alpha,\beta\in \indm{2N}.
\end{equation}
Assuming first (ii), we note that for $\varphi,\psi\in \sfunc N$, we have
$$\langle P^{\beta}Q^{\alpha}\psi |TP^{\beta'}Q^{\alpha'}\varphi\rangle = \langle(P^{\beta}Q^{\alpha}\psi)\otimes C(P^{\beta'}Q^{\alpha'}\varphi)|\kernel{T}\rangle=(-1)^{|\beta'|}\langle P^{\beta\vee\beta'}Q^{\alpha\vee\alpha'}\psi\otimes C\varphi|\kernel{T}\rangle,$$
so it follows immediately from \eqref{schwartz2N} that $T\in \sop$. Now assume (i). Since the operator $\tilde{T}:=Q^\alpha P^\beta TP^\beta Q^\alpha$ is trace class, it has a kernel $\kernel{\tilde T}$ according to Lemma \ref{hslemma}, and $\langle \psi | \tilde{T}\varphi\rangle = \langle \psi\otimes C\varphi |\kernel{\tilde T}\rangle$ for all $\psi,\varphi\in \HH$. On the other hand, for $\psi,\varphi\in \sfunc N$, we can also use the kernel of $T$ to write
$$
\langle \psi\otimes C\varphi |\kernel{\tilde T}\rangle=\langle \psi | \tilde{T}\varphi\rangle = \langle P^{\beta}Q^{\alpha}\psi |TP^{\beta'}Q^{\alpha'}\varphi\rangle =(-1)^{|\beta'|}\langle P^{\beta\vee\beta'}Q^{\alpha\vee\alpha'}\psi\otimes C\varphi|\kernel{T}\rangle.
$$
Since both $\eta\mapsto \langle \eta |\kernel{\tilde T}\rangle$ and $\eta\mapsto \langle P^{\beta\vee\beta'}Q^{\alpha\vee\alpha'}\eta |\kernel{T}\rangle$ are continuous on $\sfunc{2N}$, and the algebraic tensor product $\sfunc N \otimes \sfunc N$ is dense in $\sfunc {2N}$ by Prop. \ref{sfdensity} (c), it follows that
$\langle \eta |\kernel{\tilde T}\rangle=(-1)^{|\beta'|}\langle P^{\beta\vee\beta'}Q^{\alpha\vee\alpha'}\eta|\kernel{T}\rangle$ for all $\eta\in \sfunc {2N}$. This implies \eqref{schwartz2N}, so (ii) holds. We now show that $T\mapsto \kernel{T}$ is a topological isomorphism. It is clearly injective. If $K\in \sfunc {2N}$, it  determines a Hilbert-Schmidt operator $T$ such that $\kernel{T} = K$, and by the equivalence we just proved, $T\in \sop$. Hence, the map is onto. Finally, from the above proof it follows that
$$\|T\|_{\alpha,\alpha',\beta,\beta',2} = \|\tilde T\|_2 = \|\kernel{\tilde T}\|_2= \|Q^{\alpha\vee\alpha'}P^{\beta\vee\beta'}\kernel{T}\|_2= \|\kernel{T}\|_{\alpha\vee\alpha', \beta\vee\beta'},$$
so the topologies are equivalent.

Concerning condition (iii), we already know from the preceding section that $\sfunc {2N}$ is isomorphic to $\ssec{2N}$ via the map $\eta\mapsto (\langle \alpha|\eta\rangle)_{\alpha\in \indm{2N}}$. Since its composition with $T\mapsto \kernel{T}$ is exactly the map $T\mapsto (\matr{T}_\alpha)$, (iii) is equivalent to (ii).
\end{proof}

As an immediate corollary, we get the following density properties of Schwartz operators, using Prop. \ref{sfdensity}.
\begin{prop}\label{schwartzproperties}
\begin{itemize}
\item[(a)] The matrix representation $T= \sum_{\alpha\in \indm{2N}} \matr{T}_{\alpha} \hsbe{\alpha}$ of a $T\in \sop$ converges in the topology of $\sop$.
\item[(b)] ${\rm span}\{ \hsbe{\alpha}\mid \alpha\in \indm{2N}\}$ is dense in $\sop$. In particular, $\sopnot$ is dense in $\sop$.
\end{itemize}
\end{prop}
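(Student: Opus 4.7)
The plan is to transfer both statements from $\sfunc{2N}$ to $\sop$ via the topological isomorphism $T\mapsto \kernel T$ given by Proposition \ref{isomorphisms}. The key dictionary, already supplied by Lemma \ref{hslemma}(b), is that the basis operators $\hsbe{\alpha\vee\alpha'}=|\alpha\rangle\langle\alpha'|$ correspond under this isomorphism to the number-basis vectors $|\alpha\vee\alpha'\rangle$ of $L^2(\Rl^{2N})$, while the matrix entries satisfy $\matr T_\alpha=\langle\alpha|\kernel T\rangle$ for every $\alpha\in\indm{2N}$. This dictionary will let me identify the series in the statement with the number-basis expansion of $\kernel T$.

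For part (a), I would fix $T\in\sop$, invoke Proposition \ref{isomorphisms} to conclude $\kernel T\in\sfunc{2N}$, and then apply Proposition \ref{sfdensity}(a) in dimension $2N$ to write
\begin{equation*}
\kernel T \;=\; \sum_{\alpha\in\indm{2N}} \langle\alpha|\kernel T\rangle\,|\alpha\rangle,
\end{equation*}
with convergence in the topology of $\sfunc{2N}$. Since $T\mapsto\kernel T$ is a topological isomorphism, applying its inverse termwise transports this convergence, yielding $\sum_\alpha \matr T_\alpha\,\hsbe{\alpha}\to T$ in $\sop$, which is exactly (a).

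Part (b) follows by the same transfer: the image of $\SP\{\hsbe{\alpha}\mid\alpha\in\indm{2N}\}$ under the kernel isomorphism is $\SP\{|\alpha\rangle\mid\alpha\in\indm{2N}\}$, which is $\sfunc{2N}$-dense by Proposition \ref{sfdensity}(b); pulling back gives density of the original span in $\sop$. Since each $\hsbe{\alpha}$ has rank one, every finite linear combination lies in $\sopnot$, so density of $\sopnot$ in $\sop$ is an immediate consequence. There is essentially no obstacle in this argument; the only thing to keep track of is the bookkeeping of the identification $\indm N\times\indm N \leftrightarrow \indm{2N}$ between operator matrices, kernels, and the number basis of $L^2(\Rl^{2N})$, which is already fixed by the convention \eqref{alphavee}.
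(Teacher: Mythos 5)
Your argument is correct and is exactly the paper's route: the paper states this proposition as an immediate corollary of the kernel/matrix isomorphism (Prop.~\ref{isomorphisms}) together with Prop.~\ref{sfdensity}, using the identification $\hsbe{\alpha\vee\alpha'}\leftrightarrow|\alpha\vee\alpha'\rangle$ and $\matr{T}_\alpha=\langle\alpha|\kernel{T}\rangle$ from Lemma~\ref{hslemma}(b), precisely as you do. Nothing is missing.
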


Another consequence is obtained by estimating the sequence space seminorms as follows:
\begin{align*}
\|(\matr{T}_\alpha)\|_{\beta\vee \beta'}^2&=\sum_{\alpha,\alpha'\in \indm N} (\alpha+1)^{2\beta}(\alpha'+1)^{2\beta'} |\langle \alpha |T|\alpha'\rangle|^2\\
&=\sum_{\alpha,\alpha'\in \indm N} |\langle \alpha |(H+\tfrac 12)^{\beta}T(H+\tfrac 12)^{\beta'}|\alpha'\rangle|^2 =\|(H+\tfrac 12)^{\beta} T(H+\tfrac 12)^{\beta'}\|_2^2\\
\|(\matr{T}_\alpha)\|_{\beta\vee \beta'}^2 &\leq \sum_{\alpha,\alpha'\in \indm N} \left[C_1\left(\sum_{i=1}^N (\alpha_i+\frac 12)\right)^{2|\beta|} \left(\sum_{i=1}^N (\alpha_i+\frac 12)\right)^{2|\beta'|} + C_2\right] |\langle \alpha |T|\alpha'\rangle|^2\\
&=\sum_{\alpha,\alpha'\in \indm N} \left(C_1 |\langle \alpha |\Htot^{|\beta|}T\Htot^{|\beta'|}|\alpha'\rangle|^2+C_2|\langle \alpha |T|\alpha'\rangle|^2\right)\\ &=C_1\|\Htot^{|\beta|}T\Htot^{|\beta'|}\|_2^2+C_2 \|T\|_2^2,
\end{align*}
where $C_1,C_2>0$ only depend on the $\beta_i,\beta_i'$. Hence, it follows from Prop. \ref{isomorphisms} that the family of seminorms
\begin{align*}
T&\mapsto \|H^\alpha T H^{\alpha'}\|_2, & \alpha,\alpha'\in \indm N,
\end{align*}
as well as the family
\begin{align*}
T&\mapsto \|\Htot^m T \Htot^{m'}\|_2, & m,m'\in \Nl,
\end{align*}
induce the topology of $\sop$.

\subsection{Range of a Schwartz operator}

\label{sec:range-schw-oper}

In this section, we will characterize Schwartz operators in terms of their range (Prop. \ref{SopRange} below). This motivates the following definition: We put
$$\sopL:=\{ T\in \bh \mid {\rm Ran}(T)\subset \sfunc N\}.$$

We now need a simple consequence of the closed graph theorem:
\begin{lem}\label{clg}
Let $A$ be closed (unbounded) operator in $\HH_N$, and let $T\in \bh$ be such that ${\rm Ran}(T)\subset \dom(A)$. Then $AT\in \bh$. The same holds true if $A$ is any polynomial of closed operators.
\end{lem}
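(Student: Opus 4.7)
The plan is to invoke the closed graph theorem. The operator $AT$ is defined on all of $\HH$, since by assumption $T\varphi\in \dom(A)$ for every $\varphi\in \HH$. Thus, to conclude boundedness, it suffices to verify that $AT$ is a closed operator from $\HH$ to $\HH$.

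To see closedness, I would take a sequence $\varphi_n\to \varphi$ in $\HH$ with $AT\varphi_n\to \psi$ in $\HH$, and show $AT\varphi=\psi$. Continuity of $T$ gives $T\varphi_n\to T\varphi$; meanwhile $T\varphi_n\in\dom(A)$ and $A(T\varphi_n)\to \psi$. Closedness of $A$ then forces $T\varphi\in\dom(A)$ (which we already know from the assumption on $\mathrm{Ran}(T)$) and $A(T\varphi)=\psi$, i.e.\ $AT\varphi=\psi$. So $AT$ is closed and everywhere defined, hence bounded by the closed graph theorem.

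For the polynomial extension, the natural reading is that $A=p(A_1,\dots,A_k)$ is built by finitely many sums, scalar multiples, and compositions of closed operators $A_1,\dots,A_k$, and that $\mathrm{Ran}(T)$ lies in the natural domain on which this expression makes sense (so all intermediate vectors lie in the relevant domains). The argument then proceeds by induction on the number of composition factors in each monomial. For a single factor the first part applies directly. For a monomial $A_{i_1}A_{i_2}\cdots A_{i_m}T$, the innermost product $A_{i_m}T$ is bounded by the first part, and boundedness means in particular it is everywhere defined on $\HH$ with range in $\dom(A_{i_{m-1}})$ by hypothesis; so one may apply the first part again to $A_{i_{m-1}}(A_{i_m}T)$, and so on. Sums and scalar multiples of bounded operators are bounded, finishing the polynomial case.

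The only subtle point I anticipate is the everywhere-definedness needed when applying the closed graph theorem: one must be careful that $\mathrm{Ran}(T)\subset \dom(A)$ really does give $AT$ a domain equal to all of $\HH$, not merely a dense subspace. The hypothesis supplies exactly this. For the polynomial case, the main thing to watch is simply bookkeeping of domains so that each successive application of the single-operator statement is legitimate; no deeper obstacle should arise.
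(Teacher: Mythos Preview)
Your proof is correct and follows essentially the same approach as the paper: verify that $AT$ is closed by the standard sequence argument exploiting closedness of $A$ and boundedness of $T$, then invoke the closed graph theorem, and handle the polynomial case by induction on the factors. The paper's proof is virtually identical, including the explicit reminder that $\dom(A_1A_2)=\{\varphi\in\dom(A_2)\mid A_2\varphi\in\dom(A_1)\}$ to justify the inductive step.
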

\begin{proof} Let $(\varphi_n)$ be a sequence of elements of $\HH_N$, converging to a $\varphi\in \HH$, and such that also $(AT\varphi_n)$ converges. (This is equivalent to saying that $(\varphi_n)$ converges in the graph norm of $AT$.) But $T$ is bounded, so $(T\varphi_n)$ converges to $T\varphi$. Since $A$ is closed, and $T\varphi_n\in \dom(A)$ for each $n$, this implies that ($T\varphi\in \dom(A)$ and) $AT\varphi = \lim_n AT_n$. Hence, $AT$ is closed. Since it is everywhere defined, it is bounded by the closed graph theorem (see e.g. \cite[Thm. III.12]{RSI}). The last statement of the lemma follows by induction; recall that by definition $$\dom(A_1A_2) =\{\varphi\in \dom(A_2)\mid A_2\varphi\in \dom(A_1)\}$$ for any two operators $A_1,A_2$.
\end{proof}

The following proposition characterizes $\sopL$, and gives it a natural topology.
\begin{prop}\label{leftSchwartz} Let $T\in \bh$. The following conditions are equivalent:
\begin{itemize}
\item[(i)] $T\in \sopL$.
\item[(ii)] $Q^\alpha P^\beta T\in \bh$ for each $\alpha,\beta\in \indm N$.
\item[(iii)] $\|T\|_{\alpha,0,\beta,0}<\infty$ for each $\alpha,\beta\in \indm N$.
\item[(iv)] $Q^\alpha P^\beta T\in \schtn p$ for some (resp. all $p=1,2,\ldots$), and each $\alpha,\beta\in \indm N$.
\item[(v)] $H^\beta T\in \schtn p$ for some (resp. all $p=1,2,\ldots$), and each $\beta\in \indm N$.
\item[(vi)] $\|(\matr{T}_\alpha)\|_{\beta\vee 0}<\infty$ for each $\beta\in \indm N$.
\end{itemize}
The seminorms
\begin{equation}\label{halfseminorms}
T\mapsto \sup_{\varphi\in \HH, \|\varphi\|\leq 1} \|T\varphi\|_{\alpha,\beta}=\|T\|_{\alpha,0,\beta,0} = \|Q^{\alpha}P^\beta T\|, \Forall \alpha,\beta\in \indm N
\end{equation}
induce a locally convex topology on $\sopL$, which makes it a Frechet space. The families
\begin{align*}
\{T &\mapsto \|Q^\alpha P^\beta T\|_2, \mid\alpha,\beta\in \indm N\},  & \{T&\mapsto \|(\matr{T}_\alpha)\|_{\beta\vee 0}\mid  \beta\in \indm N\}, & \{T&\mapsto \|H^\alpha T\|_2\mid \alpha\in \indm N\}
\end{align*}
of Hilbert space seminorms each induce this topology.
\end{prop}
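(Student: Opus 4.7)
The plan is to establish the chain (i)$\Leftrightarrow$(ii)$\Leftrightarrow$(iii), then connect this cluster to the Schatten conditions (iv), (v), and (vi), and finally to verify Fréchet completeness together with the seminorm-family equivalence. The cluster (i)$\Leftrightarrow$(ii)$\Leftrightarrow$(iii) is the easy part. I will obtain (i)$\Rightarrow$(ii) from Lemma \ref{clg}: since ${\rm Ran}(T)\subset \sfunc N \subset \dom(Q^\alpha P^\beta)$ and $Q^\alpha P^\beta$ is a polynomial of closed operators, the lemma delivers boundedness of $Q^\alpha P^\beta T$. The equivalence (ii)$\Leftrightarrow$(iii) is essentially definitional: when $Q^\alpha P^\beta T$ is a bounded, everywhere-defined operator, its operator norm agrees with the quadratic-form norm $\|T\|_{\alpha,0,\beta,0}$ obtained by restricting $\psi,\varphi$ to the dense subspace $\sfunc N$. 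Finally, (iii)$\Rightarrow$(i) is exactly Lemma \ref{schwartzinv}.

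For the Schatten conditions, I will exploit the trace-class operator $H^{-M} := H_1^{-M_1}\cdots H_N^{-M_N}$ (with $M_i \geq 2$), as used already in Lemma \ref{traceclasslemma}. Granting (ii), I factor
$$
H^\beta T = H^{-M}(H^{\beta+M} T), \qquad Q^\alpha P^\beta T = (Q^\alpha P^\beta H^{-K})(H^K T).
$$
The operators $H^{\beta+M} T$ and $H^K T$ are bounded because $H^{\beta+M}$ and $H^K$ expand as polynomials in the $Q_i, P_i$, reducing each term to a factor already covered by (ii). A routine matrix-element estimate in the number basis --- using that $Q_i^{\alpha_i} P_i^{\beta_i}$ shifts number states by at most $\alpha_i+\beta_i$ with polynomial weight while $H^{-K}$ supplies arbitrary decay --- shows $Q^\alpha P^\beta H^{-K}\in \schtn 1$ for $K$ large enough componentwise. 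The Schatten ideal property then yields (ii)$\Rightarrow$(iv) and (ii)$\Rightarrow$(v), both with $p=1$; since $\schtn 1\subset \schtn p$ for $p\geq 1$, the ``some-$p$'' and ``all-$p$'' versions of (iv), (v) both become equivalent to (ii) once we record the reverse implications: (iv)$\Rightarrow$(ii) is trivial ($\schtn p \subset \bh$), and (v)$\Rightarrow$(ii) follows from the same $Q^\alpha P^\beta T = (Q^\alpha P^\beta H^{-K})(H^K T)$ decomposition.

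For (vi), the identity $(H_i + \tfrac 12)|\gamma\rangle = (\gamma_i+1)|\gamma\rangle$ gives directly
$$
\|(\matr T_\alpha)\|_{\beta\vee 0}^2 = \sum_{\gamma,\gamma'\in \indm N}(\gamma+1)^{2\beta}|\matr T_{\gamma\vee\gamma'}|^2 = \|(H+\tfrac 12)^\beta T\|_2^2,
$$
identifying (vi) with (v) at $p=2$. Fréchet completeness in the seminorms \eqref{halfseminorms} is a near-verbatim imitation of Proposition \ref{Frechet}: a Cauchy sequence $(T_n)\subset \sopL$ gives norm-Cauchy sequences $(Q^\alpha P^\beta T_n)$ in $\bh$ with limits $S_{\alpha,\beta}$; taking weak limits of $\langle P^\beta Q^\alpha \psi | T_n \varphi\rangle = \langle \psi | Q^\alpha P^\beta T_n\varphi\rangle$ for $\psi\in \sfunc N$, $\varphi\in \HH$ yields $T:=S_{0,0}\in \sopL$ satisfying $Q^\alpha P^\beta T = S_{\alpha,\beta}$, whence $T_n\to T$ in $\sopL$. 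The three displayed families of Hilbert-space seminorms each induce the same topology via the quantitative bounds already encountered in the equivalence chain. The main obstacle I foresee is the careful management of the unbounded factors in the Schatten decompositions: verifying domain compatibility so that each composition genuinely reproduces the intended operator (where the $\sfunc N$-range property of $T$ is critical), and establishing the Schatten-class membership of $Q^\alpha P^\beta H^{-K}$ by an explicit, if elementary, matrix-element computation in the number basis.
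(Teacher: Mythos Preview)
Your proposal is correct and follows essentially the paper's route: (i)$\Leftrightarrow$(ii)$\Leftrightarrow$(iii) via Lemmas \ref{clg} and \ref{schwartzinv}, the Schatten conditions via the trace-class regulariser $H^{-2}$, (vi) via the number-basis identity $\|(\matr T_\alpha)\|_{\beta\vee 0}=\|(H+\tfrac12)^\beta T\|_2$, and Fr\'echet completeness by the argument of Proposition \ref{Frechet}.

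One minor difference worth noting: for (ii)$\Rightarrow$(iv) you factor $Q^\alpha P^\beta T=(Q^\alpha P^\beta H^{-K})(H^K T)$ and then need a separate matrix-element argument that $Q^\alpha P^\beta H^{-K}\in\schtn 1$. The paper instead places the trace-class factor on the outside, writing $Q^\alpha P^\beta T=H^{-2}(H^2 Q^\alpha P^\beta T)$; since $H^2 Q^\alpha P^\beta$ reorders via the CCR into a finite linear combination of monomials $Q^{\alpha'}P^{\beta'}$, boundedness of the right factor follows directly from (ii) and no extra computation is needed. Both factorisations work; the paper's is a little shorter. The paper also closes the cycle with (vi)$\Rightarrow$(i) directly (the same estimate you write for (vi) shows $T\varphi\in\sfunc N$ for all $\varphi$) rather than routing through (v), but the content is identical to yours.
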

\begin{proof} Assuming (i), we have $P^\beta T\in \bh$ by Lemma \ref{clg}, because $P^\beta$ is closed (on its full domain). Since $Q^\alpha$ is also closed, and ${\rm Ran}(P^\beta T)\subset \sfunc N$, we can apply Lemma \ref{clg} again to conclude that $Q^\alpha P^\beta T\in \bh$. Hence (ii) holds. (ii) obviously implies (iii), which again implies (ii) by Lemma \ref{schwartzinv}. Assuming (ii), we can multiply $Q^\alpha P^\beta T$ from the left with $H^{-2}$ as in the proof of Lemma \ref{basiclem} (c) to conclude that that (iv) holds for $p=1$, and hence for all $p$. Assuming (iv) for \emph{some} $p$ implies, in particular, that $Q^\alpha P^\beta T\in \bh$, for all $\alpha,\beta$ which again implies that (iv) holds for $p=1$ and thus for all $p$. Trivially, (iv) for all $p$ implies (v) for all $p$. Again (v) holds for all $p$ iff it holds for some $p$, because the above argument with $H^{-2}$ works also here. Assuming (v), we know that $T$ maps into the domain of each $H^\alpha$, and we have
\begin{align*}
\|(\matr{T}_\alpha)\|_{\beta\vee 0}^2&=\sum_{\alpha,\alpha'\in \indm N} (\alpha+1)^{2\beta} |\langle \alpha |T|\alpha'\rangle|^2\\
&=\sum_{\alpha,\alpha'\in \indm N} |\langle \alpha |(H+\tfrac 12)^{\beta}T|\alpha'\rangle|^2 =\|(H+\tfrac 12)^{\beta} T\|_2^2<\infty.
\end{align*}
Hence (vi) holds. Assuming (vi) we have
\begin{align*}
\sup_{\varphi\in \HH, \|\varphi\|\leq 1} \sum_{\alpha\in \indm N} (\alpha+1)^{2\beta} |\langle \alpha |T\varphi\rangle|^2 &=  \sum_{\alpha\in \indm N} (\alpha+1)^{2\beta}\sup_{\varphi\in \HH, \|\varphi\|\leq 1}|\langle T^*|\alpha\rangle |\varphi\rangle|^2\\
&= \sum_{\alpha\in \indm N} (\alpha+1)^{2\beta}\|T^*|\alpha\rangle\|^2\\
& = \sum_{\alpha,\alpha'\in \indm N} (\alpha+1)^{2\beta} |\matr{T}_{\alpha\vee\alpha'}|^2= \|(\matr{T}_\alpha)\|_{\beta\vee 0}^2,
\end{align*}
so in particular, $T\varphi\in \sfunc N$. We have now proved the equivalences. The topology induced by the seminorms \eqref{halfseminorms} is Frechet by a similar argument as in Prop. \ref{Frechet}. Clearly, we can replace the $\sfunc N$-seminorms $\|\cdot\|_{\alpha,\beta}$ in \eqref{halfseminorms} by any other family inducing the topology of $\sfunc N$, and still get the same topology for $\sop$. Using the seminorms of the sequence space $\ssec N$, the last two computations above show that the seminorms
$$
T\mapsto \|(H+\tfrac 12)^{\beta} T\|_2=\|(\matr{T}_\alpha)\|_{\beta\vee 0}= \sup_{\varphi\in \HH, \|\varphi\|\leq 1} \|(a^{T\varphi}_\alpha)\|_\beta
$$
induce the topology of $\sopL$. Since
$$\|Q^{\alpha}P^\beta T\|\leq \|Q^{\alpha}P^\beta T\|_2\leq \|Q^{\alpha}P^\beta T\|_1\leq \|H^{-2}\|_1 \|H^2Q^{\alpha}P^\beta T\|,$$ it follows that also the seminorms $T\mapsto \|Q^{\alpha}P^\beta T\|_2$ induce the topology of $\sopL$.
\end{proof}

It is clear from the above proposition that $\sop\subset \sopL$, while the converse inclusion is obviously not true. Indeed, if $T=|\psi\rangle\langle \varphi |$, then $T\in \sopL$ iff $\psi\in \sfunc N$ (without any condition on $\varphi$). In order to make $T\in \sop$, we also need $\varphi\in \sfunc N$, that is, $T^*\in \sopL$. It turns out that these two conditions characterize $\sop$ completely. We first define
$$
\sopR := \{ T\in \bh \mid T^*\in \sopL \}
$$
We equip $\sopR$ with the topology coming from $\sopL$ in the obvious way. We now have the following characterization:
\begin{prop}\label{rightSchwartz} Let $T\in \bh$. The following conditions are equivalent:
\begin{itemize}
\item[(i)] $T\in \sopR$.
\item[(ii)] $TP^{\beta} Q^{\alpha}$ extends to a bounded operator for each $\alpha,\beta\in \indm N$.
\item[(iii)] $\|T\|_{0,\alpha,0,\beta}<\infty$ for each $\alpha,\beta\in \indm N$.
\item[(iv)] $TP^{\beta} Q^{\alpha}$ extends to a an operator of $\schtn p$ for some (resp. all $p=1,2,\ldots$), and each $\alpha,\beta\in \indm N$.
\item[(v)] $TH^{\beta}$ extends to an operator of $\schtn p$ for some (resp. all $p=1,2,\ldots$), and each $\beta\in \indm N$.
\item[(vi)] $\|(\matr{T}_\alpha)\|_{0\vee\beta}<\infty$ for each $\beta\in \indm N$.
\end{itemize}
The family of seminorms
\begin{equation}\label{halfseminorms2}
T\mapsto \sup_{\varphi\in \HH, \|\varphi\|\leq 1} \|T^*\varphi\|_{\alpha,\beta}=\|T\|_{0,\alpha,0\beta} = \|TP^{\beta} Q^{\alpha}\|, \Forall \alpha,\beta\in \indm N,
\end{equation}
as well as each family
\begin{align*}
\{T &\mapsto \|TP^{\beta} Q^{\alpha}\|_2, \mid\alpha,\beta\in \indm N\},  & \{T&\mapsto \|(\matr{T}_\alpha)\|_{0\vee\beta}\mid  \beta\in \indm N\}, & \{T&\mapsto \|TH^{\alpha}\|_2\mid \alpha\in \indm N\}
\end{align*}
of Hilbert space seminorms induce the topology of $\sopR$.
\end{prop}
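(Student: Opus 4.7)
The plan is to reduce everything to the previously proven Proposition \ref{leftSchwartz} by passing to adjoints. Since $\sopR$ is defined precisely as $\{T \in \bh \mid T^* \in \sopL\}$ and is given the topology induced by the isomorphism $T \mapsto T^*$, each statement in the proposition should be matched, via adjoint, with the corresponding statement for $T^*$ in Proposition \ref{leftSchwartz}.

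First I would observe the general adjoint identities on $\sfunc N$: for $\psi,\varphi\in \sfunc N$,
\begin{equation*}
\langle \psi \mid TP^{\beta}Q^{\alpha}\varphi\rangle = \langle Q^{\alpha}P^{\beta}T^*\psi \mid \varphi\rangle,
\end{equation*}
so that the sesquilinear form defining $\|T\|_{0,\alpha,0,\beta}$ coincides with that defining $\|T^*\|_{\alpha,0,\beta,0}$, giving $\|T\|_{0,\alpha,0,\beta} = \|T^*\|_{\alpha,0,\beta,0}$. In particular, condition (iii) for $T$ is equivalent to condition (iii) for $T^*$ in Proposition \ref{leftSchwartz}, hence to (i). The equivalence of (iii) and (ii) then follows because a densely-defined form is Hilbert-bounded if and only if it extends to a bounded operator, and this extension is $(Q^{\alpha}P^{\beta}T^*)^*$. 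Similarly (iv) and (v) for $T$ translate into (iv) and (v) for $T^*$ by using $\|A\|_p = \|A^*\|_p$ and noting that the bounded extension of $TP^{\beta}Q^{\alpha}$ is precisely the adjoint of $Q^{\alpha}P^{\beta}T^*$ (and likewise for $TH^{\beta}$ versus $H^{\beta}T^*$).

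For condition (vi) I would use the matrix entries: since $\matr{T^*}_{\alpha'\vee\alpha} = \overline{\matr{T}_{\alpha\vee\alpha'}}$, swapping the roles of $\alpha$ and $\alpha'$ in the summation gives
\begin{equation*}
\|(\matr{T}_\alpha)\|_{0\vee\beta}^2 = \sum_{\alpha,\alpha'\in\indm N}(\alpha'+1)^{2\beta}|\matr{T}_{\alpha\vee\alpha'}|^2 = \sum_{\alpha,\alpha'\in\indm N}(\alpha+1)^{2\beta}|\matr{T^*}_{\alpha\vee\alpha'}|^2 = \|(\matr{T^*}_\alpha)\|_{\beta\vee 0}^2,
\end{equation*}
so (vi) for $T$ is (vi) for $T^*$ in Proposition \ref{leftSchwartz}. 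With this, all six conditions are tied together via Proposition \ref{leftSchwartz} applied to $T^*$.

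Finally, for the topology claim, the map $T\mapsto T^*$ is by definition an isometric bijection $\sopR \to \sopL$ with respect to the seminorms \eqref{halfseminorms2} and \eqref{halfseminorms}, so $\sopR$ inherits the Fr\'echet property from $\sopL$. The Hilbert-Schmidt, matrix, and $H^{\alpha}$ families induce the topology of $\sopR$ by the same adjoint identification, using the three corresponding families for $\sopL$ from Proposition \ref{leftSchwartz}. The main (very minor) subtlety to keep clean is the domain issue in (ii) and (iv): the operator $TP^{\beta}Q^{\alpha}$ is initially only defined on $\sfunc N$, and the bounded/Schatten-class extension must be identified with $(Q^{\alpha}P^{\beta}T^*)^*$; this is exactly what Lemma \ref{clg} together with the density of $\sfunc N$ delivers, and it is the only non-bookkeeping step in the argument.
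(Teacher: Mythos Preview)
Your proposal is correct and follows essentially the same approach as the paper: reduce to Proposition~\ref{leftSchwartz} via the adjoint. The paper carries out a few of the implications (notably (ii)$\Rightarrow$(iv)$\Rightarrow$(v)$\Rightarrow$(vi)) by repeating the $H^{-2}$ trick directly on the right rather than invoking the adjoint identification, whereas you do every step uniformly through $T\mapsto T^*$; both routes rest on the same idea and the same key identities, in particular $\|(\matr{T}_\alpha)\|_{0\vee\beta}=\|(\matr{T^*}_\alpha)\|_{\beta\vee 0}$ and the fact that the bounded extension of $TP^\beta Q^\alpha$ is $(Q^\alpha P^\beta T^*)^*$.
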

\begin{proof} If (i) holds then $Q^{\alpha}P^{\beta}T^*\in \bh$ by Prop. \ref{leftSchwartz}. But the adjoint of this operator is an extension of the densely defined operator $TP^{\beta} Q^{\alpha}$, so (ii) holds. (ii) is clearly equivalent to (iii). Assuming (ii), and noticing that $TP^{\beta} Q^{\alpha}\varphi = TP^{\beta} Q^{\alpha}H^{2}H^{-2}\varphi$ for all $\varphi\in\sfunc N$, we see that the bounded extension of $TP^{\beta} Q^{\alpha}$ is trace class, and hence in each $\schtn p$. If (iv) holds for some $p$ the extension is in particular bounded, so again by multiplying with $H^{-2}$ we see that (iv) holds for all $p$. Assuming this, (v) is clear. Assuming (v) compute $\|(\matr{T}_\alpha)\|_{0\vee\beta}=\|T(H+\tfrac 12)^{\beta}\|_2<\infty$ so (vi) holds. To get (i) from (vi) just note that $\|(\matr{T}_\alpha)\|_{0\vee\beta}=\|(\matr{(T^*)}_\alpha)\|_{\beta\vee 0}$, and use Prop. \ref{leftSchwartz}.
\end{proof}

We now get the following neat characterization of $\sop$ in terms of the ranges of $T$ and $T^*$:

\begin{thm}\label{SopRange}
Let $T\in \bh$. Then $T\in \sop$ if and only if
\begin{align*}
{\rm Ran}(T)&\subset \sfunc N, & {\rm and} &  &  {\rm Ran}(T^*)&\subset \sfunc N.
\end{align*}
The topology of $\sop$ is induced by the seminorms
\begin{align*}
\|T\|_{\alpha,0,\beta,0}&= \sup_{\varphi\in \HH, \|\varphi\|\leq 1} \|T\varphi\|_{\alpha,\beta}, & \|T\|_{0,\alpha,0,\beta} &=\sup_{\varphi\in \HH, \|\varphi\|\leq 1} \|T^*\varphi\|_{\alpha,\beta}, & \alpha,\beta\in \indm N.
\end{align*}
\end{thm}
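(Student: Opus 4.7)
The plan is to show that the theorem amounts to the identity $\sop=\sopL\cap\sopR$, with the two families of half-seminorms generating the Fr\'echet topology. The forward direction is immediate: if $T\in\sop$, then $\|T\|_{\alpha,0,\beta,0}<\infty$ and $\|T\|_{0,\alpha,0,\beta}<\infty$ for all $\alpha,\beta\in\indm N$ as special cases of the seminorms defining $\sop$. Condition (iii) of Prop.\ \ref{leftSchwartz} then yields $T\in\sopL$, i.e.\ ${\rm Ran}(T)\subset\sfunc N$, and condition (iii) of Prop.\ \ref{rightSchwartz} yields $T\in\sopR$, i.e.\ ${\rm Ran}(T^*)\subset\sfunc N$.

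The substantive direction is the converse. Assume both range conditions. By Propositions \ref{leftSchwartz} and \ref{rightSchwartz}, condition (vi), the $N$-representation of $T$ satisfies
\[
\|(\matr{T}_\alpha)\|_{\gamma\vee 0}^2=\sum_{\alpha,\alpha'\in\indm N}(\alpha+1)^{2\gamma}|\matr{T}_{\alpha\vee\alpha'}|^2<\infty
\]
for all $\gamma$, and similarly $\|(\matr{T}_\alpha)\|_{0\vee\gamma'}^2<\infty$ for all $\gamma'$. The key is then the trivial pointwise estimate $2ab\leq a^2+b^2$ applied with $a=(\alpha+1)^{2\beta}$ and $b=(\alpha'+1)^{2\beta'}$, which gives
\[
(\alpha+1)^{2\beta}(\alpha'+1)^{2\beta'}\leq \tfrac{1}{2}(\alpha+1)^{4\beta}+\tfrac{1}{2}(\alpha'+1)^{4\beta'}.
\]
Summing against $|\matr{T}_{\alpha\vee\alpha'}|^2$ yields
\[
\|(\matr{T}_\alpha)\|_{\beta\vee\beta'}^2\leq \tfrac{1}{2}\|(\matr{T}_\alpha)\|_{2\beta\vee 0}^2+\tfrac{1}{2}\|(\matr{T}_\alpha)\|_{0\vee 2\beta'}^2<\infty
\]
for every $\beta,\beta'\in\indm N$. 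By Proposition \ref{isomorphisms}, this is exactly the condition for $T\in\sop$.

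For the topology claim, the half-seminorms $\|T\|_{\alpha,0,\beta,0}$ and $\|T\|_{0,\alpha,0,\beta}$ are continuous on $\sop$ since they are instances of the defining seminorms. Conversely, the matrix seminorms $\|(\matr{T}_\alpha)\|_{\beta\vee\beta'}$ generate the topology of $\sop$ by Prop.\ \ref{isomorphisms}, and the inequality above expresses them in terms of matrix half-seminorms, which by Propositions \ref{leftSchwartz} and \ref{rightSchwartz} are equivalent (on $\sopL$ and $\sopR$ respectively) to the operator-norm half-seminorms $\|T\|_{\alpha,0,\beta,0}$ and $\|T\|_{0,\alpha,0,\beta}$. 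Hence the two families of half-seminorms in the theorem induce the topology of $\sop$.

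I do not foresee a genuine obstacle: the heavy technical work was already done in Propositions \ref{isomorphisms}, \ref{leftSchwartz}, and \ref{rightSchwartz}, and the remaining content is the elementary observation that two-sided polynomial weights can be decoupled into one-sided weights via the AM-GM inequality at the level of the $N$-representation. The only minor point worth verifying is that one should apply the one-sided seminorms at doubled exponent ($2\beta$, $2\beta'$) rather than at $\beta,\beta'$, but this causes no loss since $\sopL$ and $\sopR$ control all exponents.
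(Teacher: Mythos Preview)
Your proof is correct and follows essentially the same route as the paper: reduce to the matrix ($N$-representation) seminorms via Props.\ \ref{isomorphisms}, \ref{leftSchwartz}, \ref{rightSchwartz}, then decouple the two-sided weight $(\alpha+1)^{2\beta}(\alpha'+1)^{2\beta'}$ into one-sided weights at doubled exponents. The only cosmetic difference is that the paper applies Cauchy--Schwarz to the sum (obtaining $\|(\matr{T}_\alpha)\|_{\beta\vee\beta'}^2\leq \|(\matr{T}_\alpha)\|_{2\beta\vee 0}\,\|(\matr{T}_\alpha)\|_{0\vee 2\beta'}$) and then AM--GM, whereas you apply AM--GM pointwise before summing; both yield the same conclusion.
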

\begin{proof} If $T\in \sop$ it is clear that the inclusion of the ranges follows. We now assume ${\rm Ran}(T)\subset \sfunc N$ and ${\rm Ran}(T^*)\subset \sfunc N$. From Props. \ref{leftSchwartz} and \ref{rightSchwartz} it follows that $\|(\matr{T}_\alpha)\|_{\beta\vee 0}<\infty$ and $\|(\matr{T}_\alpha)\|_{0\vee\beta}<\infty$ for all $\beta,\beta'\in \indm N$. But now the Cauchy-Schwartz inequality gives
\begin{align*}
\|(\matr{T}_\alpha)\|_{\beta\vee\beta'}^2&=\sum_{\alpha\vee\alpha'\in \indm{2N}} (\alpha\vee\alpha'+1)^{2\beta\vee\beta'} |\matr{T}_{\alpha\vee\alpha'}|^2= \sum_{\alpha,\alpha'\in \indm{N}} (\alpha+1)^{2\beta}(\alpha'+1)^{2\beta'} |\matr{T}_{\alpha\vee\alpha'}|^2\\
&\leq \sqrt{\sum_{\alpha,\alpha'\in \indm{N}} (\alpha+1)^{4\beta} |\matr{T}_{\alpha\vee\alpha'}|^2\sum_{\alpha,\alpha'\in \indm{N}} (\alpha'+1)^{4\beta'} |\matr{T}_{\alpha\vee\alpha'}|^2}\\
&= \|(\matr{T}_\alpha)\|_{2\beta\vee 0}\|(\matr{T}_\alpha)\|_{0\vee2\beta'}<\infty,
\end{align*}
so $(\matr{T}_\alpha)\in \ssec{2N}$, i.e. $T\in \sop$. Moreover, we have
$$\|(\matr{T}_\alpha)\|_{\beta\vee\beta'}\leq \sqrt{\|(\matr{T}_\alpha)\|_{2\beta\vee 0}\|(\matr{T}_\alpha)\|_{0\vee2\beta'}}\leq \frac 12 (\|(\matr{T}_\alpha)\|_{2\beta\vee 0}+\|(\matr{T}_\alpha)\|_{0\vee2\beta'}),$$
which proves that the restricted family
$$
\{ T\mapsto \|(\matr{T}_\alpha)\|_{\beta\vee 0}, \, T\mapsto \|(\matr{T}_\alpha)\|_{0\vee\beta}\mid \beta\in \indm N\}
$$
induces the topology of $\sop$. But according to Props. \ref{leftSchwartz} and \ref{rightSchwartz}, these are the seminorms of $\sopL$ and $\sopR$ put together. This completes the proof.
\end{proof}

\subsection{Applications of the range theorem}
\label{sec:appl-range-theor}

The theorem just proven has a number of interesting consequences which are
collected in this Subsection. The first shows that multiplication by a
Schwartz operator regularizes closable unbounded operators defined on the
Schwartz space.

\begin{prop}\label{sopreg} Let $A$ be a closed operator in $\HH$, with $\dom(A)\supset \sfunc N$, and let $T\in \sop$.
\begin{itemize}
\item[(a)] $AT\in \sopR$, and $TA^*$ is closable with closure in $\sopL$.
\item[(b)] If $\sfunc N$ is invariant for $A$ then $AT\in \sop$, and the extension of $TA^*$ is in $\sop$.
\end{itemize}
\end{prop}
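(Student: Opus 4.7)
The strategy is to combine the range characterizations of $\sop$, $\sopL$, and $\sopR$ (Theorem \ref{SopRange}, Propositions \ref{leftSchwartz}--\ref{rightSchwartz}) with Lemma \ref{clg} to accommodate the closed unbounded operator $A$. Since $T\in\sop$, Theorem \ref{SopRange} gives $\mathrm{Ran}(T),\mathrm{Ran}(T^*)\subset\sfunc N\subset\dom(A)$, so Lemma \ref{clg} at once yields $AT,AT^*\in\bh$.

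\textbf{Step for $AT\in\sopR$.} I verify condition (ii) of Proposition \ref{rightSchwartz}: for each $\alpha,\beta$, the operator $(AT)P^\beta Q^\alpha$ must extend boundedly to $\HH$. By Proposition \ref{schwartznorms}(a), the restriction $TP^\beta Q^\alpha|_{\sfunc N}$ already has a bounded extension $S\in\bh$. I then claim $S\in\sop$: for arbitrary polynomials $g_L,g_R$, the CCR allow one to rewrite $P^\beta Q^\alpha g_R(Q,P)$ as a polynomial $\tilde g_R(Q,P)$, so on $\sfunc N$
\begin{equation*}
g_L(Q,P)\,S\,g_R(Q,P) = g_L(Q,P)\,T\,\tilde g_R(Q,P),
\end{equation*}
whose bounded extension is again supplied by Proposition \ref{schwartznorms}(a) applied to $T$. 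Hence $S\in\sop$, and Theorem \ref{SopRange} forces $\mathrm{Ran}(S)\subset\sfunc N\subset\dom(A)$. Lemma \ref{clg} then delivers $AS\in\bh$, and since $AS$ and $(AT)P^\beta Q^\alpha$ agree on $\sfunc N$, the sought extension is $AS$.

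\textbf{Step for $TA^*$.} For $\psi\in\dom(A^*)$ and $\varphi\in\HH$, using $T^*\varphi\in\sfunc N\subset\dom(A)$ to move $A$ across the pairing,
\begin{equation*}
\langle TA^*\psi\mid\varphi\rangle=\langle A^*\psi\mid T^*\varphi\rangle=\langle\psi\mid AT^*\varphi\rangle=\langle(AT^*)^*\psi\mid\varphi\rangle,
\end{equation*}
so $TA^*\subset(AT^*)^*$. Since $(AT^*)^*$ is bounded and everywhere defined, $TA^*$ is closable with $\overline{TA^*}=(AT^*)^*$. Applying the previous step with $T$ replaced by $T^*\in\sop$ gives $AT^*\in\sopR$, which by the definition of $\sopR$ means $(AT^*)^*\in\sopL$, proving the second assertion of (a).

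\textbf{Part (b) and main obstacle.} Under the invariance $A(\sfunc N)\subset\sfunc N$, one has $\mathrm{Ran}(AT)\subset A(\sfunc N)\subset\sfunc N$, so $AT\in\sopL$; together with $AT\in\sopR$ from (a), Theorem \ref{SopRange} yields $AT\in\sop$. Analogously $\mathrm{Ran}(AT^*)\subset\sfunc N$, so $AT^*\in\sopL$; since $(\overline{TA^*})^*=AT^*$ by bounded duality and $\overline{TA^*}\in\sopL$ from (a), Theorem \ref{SopRange} once more gives $\overline{TA^*}\in\sop$. The crux is the $AT\in\sopR$ step: one cannot stop at the bare observation that $TP^\beta Q^\alpha$ extends boundedly, but must upgrade this to the statement that the extension sends \emph{all} of $\HH$ into $\sfunc N$, so that the closed operator $A$ may be composed on the left globally. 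This upgrade is exactly what recognising $S$ itself as a Schwartz operator, via Proposition \ref{schwartznorms}, accomplishes.
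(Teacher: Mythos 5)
Your proof is correct and follows essentially the same route as the paper: establish $AT\in\sopR$ by showing the bounded extension of $TP^\beta Q^\alpha$ maps $\HH$ into $\sfunc N$ (so that Lemma \ref{clg} can be applied to compose $A$ on the left), handle $TA^*$ by adjoint duality via $AT^*$, and finish (b) by combining the left- and right-range conditions through Theorem \ref{SopRange}. The only difference is that you spell out, via the CCR rearrangement and Prop. \ref{schwartznorms}, why that extension is itself a Schwartz operator, a point the paper's proof asserts without detail.
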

\begin{proof} Part (a): Since $T$ maps all of $\HH$ into $\sfunc N$ by Prop. \ref{SopRange}, it follows from Lemma \ref{clg} that $AT\in \bh$. Now $ATP^\beta Q^\alpha$ is densely defined on $\sfunc N$, and since $TP^\beta Q^\alpha$ extends to a Schwartz operator $\tilde{T}$ which, as such, maps $\HH$ into $\sfunc N$, it follows that $ATP^\beta Q^\alpha$ extends to $A\tilde{T}$ which is again bounded by Lemma \ref{clg}. Hence, it follows from Prop. \ref{rightSchwartz} that $AT\in \sopR$. Since $A$ is closed, the operator $TA^*$ is densely defined, so its adjoint $(TA^*)^*$ is well defined, and is an extension of $AT^*$. Hence $(TA^*)^*=AT^*\in \sopR$ by the previous argument, so $TA^*$ is closable with closure in $\sopL$. Part (b): We already know from (a) that $AT\in \sopR$ and $(TA^*)^{**}\in \sopL$. If $\sfunc N$ is invariant for $A$ then the ranges of the bounded operators $AT$ and $AT^*$ are in $\sfunc N$, so $AT\in \sopL$, and $(TA^*)^{**}=(AT^*)^*\in \sopR$. Hence the claim follows from Prop. \ref{SopRange}.
\end{proof}

This has the following corollary:
\begin{prop}\label{sopreg2} Let $T\in \sop$, and let $A$ and $B$ be closed operators in $\HH_N$ with domains containing $\sfunc N$ as an invariant subspace. Then $ATB^*$ is closable with closure in $\sop$.
\end{prop}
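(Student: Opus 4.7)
The plan is to reduce the two-sided statement to two applications of Proposition~\ref{sopreg}(b) and then identify the closure. First I would apply Proposition~\ref{sopreg}(b) with $B$ in place of $A$: since $\sfunc N$ is invariant for $B$, the operator $TB^*$ is closable, and its closure $\tilde S$ lies in $\sop$. Next, since $\tilde S\in\sop$ and $\sfunc N$ is invariant for $A$, another application of Proposition~\ref{sopreg}(b) yields $A\tilde S\in\sop$. This is already a Schwartz operator, so the task becomes identifying it with the closure of $ATB^*$.

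The second step is a domain-tracking argument. By Theorem~\ref{SopRange}, $T(\HH)\subset\sfunc N\subset\dom(A)$, so $ATB^*$ is defined on all of $\dom(B^*)$; the latter is dense because $B$ is closed, hence closable. For $\psi\in\dom(B^*)$ the extension relation $\tilde S\supset TB^*$ gives $\tilde S\psi=TB^*\psi\in\sfunc N\subset\dom(A)$, so $A\tilde S\psi=ATB^*\psi$. Thus $A\tilde S\supset ATB^*$, and since $A\tilde S$ is everywhere defined and bounded (it is in $\sop$), it is closed, giving the inclusion $\overline{ATB^*}\subset A\tilde S$.

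For the reverse inclusion I would use the same boundedness: for any $\psi\in\HH$, choose $\psi_n\in\dom(B^*)$ with $\psi_n\to\psi$; then continuity of $A\tilde S$ forces $ATB^*\psi_n=A\tilde S\psi_n\to A\tilde S\psi$, so $(\psi,A\tilde S\psi)$ belongs to the graph of $\overline{ATB^*}$. Hence $\overline{ATB^*}=A\tilde S\in\sop$, completing the proof.

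The only genuine subtlety is the last identification step: one has to be careful that applying $A$ to the defining approximation $TB^*\psi_n\to\tilde S\psi_n$ behaves correctly, and the cleanest way to avoid using closedness of the unbounded $A$ is to observe that once $A\tilde S$ is known to be a bounded everywhere-defined operator extending the densely defined $ATB^*$, the standard closure argument using only continuity of $A\tilde S$ closes the gap. All the heavy lifting (boundedness of $A\tilde S$, the range inclusions, invariance arguments) has already been packaged into Proposition~\ref{sopreg} and Theorem~\ref{SopRange}.
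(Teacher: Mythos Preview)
Your proof is correct and follows essentially the same route as the paper: apply Proposition~\ref{sopreg}(b) first to obtain $\tilde S=(TB^*)^{**}\in\sop$, then again to get $A\tilde S\in\sop$, and finally observe that this bounded everywhere-defined operator extends the densely defined $ATB^*$, so it must equal its closure. The paper compresses your final two paragraphs into a single sentence (a densely defined operator with a bounded everywhere-defined extension is closable with closure equal to that extension), but the logic is identical.
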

\begin{proof} It follows from Prop. \ref{sopreg} (b) that $A(TB^*)^{**}\in \sop$. But this is obviously a bounded extension of the operator $ATB^*$, which is densely defined (with domain $\dom(B^*)$) because $B^*$ is closed. Hence $ATB^*$ is closable, and the closure must coincide with $A(TB^*)^{**}\in \sop$.
\end{proof}

In Subsect. \ref{sec:topol-basic-prop} we have seen that the eigenvectors of a
positive Schwartz operators are Schwartz functions, however, there is no
statement yet about the eigenvalues. This gap is closed by the next result.

\begin{prop} \label{prop:1}
  The root of a positive Schwartz operator is again a Schwartz operator.
\end{prop}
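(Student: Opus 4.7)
The plan is to reduce, via Theorem \ref{SopRange} and self-adjointness, to a single Hilbert--Schmidt estimate, which is then settled by a short computation in the number basis. Let $T\in\sop$ with $T\geq 0$, and write $R:=T^{1/2}$ for the unique positive bounded square root of $T$. Since $R$ is self-adjoint, Theorem \ref{SopRange} reduces the claim $R\in\sop$ to the single condition $R\in\sopR$, and by Proposition \ref{rightSchwartz}(v) this in turn reduces to showing that $RH^\beta$ extends to a Hilbert--Schmidt operator for every $\beta\in\indm N$.

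The key identity, obtained from $R^2=T$ together with self-adjointness of both $R$ and $H^\beta$, is $\|RH^\beta\psi\|^2=\langle\psi\mid H^\beta TH^\beta\psi\rangle$ for $\psi\in\sfunc N$. The right-hand side is controlled because $T\in\sop$ and Proposition \ref{schwartznorms} provides a bounded extension of $H^\beta TH^\beta$ on all of $\HH$; factorising this extension as $H^{-2}\cdot(H^{2+\beta}TH^\beta)$ exactly as in Lemma \ref{traceclasslemma}, one sees that it is in fact trace class, because $H^{2+\beta}TH^\beta$ is again bounded by Proposition \ref{schwartznorms}. In particular $RH^\beta|_{\sfunc N}$ is bounded and extends uniquely to a bounded operator on $\HH$.

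For the Hilbert--Schmidt norm I will compute on the number basis $\{|\alpha\rangle\}_{\alpha\in\indm N}\subset\sfunc N$, whose vectors are eigenvectors of $H^\beta$. A direct calculation gives
\begin{equation*}
\sum_{\alpha\in\indm N}\|RH^\beta|\alpha\rangle\|^2=\sum_{\alpha\in\indm N}(\alpha+\tfrac12)^{2\beta}\langle\alpha|T|\alpha\rangle=\tr\!\bigl(H^\beta TH^\beta\bigr),
\end{equation*}
which is finite by the previous paragraph. Hence the bounded extension of $RH^\beta$ is Hilbert--Schmidt. Proposition \ref{rightSchwartz}(v) then yields $R\in\sopR$, and self-adjointness of $R$ together with Theorem \ref{SopRange} upgrades this to $R\in\sop$.

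There is no real obstacle in this argument; the only technical point requiring care is the passage from the quadratic-form identity on $\sfunc N$ to an honest operator-theoretic statement. This is handled by Proposition \ref{schwartznorms}, which guarantees that $H^\beta TH^\beta$ genuinely extends to an operator in $\tc$, together with the fact that the number basis lies in $\sfunc N$ so that the HS sum really is the HS norm of the extension.
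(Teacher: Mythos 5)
Your proof is correct and follows essentially the same route as the paper: both arguments use $\|\sqrt{T}X\varphi\|^2=\langle X\varphi|TX\varphi\rangle$ with the Schwartz seminorms of $T$ to conclude $\sqrt{T}\in\sopR$ via Prop.~\ref{rightSchwartz}, and then upgrade to $\sqrt{T}\in\sop$ by self-adjointness and Thm.~\ref{SopRange}. The only (harmless) difference is that you verify the Hilbert--Schmidt criterion (v) with $H^\beta$ by a trace computation, whereas the paper checks the simpler boundedness criterion (ii) for $\sqrt{T}P^\beta Q^\alpha$ directly.
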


\begin{proof}
  Let $T$ be a positive Schwartz operator. Take arbitrary $\alpha,\beta \in \indm{N}$, and $\varphi \in \sfunc N$ with $\|\varphi\|=1$. Then
  \begin{align}
    \left\|\sqrt{T} P^\beta Q^\alpha \varphi\right\|^2 &= \left\langle
      P^\beta Q^\alpha \varphi| T P^\beta Q^\alpha \varphi \right\rangle \leq \|T\|_{\alpha,\alpha,\beta,\beta}<\infty,
%    = \langle P^\beta Q^\alpha \phi| \myrho  P^\beta Q^\alpha \phi \rangle =
 %   \langle \phi|  H^\alpha \myrho H^\alpha  \phi \rangle \leq \|H^\alpha \myrho H^\alpha\|< \infty,
 \label{eq:19}
  \end{align}
showing that $\sqrt{T}P^\beta Q^\alpha$ extends (in a unique way) to a bounded operator on $\mathcal{H}$.
%  , which we will denote
%  in abuse of notation by $\sqrt{T}P^\beta Q^\alpha$, as well.
%
%  The next step is to show that $\sqrt{\myrho}H^\alpha$ is even a
%  Hilbert-Schmidt operator. This follows with the same argument as in
%  (\ref{eq:19})
%  \begin{align*}
%    \left\| \sqrt{\myrho} H^\alpha \right\|_2^2 = \sum_n \left\langle
%      \sqrt{\myrho} H^\alpha \phi_n, \sqrt{\myrho} H^\alpha \phi_n \right\rangle
%    = \sum_n \langle \phi_n, H^\alpha \myrho H^\alpha \myrho \phi_n \rangle =
%    \tr(H^\alpha \myrho H^\alpha) < \infty,
%  \end{align*}
%  where $(\phi_n)_n$ is any complete orthormal system in $\mathcal{H}$ with
%  $\phi_n \in \sfunc{n}$ - e.g. the Hermite functions. Hence
%  $\sqrt{\myrho}H^\alpha$ is Hilbert Schmidt for all $\alpha$.
According to the equivalence of (i) and (ii) in Prop. \ref{rightSchwartz} this implies that
  \begin{displaymath}
    \sqrt{T} \in \sopR = \{ T \in \mathcal{B}(\mathcal{H})\,|\, T^* \in
    \sopL \}.
  \end{displaymath}
  But since $\sqrt{T}$ is selfadjoint this implies that $\sqrt{T} \in
  \sopL$ holds as well. Hence the statement follows from Thm. \ref{SopRange}.
\end{proof}

\begin{cor} Let $T\in \sop$. Then $|T|\in \sop$ as well. The singular values $c_k$ of $T$ satisfy the fall-off condition $\sum_k c_k^{\frac{1}{2n}}<\infty$ for all $n\in \mathbb N$.
\end{cor}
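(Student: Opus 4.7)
The plan is to first show $|T|\in\sop$ and then to extract the singular-value decay by iterating Proposition \ref{prop:1}.

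For the first claim, I would chain the closure properties already established. By Lemma \ref{basiclem}(a), $T^*\in\sop$. Applying Lemma \ref{basiclem}(b) to the pair $(T^*,T)$ with the bounded factor $A=\idty$ then yields $T^*T\in\sop$. This operator is positive, so Proposition \ref{prop:1} gives $|T|=\sqrt{T^*T}\in\sop$, which is the first assertion.

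For the fall-off condition, I would iterate the square root. Define recursively $S_0:=|T|$ and $S_{m+1}:=\sqrt{S_m}$, so that each $S_m=|T|^{1/2^m}$ is obtained from $S_{m-1}$ by Proposition \ref{prop:1}, and hence $S_m$ is a positive Schwartz operator for every $m\in\Nl$. By Lemma \ref{traceclasslemma} we have $\sop\subset\tc$, so every $S_m$ is trace class. The operators $S_m$ share a common eigenbasis with $|T|$, and since the eigenvalues of $|T|$ are precisely the singular values $c_k$ of $T$, those of $S_m$ are $c_k^{1/2^m}$. Consequently
\begin{equation*}
\sum_k c_k^{1/2^m} \;=\; \tr\, S_m \;<\;\infty \qquad \text{for every } m\in\Nl.
\end{equation*}

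To convert this into the stated bound, I would fix $n\in\Nl$ and pick $m$ large enough that $2^m\geq 2n$. Since $c_k\to 0$, all but finitely many $c_k$ lie in $(0,1]$, and for those indices smaller exponents give larger values, so $c_k^{1/(2n)}\leq c_k^{1/2^m}$. Summing over the tail and adding the finitely many remaining terms gives $\sum_k c_k^{1/(2n)}<\infty$, as required. I do not anticipate a genuine obstacle: everything rests on Proposition \ref{prop:1}, Lemma \ref{basiclem}, and the inclusion $\sop\subset\tc$; the only mild subtlety is that $1/(2n)$ is not generally of the form $1/2^m$, which is resolved by the monotonicity of $c\mapsto c^a$ on $(0,1]$.
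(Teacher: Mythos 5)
Your proof is correct and follows essentially the same route as the paper: show $|T|=\sqrt{T^*T}\in\sop$ via Lemma \ref{basiclem}(a),(b) and Prop.~\ref{prop:1}, then iterate the square root and use $\sop\subset\tc$ to get summability of $c_k^{1/2^m}$. The only difference is that you make explicit the passage from exponents $1/2^m$ to $1/(2n)$ via monotonicity of $c\mapsto c^a$ on $(0,1]$, a step the paper's terser proof leaves implicit.
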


The condition in the Lemma is not compatible with any power law, but does not imply polynomial or faster decay, because there can be a subsequence with rapidly increasing $k$, along which $k c_k$ is unbounded. This leaves open the question what the optimal decay statement in the Lemma might be.

\begin{proof} Since $|T|=\sqrt{T^*T}$, the first claim follows from Lemma \ref{basiclem} (a) and (b), and the above Proposition. By applying the Proposition $n$ times to the positive Schwartz operator $|T|$ we
see that the series $\sum_k c_k^{1/2n}$ with the singular values
$c_k$, $k \in \mathbb{N}$ of $T$ converges for all $n \in \mathbb{N}$. (Recall that every Schwartz operator is trace class by Lemma \ref{traceclasslemma}.)
\end{proof}

Finally we can use the last result to prove the following ``cycling under the trace'' result.

\begin{cor}\label{tracecycling}
  For each $T \in \sop$ and each closed operator $A$ such that $\sfunc N\subset \dom(A)\cap \dom(A^*)$, we have $\tr[T A] = \tr[A T]$.
\end{cor}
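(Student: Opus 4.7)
My plan is to reduce the identity $\tr[TA]=\tr[AT]$ to the case of finite-rank Schwartz operators, where both sides can be verified directly, and then pass to the limit via continuity in the Fr\'echet topology of $\sop$.

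First I would verify that both traces make sense. By Prop.~\ref{sopreg}(a) applied to $A$ and $T$, we have $AT\in \sopR$; applied to the closed operator $A^*$ (with $\dom(A^*)\supset \sfunc N$ by hypothesis) and $T$, the operator $T(A^*)^*=TA$ is closable with closure $\widetilde{TA}\in \sopL$. Both $\sopR$ and $\sopL$ are contained in $\tc$ (take $\alpha=\beta=0$ and $p=1$ in condition (iv) of Prop.~\ref{rightSchwartz} and Prop.~\ref{leftSchwartz}), so that $AT$ and $\widetilde{TA}$ are trace class and the two traces in question are well defined (with $\tr[TA]$ interpreted as $\tr[\widetilde{TA}]$).

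Next I would show that the linear maps $\Phi_1:T\mapsto AT$ and $\Phi_2:T\mapsto \widetilde{TA}$ from $\sop$ to $\tc$ are continuous. Both spaces being Fr\'echet, by the closed graph theorem it suffices to show that the graphs are closed. So suppose $T_n\to T$ in $\sop$ and $AT_n\to S$ in $\tc$; then both convergences hold in particular in operator norm (the operator norm $\|\cdot\|$ coincides with the $\sop$-seminorm $\|\cdot\|_{0,0,0,0}$, and $\|\cdot\|\leq \|\cdot\|_1$ on $\tc$). For any $\varphi\in \HH$ we have $T_n\varphi\in \sfunc N\subset \dom(A)$ converging to $T\varphi\in \sfunc N$, and $AT_n\varphi\to S\varphi$; closedness of $A$ forces $AT\varphi=S\varphi$, so $S=AT$. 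Continuity of $\Phi_2$ follows either by the analogous argument, or by writing $\Phi_2$ as the composition of the $\sop$-involution $T\mapsto T^*$ (continuous by Lemma~\ref{basiclem}(a)), $\Phi_1$ with $A$ replaced by $A^*$, and the $\tc$-isometry $R\mapsto R^*$.

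For a rank-one operator $|\varphi\rangle\langle\psi|$ with $\varphi,\psi\in \sfunc N$, one computes directly $\tr[A|\varphi\rangle\langle\psi|]=\langle\psi|A\varphi\rangle$, while the closure of $|\varphi\rangle\langle\psi|\,A$ is $|\varphi\rangle\langle A^*\psi|$, of trace $\langle A^*\psi|\varphi\rangle=\langle\psi|A\varphi\rangle$ (using $\varphi\in \dom(A)$ and $\psi\in \dom(A^*)$). By linearity the identity $\tr[AT]=\tr[\widetilde{TA}]$ holds on $\sopnot$. Since $\tr:\tc\to \Cx$ is continuous and $\sopnot$ is dense in $\sop$ by Prop.~\ref{schwartzproperties}(b), the two continuous functionals $T\mapsto \tr[AT]$ and $T\mapsto \tr[\widetilde{TA}]$ coincide on all of $\sop$. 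The principal obstacle is Step~2: the trace is continuous only with respect to $\|\cdot\|_1$, so mere norm continuity of $T\mapsto AT$ would not yield $\tr[AT_n]\to \tr[AT]$. Invoking the closed graph theorem, together with the a priori fact $\sopR\subset \tc$ supplied by Prop.~\ref{sopreg}, sidesteps the need for any explicit estimate of $\|AT\|_1$ in terms of $\sop$-seminorms.
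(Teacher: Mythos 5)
Your proof is correct, but it follows a genuinely different route from the paper's. The paper reduces to the case $T>0$ by writing a general $T\in\sop$ as a linear combination of four positive Schwartz operators (using $T\pm T^*\in\sop$ and $|T|\in\sop$), then invokes Prop.~\ref{prop:1} to get $\sqrt{T}\in\sop$, so that $\sqrt{T}A$ and $A\sqrt{T}$ lie in $\sopL$ and $\sopR$ and are hence trace class, and the identity follows from ordinary cyclicity of the trace in the chain $\tr(\sqrt{T}\,\sqrt{T}A)=\tr(\sqrt{T}A\,\sqrt{T})=\tr(A\sqrt{T}\,\sqrt{T})$. You instead avoid the square-root and positivity machinery entirely: you settle well-definedness via Prop.~\ref{sopreg}(a) applied to $A$ and to $A^*$ (correctly using $(A^*)^*=A$ for closed densely defined $A$, and reading $\tr[TA]$ as the trace of the closure $\widetilde{TA}$, which is the only sensible and also the paper's implicit interpretation), then verify the identity on rank-one operators $|\varphi\rangle\langle\psi|$ with $\varphi,\psi\in\sfunc N$, and pass to general $T$ by density of $\sopnot$ (Prop.~\ref{schwartzproperties}(b)) together with continuity of $T\mapsto AT$ and $T\mapsto\widetilde{TA}$ as maps $\sop\to\tc$, obtained from the closed graph theorem; your graph-closedness argument and the identification $\widetilde{TA}=(A^*T^*)^*$ both check out. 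The trade-off: the paper's argument is shorter and stays within tools it has just developed (the square root of a positive Schwartz operator, which is in fact why the corollary appears where it does), whereas yours needs the closed graph theorem for Fr\'echet spaces (the paper only ever cites the Banach-space version in Lemma~\ref{clg}), but in exchange it sidesteps Prop.~\ref{prop:1} altogether and yields as a by-product the continuity of the multiplication maps $T\mapsto AT$ and $T\mapsto\widetilde{TA}$ from $\sop$ into $\tc$, a statement of independent interest that the paper's route does not provide.
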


\begin{proof}
  Note first that for any $T \in \sop$ we have $T\pm T^* \in \sop$, and a selfadjoint $T$ can be decomposed
  according to $T = T_+ - T_-$ with $T_\pm = (T \pm |T|)/2$. Obviously $T_\pm > 0$ and $T_\pm \in \sop$. Hence
  each $T \in \sop$ can be written as a linear combination of four positive Schwartz operators. Therefore it
  is sufficient to prove the statement for $T > 0$. In this case $\sqrt{T}$ exists and is a Schwartz operator,
  $\sqrt{T} A\in \sopL$ and $A \sqrt{T}\in \sopR$ by Prop. \ref{sopreg} (a), hence both are trace class by
  Props. \ref{leftSchwartz} and \ref{rightSchwartz}. Since the Schwartz operator $\sqrt{T}$ is also trace
  class we get
  \begin{displaymath}
    \tr(T A) = \tr\left( \sqrt{T} \sqrt{T} A\right) = \tr\left(\sqrt{T}
      A \sqrt{T}\right) = \tr\left( A \sqrt{T} \sqrt{T}\right) =
      \tr(A T).
  \end{displaymath}
\end{proof}

\subsection{Basic quantum harmonic analysis on Schwartz operators and functions}

\label{sec:basic-quant-harm}

According to the correspondence theory \cite{Werner}, operators on $\HH$ correspond to functions on $\phaseS$ via the convolutions
defined above, and these are compatible with the Fourier-Weyl transform in the sense of \eqref{fourierconv}. This correspondence works also on the level of Schwartz operators and Schwartz functions. %We will henceforth denote $\sfu=\sfunc {2N}$ when referring to Schwartz functions on the phase space $\phaseS$.

We know that the symplectic Fourier transform $f\mapsto \widehat{f}$ is a topological isomorphism of $\sfu$ onto itself. The following proposition shows that an analogous statement holds for the Weyl transform.

\begin{prop} $T\in \sop$ if and only if $\widehat T\in \sfu$. The map
$$
\sop\ni T\mapsto \widehat T\in \sfu
$$
is a topological isomorphism.
\end{prop}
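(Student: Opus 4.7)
The plan is to use the explicit factorisation of the Weyl transform provided by Lemma \ref{wtunitary}, together with the identification of $\sop$ with $\sfunc{2N}$ via kernels established in Prop. \ref{isomorphisms}. Concretely, writing $\widehat{T}=U(\idty\otimesH F^*)V\kernel{T}$, I would argue that each of the four maps
\[
\sop\xrightarrow{T\mapsto \kernel T}\sfunc{2N}\xrightarrow{V}\sfunc{2N}\xrightarrow{\idty\otimesH F^*}\sfunc{2N}\xrightarrow{U}\sfu
\]
is a topological isomorphism, so their composition is as well.

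The first arrow is a topological isomorphism by Prop. \ref{isomorphisms}. For $V$, defined by $(V\psi)(q,q')=\psi(q'-q,q')$, I would note that it is induced by the invertible linear change of variables $(q,q')\mapsto (q'-q,q')$ on $\Rl^{2N}$; such a change of variables is a topological isomorphism of $\sfunc{2N}$ onto itself, since it commutes (up to constants) with differentiation and transforms polynomially bounded coordinate functions into polynomially bounded ones. For $\idty\otimesH F^*$, I would appeal to the fact that $F$ (and hence $F^*$) is a topological isomorphism of $\sfunc N$ onto itself, and use Proposition 2.5(b)/(c) (tensor products of continuous maps on $\sfunc N$) to conclude that $\idty\otimesS F^*$ is a topological isomorphism of $\sfunc{2N}$ onto itself, and that its action on any $\psi\in\sfunc{2N}$ coincides with the Hilbert-space tensor product $\idty\otimesH F^*$. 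For $U$, given by $(U\psi)(q,p)=(2\pi)^{N/2}e^{-\frac{i}{2}q\cdot p}\psi(q,p)$, the operator is multiplication by the smooth function $e^{-\frac{i}{2}q\cdot p}$, whose derivatives are all polynomially bounded; the Leibniz rule then shows that $U$ maps $\sfunc{2N}$ continuously into $\sfu$, and the same argument applied to $U^{-1}$ (multiplication by $(2\pi)^{-N/2}e^{\frac{i}{2}q\cdot p}$) gives the inverse.

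Composing, $T\mapsto\widehat T$ is a topological isomorphism from $\sop$ onto $\sfu$, and in particular $T\in\sop$ iff $\widehat T\in\sfu$.

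The only point requiring some care is that Lemma \ref{wtunitary} gives the identity $\widehat T=U(\idty\otimesH F^*)V\kernel T$ \emph{a priori} only as an equation between elements of $L^2(\phaseS)$ (the operators $U$, $V$, $\idty\otimesH F^*$ there are unitary extensions). Since all three extensions restrict to the Schwartz-level topological isomorphisms described above, and $\kernel T\in\sfunc{2N}$ when $T\in\sop$ by Prop. \ref{isomorphisms}, the formula can be read as an identity in $\sfu$ as needed; this is the main (though mild) obstacle to be addressed.
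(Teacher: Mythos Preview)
Your proposal is correct and follows essentially the same approach as the paper: factor the Weyl transform as $\widehat T = U(\idty\otimesH F^*)V\kernel T$ via Lemma \ref{wtunitary}, invoke the kernel isomorphism of Prop.~\ref{isomorphisms}, and then check that each of $U$, $V$, and $\idty\otimesH F^*$ restricts to a topological isomorphism of the relevant Schwartz spaces. Your treatment is in fact more explicit than the paper's, particularly in justifying $U$ and $V$ and in flagging the passage from the $L^2$-level identity to the Schwartz-level one.
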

\begin{proof}
We prove that the $\widehat T\in \sfu$ if and only if the kernel condition Prop. \ref{isomorphisms} (ii) holds, and that the restriction of the Hilbert-Schmidt Weyl transform is a topological isomorphism between $\sop$ and $\sfu$. In the notation of Lemma \ref{wtunitary}, we have $\widehat{T} = U(\idty\otimes F^*)V\kernel{T}$ for all $T\in \hs$. Now the unitary operator $U$ obviously maps $\sfunc{2N}$ onto $\sfu$, and $V$ keeps $\sfunc {2N}$ invariant, with the corresponding restrictions being continuous in the topology of $\sfunc{2N}$. What we need in addition, is that $\idty\otimes F^*$ has the same property. But we know that $F^*$ keeps $\sfunc N$ invariant, with the restriction being continuous, so we only need to apply Prop. \ref{sfdensity} (e) and (d). The proof is complete.
\end{proof}

Since multiplication by a Schwartz function is continuous in $\sfu$, it follows immediately from \eqref{fourierconv} that for a fixed $S_0\in \sop$, the convolutions keep the Schwartz spaces invariant, and
\begin{align*}%\label{corrmaps}
\sfu\to \sop,&\, f\mapsto f*S_0, & \sop\to \sfu, & \,T\mapsto T*S_0
\end{align*}
are continuous. %Similarly, for a fixed $f\in \sfu$, the map $T\mapsto f*T$ is continuous on $\sop$.
These correspondence maps are not surjective; the best one can hope for is that the range is dense. This holds at least for the ground state of $\Htot$:
%We say that $T\in \sop$ is \emph{Schwartz-regular}, if the range of $\sop\ni S\mapsto S*T\in \sfu$ is dense in $\sfu$.

\begin{lem}\label{schwartzreg} Let $S_0=|0\rangle\langle 0|$, i.e. the ground state of $\Htot$. Then the range of $T\mapsto T*S_0$ is dense in $\sfu$.
\end{lem}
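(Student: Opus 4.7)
The strategy is to pass to the symplectic Fourier picture, where the convolution turns into pointwise multiplication by the nowhere-vanishing Gaussian $\widehat{S_0}$, and then to produce approximations by a standard compactly-supported cutoff.

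First, I would record the relevant identifications: the Weyl transform is a topological isomorphism $\sop\to\sfu$ (shown in the proposition preceding this lemma), and the symplectic Fourier transform \eqref{symplecticF} is a topological isomorphism $\sfu\to\sfu$. Combined with \eqref{fourierconv}, which gives $\widehat{T*S_0}=\widehat{T}\,\widehat{S_0}$, the statement reduces to the following purely classical claim: for every $g\in\sfu$ there is a sequence $(h_n)\subset \sfu$ such that
\[
h_n\,\widehat{S_0}\ \longrightarrow\ g\qquad\text{in }\sfu .
\]
Indeed, given such $h_n$, the Weyl isomorphism yields $T_n\in\sop$ with $\widehat{T_n}=h_n$, and then $\widehat{T_n*S_0}=h_n\widehat{S_0}$ converges to $g$, so $T_n*S_0$ converges in $\sfu$ to the element whose symplectic Fourier transform is $g$; since $g\in\sfu$ was arbitrary, the range is dense.

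Next, I would compute $\widehat{S_0}$ explicitly. Writing $|0\rangle(q)=\pi^{-N/4}e^{-|q|^2/2}$ and using the Weyl rule \eqref{Weylrule}, a Gaussian integral gives
\[
\widehat{S_0}(q,p)=\langle 0|W(q,p)|0\rangle=e^{-(|q|^2+|p|^2)/4},
\]
up to a unimodular phase that plays no role. In particular $\widehat{S_0}\in\sfu$ and, crucially, it is strictly positive on all of $\phaseS$, so $1/\widehat{S_0}$ is smooth (although of course not Schwartz) on $\phaseS$.

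The main step is the cutoff approximation. Choose $\chi\in C_c^\infty(\phaseS)$ with $\chi(0)=1$ and set $\chi_n(x):=\chi(x/n)$. Define
\[
h_n\ :=\ \frac{\chi_n\, g}{\widehat{S_0}}.
\]
Since $\chi_n$ has compact support and $g/\widehat{S_0}$ is smooth on all of $\phaseS$, the function $h_n$ is smooth with compact support, hence $h_n\in\sfu$. By construction $h_n\widehat{S_0}=\chi_n g$, so it remains to prove $\chi_n g\to g$ in $\sfu$. This is the only piece of real work, and it is the routine Schwartz-cutoff lemma: by Leibniz,
\[
D^\beta(\chi_n g-g)=(\chi_n-1)D^\beta g+\sum_{0<\gamma\le\beta}\binom{\beta}{\gamma}\,n^{-|\gamma|}(D^\gamma\chi)(\cdot/n)\,D^{\beta-\gamma}g,
\]
and for any multi-index $\alpha$ the first term multiplied by $x^\alpha$ tends to zero uniformly because $(\chi_n-1)$ vanishes on balls of radius $\sim n$ while $x^\alpha D^\beta g$ is Schwartz, and the remaining terms carry an explicit factor $n^{-|\gamma|}$ against bounded quantities. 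Hence every Schwartz seminorm of $\chi_n g-g$ tends to $0$, which completes the reduction and the proof.

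The only non-routine ingredient is identifying $\widehat{S_0}$ as a nowhere-vanishing Gaussian; once this is in place the rest is a direct cutoff argument. No harder obstacle is anticipated.
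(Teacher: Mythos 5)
Your argument is correct, but it takes a genuinely different route from the paper. The paper does not approximate an arbitrary target: it shows that the Hermite basis vectors $|\alpha\vee\alpha'\rangle$, whose span is dense in $\sfu$ by Prop.~\ref{sfdensity}~(b), lie \emph{exactly} in the range. This is done by computing $\widehat{|\alpha\vee\alpha'\rangle}$ explicitly as $e^{-\frac12(q^2+p^2)}$ times Hermite polynomials, and observing that dividing by $\widehat{S_0}=e^{-\frac14(q^2+p^2)}$ leaves a Gaussian-times-polynomial function, hence an element of $\sfu$; the corresponding $T\in\sop$ then satisfies $T*S_0=|\alpha\vee\alpha'\rangle$. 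You instead fix an arbitrary target, pass to the Fourier--Weyl picture where the problem becomes division by the nowhere-vanishing Gaussian, and use a compactly supported cutoff $\chi_n$ so that $\chi_n g/\widehat{S_0}$ is automatically Schwartz; the standard cutoff lemma $\chi_n g\to g$ in $\sfu$ finishes the argument. Your approach buys generality: it only uses that $\widehat{S_0}$ is smooth and nowhere vanishing, so it proves density of the range of $T\mapsto T*S_0$ for \emph{any} Schwartz operator $S_0$ with non-vanishing Weyl transform, whereas the paper's computation is tailored to the oscillator ground state. The paper's approach buys something else: it exhibits explicit exact preimages of a dense subspace (the Hermite span is literally contained in the range), which is a slightly stronger conclusion and avoids the cutoff estimate. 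Two small points to tighten in your write-up: choose $\chi\equiv 1$ on a neighbourhood of the origin (with only $\chi(0)=1$ your claim that $\chi_n-1$ vanishes on balls of radius $\sim n$ is false, although the uniform estimate can still be salvaged by splitting $|x|\le R$ and $|x|>R$), and note that for the ground state the phase in $\widehat{S_0}$ is in fact identically $1$, consistent with the formula $\widehat{T_0}(q,p)=e^{-\frac14(q^2+p^2)}$ stated later in the paper.
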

\begin{proof} According to Prop. \ref{sfdensity} (b), the linear span of the vectors $|\alpha\vee\alpha'\rangle$, $\alpha,\alpha'\in \indm N$ is dense in $\sfu$. Fix $\alpha,\alpha'\in \indm N$. Then $|\alpha\vee\alpha'\rangle$ is an eigenfunction of the Fourier-Plancherel operator $F$ on $L^2(\Rl^{2N})$. Since $\widehat{|\alpha\vee\alpha'\rangle}(q,p)= (F|\alpha\vee\alpha'\rangle)(-p,q)$, we have
$$
\widehat{|\alpha\vee\alpha'\rangle}(q,p)= {\rm const.}\, e^{-\frac 12(q^2+p^2)}\prod_{i=1}^N H_{\alpha_i}(-p_i)H_{\alpha'_i}(q_i),
$$
where the $H_n$ are Hermite polynomials. Since $\widehat{S_0}(q,p) = e^{-\frac 14(q^2+p^2)}$, we see that $\widehat{S_0}^{-1}\widehat{|\alpha\vee\alpha'\rangle}\in \sfu$.
Hence there is an $T\in \sop$ such that $\widehat{T} = \widehat{S_0}^{-1}\widehat{|\alpha\vee\alpha'\rangle} $;
this has $T*S_0 = |\alpha,\alpha'\rangle$. This completes the proof.
%We have shown that $T_0$ is Schwartz-regular. Since $(W(x)SW(-x))*(W(-x)T_0W(x)) = S*T_0$, also each coherent density operator is Schwartz-regular.
\end{proof}

Another commonly used correspondence between functions on $\phaseS$ and operators on $\HH$ is the \emph{Weyl quantization}, of which there exists a large amount of literature; see e.g. \cite{Ali} and the references therein. For Schwartz functions, the Weyl quantisation is defined by
\begin{align}\label{wignersq}
\sfu\to \sop, \quad & f\mapsto \wq{f}:=\check{\widehat{f}}_-,
\end{align}
where (as defined above), $f\mapsto \check{f}$ denotes the inverse Weyl transform. Hence, the Weyl quantization also provides a topological isomorphism between $\sfu$ and $\sop$. The value of the inverse transform $\wf{T}:=\widehat{\widehat{T}}_-$ is called the \emph{Wigner function} of $T\in \sop$. By definition, this has
\begin{align*}
\int dx \wf{T}(x) &=\widehat{T}_-(0) = {\rm tr}[T], & \int dx |\wf{T}(x)|^2 = \|T\|_2^2.
\end{align*}
In order to check that this leads to the standard definition, we take $T=|\psi\rangle\langle \varphi|\in \sop$, and compute the Wigner function at $x=(q,p)$:
\begin{align*}
\wf{T}(x) &= \int e^{i\{x,y\}} {\rm tr}[W(y)T] dy\\
&= \int \frac{dq'dp'}{(2\pi)^N} e^{i(q'\cdot p-q\cdot p')}\int dq'' \overline{\varphi(q'')} e^{-iq'\cdot p'/2}e^{ip'\cdot q''}\psi(q''-q')\\
&= \int_{\Rl^N} dq' e^{iq'\cdot p}\frac{1}{\sqrt{(2\pi)^N}} \int_{\Rl^N} dp' e^{-i(q'/2+q)\cdot p'}\frac{1}{\sqrt{(2\pi)^N}}\int_{\Rl^N}dq''\,e^{ip'\cdot q''}\overline{\varphi(q'')}\psi(q''-q')\\
&= \int_{\Rl^N} dq' e^{iq'\cdot p} \overline{\varphi(q'/2+q)}\psi((q'/2+q)-q')
=\int_{\Rl^N} dq' e^{iq'\cdot p} \overline{\varphi(q+q'/2)}\psi(q-q'/2)\\
&=2^N\int_{\Rl^N} dq' e^{2ip\cdot q'} \overline{\varphi(q+q')}\psi(q-q')\\
&=2^N\int_{\Rl^N} dq' e^{2ip\cdot q'} \overline{\varphi(q+q')}\psi(q-q')\\
&=2^N\langle W(-x)\varphi|\Pi W(-x)\psi\rangle = 2{\rm tr}[W(x)\Pi W(x)^*T].
\end{align*}
Since both sides are continuous with respect to $T\in \sop$, we have, in general
\begin{equation}
\wf{T}(x)=2^N{\rm tr}[W(x)\Pi W(x)^*T], \quad T\in \sop.
\end{equation}
From the computation we also get the commonly used formula for the Wigner function:
$$
\wf{|\psi\rangle\langle \varphi|}(x)=\int_{\Rl^N} dq' e^{iq'\cdot p} \overline{\varphi(q+q'/2)}\psi(q-q'/2).
$$
Using \eqref{inverseWeyl}, we get the Wigner quantization of an $f\in \sfu$:
\begin{align*}
\langle \varphi |\wq{f}\psi\rangle &= \int dy \langle \varphi |W(y)\psi\rangle \int e^{i\{x,y\}}f(x)dx
= \int dx f(x) \widehat{\widehat{|\psi\rangle\langle \varphi|}}_-(x)\\%\int dy \,e^{i\{x,y\}}\langle\varphi |W(y)\psi\rangle\\
&= \int \frac{dqdp}{(2\pi)^N} f(q,p) \int_{\Rl^N} dq' e^{iq'\cdot p} \overline{\varphi(q+q'/2)}\psi(q-q'/2)\\
&= \int \frac{dqdp}{(2\pi)^N} f(q,p) 2\int_{\Rl^N} dq' e^{2i(q'-q)\cdot p} \overline{\varphi(q')}\psi(2q-q')\\
&= \int_{\Rl^N} dq' \int 2dqdp \frac{1}{(2\pi)^{N}}\,f(q,p)e^{2i(q'-q)\cdot p} \overline{\varphi(q')}\psi(2q-q')\\
&= \int_{\Rl^N} dq' \int dq \frac{1}{(2\pi)^{N}}\int dp \,f\left(\frac{q+q'}{2},p\right)e^{i(q'-q)\cdot p}\, \overline{\varphi(q')}\psi(q).
\end{align*}
Hence, the kernel of $\wq{f}$ is given by
\begin{equation}\label{wignerkernel}
\kernel{\wq{f}}(q,q')= \frac{1}{(2\pi)^{N}}\int dp \,f\left(\frac{q'+q}{2},p\right)e^{i(q-q')\cdot p}.
\end{equation}
From the above computation we also see that
$$
{\rm tr}[\wq{f}T] = \int dx f(x) \wf{T}(x), \qquad T\in \sop.
$$

The basic example is the Gaussian state $T_0=|h_0\rangle\langle h_0|$, where $h_0(q) = \frac{1}{\pi^{1/4}} e^{-\frac 12 x^2}$ is the ground state of $\Htot$. For this, we have $\kernel{T_0}(q,q') = \pi^{-N/2}e^{-\frac 12 (q^2+(q')^2)}$, and
\begin{align*}
\widehat{T_0}(q,p) &= e^{-\frac 14(q^2+p^2)}, &\wf{T_0}(q,p)&= 2^Ne^{-(q^2+p^2)}.
\end{align*}
Note that even though all three functions are Gaussian, the constant in the exponent is different in each case.
\subsection{Operations on Schwartz operators}

\label{sec:oper-schw-oper}

We now describe operator analogues of basic operations on Schwartz functions.

\subsubsection{Multiplication}

We first look at the multiplication of functions. First recall that a function $g:\phaseS\to \Cx$ is \emph{polynomially bounded}, if there exist $m\in \Nl$ and $C>0$ such that
$$
|g(q,p)|\leq C(1+\sum_{i=1}^N (q_i^2+p_i^2))^m.
$$
The set of functions which, together with their derivatives, are polynomially bounded, is denoted by $O_M(\phaseS)$. A differentiable function $g:\phaseS\to \Cx$ defines a continuous map $f\mapsto gf$ on $\sfu$ if and only if $g$ and all its derivatives $D^\alpha g$ are polynomially bounded (see e.g. \cite{RSI}).

It is now easy to formulate an analogous condition for operators: we say that a densely defined operator $A$ is \emph{polynomially bounded from the right (resp. left)} if $A$ (resp. $A^*$) is relatively bounded with respect to $\Htot^m$ for some $m\in \Nl$. If both hold, we simply say that $A$ is \emph{polynomially bounded}. Recall \cite{RSII} that a densely defined operator $A$ is \emph{relatively bounded} with respect to an operator $H_0$ (or \emph{$H_0$-bounded}, for short) if $\dom(H_0)\subset \dom(A)$, and there exist positive constants $a,b>0$ such that
$$
\|A\varphi\|\leq a\|H_0\varphi\|+b\|\varphi\|, \qquad \text{for each }\varphi\in \dom(H_0).
$$
If $H_0$ is positive and selfadjoint (as is the case with each $\Htot^m$), the resolvent $(1+H_0)^{-1}$ is bounded, and maps the whole Hilbert space bijectively onto $\dom(H_0)$. From this it is easy to see that $A$ is $H_0$-bounded if and only if $\dom(H_0)\subset \dom(A)$ and $A(1+H_0)^{-1}$ is bounded. If $A$ is closed, it follows from Lemma \ref{clg} that $A$ is $H_0$-bounded iff $\dom(H_0)\subset \dom(A)$.

We make the following observation:
\begin{cor}\label{cyclingcor} The "cycling under the trace" formula (Cor. \ref{tracecycling}) holds also for every polynomially bounded operator $A$ (regardless of whether $A$ is closed).
\end{cor}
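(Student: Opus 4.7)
The plan is to reduce the claim to Cor.~\ref{tracecycling} by passing to the closure of $A$. Since $A$ is polynomially bounded, there exists $m\in\Nl$ with $\dom(\Htot^m)\subset \dom(A)\cap\dom(A^*)$; as $\Htot^m$ is selfadjoint this domain is dense, so in particular $\dom(A^*)$ is dense and hence $A$ is closable. Writing $\bar A$ for the closure, we have $(\bar A)^* = A^*$, and both $\dom(\bar A)$ and $\dom(\bar A^*)$ still contain $\sfunc N$. Thus Cor.~\ref{tracecycling} applies to the closed operator $\bar A$ and yields $\tr(T\bar A) = \tr(\bar A T)$.

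It then remains to identify $\tr(AT)=\tr(\bar A T)$ and $\tr(TA)=\tr(T\bar A)$. For the first, I would invoke Thm.~\ref{SopRange} to conclude that ${\rm Ran}(T)\subset\sfunc N\subset \dom(A)$; thus $AT$ and $\bar A T$ are both bounded operators defined on all of $\HH$ that agree pointwise (since $\bar A$ restricts to $A$ on $\dom(A)$). For the second, recall that in the proof of Cor.~\ref{tracecycling} the trace $\tr(T\bar A)$ is interpreted via the factorization $\sqrt T\cdot\overline{\sqrt T\bar A}$, with $\overline{\sqrt T\bar A}\in \sopL$ trace class by Prop.~\ref{sopreg}(a) applied to the closed operator $\bar A^*=A^*$. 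A short graph-approximation argument (take $\psi\in\dom(\bar A)$, approximate it by $\psi_n\in\dom(A)$ in graph norm, and apply the bounded operator $\sqrt T$) shows $\overline{\sqrt T A}=\overline{\sqrt T\bar A}$, so the same factorization defines $\tr(TA)$ and the two traces coincide. Chaining the three equalities gives $\tr(TA)=\tr(AT)$.

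The main subtlety I anticipate is giving a precise meaning to $\tr(TA)$ when $A$ is merely polynomially bounded and not assumed closed; one must verify that the bounded closure of $\sqrt T A$ exists and coincides with that of $\sqrt T\bar A$. Once this identification is made, no new estimates beyond the closed case of Cor.~\ref{tracecycling} are required.
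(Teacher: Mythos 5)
Your proposal is correct, but it takes a different route from the paper. The paper does not pass to the closure of $A$ at all: it re-runs the argument of Cor.~\ref{tracecycling} directly, using the polynomial bound to write $A\sqrt T=\bigl(A(1+\Htot^m)^{-1}\bigr)(1+\Htot^m)\sqrt T$, which shows $A\sqrt T$ is everywhere defined and bounded, and similarly that $A^*\sqrt T$ is bounded, whence its adjoint is the bounded closure of $\sqrt T A$; the cycling identity then follows from the same $\sqrt T$-factorisation computation as in the closed case. You instead observe that $\dom(A^*)\supset\dom(\Htot^{m})$ is dense, so $A$ is closable, apply Cor.~\ref{tracecycling} to $\bar A$ (whose domain and whose adjoint's domain contain $\sfunc N$), and then identify $\tr(AT)=\tr(\bar AT)$ via ${\rm Ran}(T)\subset\sfunc N\subset\dom(A)$ and $\tr(TA)=\tr(T\bar A)$ via the graph-norm argument showing $\overline{\sqrt TA}=\overline{\sqrt T\bar A}$ (with Prop.~\ref{sopreg}(a) applied to the closed operator $\bar A^{*}=A^{*}$ supplying the trace-class closure). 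Your reduction uses the closed-case corollary as a black box and is arguably cleaner conceptually, at the cost of having to make the meaning of $\tr(TA)$ for non-closed $A$ explicit and to verify the closure identifications — which you do; the paper's direct factorisation through $(1+\Htot^m)^{-1}$ sidesteps that bookkeeping and exploits the relative bound more explicitly. Both arguments rest on the same ingredients (Prop.~\ref{sopreg}, the range theorem, and cycling of bounded against trace-class operators), and your extra steps (closability, $(\bar A)^*=A^*$, the graph approximation) are all sound.
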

\begin{proof} Since $\sqrt{T}$ is a Schwartz operator, it maps everything into $\sfunc N$, which is contained in $\dom(\Htot^m)$ (for all $m$) and hence also in $\dom(A)\cap\dom(A^*)$. Thus $A\sqrt T$ is everywhere defined, and bounded because $A\sqrt T= (A(1+\Htot^m)^{-1})(1+\Htot^m)\sqrt T$. Similarly, $A^*\sqrt T$ is bounded, and hence its adjoint equals the closure of $\sqrt TA$, which is therefore also bounded. Hence we can use the same argument as in the proof of Cor. \ref{tracecycling}.
\end{proof}

The set of operators $A$ such that each $Q^\alpha P^\beta A$ is polynomially bounded from the right, is denoted by $O_{MR}(\HH)$, and the set of operators $A$ such that $AP^\beta Q^\alpha$ is polynomially bounded from the left is denoted by $O_{ML}(\HH)$. We also define $O_{M}(\HH)=O_{ML}(\HH)\cap O_{MR}(\HH)$. Clearly the adjoint operation is a bijective map between $O_{MR}(\HH)$ and $O_{ML}(\HH)$.

\begin{prop}\label{ppalgebras} Every element $A\in O_{MR}(\HH)$ maps $\sfunc N$ into itself. If $A\in O_{MR}(\HH)$ and $B\supset A$, then $B\in O_M(\HH)$. The sets $O_{MR}(\HH)$, $O_{ML}(\HH)$, and $O_{M}(\HH)$ are algebras with respect to the usual addition and multiplication of unbounded operators.
\end{prop}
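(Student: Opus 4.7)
The first claim is essentially an unwinding of definitions. If $A\in O_{MR}(\HH)$, then for each $\alpha,\beta\in\indm N$ the operator $Q^\alpha P^\beta A$ is $\Htot^{m(\alpha,\beta)}$-bounded, which by the standard definition forces $\dom(\Htot^{m(\alpha,\beta)})\subset \dom(Q^\alpha P^\beta A)$. Since $\sfunc N\subset \dom(\Htot^m)$ for every $m$, I get $A\sfunc N\subset \dom(Q^\alpha P^\beta)$ for all $\alpha,\beta\in\indm N$. By Lemma \ref{schwartzvector} this intersection is exactly $\sfunc N$, so $A\sfunc N\subset \sfunc N$.

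For the extension statement, since $B\supset A$ I have $\sfunc N\subset\dom(A)\subset\dom(B)$ with $B\restriction \sfunc N = A\restriction \sfunc N$. The strategy is to argue that the estimates defining $O_{MR}$ and $O_{ML}$ are fundamentally estimates on $\sfunc N$: on this subspace $\|Q^\alpha P^\beta B\varphi\|=\|Q^\alpha P^\beta A\varphi\|\leq a\|\Htot^m\varphi\|+b\|\varphi\|$, and since $\sfunc N$ is a core for each $\Htot^m$ (immediate from the Hermite-basis functional calculus), the estimate extends by continuity to all of $\dom(\Htot^m)$, yielding $B\in O_{MR}(\HH)$. To obtain $B\in O_{ML}(\HH)$, I dualize: the operator $AP^\beta Q^\alpha$ is densely defined on $\sfunc N$, and the bound on $Q^\alpha P^\beta A$ translates through $\langle\psi,AP^\beta Q^\alpha\varphi\rangle=\langle (Q^\alpha P^\beta)^* \text{-relative estimate}\,\psi,\varphi\rangle$ into an $\Htot^m$-bound on $(AP^\beta Q^\alpha)^*$; since $(BP^\beta Q^\alpha)^*\subset (AP^\beta Q^\alpha)^*$ on the domain in question, the same bound applies.

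For the algebra claim, closure under scalar multiplication and sums follows by combining the triangle inequality with the observation $\Htot\geq (N/2)\idty$, so that $\Htot^m$-boundedness implies $\Htot^{m'}$-boundedness for all $m'\geq m$; taking the maximum exponent handles the sum. For products, given $A,B\in O_{MR}(\HH)$, I write $Q^\alpha P^\beta(AB)=(Q^\alpha P^\beta A)B$ on $\sfunc N$; by the bound on the left-hand factor it suffices to produce an estimate of the form $\|\Htot^{m_1}B\varphi\|\leq a\|\Htot^{m_2}\varphi\|+b\|\varphi\|$. I obtain this by expanding $\Htot^{m_1}$ as a finite linear combination of monomials $Q^{\alpha'}P^{\beta'}$ using the CCR, applying the $O_{MR}$ bound to each $Q^{\alpha'}P^{\beta'}B$, and taking the maximum of the associated exponents. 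The same scheme, after adjointing, handles $O_{ML}(\HH)$, and then $O_M(\HH)=O_{MR}(\HH)\cap O_{ML}(\HH)$ is automatically an algebra.

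\textbf{Main obstacle.} I expect the subtle step to be the $O_{ML}$ part of the extension claim, where the adjoint $(BP^\beta Q^\alpha)^*$ must inherit the $\Htot^m$-bound from $(AP^\beta Q^\alpha)^*$. The bookkeeping here involves unbounded domains and closures rather than purely algebraic manipulations, and the key lemma behind it is the coreness of $\sfunc N$ for every $\Htot^m$ — which is clean via the Hermite-eigenbasis expansion but needs to be invoked explicitly, together with Lemma \ref{clg}, to guarantee that no bounded estimate is lost in passing to adjoints on the extended domain.
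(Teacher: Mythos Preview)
Your arguments for the first claim and for the algebra closure are correct and essentially match the paper's proof (the paper phrases the product estimate via resolvents $(1+\Htot^m)^{-1}$ rather than direct relative bounds, but the content is identical).

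The second claim, however, is a typo in the statement: it should read $B\in O_{MR}(\HH)$, not $B\in O_M(\HH)$, and that is precisely what the paper's proof establishes (``any extension of $A$ is in $O_{MR}(\HH)$''). Your instinct that the $O_{ML}$ half is the obstacle is exactly right---because that half is simply false. A concrete counterexample, essentially the one given in the Remark immediately following the proposition: take $A=|\varphi\rangle\langle\psi|$ with $\varphi\in\sfunc N$ but $\psi\notin\dom(Q)$. Then every $Q^\alpha P^\beta A=|Q^\alpha P^\beta\varphi\rangle\langle\psi|$ is a bounded rank-one operator, so $A\in O_{MR}(\HH)$; but $(AQ)^*$ has domain only $\varphi^\perp$ (since $\chi\mapsto\langle\psi,Q\chi\rangle$ is unbounded), hence is not $\Htot^m$-bounded for any $m$, and $A\notin O_{ML}(\HH)$. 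Taking $B=A$ already violates the claim as stated.

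Your dualization sketch fails at the point where you write ``the bound on $Q^\alpha P^\beta A$ translates \ldots\ into an $\Htot^m$-bound on $(AP^\beta Q^\alpha)^*$'': a bound of the form $\|Q^\alpha P^\beta A\varphi\|\leq a\|\Htot^m\varphi\|+b\|\varphi\|$ controls $\langle Q^\alpha P^\beta A\varphi,\eta\rangle$ uniformly in $\varphi\in\dom(\Htot^m)$, but to get an $\Htot^m$-bound on $(AP^\beta Q^\alpha)^*$ you would need to control $\langle A P^\beta Q^\alpha\varphi,\eta\rangle=\langle\varphi,Q^\alpha P^\beta A^*\eta\rangle$, which involves $A^*$ rather than $A$. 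There is no way to pass between the two without an additional hypothesis on $A^*$---which is exactly the missing $O_{ML}$ assumption. So for this part, simply prove $B\in O_{MR}(\HH)$ (your argument for that is fine, and even slightly more careful than necessary: one can observe directly that $Q^\alpha P^\beta B\supset Q^\alpha P^\beta A$, and any extension of an $\Htot^m$-bounded operator is again $\Htot^m$-bounded since the estimate is required only on $\dom(\Htot^m)$).
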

\begin{proof} First note that for $A\in O_{MR}(\HH)$ and $\alpha,\beta$, there exists an $m$ such that $\sfunc N\subset \dom(\Htot^m)\subset\dom(Q^\alpha P^\beta A)$. This implies that the domain of $A$ contains $\sfunc N$ as an invariant subspace, and also shows that any extension of $A$ is in $O_{MR}(\HH)$. This proves the first two claims. Consequently, the product $AB$ of two polynomially bounded operators $A$ and $B$ is densely defined, with domain containing $\sfunc N$. For fixed $\alpha,\beta$, we now choose $m$ as above, and such that
$Q^\alpha P^\alpha A (1+\Htot^m)^{-1}$ is bounded. Then we choose $m'$ such that $\dom(\Htot^{m'})\subset \dom(B)$ and $(1+\Htot^m)B (1+\Htot^{m'})^{-1}$ is bounded; in particular, $B$ maps $\dom(\Htot^{m'})$ into $\dom(\Htot^m)$. Hence $\dom(\Htot^{m'})\subset \dom (Q^\alpha P^\beta AB)$, and
$$
Q^\alpha P^\beta AB (1+\Htot^{m'})^{-1}=Q^\alpha P^\beta A(1+\Htot^{m})^{-1}(1+\Htot^{m})B (1+\Htot^{m'})^{-1}
$$
is bounded. Hence, $O_{MR}(\HH)$ is closed under multiplication. Since $\dom(A+B)=\dom (A)\cap \dom (B)$ by definition, it is clear that $O_{MR}(\HH)$ also contains all linear combinations of its elements. This proves that $O_{MR}(\HH)$ is an algebra. Since $B^*A^*\subset (AB)^*$ for any two unbounded operators, it follows that $O_{ML}(\HH)$, and consequently also $O_{M}(\HH)$, is an algebra.
\end{proof}

\begin{remark}\label{remarkpoly} \rm Clearly, every bounded operator is polynomially bounded. More importantly, so is each polynomial of $Q$ and $P$. Indeed, by noting that $\dom(\Htot^m)$ is given explicitly as
\begin{equation}\label{Hdom}
\dom(\Htot^m) := \left\{\varphi\in \HH \mid \sum_{\alpha'\in \indm N} \left(\sum_{i=1}^N\alpha'_i\right)^m |\langle \alpha |\varphi\rangle|^2<\infty\right\},
\end{equation}
and writing each $Q_i$ and $P_i$ in terms of the ladder operators $A_i, A_i^*$ we see that for given $\alpha, \beta\in \indm N$ there exists an $m\in \Nl$ such that $\dom(\Htot^m)\subset \dom(Q^\alpha P^\beta)$. Hence, $Q^\alpha P^\beta (1+\Htot^m)^{-1}$ is bounded by Lemma \ref{clg}. More generally, each polynomial of closed operators with domain containing $\dom(\Htot^m)$ for some $m$, is polynomially bounded from the right.
\end{remark}

\begin{remark}\rm Note that even though every bounded operator is polynomially bounded, $\mathcal{B(H)}$ is \emph{not} included in any of the sets $O_{MR}(\HH)$, $O_{ML}(\HH)$ and $O_M(\HH)$, because e.g $A\in O_{MR}(\HH)$ requires $\sfunc N\subset \dom(Q^\alpha P_\alpha A)$. A rank one operator $|\varphi\rangle\langle \psi|$ with $\psi\in \sfunc N$ and $\varphi\notin \dom(Q)$, is in $O_{MR}(\HH)$ but not in $O_{ML}(\HH)$.
\end{remark}

Concerning multiplication, we now have the following:
\begin{prop}\label{polymultcont} If $A\in O_{MR}(\HH)$ (resp. $A\in O_{ML}(\HH)$), the multiplication
$T\mapsto AT$ (resp. $T\mapsto TA$) is a continuous map from $\sop$ into itself.
\end{prop}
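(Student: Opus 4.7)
The plan is to prove the $O_{MR}$ case in full, then deduce the $O_{ML}$ case by adjointing via Lemma \ref{basiclem}(a). For the main case, I would first establish that $AT\in\sop$ using the range criterion of Thm.~\ref{SopRange}, which reduces the problem to checking $AT\in\sopL$ and $AT\in\sopR$. The former is immediate: by Prop.~\ref{ppalgebras}, $A$ maps $\sfunc N$ into itself, and by Thm.~\ref{SopRange}, $T$ maps $\HH$ into $\sfunc N$, so $\mathrm{Ran}(AT)\subset\sfunc N$, which suffices by Prop.~\ref{leftSchwartz}. For the latter, by Prop.~\ref{rightSchwartz}(ii) I need $ATP^\beta Q^\alpha$ to extend boundedly for each $\alpha,\beta$. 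Taking $\alpha=\beta=0$ in the definition of $O_{MR}(\HH)$ gives $m\in\Nl$ with $A(1+\Htot^m)^{-1}$ bounded; on $\sfunc N$ I would then factor
\begin{displaymath}
ATP^\beta Q^\alpha \;=\; A(1+\Htot^m)^{-1}\cdot (1+\Htot^m)\,T\,P^\beta Q^\alpha,
\end{displaymath}
where the second factor is the $\sfunc N$-restriction of a polynomial sandwich of $T$, bounded by Prop.~\ref{schwartznorms}(a), and the first is bounded by assumption. The composition is therefore a bounded operator on $\HH$ that extends $ATP^\beta Q^\alpha$.

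For continuity, by Thm.~\ref{SopRange} it suffices to control the two ``half'' families $\|AT\|_{\alpha,0,\beta,0}=\|Q^\alpha P^\beta AT\|_\infty$ and $\|AT\|_{0,\alpha,0,\beta}=\|ATP^\beta Q^\alpha\|_\infty$ by finite combinations of $\sop$-seminorms of $T$. The right-hand seminorm is handled by the factorization above: the extension estimate gives
\begin{displaymath}
\|AT\|_{0,\alpha,0,\beta} \;\leq\; \bigl\|A(1+\Htot^m)^{-1}\bigr\|_\infty\,\bigl\|(1+\Htot^m)\,T\,P^\beta Q^\alpha\bigr\|_\infty,
\end{displaymath}
and the second factor is exactly a Prop.~\ref{schwartznorms}(b) seminorm of $T$. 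For the left-hand seminorm, I would use the full strength of $A\in O_{MR}(\HH)$: for each $\alpha,\beta$ there is $m_{\alpha,\beta}$ with $(Q^\alpha P^\beta A)(1+\Htot^{m_{\alpha,\beta}})^{-1}$ bounded, and the analogous factorization
\begin{displaymath}
Q^\alpha P^\beta AT \;=\; (Q^\alpha P^\beta A)(1+\Htot^{m_{\alpha,\beta}})^{-1}\cdot (1+\Htot^{m_{\alpha,\beta}})\,T
\end{displaymath}
(valid on $\HH$ since $T(\HH)\subset\sfunc N\subset\dom(\Htot^{m_{\alpha,\beta}})$) yields a bound by a constant times the seminorm $\|(1+\Htot^{m_{\alpha,\beta}})T\|_\infty$. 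Continuity of $T\mapsto AT$ then follows from the standard criterion for linear maps between Fr\'echet spaces.

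For the $O_{ML}$ case, since the adjoint is a bijection $O_{ML}(\HH)\leftrightarrow O_{MR}(\HH)$ and $T\mapsto T^*$ is a topological isomorphism on $\sop$ by Lemma \ref{basiclem}(a), composing with the first case shows that $T\mapsto (A^*T^*)^*$ is continuous $\sop\to\sop$; this bounded operator agrees with $TA$ on the dense invariant subspace $\sfunc N$ and serves as the intended meaning of $TA$ throughout. The main obstacle I anticipate is domain bookkeeping: the identity inserting $(1+\Htot^m)^{-1}(1+\Htot^m)$ between $A$ and $T$ is only valid on vectors in $\dom(\Htot^m)$, and one has to argue at each step that the composition of bounded extensions coincides with the algebraic composition on $\sfunc N$, so that extending by continuity is legitimate. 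Everything else is routine estimation, but this point requires careful reference to Prop.~\ref{schwartznorms}(a) and the invariance of $\sfunc N$ under both $A$ and the polynomials of $Q,P$.
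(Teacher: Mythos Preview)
Your proposal is correct and rests on exactly the same key idea as the paper: insert $(1+\Htot^m)^{-1}(1+\Htot^m)$ between $A$ and $T$, use the defining boundedness of $Q^\alpha P^\beta A(1+\Htot^m)^{-1}$ from $A\in O_{MR}(\HH)$, and bound the remaining factor as a polynomial seminorm of $T$ via Prop.~\ref{schwartznorms}. The $O_{ML}$ case via adjoints and Lemma~\ref{basiclem}(a) is also identical.

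The only difference is structural: you route through Thm.~\ref{SopRange}, separately establishing $AT\in\sopL$ and $AT\in\sopR$ and then controlling the two half-families $\|AT\|_{\alpha,0,\beta,0}$ and $\|AT\|_{0,\alpha,0,\beta}$. The paper simply estimates the full four-index seminorm in one line,
\[
\|Q^\alpha P^\beta A T P^{\beta'}Q^{\alpha'}\| \le \|Q^\alpha P^\beta A(1+\Htot^m)^{-1}\|\,\|(1+\Htot^m)TP^{\beta'}Q^{\alpha'}\|,
\]
which at once gives $AT\in\sop$ and continuity. Your left-hand seminorm estimate is exactly this with $\alpha'=\beta'=0$, so the detour through the range theorem and the separate $\sopR$ argument is unnecessary; the same factorization already handles the general $\alpha',\beta'$. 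Your domain bookkeeping remarks are apt but, as you note, resolved by $T(\HH)\subset\sfunc N$ and Prop.~\ref{schwartznorms}(a).
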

\begin{proof} Let $A\in O_{MR}(\HH)$. Given $\alpha,\alpha',\beta,\beta'\in \indm N$ we can find $m\in \Nl$ such that $\dom(\Htot^m)\subset \dom(Q^\alpha P^\beta A)$, and $Q^\alpha P^\beta A(1+\Htot^m)^{-1}$ is bounded. But this implies that
$$\| Q^\alpha P^\beta ATP^{\beta'}Q^{\alpha'}\| \leq \|Q^\alpha P^\beta A(1+\Htot^m)^{-1}\| \|(1+\Htot^m)TP^{\beta'}Q^{\alpha'}\|,$$
showing that $T\mapsto AT$ is continuous. The claim concerning $A\in O_{ML}(\HH)$ and $T\mapsto TA$ is proved by taking the adjoint, which is continuous on $\sop$ by Lemma \ref{basiclem} (a).
\end{proof}

\subsubsection{Differentiation}

Concerning differentiation, it is again useful to first look at the function analogue. For $f\in \sfu$ let $f_y$ denote the translation of $f$, i.e. $f_y(x) = f(x-y)$. For a fixed $x$ we have
$$
D_y^\alpha f_y(x)|_{y=0} = D_y^\alpha f(x-y)|_{y=0} = (-1)^{|\alpha|} (D^\alpha f)(x).
$$
Hence, we can write each derivative $D^\alpha f\in \sfu$ as the derivative at $y=0$ of the translation $y\mapsto f_y$, in the weak (pointwise) sense. Since we know that the translations of operators in $\sop$ are represented by $y\mapsto W(y)TW(-y)$, we can use this connection to \emph{define} a derivative for Schwartz operators:

Given $\alpha\in \indm{2N}$ and $T\in \sop$ we define
\begin{align}\label{derivative}
&D^\alpha T\in \sop, & \langle \varphi|D^\alpha T\psi\rangle &= (-1)^{|\alpha |} D^\alpha \langle \varphi|W(y)TW(y-)\psi\rangle|_{y=0}.
\end{align}
Using the definition of the Weyl operators, we can express this explicitly as a polynomial of $Q$ and $P$, which shows that the result is indeed a Schwartz operator.

Another natural way of formulating differentiation is by means of the \emph{commutator (Lie) derivative}: for an arbitrary operator $A\in O_{M}(\HH)$, we define
\begin{equation}\label{liederiv}
\mathcal L_{A}(T)=[A,T].
\end{equation}
We now get the following result:

\begin{prop}\label{dercont}
For each $\alpha\in \indm{2N}$, the derivative $T\mapsto D^\alpha T$ is a continuous map from $\sop$ into itself. For each $A\in O_M(\HH)$, the Lie derivative $\mathcal L_A$ is a continuous map from $\sop$ to itself. The following identify holds.
\begin{equation}\label{liederexpansion}
D^{\alpha\vee \beta}T=(-i)^{|\alpha|}i^{|\beta|}\mathcal L_{P_N}^{\alpha_n}\circ\cdots \circ \mathcal L_{P_1}^{\alpha_1}\circ\mathcal L_{Q_N}^{\beta_N}\circ\cdots\circ\mathcal L_{Q_1}^{\beta_1}(T).
\end{equation}
\end{prop}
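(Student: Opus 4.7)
The plan is to establish continuity of the Lie derivative $\mathcal L_A$ first, use it to iterate a one-parameter-group calculation yielding \eqref{liederexpansion}, and deduce continuity of $D^\alpha$ from that identity as an immediate consequence.

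For $\mathcal L_A$: if $A\in O_M(\HH)$ then Prop.~\ref{polymultcont} gives that both $T\mapsto AT$ and $T\mapsto TA$ are continuous endomorphisms of $\sop$; hence $\mathcal L_A(T) = AT - TA$ is a continuous linear map $\sop\to\sop$. Since Remark~\ref{remarkpoly} places each $P_j$ and $Q_j$ in $O_M(\HH)$, each $\mathcal L_{P_j}$ and $\mathcal L_{Q_j}$ is a continuous endomorphism of $\sop$.

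For the identity, I would exploit that the one-parameter subgroups $t\mapsto W(te_j^{(q)}) = e^{-itP_j}$ and $t\mapsto W(te_j^{(p)}) = e^{itQ_j}$ act on operators by inner conjugation. For fixed $T \in \sop$ and $\varphi, \psi \in \sfunc N$, differentiating the matrix element $\langle \varphi | W(y) T W(-y)\psi\rangle$ at $y = 0$ (using that $\sfunc N \subset \dom(P_j)\cap\dom(Q_j)$ and ${\rm Ran}(T)\subset \sfunc N$ by Thm.~\ref{SopRange}) yields the first-order directional derivatives $-i\langle\varphi|[P_j,T]\psi\rangle$ and $i\langle\varphi|[Q_j,T]\psi\rangle$ respectively. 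The Weyl relation forces the conjugation $\Phi_T(y) := W(y)TW(-y)$ to define a genuine (phase-free) action of $(\Rl^{2N}, +)$ on $\sop$, since the phase $e^{i\{y_1,y_2\}/2}$ of $W(y_1)W(y_2)$ cancels with the corresponding one in $W(-y_2)W(-y_1)$. Hence the directional generators commute, iteration produces $\partial_y^{\alpha\vee\beta}\Phi_T(y)|_{y=0}$ as the prescribed composition of Lie derivatives with a prefactor of the form $(\pm i)^{|\alpha|}(\pm i)^{|\beta|}$, and combining with the $(-1)^{|\alpha|+|\beta|}$ factor in the definition~\eqref{derivative} yields \eqref{liederexpansion}. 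Continuity of $D^{\alpha\vee\beta}$ follows at once: the right-hand side realizes it as a finite composition of continuous linear maps $\sop \to \sop$.

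The main technical hurdle is rigorously justifying the higher-order iteration. A naive Leibniz expansion of $\partial_j\partial_k[W(y)TW(-y)]|_{y=0}$ produces mixed terms $(\partial_jW|_0)\,T\,(\partial_kW|_0)$ that are \emph{not} individually iterated commutators, and already the second-order derivatives of $W$ at the origin contain non-trivial contributions from the phase factor $e^{iq\cdot p/2}$. The remedy is to iterate using the group law rather than Leibniz: from $\Phi_{\Phi_T(y_1)}(y_2) = \Phi_T(y_1 + y_2)$, the mixed derivative $\partial_{q_j}\partial_{q_k}|_0\Phi_T$ equals the first-order $q_j$-derivative of the \emph{Schwartz operator} $\partial_{q_k}|_0\Phi_T = -i\mathcal L_{P_k}(T)$, which by the first paragraph still lies in $\sop$ so that the first-order formula applies again. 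This inductive procedure reduces all higher-order computations to repeated application of the first-order identity, sidesteps the Leibniz expansion entirely, and confirms that the mixed $W$-derivatives of the Leibniz form must recombine into iterated commutators via the commuting-flow structure.
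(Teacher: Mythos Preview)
Your proposal is correct and follows essentially the same strategy as the paper: continuity of $\mathcal L_A$ from Prop.~\ref{polymultcont}, the commutator identity by computing derivatives of $W(y)TW(-y)$, and continuity of $D^\alpha$ as a consequence of the identity. The paper's proof is terse (``direct computation using $W(q,p)=e^{iq\cdot p/2}e^{-iq\cdot P}e^{ip\cdot Q}$, noting that the phase factor does not contribute''); your group-action argument $\Phi_{\Phi_T(y_1)}(y_2)=\Phi_T(y_1+y_2)$ is a cleaner way to make that phase-cancellation rigorous and to justify the higher-order iteration without expanding mixed Leibniz terms.
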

\begin{proof} The map $\mathcal L_A$ is continuous due to Prop. \ref{polymultcont}, and the expansion of $D^{\alpha\vee\beta}T$ can be verified by direct computation, using the fact that $W(q,p)=e^{iq\cdot p/2}e^{-iq\cdot P}e^{ip\cdot Q}$, and noting that the phase factor $e^{ip\cdot q/2}$ does not contribute to the derivative. The expansion also shows the continuity of $T\mapsto D^{\alpha\vee\beta} T$.
\end{proof}
Furthermore, the derivative combines naturally with convolutions; the following identities follow immediately from the Weyl relations:
\begin{align}
\widehat{D^\alpha f}&=(D_y^\alpha e^{-i\{x,y\}}|_{y=0})\widehat f, \quad f\in \sfu, \label{derc1}\\
\widehat{D^{\alpha}T}&=(D_y^\alpha e^{i\{x,y\}}|_{y=0})\,\widehat T, \quad T\in \sop.\label{derc2}\\
D^\alpha (S*T)&=D^\alpha S * T=S*D^\alpha T, \quad T,S\in \sop,\label{derc3}\\
D^\alpha (f*T)&=D^\alpha f * T=f*D^\alpha T, \quad f\in \sfu,\,S\in \sop. \label{derc4}
\end{align}

%Using the Leibniz rule, one obtains explicitly
%\begin{align*}
%D^{\alpha\vee\alpha'}T&=\sum_{\beta+\gamma=\alpha}\sum_{\beta'+\gamma'=\alpha'} \frac{\alpha!\alpha'! (-1)^{|\beta+\gamma'|}}{\beta!\gamma!\beta'!\gamma'!} P^\beta Q^{\beta'}TQ^{\gamma'}P^\gamma,
%\end{align*}
\section{Application 1: Operator moment problems}
\label{sec:oper-moment-probl}

In order to demonstrate usefulness of the above development, we consider the operator version of \emph{moment problems}, which is a classic topic in measure theory. For instance, in the Hamburger
moment problem \cite[X.1]{RSII} we are asking for  conditions on  a sequence of
real numbers $m_n$, $n \in \mathbb{N}$ under which a measure $\mu$ on $\mathbb{R}$
exists, such that the $m_n$ become the moments of $\mu$, i.e.
\begin{displaymath}
  m_n = \int_{-\infty}^\infty x^n \mu(dx) ,
\end{displaymath}
and whether $\mu$ is uniquely determined by the $m_n$. As a non-commutative
analog we now replace $\mu$ by a positive trace-class operator $\myrho \in
\mathcal{B}_*(\mathcal{H})$ on a Hilbert space $\mathcal{H}$ and for a set of
(in general unbounded) operators $X_1, \dots X_k$ we look at the expectation values
\begin{displaymath}
  m_f = \tr (f(X_1, ..., X_k) \myrho),
\end{displaymath}
where $f$ runs over all polynomials in the $X_j$. Of course this equation is
not well defined for all $\myrho$ and we have to adjust our definition to cope
with possible domain problems. For the case where the $X_j$ are just the
canonical position and momentum operators $Q_1, \dots, Q_n; P_1, \dots, P_n$,
however, we have done exactly that in the preceding section. Indeed, Schwartz operators are exactly those trace class operators for which we can formulate the following problem: For each pair of
multiindices $\alpha, \beta \in \indm{N}$ define the numbers
\begin{equation}\label{eq:1}
  m_{\alpha,\beta} = \tr(Q^\alpha P^\beta \myrho),
\end{equation}
which we will call henceforth the \emph{moments of $\myrho$}. Is $T$ uniquely determined by its moments?

For later use we
can introduce the operators $R_1, \dots, R_{2n} = Q_1, \dots, P_N$ and the
multi index notation
\begin{displaymath}
  R^A = R_{a_1} \dots R_{a_{|A|}}, \quad A = (a_1, \dots, a_{2n})   \in \{1,
  \dots, 2n\}^{|A|}
\end{displaymath}
which is different from the one used earlier (for that reason we are using
another family symbols for the indices). Direct sums of multi-indices can also
be defined
\begin{displaymath}
  A \vee B = (a_1, \dots, a_{|A|}, b_1, \dots b_{|B|}) \quad
  \text{obviously}\ |A \vee B| = |A| + |B|,
\end{displaymath}
as well as a conjugation
\begin{displaymath}
  \overline{A} = (a_{2n}, \dots, a_1) \quad \text{if} \quad A = (a_1, \dots, a_{2n}).
\end{displaymath}
The purpose of the latter is given by the equation $(R^A)^* \phi =
R^{\overline{A}} \phi$ for all $\phi \in \sop$. With this notation we can define
\begin{displaymath}
  m_A = \tr(R^A \myrho),\quad A \in F
\end{displaymath}
where $F$ is the set of all multiindices $A$ with arbitrary length
$|A|$. Obvioulsy the $m_{\alpha,\beta}$ form a subfamily of the
$m_A$. The converse is not true, since the $R_k$ can appear in any
order. Howerver, by using the canonical commutation relations we can express
each $m_A$ as a linear combination of some of the
$m_{\alpha,\beta}$. Hence both sets of moments contain exactly the same information.

The main result of this section is the uniqueness theorem  (Thm. \ref{thm:1})
which states that under a technical condition (analyticity;
cf. Def. \ref{def:1}) each Schwartz operator is uniquely determined by its
moments. In order to prove this we need an additional tool, which is discussed
in the next subsection.

Before we come to this let us add some short remarks about two related
topics. Firstly, the existence question for the moments in
Eq. (\ref{eq:1}). As in the classical case a positivity condition (which can be
easily formulated in terms of the associative *-algebra generated by the $Q_i$
and $P_i$) is sufficient, but (and this different from the commutative case)
not sufficient. To fill this gap we need representation theory of the
Heisenberg Lie algebra \cite{Folland}. Our second remark concerns the Pauli
problem, i.e. the question whether the distribution for position and momentum
\emph{together} are sufficient to determine the corresponding density
matrix. The answer is known to be: no; cf. e.g. \cite{StuSing}. This implies
in particular the moments $m_{\alpha,0}$ and $m_ {0, \beta}$ are not sufficient
to determine the density operator, while (at least for analytic Schwartz
operators) uniqueness can be guaranteed if all of the $m_{\alpha,\beta}$ are
known (cf. Thm. \ref{thm:1} below).

\subsection{Purifications}

A purification of a positive trace class operator $\myrho$ on a Hilbert space $\mathcal{H}$
is a pair $(\mathcal{K}, \Omega)$ consisting of another Hilbert space
$\mathcal{K}$ and a vector $\Omega \in \mathcal{H} \otimes \mathcal{K}$
satisfying $\myrho = \tr_{\mathcal{K}}( \kettbra{\Omega} )$,
where $\tr_{\mathcal{K}}$ denotes the partial trace over the second tensor
factor. A purification is called minimal if $\Omega$ is cyclic for the von
Neumann algebra $\mathcal{B}(\mathcal{H}) \otimes \idty$. Existence and
uniqueness of the GNS construction implies immediately that each $\myrho$ admits
a unique (up to unitary equivalence) purification.

Our goal is now to look at purifications of positive Schwartz operators. To
formulate and prove the main theorem we need the following definition.

\begin{defn} \label{def:1}
  A Schwartz operator $\myrho$ is called \emph{analytic} if there exist constants $C, K
  > 0$ such that its moments satisfy
  \begin{displaymath}
    |m_{\overline{A} \vee A}| \leq C^2 K^{2 |A|} (|A|!)^2
  \end{displaymath}
  for all $A \in \{1, \dots, 2n\}^{|A|}$.
\end{defn}

We can reformulate Definition \ref{def:1} in terms of the square root of
$\myrho$ (which is again in $\sop$ according to Prop. \ref{prop:1}). Note that the
following result also shows that $m_{\overline{A} \vee A}$ is always
positive. Hence the modulus in Def. \ref{def:1} is redundant.

\begin{prop} \label{def:1a}
  A positive Schwartz operator $\myrho \in \sop$ is analytic iff
  there are constants $C, K > 0$ such that
  \begin{displaymath}
    \left\| R^A \sqrt{\myrho} \right\|_2 \leq C K^{|A|} |A|!
  \end{displaymath}
  holds for all $A \in \{1,\dots,2n\}^{|A|}$ of arbitrary length $|A|$.
\end{prop}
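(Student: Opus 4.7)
The plan is to reduce the claimed equivalence to the single identity
\[
m_{\overline{A}\vee A} \;=\; \bigl\|R^A\sqrt{\myrho}\bigr\|_2^{\,2},
\]
valid for every multiindex $A$. Once this identity is established, taking square roots gives the equivalence of the two bounds (and automatically the remark that $m_{\overline{A}\vee A}\geq 0$).

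To set things up: Proposition \ref{prop:1} gives $\sqrt{\myrho}\in\sop$, so $\myrho = \sqrt{\myrho}\sqrt{\myrho}$ as a product in $\sop$. Each $R^A$ is a monomial in $Q$ and $P$, hence polynomially bounded by Remark \ref{remarkpoly}. Proposition \ref{sopreg}(b) (applied to the closure of $R^A$, which is polynomially bounded in the sense of Subsection \ref{sec:oper-schw-oper}) implies that $R^A\sqrt{\myrho}$ extends to an element of $\sop$, and in particular is Hilbert--Schmidt, so the right-hand side of the identity is a finite nonnegative number.

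Now I compute. By definition,
\[
m_{\overline{A}\vee A} \;=\; \tr\!\left(R^{\overline{A}\vee A}\myrho\right)
\;=\; \tr\!\left(R^{\overline{A}}R^{A}\sqrt{\myrho}\,\sqrt{\myrho}\right).
\]
Since $R^{\overline{A}}R^A$ is polynomially bounded (Proposition \ref{ppalgebras}), Corollary \ref{cyclingcor} permits cycling one factor of $\sqrt{\myrho}$ under the trace:
\[
m_{\overline{A}\vee A} \;=\; \tr\!\left(\sqrt{\myrho}\,R^{\overline{A}}R^{A}\sqrt{\myrho}\right).
\]
Because $\sqrt{\myrho}\in\sop$ has range in $\sfunc N$ by Theorem \ref{SopRange}, the operator $R^A\sqrt{\myrho}$ is genuinely defined (not just as a closure) on all of $\HH$ with range in $\sfunc N$, and on $\sfunc N$ we have the identity $(R^A)^*= R^{\overline{A}}$ recorded before Definition \ref{def:1}. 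Thus on $\sfunc N$ the operator $R^{\overline{A}}R^A\sqrt{\myrho}$ equals $(R^A\sqrt{\myrho})^{*}R^A\sqrt{\myrho}$, and the above trace becomes $\tr\!\bigl((R^A\sqrt{\myrho})^{*}(R^A\sqrt{\myrho})\bigr)=\|R^A\sqrt{\myrho}\|_2^{\,2}$, completing the identity.

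The main thing to be careful about is domain bookkeeping: the operator $R^{\overline{A}\vee A}\myrho$ in the definition of $m_{\overline{A}\vee A}$ already requires interpretation, but since $\myrho\in\sop$ maps $\HH$ into $\sfunc N$ by Theorem \ref{SopRange}, every product appearing above is unambiguously defined on the whole Hilbert space, and the traces are all finite. The only nontrivial step is the cycling, which is precisely what Corollary \ref{cyclingcor} was set up to handle for polynomially bounded (not necessarily closed) operators; if desired, one can alternatively cycle in two steps using Corollary \ref{tracecycling} on each factor $R^{\overline{A}}$ and $R^A$ separately, since each of these is closable with $\sfunc N$ in the domain of both itself and its adjoint.
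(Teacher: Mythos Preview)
Your proof is correct and follows essentially the same route as the paper: both arguments establish the identity $m_{\overline{A}\vee A}=\|R^A\sqrt{\myrho}\|_2^2$ by writing $\myrho=\sqrt{\myrho}\sqrt{\myrho}$ and cycling one factor of $\sqrt{\myrho}$ through the trace, using $(R^A)^*=R^{\overline{A}}$ on $\sfunc N$. The paper simply records the chain of equalities in one line without the domain commentary; your invocation of Corollary~\ref{cyclingcor} is slightly loose (that result is stated as $\tr[TA]=\tr[AT]$ rather than the half-cycle you need), but the step follows immediately from the ordinary trace property since $R^{\overline{A}}R^A\sqrt{\myrho}$ is bounded and $\sqrt{\myrho}$ is trace class.
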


\begin{proof}
  The statement immediately follows from
  \begin{displaymath}
    \left\| R^A \sqrt{\myrho} \right\|_2^2 = \tr\left( (R^A \sqrt{\myrho})^*
    R^A \sqrt{\myrho} \right) = \tr\left( \sqrt{\myrho} R^{\overline{A}} R^A
      \sqrt{\myrho} \right) = \tr( R^{\overline{A} \vee A} \myrho) =
    m_{\overline{A}\vee A}.
  \end{displaymath}
\end{proof}

Finally, we come to the main result of this subsection. The main trick is to
use the kernel of $\sqrt{\myrho}$ as the purification of $\myrho$.

\begin{prop} \label{prop:2}
  Consider the minimal purification $(\mathcal{K}, \Omega)$ of a positive,
  analytic Schwartz operator $\myrho \in \sop$. The linear hull of $\{ R^A
  \otimes \idty \Omega \, | \, A \in F \}$ is dense in
  $\mathcal{H} \otimes \mathcal{K}$ and all its elements are analytic vectors
  for the canonical operators $R_1, \dots, R_{2n}$.
\end{prop}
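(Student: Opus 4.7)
The plan is first to make the purification concrete. Since $\myrho \geq 0$ lies in $\sop$, Prop. \ref{prop:1} gives $\sqrt{\myrho} \in \sop$, so $\sqrt{\myrho} \in \hs$ with range in $\sfunc N$ by Thm. \ref{SopRange}. Taking $\mathcal{K} = \overline{\mathrm{Ran}(\sqrt{\myrho})}$ and identifying $\HH \otimes \mathcal{K} \simeq \mathcal{HS}(\mathcal{K},\HH)$, the minimal purification vector $\Omega$ corresponds to $\sqrt{\myrho}$ itself, and $R^A \otimes \idty$ acts by left multiplication, so the orbit becomes $\{R^A \sqrt{\myrho} : A \in F\} \subset \hs$, each element well-defined because $\sqrt{\myrho}$ maps $\HH$ into $\sfunc N$. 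For a generic finite linear combination $\xi = \sum_{A \in F_0} c_A (R^A \otimes \idty)\Omega$ with $d = \max_{A \in F_0}|A|$, I would note that $R_j^m R^A = R^{j^{\vee m}\vee A}$ is a single monomial of length $m+|A|$, so Prop. \ref{def:1a} yields $\|R^{j^{\vee m}\vee A}\sqrt\myrho\|_2 \leq C K^{m+|A|}(m+|A|)!$ and hence $\|R_j^m \xi\| \leq C'(\xi)\, K^m (m+d)!$. The series $\sum_m (t^m/m!)\|R_j^m\xi\|$ then converges for $tK<1$, giving analyticity for each $R_j$. The same estimate applied to $H = \sum_j s_j R_j$ (multinomial expansion, bounded by $\|s\|_1^m$ total weight) shows $\xi$ is analytic for every real linear combination of the generators, with radius $(K\|s\|_1)^{-1}$; this stronger fact will be needed below.

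\textbf{Density, Step 1: the Weyl orbit lies in $\mathcal{D}$.} Let $\mathcal{D}$ denote the closed linear span of the orbit. I would aim to show first that $(W(x)\otimes\idty)\Omega \in \mathcal{D}$ for every $x \in \phaseS$. Fix $x=(q,p)$ and let $H_x = p\cdot Q - q\cdot P$ be the self-adjoint generator of $t \mapsto W(tx)$; since $H_x$ is a real linear combination of the $R_j$, the closed subspace $\mathcal{D}_x := \overline{\SP}\{(H_x^n \otimes \idty)\Omega : n \geq 0\}$ is contained in $\mathcal{D}$. By the previous paragraph $\Omega$ is an analytic vector for $H_x \otimes \idty$ of some positive radius $r_x$, so for $|t|<r_x$ the exponential series converges in norm and $(W(tx)\otimes\idty)\Omega \in \mathcal{D}_x$. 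The polynomial orbit of $\Omega$ is a dense set of analytic vectors inside $\mathcal{D}_x \cap \dom(H_x\otimes\idty)$, so by Nelson's analytic vector theorem the restriction of $H_x\otimes\idty$ to this domain is essentially self-adjoint on $\mathcal{D}_x$. Its unique self-adjoint extension must coincide with the part of $H_x\otimes\idty$ in $\mathcal D_x$, and its spectral projections leave $\mathcal{D}_x$ invariant; this extends the inclusion $W(tx)\otimes\idty\,\Omega \in \mathcal{D}_x$ from $|t|<r_x$ to all $t \in \Rl$, and in particular to $t=1$.

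\textbf{Density, Step 2, conclusion, and expected obstacle.} It remains to check that $\SP\{(W(x)\otimes\idty)\Omega : x \in \phaseS\}$ is dense in $\HH \otimes \mathcal K$. Minimality of the purification gives density of $\{(B\otimes\idty)\Omega : B \in \bh\}$, so it is enough to approximate each $B\sqrt{\myrho}$ in HS norm by linear combinations of $W(x)\sqrt{\myrho}$. Stone-von Neumann irreducibility makes the $\ast$-algebra $\SP\{W(x) : x \in \phaseS\}$ strongly dense in $\bh$, and Kaplansky density provides a uniformly norm-bounded strong approximation $B_n \to B$. Dominated convergence against the summable eigenvalues of $\myrho$ then upgrades this to $B_n\sqrt{\myrho} \to B\sqrt{\myrho}$ in $\|\cdot\|_2$, completing the density and hence the proof. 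I expect the main obstacle to be the extension-beyond-radius step inside Step 1: one must verify that the polynomial orbit is dense in $\mathcal{D}_x \cap \dom(H_x \otimes \idty)$ in the graph norm of $H_x\otimes\idty$ (so that Nelson's hypothesis applies cleanly inside the subspace), or, equivalently, that closed $H_x\otimes\idty$-invariant subspaces are preserved under the spectral calculus. Everything else is a bookkeeping exercise once the analyticity bound of Paragraph 1 is in hand.
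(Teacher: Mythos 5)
Your proposal follows essentially the same route as the paper: realize the minimal purification through $\sqrt{\myrho}$ (the paper works with its kernel function in $\HH\otimes\HH$, you use the equivalent Hilbert--Schmidt picture, which amounts to the same thing), derive analyticity of the orbit vectors from Prop.~\ref{def:1a} together with the estimate $(m+|A|)!\leq 2^{m+|A|}\,|A|!\,m!$, and obtain density by passing from the polynomial orbit to the Weyl orbit and then invoking cyclicity of $\Omega$ for $\bh\otimes\idty$. Two comparative remarks. First, the obstacle you flag in Step 1 is genuine (the paper glosses over it by simply citing Nelson, even though the factorial growth only gives a finite radius of convergence), but it can be dispatched more simply than via essential self-adjointness of $H_x\otimes\idty$ inside $\mathcal{D}_x$: your own estimate gives all orbit vectors $(R^A\otimes\idty)\Omega$ a \emph{common} analyticity radius $r_0$ for the fixed generator $H_x$ (the $A$-dependence sits only in the constant, not in the radius), so $e^{itH_x}\otimes\idty$ maps $\overline{D}$ into itself for $|t|<r_0$, and the group law $e^{iH_x}=\bigl(e^{iH_x/n}\bigr)^n$ extends the invariance to $t=1$; no graph-norm density or spectral-subspace argument is needed. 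Second, your Step 2 (Kaplansky density plus dominated convergence against the summable eigenvalues of $\myrho$) is actually more careful than the paper, which asserts that finite linear combinations of Weyl operators are norm-dense in $\bh$ --- false as stated; strong density by irreducibility is what holds and is all that is needed, and your argument supplies exactly that.
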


\begin{proof}
  We start with the kernel function $\Omega \in \mathcal{H} \otimes
  \mathcal{H}$ of $\sqrt{\myrho}$. Since $A \otimes \idty \Omega$ is for each
  $A \in \mathcal{B}(\mathcal{H})$ the kernel function of $A \sqrt{\myrho}$ we
  have:
  \begin{align*}
    \langle \Omega, A^*A \otimes \idty \Omega \rangle &= \langle A \otimes
    \idty \Omega, A \otimes \idty \Omega\rangle \\
    &= \tr\left( (A \sqrt{\myrho})^* A \sqrt{\myrho}\right) = \tr\left(\sqrt{\myrho}
      A^* A \sqrt{\myrho}\right)\\
    &= \tr(\myrho A^*A).
  \end{align*}
  Hence for any positive operator we have
  \begin{equation}\label{eq:7}
    \tr(\myrho A) = \langle \Omega, A \otimes \idty \Omega \rangle.
  \end{equation}
  Since each bounded operator can be written as a linear combination of four
  positive operators Eq. (\ref{eq:7}) holds for any $A$. Hence $\Omega$ is a
  purification of $\myrho$. To get a minimal purification consider the Schmidt
  decomposition of $\Omega$:
  \begin{equation}\label{eq:8}
    \Omega = \sum_n \lambda_n \phi_n \otimes \psi_n,\quad \lambda_n > 0
  \end{equation}
  with orthonormal systems $\phi_n \in \mathcal{H}$ and $\psi_n
  \mathcal{H}$. If necessary we extend $\phi_n$ to a complete orthonormal
  system (this might require a renumbering if $n$ already runs over $n \in
  \mathbb{N}$). Now define $\mathcal{K}$ as the closed subspace of $\mathcal{H}$
  generated by the $\psi_n$. Obviously $\Omega \in \mathcal{H} \otimes
  \mathcal{K}$. Hence we only have to show that $\Omega$ is cyclic. To this
  end consider operators $\ketbra{\phi_m}{\phi_n}$ for arbitrary $m,
  n$. Applying $\ketbra{\phi_m}{\phi_n} \otimes \idty$ to $\Omega$ we get
  according to (\ref{eq:8}) $\lambda_n \phi_m \otimes \psi_n$. Since
  $\lambda_n \neq 0$ we can generate all elements of the basis $\phi_m \otimes
  \psi_n$ of $\mathcal{H} \otimes \mathcal{K}$ that way. In other words
  $\Omega$ is cyclic and the pair $(\mathcal{K}, \Omega)$ is the minimal
  purification of $\myrho$.

  In our next step we look at the vectors $R^A \Omega \in \mathcal{H}
  \otimes \mathcal{K}$ with $A \in \{1, \dots, 2n\}^{|A|}$. we have to show
  that they are analytic for the family of operators $R_1, \dots, R_{2n}$, i.e.
  \begin{equation}\label{eq:9}
    \| (R \otimes \idty)^B \Omega \| \leq C K^{|B|} (|B|)!
  \end{equation}
  with constants $C, K \in \mathbb{R}^+$ and for all $B \in \{1, \dots,
  2n\}^{|B|}$ of arbitrary length. Since $\myrho$ is analytic we get from
  Definition \ref{def:1}
  \begin{equation} \label{eq:10}
    \| (R \otimes \idty)^{A \vee B} \Omega \| = \left\| R^{A
        \vee B} \sqrt{\myrho} \right\|_2 \leq C K^{|A|+|B|} (|A| + |B|)!
  \end{equation}
  since $\sqrt{\myrho}$ is a Schwartz operator, hence, $\Omega$ a Schwartz
  function and $R^{A \vee B} \otimes \idty \Omega$ is the kernel of $R^{A
    \vee B} \sqrt{\myrho}$. To get the form given in Eq. (\ref{eq:9}) let us
  rewrite $(|A| + |B|)!$ as
  \begin{displaymath}
    (|A| + |B|)! = { |A| + |B| \choose |B| } |A|! |B|! \, .
  \end{displaymath}
  The binomial coefficient can be estimated from above in terms of the binomial
  expansions of $2^{|A| + |B|} = (1+1)^{|A|+|B|}$:
  \begin{displaymath}
    { |A| + |B| \choose |B| } \leq 2^{|A| + |B|}.
  \end{displaymath}
  Hence, with $\tilde{C} = C (2K)^{|A|} |A|!$ and $\tilde{K} = 2K$ we get
  (\ref{eq:10}):
  \begin{displaymath}
    \| (R \otimes \idty)^B (R\otimes\idty)^A \Omega \| \leq \tilde{C}
    \tilde{K}^{|B|} B!\, ,
  \end{displaymath}
  which shows that $(R\otimes\idty)^A \Omega$ is analytic.

  It remains to show that the subspace $D \subset \mathcal{H} \otimes
  \mathcal{K}$ generated by the family $(R\otimes\idty)^A \Omega$ is dense
  in $\mathcal{H} \otimes \mathcal{K}$. To this end we use Nelsons results on
  analytic vectors (\cite{Nelson}, cf. also   \cite[X.6]{RSII} and
  \cite[Ch. 4.3]{Folland}) which imply that for each $\phi \in D$ and each
  Weyl operator $W(x)$ the vector $ W(x)\phi$ can be written
  as a (norm-convergent) series involving terms of the form $R^A\phi x^A
  (|A|!)^{-1}$, with $x^A = x_{a_1} \cdots x_{a_{|A|}}$. Hence  $W(x) \phi \in
  \overline{D}$ and therefore
  \begin{displaymath}
    D_1 = \SP \{ W(x) \otimes \idty \Omega\, | \, x \in \mathbb{R}^{2n} \} \subset
    \overline{D}.
  \end{displaymath}
  Now recall that $\Omega$ is cyclic for the algebra $\mathcal{B}(\mathcal{H})
  \otimes \idty$, hence the space $\{A \otimes \idty \Omega   \, | \,  A \in
  \mathcal{B}(\mathcal{H}) \}$ is dense in $\mathcal{H} \otimes
  \mathcal{K}$. Moreover, finite linear combination of Weyl operators are
  norm-dense in $\mathcal{B}(\mathcal{H})$. Therefore $D_1$ is dense in
  $\mathcal{H} \otimes \idty$, too, and $D_1 \subset \overline{D}$ implies
  $\overline{D} = \mathcal{H} \otimes \mathcal{K}$, which was to show.
\end{proof}

\subsection{Uniqueness}

We are now ready to prove uniqueness of the moment problem in the following
form:

\begin{thm} \label{thm:1}
  Consider two positive, analytic Schwartz operators $\myrho_1, \myrho_2 \in
  \sop$ such that
  \begin{equation} \label{eq:2}
    \tr( Q^\alpha P^\beta \myrho_1) = m_{\alpha\beta} = \tr(Q^\alpha P^\beta
    \myrho_2) \quad \forall \alpha,\beta \in \indm{N}.
  \end{equation}
  Then we have $\myrho_1 = \myrho_2$.
\end{thm}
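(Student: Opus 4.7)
My plan is to exploit the analytic purifications constructed in Proposition \ref{prop:2} to build a unitary intertwiner between the two GNS representations, and then use this to match the values of $\tr(\myrho_i W(x))$. The first observation is that, by injectivity of the Weyl transform on $\tc$ (Lemma \ref{hslemma} combined with Lemma \ref{wtunitary}), the conclusion $\myrho_1 = \myrho_2$ is equivalent to $\tr(\myrho_1 W(x)) = \tr(\myrho_2 W(x))$ for all $x \in \phaseS$; and by the purification identity (\ref{eq:7}) the latter amounts to
\[
\langle \Omega_1, (W(x)\otimes\idty)\, \Omega_1\rangle = \langle \Omega_2, (W(x)\otimes\idty)\, \Omega_2\rangle \qquad \forall x \in \phaseS,
\]
where $(\mathcal{K}_i, \Omega_i)$ is the minimal purification of $\myrho_i$. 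Also, as noted in the paragraph preceding the theorem, equality of the $m_{\alpha,\beta}$ propagates via the CCR to equality $m_A^{(1)} = m_A^{(2)}$ for every multi-index $A \in F$.

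The intertwiner is built from the Gram matrix of the generating vectors. For each $i$ and each $A, B \in F$,
\[
\langle (R\otimes\idty)^A \Omega_i,\, (R\otimes\idty)^B \Omega_i\rangle = \langle \Omega_i,\, (R\otimes\idty)^{\overline{A}\vee B} \Omega_i\rangle = m_{\overline{A}\vee B}^{(i)},
\]
which, by the hypothesis and the preceding remark, is the same for $i=1$ and $i=2$. Hence there is a well-defined linear isometry
\[
U_0 \colon D_1 \to D_2, \qquad U_0\bigl((R\otimes\idty)^A \Omega_1\bigr) = (R\otimes\idty)^A \Omega_2,
\]
on the linear spans $D_i = \SP\{(R\otimes\idty)^A \Omega_i \mid A \in F\}$. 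By Proposition \ref{prop:2}, each $D_i$ is dense in $\mathcal{H}\otimes \mathcal{K}_i$, so $U_0$ extends uniquely to a unitary $U \colon \mathcal{H}\otimes\mathcal{K}_1 \to \mathcal{H}\otimes\mathcal{K}_2$ satisfying $U\Omega_1 = \Omega_2$.

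It remains to promote this to an intertwining of $W(x)\otimes\idty$, and here is where the analyticity assumption is essential. By construction $U(R_j\otimes\idty)\phi = (R_j\otimes\idty)U\phi$ for every $\phi \in D_1$ and every $j = 1,\dots,2n$. Since $D_i$ consists of analytic vectors for the self-adjoint operators $R_j\otimes\idty$ (Prop. \ref{prop:2}) and is dense, Nelson's theorem \cite{Nelson} guarantees essential self-adjointness of $R_j\otimes\idty$ on $D_i$ and lets the algebraic intertwining relation extend to the one-parameter unitary groups $e^{it(R_j\otimes\idty)}$; since $W(x)\otimes\idty$ is, up to a scalar phase, a finite product of such exponentials in the $P_j$ and $Q_j$, we obtain $U(W(x)\otimes\idty) = (W(x)\otimes\idty)U$ for every $x$. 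Combined with $U\Omega_1 = \Omega_2$ and the unitarity of $U$, this gives the desired equality of matrix elements, closing the argument. The main obstacle is precisely this passage from the algebraic intertwining on the analytic subspace $D_i$ to the intertwining of the full unitary groups; this is exactly the content of Nelson's analytic vector theorem and explains why the analyticity condition of Definition \ref{def:1} is imposed in the statement.
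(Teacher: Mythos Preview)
Your proof is correct and follows essentially the same route as the paper: equal moments give equal Gram matrices on the purification vectors $(R\otimes\idty)^A\Omega_i$, hence a unitary $U:D_1\to D_2$, which by Prop.~\ref{prop:2} and Nelson's theorem intertwines the Weyl operators. The only (minor) difference is the endgame: the paper invokes irreducibility of the Weyl system to conclude $U=\idty\otimes\tilde U$ and then takes the partial trace, whereas you go straight to $\langle\Omega_1,(W(x)\otimes\idty)\Omega_1\rangle=\langle\Omega_2,(W(x)\otimes\idty)\Omega_2\rangle$ and finish via injectivity of the Weyl transform---a slightly more direct conclusion that avoids the irreducibility step.
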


\begin{proof}
  As already stated at the beginning of this section we can use the moments
  $m_A$ instead of $m_{\alpha\beta}$. In other words Eq. (\ref{eq:2}) is
  equivalent to
  \begin{equation} \label{eq:11}
    \tr(R^A \myrho_1) = \tr(R^A \myrho_2) \quad \forall A \in F,
  \end{equation}
  which is the relation we will use in the following. Now consider the minimal
  purifications $(\mathcal{K}_j, \Omega_j)$ of $\myrho_j$, $j=1,2$. If $f$ is a
  polynomial in the $R_k \otimes \idty$ satisfying $f \Omega_1 = 0$ we get
  \begin{displaymath}
    0 = \langle R^A \otimes \idty \Omega_1, f \Omega_1 \rangle = \langle
    \Omega_1, R^{\overline{A}} \otimes \idty f \Omega_1, \Omega_1\rangle =
    \tr (R^{\overline{A}} f \myrho_1)
  \end{displaymath}
  The operator $R^{\overline{A}} f$ is a linear combination of terms $R^B$ for
  some $B \in F$. Hence, using Eq. (\ref{eq:11}) we get
  \begin{displaymath}
    \langle R^A \otimes \idty \Omega_2, f \Omega_2 \rangle = 0,
  \end{displaymath}
  and since $A$ is arbitrary we get $f \Omega_2 = 0$ due to cyclicity of
  $\Omega_2$. Hence we can define a map $U : D_1 \rightarrow D_2$, $D_j = \SP \{R^A
  \otimes \idty\Omega_j\, | \, A \in F \}$ by
  \begin{equation} \label{eq:12}
    U : D_1 \rightarrow D_2, \quad D_j = \SP \{R^A \otimes \idty\Omega_j\, |
    \, A \in F \} \quad U R^A \Omega_1 = R^A \Omega_2.
  \end{equation}
  Reversing the roles of $\myrho_1$ and $\myrho_2$ we see that $U$ is
  invertible. Furthermore, using again Eq. (\ref{eq:11}) we get
  \begin{displaymath}
    \langle R^A \Omega_1, R^B \Omega_1 \rangle = \langle R^A \Omega_2, R^B
    \Omega_2 \rangle,
  \end{displaymath}
  which shows that $U$ extends to a unitary $\overline{D_1} \rightarrow
  \overline{D_2}$. Cyclicity of the $\Omega_j$ finally shows that
  (\ref{eq:12}) defines a unique unitary $\mathcal{H} \otimes
  \mathcal{K}_1 \rightarrow \mathcal{H} \otimes \mathcal{K}_2$. Furthermore we
  have for $k=1,\dots,2n$ and $A \in F$
  \begin{equation} \label{eq:13}
    U (R_k \otimes \idty) R^A \otimes \idty \Omega_1 =  (R_k \otimes
    \idty) R^A \otimes \idty \Omega_2 = (R_k \otimes
    \idty) U R^A \otimes \idty \Omega_1,
  \end{equation}
  Hence $[U, R_k \otimes \idty] \phi = 0$ for all $\phi \in D_1$. Note that
  this usually does not imply $[U, \exp(i x R_k)] = 0$ for some $x \in
  \mathbb{R}$. According to Prop. \ref{prop:2}, however, the elements in $D_1$
  are analytic vectors such that $W(x) \otimes \idty \phi$ with $\psi \in
  D_1$ and a Weyl operator $W(x)$, $x \in \mathbb{R}^{2n}$ can be written as a
  norm convergent series of terms of the form $R^A \phi x^A
  (|A|!)^{-1}$. Therefore Eq. (\ref{eq:13}) implies that $[U, W(x) \otimes
  \idty] = 0$ holds for all $x \in \mathbb{R}^{2n}$. Due to irreducibility of
  the Weyl operators on $\mathcal{H}$ the unitary $U$ has to have to the form
  $\idty \otimes \tilde{U}$ with a unitary $\tilde{U}  : \mathcal{K}_1
  \rightarrow \mathcal{K}_2$. Hence
  \begin{displaymath}
    \myrho_2 = \tr_{\mathcal{K}_2}(\kettbra{\Omega_2}) = \tr_{\mathcal{K}_2}(
    \idty \otimes \tilde{U} \kettbra{\Omega_1} \idty \otimes \tilde{U}^*) =
    \tr_{\mathcal{K}_1}(\kettbra{\Omega_1}) = \myrho_1
  \end{displaymath}
   what was to show.
\end{proof}

\section{Tempered distributions}
\label{sec:temp-distr}

We now proceed with a natural development of the general theory of Schwartz
operators. Starting again with the function analogue, a \emph{tempered
  distribution on $\sfu$} is an element of the topological dual $\sfud$,
i.e. a continuous linear functional $\phi:\sfu\to \Cx$. Similarly, we say that
a \emph{tempered distribution on $\sop$} is a continuous linear functional
$\Phi:\sop\to \Cx$. The space of tempered distributions is the topological
dual of $\sop$, and will be denoted by $\sopd$. It is equipped with the
corresponding weak-* topology. In the following we will discuss some of its
properties, including in particular:
\begin{itemize}
\item
  \textbf{Examples of operators as elements of $\sopd$}, more specifically bounded and polynomially bounded
  operators, are discussed in Subsect. \ref{sec:oper-as-distr}.
\item
  \textbf{Alternative characterizations of elements of $\sopd$} will be formulated
  in terms of their kernel distributions (which are ordinary tempered distributions), and matrix representations
  (Subsect. \ref{sec:home-matr-repr}).
\item
  \textbf{Operations on distributions} are studied in
  Subsect. \ref{sec:oper-distr}. Together with the previous point this allows
  us to do obtain new quadratic forms from existing ones in ways usually not allowed, such as:
  products with (polynomially) bounded operators, differentiation, convolutions
  and Fourier transforms.
  \item \textbf{Weyl-Wigner correspondence} between usual tempered distributions $\sfud$ (generalised classical variables) and the elements of $\sopd$ (generalised operators) will be formulated using the harmonic analysis operations. Some interesting special cases will be pointed out, for instance, quantisation of the delta-distribution and its derivatives.
  \item
  \textbf{Regularity theorem.} In Sect. \ref{sec:regularity-theorem} we will prove a quantum version of the regularity theorem for distributions, showing that \emph{any} element of $\sopd$ can be written as a polynomially bounded quadratic form (not necessarily an operator), which can be obtained from polynomially bounded operators by differentiation and taking linear combinations.

\item
  \textbf{Approximation of distributions} in terms of operators are briefly
  discussed in Subsect. \ref{sec:conv-distr}.
\end{itemize}

\subsection{Operators as distributions}
\label{sec:oper-as-distr}

Just as sufficiently well-behaved functions $g:\phaseS \to \Cx$ define elements of $\sfud$ via the integral formula
\begin{equation}\label{integraldual}
\phi_g(f) = \int g(x) f(x) dx,
\end{equation}
we look for operators $A$ which define elements of $\sopd$ via the trace formula
\begin{equation}\label{tracedual}
\Phi_A(T) = {\rm tr}[AT]={\rm tr}[TA], \quad T\in \sop.
\end{equation}

\subsubsection{Bounded operators}

The most well-behaved operators are elements $S\in \sop$; in fact, $|{\rm tr}[ST]|\leq \|S\|_1\|T\|_{0,0,0,0}$, so $S$ is indeed an element of $\sopd$. Since $\sop$ is norm dense in the Hilbert-Schmidt class (Lemma \ref{schwartzdensity}), we have indeed an embedding (injectivity). Since $S\mapsto \|S\|_1$ is a continuous seminorm on $\sop$, it follows that the embedding is continuous when $\sopd$ is equipped with the weak-* topology.

Similarly, since $\sop\subset \schtn p\subset \bh$ for each $p\in [1,\infty)$, with $\|T\|_p\leq \|T\|_1$ being continuous, it follows that each class $\schtn q$, as well as $\bh$, can be norm-continuously injected into $\sopd$ via \eqref{tracedual}.

\subsubsection{Polynomially bounded operators}
\label{sec:polyn-bound-oper}

More interesting elements of $\sopd$ correspond to unbounded operators $A$. We  first look at the function analogue. If $g$ is polynomially bounded (see the definition above), then the integral in \eqref{integraldual} is well defined, and we have
$$
|\varphi_g (f)| \leq C\int \frac{dqdp}{(2\pi)^N} \left(1+\sum_{i=1}^N (q_i^2+p_i^2)\right)^m |f(q,p)|.
$$
Since the right hand side a continuous seminorm, have $\varphi_g\in \sfud$, i.e. polynomially bounded functions are elements of $\sfud$.

Suppose now that $A$ is an operator that is polynomially bounded (from the right and left), and let $m$ be the associated degree. It follows from the "cycling under the trace" result (Lemma \ref{cyclingcor}) that ${\rm tr}[AT]={\rm tr}[TA]$, hence we can indeed define $\Phi_A$ via \eqref{tracedual} without having to pay attention to the order in which the operators appear inside the trace. Now $AT=A(1+\Htot^m)^{-1}(1+\Htot^m)T$, and we have
$$
|\Phi_A(T)| = | {\rm tr}[AT] | \leq \|A(1+\Htot^m)^{-1}\| \,\|(1+\Htot^m)T\|_1,
$$
which shows that $\Phi_A\in \sopd$. In the case where $A$ is only polynomially bounded from the right, the formula $\Phi_A(T)={\rm tr}[AT]$ still defines an element of $\sopd$, but $\Phi_A(T)={\rm tr}[TA]$ is not guaranteed; it is the appropriate definition in the case where $A$ is polynomially bounded from the left.

\subsection{Homeomorphisms and matrix representation}
\label{sec:home-matr-repr}

From the homeomorphism theorem Prop. \ref{isomorphisms} we immediately obtain the following result:
\begin{prop} \label{prop:4}
$\sopd$ is homeomorphic to $\sw'(\mathbb R^{2N})$, and $\mathfrak{s}'_{2N}$, when they are equipped with the weak-* topology. The homeomorphism can be realised by transposing the homeomorphisms of Prop. \ref{isomorphisms}.
\end{prop}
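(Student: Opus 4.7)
The proposition is a direct application of a standard duality principle: any topological isomorphism $A\colon X\to Y$ between locally convex spaces induces, by transposition $\phi\mapsto \phi\circ A^{-1}$, a bijection $X'\to Y'$ which is a homeomorphism when both duals carry the weak-$*$ topology. My plan is simply to apply this principle to the two topological isomorphisms already established in Prop.\ \ref{isomorphisms}, namely
\begin{align*}
\mathcal{K}\colon \sop\to \sfunc{2N},\quad T\mapsto K^T,\qquad
\mathcal{A}\colon \sop\to \ssec{2N},\quad T\mapsto a^T.
\end{align*}

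First, I would define the two candidate maps by transposition. For $\phi\in \sfdistr{2N}$ put $\mathcal{K}'(\phi)(T):=\phi(K^T)$ for $T\in\sop$, and for $s\in \ssec{2N}{}'$ put $\mathcal{A}'(s)(T):=s(a^T)$. Since $\mathcal{K}$ and $\mathcal{A}$ are continuous and linear, $\mathcal{K}'(\phi)$ and $\mathcal{A}'(s)$ are continuous linear functionals on $\sop$, hence genuine elements of $\sopd$. Linearity of $\mathcal{K}'$ and $\mathcal{A}'$ is immediate. Bijectivity is obtained exactly the same way, using the inverses $\mathcal{K}^{-1}$ and $\mathcal{A}^{-1}$ which are continuous by Prop.\ \ref{isomorphisms}: given $\Phi\in\sopd$, the functional $f\mapsto \Phi(\mathcal{K}^{-1}f)$ lies in $\sfdistr{2N}$ and is mapped back to $\Phi$ by $\mathcal{K}'$, and analogously for the matrix side.

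Second, I would check that both $\mathcal{K}'$ and its inverse are continuous for the weak-$*$ topologies. By definition of weak-$*$ convergence, it suffices to observe that a net $\phi_\lambda\to\phi$ in $\sfdistr{2N}$ iff $\phi_\lambda(f)\to \phi(f)$ for all $f\in\sfunc{2N}$, which by surjectivity of $\mathcal{K}$ is equivalent to $\phi_\lambda(K^T)\to \phi(K^T)$ for all $T\in\sop$, i.e.\ $\mathcal{K}'(\phi_\lambda)\to\mathcal{K}'(\phi)$ in $\sopd$. The reverse implication is identical since $\mathcal{K}$ is bijective. The same argument, applied via $\mathcal{A}$, handles the Schwartz-sequence dual.

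This is the entire proof; no genuine obstacle arises because all the real work (continuous bijectivity of the kernel and matrix maps) has already been done in Prop.\ \ref{isomorphisms}. The only pitfall worth flagging is that one must pull the weak-$*$ convergence through the isomorphism using surjectivity (to guarantee the test family of $K^T$'s exhausts $\sfunc{2N}$); without bijectivity, transposition would only give a continuous injection, not a homeomorphism.
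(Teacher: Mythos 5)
Your proposal is correct and follows exactly the route the paper intends: the paper offers no separate argument beyond the remark that the result follows "immediately" by transposing the topological isomorphisms of Prop.~\ref{isomorphisms}, and your verification (transposed functionals land in $\sopd$, bijectivity via the continuous inverses, and weak-* bicontinuity checked pointwise on the images $K^T$, $a^T$, using surjectivity) is precisely the standard duality argument being invoked. No gaps; only note that your opening phrase "$\phi\mapsto\phi\circ A^{-1}$" describes the transpose in the opposite direction to the maps $\mathcal{K}'$, $\mathcal{A}'$ you actually define, which is harmless since the homeomorphism claim is symmetric.
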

It will be useful to do the sequence space formulation explicitly: we define the \emph{matrix elements} of a $\Phi\in \sopd$ by
$$
\Phi_{\alpha\vee\alpha'}=\Phi(|\alpha\rangle\langle \alpha'|).
$$
Then we know from the homeomorphism theorem Prop. \ref{isomorphisms} that $\Phi_{\alpha\vee\alpha'}=\phi_{\alpha\vee\alpha'}$, where $\phi_{\alpha\vee\alpha'}:=\phi(|\alpha\vee\alpha'\rangle)$ are the Hermite coefficients of the tempered distribution $\phi\in \sw'(\mathbb R^{2N})$ given by the kernel isomorphism: $\Phi(S)=\phi(K^S)$. Hence we immediately get the following version of the N-representation theorem \cite[Theorem V.14]{RSI} for $\sopd$:
\begin{prop}\label{Nrepd} For a given $\Phi\in \sopd$, there exists $\beta\vee\beta'\in I_{2N}$, and a constant $C>0$ such that
$$|\Phi_{\alpha\vee\alpha'}|\leq C(\alpha\vee\alpha'+1)^{\beta\vee\beta'}, \text{ for all }\alpha\vee\alpha'\in I_{2N}.
$$
Conversely, given coefficients $|a_{\alpha\vee\alpha'}|\leq C(\alpha\vee\alpha'+1)^{\beta\vee\beta'}$, there exists a unique element $\Phi\in \sopd$ such that $\Phi_{\alpha\vee\alpha'}=a_{\alpha\vee\alpha'}$. The matrix representation
$$
\Phi=\sum_{\alpha\vee\alpha'} \Phi_{\alpha\vee\alpha'} |\alpha\rangle\langle\alpha'|
$$
converges in the weak-* topology. In particular, $\sop$ is weak-* dense in $\sopd$.
\end{prop}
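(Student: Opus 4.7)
The plan is to deduce this entirely from the homeomorphism $\sopd \cong \mathfrak{s}'_{2N}$ provided by Prop.~\ref{prop:4}, together with the classical N-representation theorem for Schwartz sequences. The key observation is that the matrix elements $\Phi_{\alpha\vee\alpha'} = \Phi(|\alpha\rangle\langle\alpha'|) = \Phi(\hsbe{\alpha\vee\alpha'})$ correspond, under the transposed isomorphism, to the action of the associated sequence functional on the standard basis vectors of $\ssec{2N}$; hence each assertion about $\sopd$ reduces to the analogous statement for the sequence-space dual.

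For the direct bound, I would invoke continuity of $\Phi$: by Prop.~\ref{isomorphisms} the topology of $\sop$ is generated by the sequence-space seminorms $T\mapsto\|(\matr{T}_\alpha)\|_{\gamma\vee\gamma'}$, so continuity gives some $\gamma\vee\gamma'\in\indm{2N}$ and $C>0$ with $|\Phi(T)|\le C\|(\matr{T}_\alpha)\|_{\gamma\vee\gamma'}$ for all $T\in\sop$. Applied to $T=\hsbe{\alpha\vee\alpha'}=|\alpha\rangle\langle\alpha'|$, whose matrix is the single delta sequence $\delta_{\cdot,\alpha\vee\alpha'}$, this gives $\|(\matr{T}_\beta)\|_{\gamma\vee\gamma'}=(\alpha\vee\alpha'+1)^{\gamma\vee\gamma'}$ directly, yielding the stated bound with $\beta\vee\beta'=\gamma\vee\gamma'$.

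For the converse, given coefficients $(a_{\alpha\vee\alpha'})$ with polynomial growth of order $\beta\vee\beta'$, I would define $\Phi(T):=\sum_{\alpha\vee\alpha'} a_{\alpha\vee\alpha'}\matr{T}_{\alpha\vee\alpha'}$. Absolute convergence and continuity follow from a single Cauchy--Schwarz estimate: pick $\gamma_i,\gamma'_i$ strictly exceeding $\beta_i+\tfrac12$ and $\beta'_i+\tfrac12$, so that the multi-index series $\sum(\alpha\vee\alpha'+1)^{-2(\gamma\vee\gamma'-\beta\vee\beta')}$ factors into convergent one-dimensional series $\sum_n(n+1)^{-s}$ with $s>1$; then $\sum|a_{\alpha\vee\alpha'}\matr{T}_{\alpha\vee\alpha'}|\le C'\|(\matr{T}_\alpha)\|_{\gamma\vee\gamma'}$ for a finite $C'$, and this estimate both defines $\Phi(T)$ and shows $\Phi\in\sopd$. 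Uniqueness is immediate because $\sopnot$ is dense in $\sop$ by Prop.~\ref{schwartzproperties}(b), so any two functionals agreeing on the $\hsbe{\alpha\vee\alpha'}$ must coincide.

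For the weak-* convergence of the matrix expansion and the density statement, fix $T\in\sop$ and let $T_F:=\sum_{\alpha\vee\alpha'\in F}\matr{T}_{\alpha\vee\alpha'}\hsbe{\alpha\vee\alpha'}$ be a finite truncation. By Prop.~\ref{schwartzproperties}(a) we have $T_F\to T$ in the topology of $\sop$, so by continuity of $\Phi$, $\Phi(T)=\lim_F\sum_{\alpha\vee\alpha'\in F}\matr{T}_{\alpha\vee\alpha'}\Phi_{\alpha\vee\alpha'}$, which is exactly weak-* convergence of $\Phi=\sum\Phi_{\alpha\vee\alpha'}|\alpha\rangle\langle\alpha'|$. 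Since each truncation lies in $\sopnot\subset\sop$, this also yields weak-* density of $\sop$ in $\sopd$. The only technical point is verifying the polynomial-growth $\times$ Schwartz-decay Cauchy--Schwarz estimate; once the factorisation of the multi-index summation into convergent one-dimensional series is in hand, everything else is a routine transfer through the already established homeomorphism.
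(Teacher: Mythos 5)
Your proof is correct and takes essentially the same route as the paper, which obtains the result by transporting the classical $N$-representation theorem through the homeomorphism with the Schwartz sequence space (Props.~\ref{isomorphisms} and \ref{prop:4}); you simply carry out that transfer explicitly. The only step worth making explicit is that bounding $|\Phi(T)|$ by a \emph{single} seminorm $\|(\matr{T}_\alpha)\|_{\gamma\vee\gamma'}$ is legitimate because this family of seminorms is directed (monotone in the multi-index), so the finite maximum furnished by continuity is dominated by one member of the family.
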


\subsection{Operations on distributions}
\label{sec:oper-distr}

New elements of $\sfud$ can be conveniently generated from existing ones by standard operations \cite{RSII}; we now look at the analogous ones for $\sopd$.

\subsubsection{Multiplication and differentiation}
A product of $g\in O_M(\phaseS)$ and $\phi\in \sfud$ is defined via
\begin{align}\label{polymultd}
&g\phi\in \sfud, & (g\phi)(f) &=\phi(gf).
\end{align}
Here $g$ must be in $O_M(\phaseS)$ (i.e. all derivatives polynomially bounded); indeed, this is forced by the fact that $f\mapsto gf$ must be continuous in $\sfu$ to make $g\phi$ a distribution. In an analogous fashion, we can \emph{multiply} an element $\Phi\in \sfud$ either from left or right, with an operator $A\in O_{MR}(\HH)$ or $B\in O_{ML}(\HH)$, respectively:
\begin{align*}
&A\Phi\in \sfud, & (A\Phi)(T) &=\Phi(AT),\\
&\Phi B\in \sfud, & (\Phi B)(T) &=\Phi(TB).
\end{align*}
It follows from Prop. \ref{polymultcont} that these indeed define elements of $\sopd$. As an important example, note that each polynomial of $Q$ and $P$ is in $O_{M}(\HH)$ by Remark \ref{remarkpoly}, hence they define distributions in this sense.

Since the derivative of a $\phi\in \sfud$ is given by
\begin{align}%\label{derd}
&D^\alpha \phi \in \sfud, & (D^\alpha \phi)(f) &=(-1)^{|\alpha|}\phi(D^\alpha f),
\end{align}
we define the derivative of $\Phi\in \sopd$ via
\begin{align}%\label{derd}
&D^\alpha \Phi \in \sopd, & (D^\alpha \Phi)(T) &= (-1)^{|\alpha|}\Phi(D^\alpha T),
\end{align}
where $D^\alpha T$ was defined in Section \ref{sec:oper-schw-oper}; in particular, $D^\alpha \Phi\in \sopd$ because of Prop. \ref{dercont}. The commutator derivative is defined by
\begin{align}
&\mathcal L_A\Phi \in \sopd, & (\mathcal L_A\Phi)(T) &= -\Phi(\mathcal L_A(T)),
\end{align}
for any $A\in O_{M}(\HH)$. The following continuity result is a direct consequence of the definitions and the corresponding continuity results for $\sop$.
\begin{prop}\label{mdcont} For fixed $g\in O_M(\phaseS)$, $A\in O_{MR}(\HH)$, and $B\in O_{ML}(\HH)$, and $C\in O_{M}(\HH)$, the maps
\begin{align*}
\sfud \ni \phi&\mapsto g\phi\in \sfud, & \sopd\ni \Phi&\mapsto A\Phi\in \sopd,\\
& & \sopd\ni \Phi&\mapsto\Phi B\in \sopd\\
\sfud \ni \phi & \mapsto D^\alpha\phi, & \sopd \ni \Phi & \mapsto D^\alpha\Phi\\
& & \sopd\ni \Phi&\mapsto \mathcal L_C\Phi \in \sopd\\
\end{align*}
are continuous on $\sfud$ and $\sopd$, respectively. For fixed $\phi\in \sfu$, $\Phi\in \sopd$, the maps
\begin{align*}
\sfu \ni f&\mapsto f\phi\in \sfud, & \sop\ni S&\mapsto S\Phi\in \sopd
\end{align*}
are continuous.
\end{prop}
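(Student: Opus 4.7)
The plan is to observe that each of the listed maps on distributions is, by its very definition, the transpose of a continuous linear map on the corresponding Schwartz space. Once this is recognized, weak-* continuity comes for free from a general principle: if $L:\mathcal X\to\mathcal X$ is continuous and we set $(L^{\mathsf t}\Phi)(T):=\Phi(LT)$ on the topological dual $\mathcal X'$ equipped with the weak-* topology, then $L^{\mathsf t}$ is weak-* continuous, because for every fixed $T\in\mathcal X$ the map $\Phi\mapsto (L^{\mathsf t}\Phi)(T)=\Phi(LT)$ is simply evaluation at the fixed vector $LT\in\mathcal X$, which is weak-* continuous by definition of the weak-* topology.

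With this general observation in hand, I would go through the six maps in turn and cite the relevant continuity result on the Schwartz side. Concretely: $\phi\mapsto g\phi$ is the transpose of $f\mapsto gf$ on $\sfu$, which is continuous precisely because $g\in O_M(\phaseS)$ (standard fact about $\sfu$). The maps $\Phi\mapsto A\Phi$ and $\Phi\mapsto \Phi B$ are the transposes of $T\mapsto AT$ and $T\mapsto TB$, continuous on $\sop$ by Proposition \ref{polymultcont}. The derivatives $\phi\mapsto D^\alpha\phi$ and $\Phi\mapsto D^\alpha\Phi$ are (up to a sign) the transposes of $f\mapsto D^\alpha f$ and $T\mapsto D^\alpha T$; the first is classical and the second is Proposition \ref{dercont}. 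The Lie derivative $\Phi\mapsto \mathcal L_C\Phi$ is the transpose of $-\mathcal L_C$ on $\sop$, again continuous by Proposition \ref{dercont} (since $C\in O_M(\HH)$ implies both $T\mapsto CT$ and $T\mapsto TC$ continuous on $\sop$ via Proposition \ref{polymultcont}).

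For the second group of statements (fixed $\phi$ or $\Phi$ in the outer slot), the argument is essentially the same, but now the roles of the two slots in a bilinear pairing are interchanged. To check, for instance, that $f\mapsto f\phi$ is continuous as a map $\sfu\to \sfud$ for a fixed $\phi$, one tests against an arbitrary $h\in\sfu$: $(f\phi)(h)=\phi(fh)$. Since multiplication $(f,h)\mapsto fh$ is separately continuous on $\sfu$ and $\phi$ is a fixed continuous linear functional, $f_n\to f$ in $\sfu$ forces $\phi(f_n h)\to\phi(fh)$, which is exactly weak-* convergence $f_n\phi\to f\phi$. The analogous argument for $S\mapsto S\Phi$ uses Proposition \ref{polymultcont} to get separate continuity of $(S,T)\mapsto ST$ on $\sop\times\sop$.

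There is no real obstacle here; the only care required is to make sure that in every case the composition really does land back in the appropriate Schwartz space, so that the transpose is well defined, which is precisely what the invariance statements in Propositions \ref{polymultcont} and \ref{dercont} (and the corresponding classical facts for $\sfu$) guarantee. The proof is therefore essentially a one-line application of the transpose principle, spelled out once per map.
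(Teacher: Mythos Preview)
Your proposal is correct and follows the same approach as the paper, which simply states (without a formal proof environment) that the result ``is a direct consequence of the definitions and the corresponding continuity results for $\sop$.'' You have spelled out the transpose principle more explicitly than the paper does, but the underlying idea is identical. One minor remark: for the separate continuity of $(S,T)\mapsto ST$ on $\sop\times\sop$ needed in the last step, Lemma~\ref{basiclem}(b) (with $A=\idty$) gives this directly; your citation of Proposition~\ref{polymultcont} works too, but requires the observation that $\sop\subset O_{MR}(\HH)\cap O_{ML}(\HH)$, which follows from Theorem~\ref{SopRange}.
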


In order to check that the definitions correctly extends derivatives of operators, we first take $B\in O_M(\HH)$; then
$$
(\mathcal L_A\Phi_B)(T) = -{\rm tr}[B (AT-TA)]={\rm tr}[\mathcal L_A(B) T],
$$
due to Cor. \ref{cyclingcor}, where $\mathcal L_A(B)$ is again a polynomially bounded operator by Prop. \ref{ppalgebras}. Hence $\mathcal L_A\Phi_B=\Phi_{\mathcal L_A(B)}$, as expected. By using the fact that the $e^{-iq P}$ and $e^{ipQ}$ commute up to phase (which does not contribute to the derivative), we see that $D^{\alpha\vee\beta}\Phi_B$ is again a polynomially bounded operator, which can be explicitly computed using the \emph{same} commutator expansion \eqref{liederexpansion} as with Schwartz operators:
\begin{equation}\label{liederexpansionII}
D^{\alpha\vee \beta}\Phi_B=(-i)^{|\alpha|}i^{|\beta|}\mathcal L_{P_N}^{\alpha_n}\circ\cdots \circ \mathcal L_{P_1}^{\alpha_1}\circ\mathcal L_{Q_N}^{\beta_N}\circ\cdots\circ\mathcal L_{Q_1}^{\beta_1}(B).
\end{equation}
Note that the prefactor $(-1)^{|\alpha|}$ in the definition is needed to ensure this.

As an example, we let $N=1$ (multidimensional case is similar). Using the canonical commutation relation $[Q,P]=i\idty$, we immediately obtain the basic derivatives:
\begin{align*}
D^{(1,0)}\Phi_Q&=-\idty, & D^{(1,0)}\Phi_P&=0,\\
D^{(0,1)}\Phi_Q&=0, & D^{(0,1)}\Phi_P&=-\idty,\\
D^{(1,0)}\Phi_{QP}&=-P, & D^{(0,1)}\Phi_{QP}&=-Q.
\end{align*}
The following simple example demonstrates how the distributional derivatives of bounded operators are often no longer operators themselves: suppose that $A$ is a bounded operator whose range lies outside the domain of $Q$ (e.g. a rank one operator $|\varphi\rangle\langle \varphi|$ with $\varphi\notin \dom(Q)$). Then we have
$$
D^{0,1} \Phi_A(T) =i (\mathcal L_Q\Phi_A)(T) =-i{\rm tr}[A \mathcal L_Q(T)]=i{\rm tr}[A(TQ-QT)].
$$
Hence $D^{(0,1)}\Phi_A$ corresponds to the quadratic form
$$\sfunc N\times \sfunc N\ni (\psi,\varphi)\mapsto \langle Q\psi|A\varphi\rangle-\langle \psi|AQ\varphi\rangle\in \Cx,$$
which is not an operator, because $\dom(QA)=\{0\}$. Note that this also demonstrates the situation where the "cycling under the trace" formula does not apply.

\subsubsection{Fourier transforms, Wigner function and Weyl quantization}

We first have to define the \emph{parity transformation} on $\sfu$ and $\sop$, via $f_-(x) = f(-x)$ and $T_-=\Pi T\Pi$, where $\Pi$ is the parity operator. Obviously, the parity transformations $f\mapsto f_-$ and $T\mapsto T_-$ are continuous. It is easy to see that parity commutes with the Fourier-Weyl transform, i.e. $\widehat{(f_-)}=(\widehat{f}\,)_-$, and $(\widehat{T})_-=\widehat{(T_-)}$ for $f\in \sfu$, $T\in \sop$. Hence, we can just use the (slightly ambiguous-looking) symbols $\widehat{f}_-$ and $\widehat{T}_-$.

We now make the following definitions:
\begin{align*}
\widehat{\phi}&\in \sfud, & \widehat{\phi}(f) &= \phi(\widehat{f}_-),\\
\check{\phi}&\in \sopd, & \check{\phi}(T) &= \phi(\widehat{T}_-),\\
\widehat{\Phi}&\in \sfud, & \widehat{\Phi}(f) &= \Phi(\check{f}_-),\\
\wf{\Phi}& =\widehat{\widehat{\Phi}}_-\in \sfud\\
\wq{\phi} &=\check{\widehat{\phi}}_-\in \sopd.
\end{align*}
These are well defined because the Fourier-Weyl transform is a topological isomorphism. Here $\phi\mapsto \widehat{\phi}$ is the \emph{symplectic Fourier transform}, $\phi\mapsto \check{\phi}$ is the \emph{inverse Weyl transform}, $\Phi\mapsto \widehat{\Phi}$ is the \emph{Weyl transform}, $\wf{\Phi}$ is the \emph{Wigner function} of $\Phi$, and $\phi\mapsto \wq{\phi}$ is the \emph{Weyl quantization}. Concerning the latter, it is easy to see that
\begin{align*}
\wf{\Phi}(f) &= \Phi(\wq{f}), & \wq{\phi}(T) &= \phi(\wf{T}).
\end{align*}
The definitions are set such that these transforms on distributions extend the corresponding ones defined for operators and functions. To check this, we first note that
\begin{align*}
\widehat{\overline{f}} &=\overline{\widehat{f}_-}, & \widehat{T^*} &=\overline{\widehat{T}_-}, & \text{for } & f\in \sfu, \, T\in \sop.
\end{align*}
For a given $g\in L^2(\phaseS)$, and $S\in \hs$, we have
\begin{align*}
\widehat{\phi_g}(f)&=\int g(x) \widehat{f}_-(x) dx = \int g(x) \overline{\overline{\widehat{f}_-}}(x) dx= \int g(x) \overline{\widehat{\overline{f}}}(x) dx = \int \widehat{g}(x) \overline{\overline{f}}(x) dx = \phi_{\widehat{g}}(f),\\
\check{\phi_g}(T) &= \int g(x) \overline{\overline{\widehat{T}_-}}(x)dx =
\int g(x) \overline{\widehat{T^*}}(x)dx = {\rm tr}[\check{g}T] = \Phi_{\check g}(T),\\
\widehat{\Phi_S}(f) &= \Phi_S(\check{f}_-) = {\rm tr}[S\check{f}_-] = {\rm tr}[S(\check{f}_-)^{**}]={\rm tr}[S(\check{\overline{f}})^{*}]=\int \widehat{S}(x) \overline{\overline{f}}(x)dx = \phi_{\widehat{S}}(f),\\
\wf{\Phi_S}&={\widehat{\widehat{\Phi_S}}}_- = \phi_{\wf{S}},\\
\wq{\phi_g} &= \phi_{\wq{g}}
\end{align*}
so the transformations are correct extensions. Since the inverse Weyl transform is clearly the inverse of the Weyl transform, we trivially have the following:
\begin{prop}\label{distrisomorphisms} Fourier transform is a topological isomorphism of $\sfud$ onto itself. The Weyl transform and Weyl quantization are topological isomorphisms between $\sopd$ and $\sfud$.
\end{prop}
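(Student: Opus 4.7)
The plan is to observe that each of the three transforms in the statement is defined by a transposition (duality) construction: $\hat\phi$, $\check\phi$, and $\hat\Phi$ are obtained by precomposing the distribution with a continuous linear bijection between the test spaces $\sfu$ and $\sop$ (namely $f\mapsto \hat f_-$, $T\mapsto \hat T_-$, and $f\mapsto \check f_-$, respectively). All these underlying maps have already been shown to be topological isomorphisms on the test spaces (the symplectic Fourier transform on $\sfu$, the Weyl transform $\sop\to\sfu$, its inverse $\sfu\to\sop$, together with the parity maps on $\sfu$ and $\sop$), so the corresponding transpose maps on the duals are automatically weak-* continuous, and bijectivity reduces to invertibility of the underlying bijections.

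First I would verify weak-* continuity. If $\phi_\lambda \to \phi$ in the weak-* topology of $\sfud$, then for each fixed $f \in \sfu$ we have $\hat\phi_\lambda(f) = \phi_\lambda(\hat f_-) \to \phi(\hat f_-) = \hat\phi(f)$, so $\phi\mapsto \hat\phi$ is weak-* continuous on $\sfud$. The same one-line argument, with $\hat f_-$ replaced by $\hat T_-$ or $\check f_-$, yields weak-* continuity of $\phi\mapsto \check\phi$ from $\sfud$ to $\sopd$ and of $\Phi\mapsto \hat\Phi$ from $\sopd$ to $\sfud$.

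Next I would exhibit the inverses by direct computation, using three facts already at hand: the symplectic Fourier transform is involutive on $\sfu$ (i.e.\ $\hat{\hat f}=f$); the Weyl transform and its inverse are mutually inverse between $\sop$ and $\sfu$ (i.e.\ $\check{\hat T}=T$ and $\hat{\check f}=f$); and parity commutes with these transforms in the sense $\widehat{f_-} = (\hat f)_-$ and $\widehat{T_-} = (\hat T)_-$. A short iteration of these identities gives $\hat{\hat\phi}(f) = \hat\phi(\hat f_-) = \phi\bigl((\widehat{\hat f_-})_-\bigr) = \phi(f)$, so $\phi\mapsto \hat\phi$ is an involution and hence a topological isomorphism of $\sfud$. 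The analogous manipulations yield $\hat{\check\phi}=\phi$ and $\check{\hat\Phi}=\Phi$, exhibiting $\phi\mapsto \check\phi$ and $\Phi\mapsto \hat\Phi$ as mutually inverse weak-* continuous bijections, hence topological isomorphisms between $\sfud$ and $\sopd$. The Weyl quantization $\wq{\cdot}=(\check{\hat{\cdot}})_-$ is then a composition of such isomorphisms (together with parity on $\sop$) and inherits the property.

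There is no real obstacle: the argument is essentially formal duality. The only point requiring attention is keeping the parity transformations correctly placed when composing or iterating the transforms, which is handled by the commutation identities above.
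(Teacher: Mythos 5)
Your proposal is correct and follows essentially the same route as the paper, which simply notes that the three dual transforms are transposes of the (already established) topological isomorphisms between $\sfu$ and $\sop$ and that the inverse Weyl transform inverts the Weyl transform, so the statement is immediate; you merely spell out the weak-* continuity of the transposed maps and the parity bookkeeping in the inverse computations, which is exactly the content the paper leaves implicit.
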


\begin{remark}\rm From the results of section \ref{sec:regularity-theorem} below it follows that the Weyl quantisation (as introduced above) defines a unique map from $\sfud$ into the space of continuous linear maps $\sfunc{N}\to\sw'(\Rl^{N})$. This coincides with the rigged Hilbert space formulation given in \cite{hennings}.
\end{remark}

In order to facilitate the definitions with basic examples, we first briefly consider bounded operators, where the basic result is that the Weyl quantisation of a square-integrable function is Hilbert-Schmidt \cite{Pool}; this follows directly from Lemma \ref{wtunitary}. As a nontrivial example of a non-square integrable bounded function leading to a bounded operator which is not Hilbert-Schmidt, we mention the following result (Prop. 2 in \cite{We88}):
\begin{prop} Let $N=1$, and $f$ the indicator function of a pointed or double sector in $\phaseS$. Then $\wq{\Phi_f}=\Phi_A$ where $A$ is a bounded operator with absolutely continuous spectrum.
\end{prop}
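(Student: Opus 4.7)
The plan is to verify the two claims of the statement separately: first, that $\wq{\Phi_f}$ equals $\Phi_A$ for some bounded operator $A$ (i.e.\ the distribution is represented by a genuine operator, not merely a more general quadratic form), and second, that this $A$ has absolutely continuous spectrum. Since $f$ is bounded and measurable, it defines a tempered distribution $\phi_f \in \sfud$ via $\phi_f(g) = \int fg\, dx$, and $\wq{\phi_f} \in \sopd$ then follows automatically from Prop.~\ref{distrisomorphisms}; the nontrivial content lies in identifying this element of $\sopd$ with a bounded operator and in establishing the spectral regularity.

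The first step is to normalise the geometry. Applying a symplectic rotation implemented by the one-parameter unitary group $\{e^{-i\theta H}\}$ with $H = \frac{1}{2}(Q^2+P^2)$, which acts on phase space as an ordinary $SO(2)$ rotation, reduces the problem to an indicator function of a sector symmetric about the $q$-axis. For such $f$ the $p$-integration in the kernel formula \eqref{wignerkernel} can be carried out explicitly, producing an integral kernel of $\sin(\cdot)/(q-q')$ type -- a Hilbert-transform-like kernel modulated by an oscillating factor whose phase is quadratic in $q$, $q'$. Boundedness of the associated operator on $L^2(\Rl)$ can then be established by classical $L^2$-boundedness results for singular integral operators of Calder\'on-Zygmund type, or more directly by reduction to the known boundedness of the Hilbert transform.

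For absolute continuity of the spectrum, the key structural observation is that $f$ is homogeneous of degree zero: the sector is invariant under the radial dilation $(q,p) \mapsto (\lambda q, \lambda p)$ for $\lambda > 0$. Although this dilation is not symplectic and hence not implemented by a unitary on $\HH$, the homogeneity translates into a scaling relation for the integral kernel of $A$ under $(q,q') \mapsto (\lambda q, \lambda q')$. This suggests passing to logarithmic coordinates $q = \pm e^s$ on each half-line $q > 0$ and $q < 0$ separately, via the unitary $(Vf)(s) = e^{s/2} f(\pm e^s)$, in which the kernel becomes translation-invariant in the coordinate $s$. On each half-line block, $A$ thereby takes the form of a convolution operator, diagonalizable by the Fourier transform into multiplication by a bounded continuous function of the dual variable; absolute continuity of the spectrum on each block follows immediately.

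The hard part will be the off-diagonal coupling in the double-sector case, where the kernel mixes arguments with $q > 0$ and $q' < 0$: after passing to logarithmic coordinates this contribution must also be reduced to multiplication by a bounded continuous symbol (possibly up to a compact perturbation) in order to preserve absolute continuity of the full spectrum, rather than just of the diagonal blocks. A secondary subtlety present throughout is the distributional nature of the kernel along the diagonal $q = q'$: as already visible in the elementary case where $\wq{\chi_{p>0}}$ is the spectral projection of $P$ onto $[0,\infty)$, the factor $1/(q-q')$ must be interpreted as a Hilbert-type principal-value distribution rather than as an ordinary integral kernel, and this interpretation has to be carried consistently through the Mellin-like diagonalization.
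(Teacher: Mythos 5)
The paper does not actually prove this proposition; it is quoted from Prop.~2 of \cite{We88}, so your proposal can only be judged on its own merits and against the known route to that result. Judged that way, it is a programme rather than a proof, and its central mechanism for absolute continuity fails. After normalizing only by a rotation, the kernel \eqref{wignerkernel} for the sector $\{(q,p): q>0,\ |p|<cq\}$ is $K(q,q')=\frac{\sin\bigl(\tfrac{c}{2}(q^{2}-q'^{2})\bigr)}{\pi(q-q')}\,\chi_{\{q+q'>0\}}$. This is \emph{not} homogeneous of degree $-1$: under $(q,q')\mapsto(\lambda q,\lambda q')$ the chirp phase scales like $\lambda^{2}$ (one obtains the kernel for the parameter $c\lambda^{2}$), so in logarithmic coordinates the kernel is not a function of $s-t$ and the convolution/Mellin diagonalization you invoke does not exist. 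The underlying reason is that radial dilations of phase space are not symplectic, hence the degree-zero homogeneity of $f$ is not implemented by any unitary conjugation of $A$; the dilation unitaries on $L^{2}(\Rl)$ implement the symplectic squeeze $(q,p)\mapsto(\lambda q,\lambda^{-1}p)$, under which a generic sector is not invariant. Your boundedness step also glosses over the truncation $\chi_{\{q+q'>0\}}$ (and $|q+q'|$ inside the sine for the double sector), which is not removed by conjugating the Hilbert transform with chirps; and the ``hard part'' (off-diagonal coupling) is explicitly left open, so even structurally the argument is unfinished.

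The idea of exploiting scaling and a Mellin-type diagonalization can be rescued, but one must normalize with the full linear symplectic group rather than with rotations alone: every pointed (resp.\ double) sector is mapped by some $S\in SL(2,\Rl)$ onto the quadrant $\{q>0,\,p>0\}$ (resp.\ $\{qp>0\}$), and by metaplectic covariance of Weyl quantization $A$ is unitarily equivalent to the quantization of that quadrant. The quadrant indicator \emph{is} invariant under the squeezes, so this operator commutes with the dilation group generated by $\tfrac12(QP+PQ)$, whose spectral representation (Mellin transform on the two half-lines $q\gtrless 0$) has uniform multiplicity two; consequently $A$ decomposes as a direct integral of explicit $2\times 2$ matrices over the dilation spectrum. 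Boundedness is then the uniform boundedness of the fibers, the off-diagonal coupling you worry about is automatically contained in the $2\times 2$ fiber structure, and absolute continuity follows from the non-constancy and analyticity of the fiber eigenvalue functions. This is essentially the argument of \cite{We88}; with only a rotation normalization and radial homogeneity, as in your write-up, the required invariance and hence the whole spectral analysis is missing.
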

Weyl quantisation of the delta-distribution was perhaps first studied in \cite{DeltaParity}; we can easily reproduce the result with our formalism: Let $\delta_a\in \sfud$ be the delta distribution supported at $a\in \phaseS$, and let $1$ denote the also function $x\mapsto 1$ on $\phaseS$. We then have
\begin{align*}
\widehat{\delta_a}& =\phi_{e^{i\{a,\cdot\}}}, & \check{\delta_a} &= \Phi_{W(-a)},\\
\wq{\delta_a} &=2^N\Phi_{W(a)\Pi W(a)^*}, & \wq{\delta_0}&=2^N\Phi_{\Pi},\\
\widehat{\Phi_{\idty}} &= \delta_0, & \widehat{\phi_{1}} &= \delta_0,\\
\wf{\Phi_{\idty}} &=1.
\end{align*}
In order to check these, we compute e.g.
\begin{align*}
\check{\delta_a}(T) &= \delta_a(\widehat{T}_-)= {\rm tr}[W(-a)T],\\
\wq{\delta_a}(T) &= \delta_a(\wf{T}) = 2^N{\rm tr}[W(a)\Pi W(a)^*T].
\end{align*}
As a second example, we compute the inverse Weyl transform of the derivatives of the delta-distribution:
\begin{align*}
\widecheck{D^\alpha\delta_0}(T) &=(D^\alpha\delta_0)(\widehat T_-)=(-1)^{|\alpha|} \delta_0(D^\alpha\widehat T_-)= (-1)^{|\alpha|} (D^\alpha\widehat T_-)(0)= (-1)^{|\alpha|}D_x^\alpha {\rm tr}[W(-x) T]|_{x=0}.
\end{align*}
Hence, $\widecheck{D^\alpha\delta_0}$ is a certain polynomial of $Q$ and $P$. For instance,
$\widecheck{D^{\alpha\vee 0}\delta_0}=(-i)^{|\alpha|} P^\alpha$, and $\widecheck{D^{0\vee \alpha}\delta_0}=i^{|\alpha|} Q^\alpha$.
Mixed derivatives are more complicated, for instance in case $N=1$, we get
$$
\widecheck{D^{(1,1)}\delta_0}=i\idty/2+PQ.
$$
Finally, we make the following interesting observation:
\begin{prop}\label{derivativequanti} The Weyl quantisations of the derivatives of the delta-distribution are the corresponding derivatives of the parity operator:
$$\wq{D^\alpha \delta_0}=2^N \Phi_{D^\alpha \Pi}, \quad \text{for all }\alpha\in \indm{2N}.$$
\end{prop}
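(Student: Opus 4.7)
The plan is to reduce the proposition to the already verified identity $\wq{\delta_0}=2^N\Phi_\Pi$ by showing that Weyl quantisation intertwines derivatives, namely
\[
\wq{D^\alpha \phi} = D^\alpha \wq{\phi}, \qquad \phi\in\sfud,\;\alpha\in\indm{2N},
\]
where the right-hand side is the distributional derivative in $\sopd$. Granting this, the proposition is immediate: $\wq{D^\alpha\delta_0} = D^\alpha\wq{\delta_0} = D^\alpha(2^N\Phi_\Pi) = 2^N\Phi_{D^\alpha\Pi}$, once one reads $\Phi_{D^\alpha\Pi}$ as $D^\alpha\Phi_\Pi\in\sopd$---the only sensible reading since $\Pi\notin\sop$, and the one consistent with the formal commutator expansion \eqref{liederexpansionII} applied to $B=\Pi\in O_M(\HH)$.

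To prove the intertwining I would dualise the problem and instead establish
\[
\wf{D^\alpha T} = D^\alpha \wf{T}, \qquad T\in\sop,
\]
after which the $\sopd$-identity drops out by unwinding the definitions $\wq{\phi}(T)=\phi(\wf{T})$, $(D^\alpha\phi)(f)=(-1)^{|\alpha|}\phi(D^\alpha f)$, and the analogous definition of $D^\alpha$ on $\sopd$. The key point is that the Wigner transform intertwines the phase-space translation $\tau_y T := W(y)TW(-y)$ with ordinary translation of functions: $\wf{\tau_y T}(x) = \wf{T}(x-y)$. This follows by a short computation from the explicit formula $\wf{T}(x) = 2^N\tr[W(x)\Pi W(x)^* T]$, cyclicity of the trace (Corollary \ref{cyclingcor}, applicable because $\Pi$ is bounded), and the Weyl relation $W(x)W(y)=e^{i\{x,y\}/2}W(x+y)$: the two symplectic phases produced by moving $W(-y)$ past $W(x)$ and $W(x)^*$ past $W(y)$ are $e^{\pm i\{x,y\}/2}$ and cancel, leaving precisely $W(x-y)\Pi W(x-y)^*$ inside the trace.

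Once the translation intertwining is in hand, differentiating in $y$ at the origin and invoking the definition \eqref{derivative} of $D^\alpha T$ as $(-1)^{|\alpha|}D_y^\alpha\tau_y T|_{y=0}$ yields $\wf{D^\alpha T} = D^\alpha\wf{T}$. The only real technical step is justifying the interchange of the Wigner transform with the derivative at $y=0$: this is legitimate because $y\mapsto \tau_y T$ is smooth in the Fr\'echet topology of $\sop$ (essentially the content of Prop. \ref{dercont}) and the Wigner transform is continuous from $\sop$ to $\sfu$, so it commutes with the Fr\'echet-valued derivative. This is the only place where genuine analysis enters; everything else is bookkeeping.
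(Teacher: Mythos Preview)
Your proof is correct. Both your route and the paper's hinge on the explicit Wigner formula $\wf{T}(x)=2^N\tr[W(x)\Pi W(x)^*T]$ together with the definition \eqref{derivative} of $D^\alpha$ as a derivative of Weyl conjugation, but the paper does not detour through the general intertwining $\wq{D^\alpha\phi}=D^\alpha\wq{\phi}$ or through translation covariance. It simply unwinds the definitions in one line:
\[
(\wq{D^\alpha\delta_0})(T)=(D^\alpha\delta_0)(\wf{T})=(-1)^{|\alpha|}(D^\alpha\wf{T})(0)=(-1)^{|\alpha|}2^N D_x^\alpha\tr[W(x)\Pi W(-x)T]\big|_{x=0}=2^N\tr[D^\alpha\Pi\, T],
\]
the last step being precisely the definition \eqref{derivative} of $D^\alpha\Pi$ read weakly against $T$. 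Because this differentiates only the scalar function $x\mapsto\wf{T}(x)$, no Fr\'echet-valued derivative or interchange of limits is needed --- the one place you flag as ``genuine analysis'' simply does not arise in the paper's argument. Your approach is slightly longer but delivers the intertwining identity $\wq{D^\alpha\phi}=D^\alpha\wq{\phi}$ as a reusable byproduct. One minor remark: $D^\alpha\Pi$ is a genuine polynomially bounded operator via the commutator expansion \eqref{liederexpansionII} (since $\Pi\in O_M(\HH)$), so $\Phi_{D^\alpha\Pi}$ needs no reinterpretation; your identification $\Phi_{D^\alpha\Pi}=D^\alpha\Phi_\Pi$ is correct and consistent with how the paper reads it.
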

\begin{proof}
We have
\begin{align*}
(\wq{D^\alpha \delta_0})(T) &= (D^\alpha \delta_0)(\wf{T})=(-1)^{|\alpha|} (D^\alpha\wf{T})(0)\\
&=(-1)^{|\alpha|} 2^N D^\alpha_{x}\tr[W(x)\Pi W(-x) T]|_{x=0} =2^N{\rm tr}[D^\alpha \Pi \,T].
\end{align*}
\end{proof}
Hence, highly singular classical distributions can correspond to polynomially bounded operators in the Weyl quantisation. We do not pursue this topic in more depth in this paper. Properties of Weyl quantisation of tempered distributions have been studied considerably, see for instance \cite{Ali,Daub80, Daub83}. However, as we mentioned in the introduction, the quantised object is typically not interpreted as a distribution on its own right, as we do here.

\subsubsection{Convolutions and their transformation properties}

Since we already have convolutions defined between elements of $\sop$, it is completely straightforward to define them between elements of $\sopd$ and $\sop$, as well as between $\sopd$ and $\sopd$, and other admissible combinations, in analogy to the usual convolution of $g\in \sfu$ and $\phi\in \sfud$,
\begin{align}
& \phi*g\in \sfud, & (\phi*g)(f) &= \phi(g_-*f).
\end{align}
This works because $f\mapsto g_-*f$ is continuous in $\sfu$. Since we know (see the preceding section) that also the Schwartz space convolutions
\begin{align*}
\sop\ni T&\mapsto S*T\in \sfu, & \sfu \ni f &\mapsto S*f\in \sop, & \sop\ni T&\mapsto g*T\in \sop
\end{align*}
are continuous for each $g\in \sfu$ and $S\in \sop$, we can set the following definition.

Let $g\in \sfu$, $S\in \sop$, $\phi\in \sfud$, and $\Phi\in \sopd$. We set
\begin{align}\label{defconv}
& \phi*S\in \sopd, & (\phi*S)(T) &= \phi(S_-*T), \\
& \Phi*S\in \sfud, & (\Phi*S)(f) &= \Phi(S_-*f), \\
& \Phi*g\in \sopd, & (\Phi*g)(T) &= \Phi(g_-*T).
\end{align}
Again, continuity of the convolution is a trivial consequence of the preceding observations.
\begin{prop}\label{convcont}
Each convolution is separately continuous with respect to both variables.
\end{prop}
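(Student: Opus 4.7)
The strategy is to reduce each of the three distributional convolutions in \eqref{defconv} to a pointwise continuity statement in the weak-* topology. Recall that $\sopd$ and $\sfud$ carry the topology of pointwise convergence on $\sop$ and $\sfu$ respectively, so a linear map into either dual space is continuous if and only if, after evaluation at each fixed test element, it yields a continuous scalar-valued function on the domain.

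First I would fix the distribution-valued argument. Take $S\in \sop$ and consider $\phi\mapsto \phi*S\colon \sfud\to \sopd$. For any $T\in \sop$, evaluation gives $(\phi*S)(T)=\phi(S_-*T)$, and since $S_-*T$ is a \emph{fixed} element of $\sfu$, this is a weak-* continuous functional of $\phi$ by the very definition of the weak-* topology. The two remaining cases $\Phi\mapsto \Phi*S$ and $\Phi\mapsto \Phi*g$ go through identically: in each, pairing the test element in $\sfu$ or $\sop$ with the fixed $S_-\in \sop$ or $g_-\in \sfu$ produces a fixed Schwartz operator on which $\Phi$ is evaluated.

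Next I would fix the test argument. For $S\mapsto \phi*S$ with $\phi\in \sfud$ fixed, evaluation at $T\in \sop$ gives the scalar-valued map $S\mapsto \phi(S_-*T)$, which is a composition of (i) parity, continuous on $\sop$; (ii) the convolution $S'\mapsto T*S'=S'*T$ for fixed $T\in \sop$, continuous $\sop\to \sfu$ by Subsect.\ \ref{sec:basic-quant-harm}; and (iii) the continuous functional $\phi$ on $\sfu$. The cases $S\mapsto \Phi*S$ (into $\sfud$) and $g\mapsto \Phi*g$ (into $\sopd$) work the same way, invoking continuity of $S'\mapsto f*S'\colon \sop\to \sop$ (which follows from $\widehat{f*T}=\widehat{f}\widehat{T}$, continuity of multiplication by the Schwartz function $\widehat{f}$ on $\sfu$, and the Weyl transform being a topological isomorphism $\sop\to \sfu$) and continuity of $f'\mapsto S_-*f'\colon \sfu\to \sop$. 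There is no substantive obstacle; the entire argument is a soft transpose. The only bookkeeping point is that the required continuity of the underlying convolutions in either argument follows from their continuity in one fixed argument combined with commutativity of convolution $S*T=T*S$, $f*T=T*f$.
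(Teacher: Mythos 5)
Your argument is correct and is essentially the paper's intended one: the paper simply asserts the proposition as an immediate consequence of the continuity of the Schwartz-level convolutions (and of parity), and your proof just spells out the transpose/weak-* argument, including the commutativity bookkeeping needed to pass between fixing one argument and the other. Nothing is missing.
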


Convolutions are typically characterised by their behaviour under the Fourier transform. It is straightforward to check that the following result holds:
\begin{prop}\label{fourierconvd} Let $g\in \sfu$, $S\in \sop$, $\phi\in \sfud$, and $\Phi\in \sopd$. Then
\begin{align*}
\widehat{\phi*g} &= \widehat{\phi}\widehat{g}, & (\phi * g){\check{}} &= \wq{(\widehat{\phi}\widehat{g})_-}, \\
\widehat{\phi*S} &= \widehat{\phi}\widehat{S}, \\
\widehat{\Phi*S} &= \widehat{\Phi}\widehat{S}, & (\Phi * S){\check{}} &= \wq{(\widehat{\Phi}\widehat{S})_-}, \\
\widehat{\Phi*g} &= \widehat{\Phi}\widehat{g}.\\
\wf{\Phi*g} &= \wf{\Phi}*g_-, & \wq{\phi*g}&= \wq{\phi}*g_-.
\end{align*}
\end{prop}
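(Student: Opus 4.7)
The plan is a routine duality argument: each identity in the statement is a dualization of the Schwartz-level Fourier convolution theorem \eqref{fourierconv}. For each claimed equation I evaluate both sides on a test element---an $f\in\sfu$ when the identity lies in $\sfud$, a $T\in\sop$ when it lies in $\sopd$---unfold the definitions of the distributional transforms and convolutions given in Sect.\ \ref{sec:oper-distr}, and reduce to an identity between Schwartz-level objects that is either part of \eqref{fourierconv} or an immediate consequence of it.

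Two preparatory observations underlie every computation. First, the symplectic parity $h\mapsto h_-$ commutes with the symplectic Fourier transform on $\sfu$ and with the Weyl transform on $\sop$ (as noted just above the definition of the distributional transforms), and distributes over every admissible convolution: $(a*b)_- = a_-*b_-$. Second, applying the symplectic Fourier transform to both sides of $\widehat{a*b}=\widehat a\widehat b$ and using its involutivity yields the ``dual'' convolution theorem $\widehat{a\cdot b} = \widehat a * \widehat b$, which I will need whenever a product appears on one side of a transform.

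As a prototype, for $\widehat{\phi*g} = \widehat\phi\widehat g$ and $f\in\sfu$ one computes
\begin{align*}
\widehat{\phi*g}(f) &= (\phi*g)(\widehat f_-) = \phi(g_-*\widehat f_-) = \phi\bigl((g*\widehat f)_-\bigr)\\
&= \phi\bigl(\widehat{\widehat g\cdot f}_-\bigr) = \widehat\phi(\widehat g\cdot f) = (\widehat\phi\widehat g)(f),
\end{align*}
using parity distributivity in the third equality and the dual convolution identity $g*\widehat f = \widehat{\widehat g\cdot f}$ in the fourth. The mixed identities $\widehat{\phi*S}=\widehat\phi\widehat S$, $\widehat{\Phi*S}=\widehat\Phi\widehat S$, and $\widehat{\Phi*g}=\widehat\Phi\widehat g$ are proved verbatim, this time invoking the mixed or operator-valued Schwartz identities $\widehat{f*T}=\widehat f\widehat T$ and $\widehat{S*T}=\widehat S\widehat T$ from \eqref{fourierconv}. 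The two inverse-Weyl identities and the final Wigner/Weyl-quantization identities then follow by substituting the definitions $\wq{\psi}=\check{(\widehat\psi)_-}$ and $\wf{\Phi}=\widehat{\widehat\Phi}_-$ into the identities already established and once again exploiting parity commutation with the transforms.

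The main obstacle is notational bookkeeping rather than analysis: keeping track of which factor is being parity-inverted at each step, which flavour of transform (symplectic Fourier on $\sfu$/$\sfud$ versus Weyl on $\sop$/$\sopd$) is in play, and which line of \eqref{fourierconv} to invoke. No approximation or density step is needed, since all operations involved are continuous by Props.\ \ref{mdcont}, \ref{convcont}, and \ref{distrisomorphisms}, and distributional equalities are pinned down by their evaluations on arbitrary test elements.
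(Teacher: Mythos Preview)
Your proposal is correct and is precisely the routine duality verification that the paper has in mind when it says ``It is straightforward to check that the following result holds''; the paper gives no further argument, and your unfold-the-definitions-and-reduce-to-\eqref{fourierconv} strategy is exactly what is intended.
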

%We can also introduce \emph{twisted convolution} of two functions $f,g\in \sfu$, as
%$$
%(g\tilde{*}f)(x) = \int e^{i \{x,y\}/2} g(x-y)f(y)dy.
%$$
%With this definition, we have $(g\tilde{*}f)_-=g_-\tilde{*}f_-$, and
%$$
%(f \tilde{*} g){\check{}} = \check{f}\check{g}.
%$$
%Note that the twisted convolution is not commutative.

One can easily compute examples demonstrating the use of these relations in analogy to classical tempered distributions; for instance, we have
\begin{align*}
\delta_0*g &= \phi_g, & \delta_0*S&=\Phi_S, &\text{for all } g\in\sfu, \, S\in \sop.
\end{align*}

Concerning the derivatives, the relations \eqref{derc1}-\eqref{derc4} extend in a straightforward fashion for the appropriate convolutions. This demonstrates how convolutions can be used to regularise distributions; given a bounded operator $A$, the distributional derivative $\Phi:=D^\alpha \Phi_A$ is an unbounded operator, hence clearly "less regular" than $A$. By taking a convolution with an $f\in \sfu$, we get
$$
f*\Phi_A = f* D^\alpha \Phi_A=D^\alpha f *\Phi_A,
$$
which is again a bounded operator because $D^\alpha f$ is a Schwartz function.

\subsubsection{Positive correspondence maps}

Wigner function has the well-known disadvantage of not necessarily being positive for positive operators; a description of quantum-classical correspondence maps which preserve positivity is obtained instead via convolutions (see e.g. \cite{We88,Werner} for discussion). Using the above definitions for convolutions, we can naturally extend these maps to distributions. In fact, for a fixed positive $S_0\in \sop$, the maps
\begin{align*}
\sfud \ni \phi\mapsto \phi* S_0\in \sopd, & &\sopd \ni \Phi\mapsto \Phi* S_0\in \sfud
\end{align*}
provide translation-covariant and (pointwise) positivity preserving correspondence of distributions. One could develop the theory of these maps further, along the lines discussed in \cite{Werner, KLSW12} in the case where $S_0$ is only required to be trace class. We only note here the following consequence of duality and Lemma \ref{schwartzreg}:
\begin{prop} The classical-to-quantum correspondence
\begin{align*}
&\sfud \ni \phi\mapsto \phi* |0\rangle\langle 0|\in \sopd
\end{align*}
induced by the ground state $|0\rangle\langle 0|$ of $\Htot$, is injective.
\end{prop}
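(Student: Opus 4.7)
The plan is a pure duality argument: the map $\phi \mapsto \phi * |0\rangle\langle 0|$ is, essentially by construction, the transpose of a certain continuous linear map from $\sop$ into $\sfu$, and injectivity of a transpose is equivalent to density of the range of the original map, which has already been established in Lemma \ref{schwartzreg}.

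More concretely, first I would unfold the definition: for $\phi \in \sfud$ and $T \in \sop$, the definition \eqref{defconv} gives
$$
(\phi * |0\rangle\langle 0|)(T) = \phi\bigl(|0\rangle\langle 0|_- * T\bigr).
$$
Since the ground state $h_0(q) = \pi^{-1/4} e^{-q^2/2}$ is even, the parity operator $\Pi$ fixes $|0\rangle$, so $|0\rangle\langle 0|_- = \Pi |0\rangle\langle 0| \Pi = |0\rangle\langle 0|$. Thus the formula simplifies to $(\phi * |0\rangle\langle 0|)(T) = \phi(|0\rangle\langle 0| * T)$.

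Next, suppose that $\phi * |0\rangle\langle 0| = 0$ in $\sopd$. Then $\phi$ vanishes on the subset
$$
R := \{\, |0\rangle\langle 0| * T \mid T \in \sop \,\} \subset \sfu.
$$
By commutativity of the convolution, $R$ coincides with the range of the map $T \mapsto T * |0\rangle\langle 0|$, which is dense in $\sfu$ by Lemma \ref{schwartzreg}. Since $\phi$ is a continuous linear functional on $\sfu$, vanishing on a dense subset forces $\phi = 0$, which is exactly injectivity.

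I do not foresee a serious obstacle: the entire argument is a one-line transpose/dense-range duality, and the non-trivial ingredient (density of $R$) is already in the paper as Lemma \ref{schwartzreg}. The only minor point worth double-checking is the parity symmetry $|0\rangle\langle 0|_- = |0\rangle\langle 0|$, which is needed to match the convolution definition (which involves $S_-$) with the hypothesis of Lemma \ref{schwartzreg} (formulated for $S_0$ itself), but this is immediate from the evenness of $h_0$.
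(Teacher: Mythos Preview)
Your proof is correct and follows exactly the same duality argument as the paper: vanishing of $\phi*|0\rangle\langle 0|$ means $\phi$ vanishes on the range of $T\mapsto T*|0\rangle\langle 0|$, which is dense by Lemma~\ref{schwartzreg}, hence $\phi=0$. You are in fact slightly more careful than the paper, which writes ``by definition'' where you explicitly invoke the parity invariance $|0\rangle\langle 0|_-=|0\rangle\langle 0|$ and commutativity of the convolution to pass from the defining expression $\phi(|0\rangle\langle 0|_-*T)$ to $\phi(T*|0\rangle\langle 0|)$.
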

\begin{proof} Assuming $\phi\in \sfud$ with $\phi*|0\rangle\langle 0|=0$, we have $\phi(T*|0\rangle\langle 0|)=0$ for all $T\in \sop$ by definition. Since the range of $T\mapsto T*|0\rangle\langle 0|$ is dense in $\sfu$ by Lemma \ref{schwartzreg}, this implies $\phi=0$.
\end{proof}

\begin{remark}\rm
We note that that $\phi\mapsto \phi*|0\rangle\langle 0|$ is the distributional generalisation of the well-known basic instance of \emph{coherent state quantisation} (see e.g. \cite{Ali}). The quantum-to-classical correspondence map $\Phi \mapsto \Phi* |0\rangle\langle 0|$ similarly generalises the Husimi Q-representation often appearing in quantum optical literature.
\end{remark}

\subsection{Regularity theorem}
\label{sec:regularity-theorem}

Above we have already seen that polynomially bounded operators naturally define tempered distributions. One can then ask if all of them arise in this way, and the answer turns out to be negative. However, the most general tempered distribution is not very far from being an operator; in fact, we  have the following result for $\sopd$:

\begin{prop}\label{regprop} Let $A$ be a Hilbert-Schmidt operator on $\mathcal H$, and $P_L$, $P_R$ two arbitrary polynomials of $Q_i$ and $P_j$, $i,j=1,\ldots N$. Then the formula
$$
\Phi(T):={\rm tr}[A \,(P_L T P_R)], \quad T\in \sop
$$
defines an element of $\sopd$. Conversely, every $\Phi\in \sopd$ is of this form.
\end{prop}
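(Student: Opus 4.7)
The forward direction should be routine: given $A\in\hs$ and polynomials $P_L,P_R$ in $Q_i,P_j$, we note that each polynomial is in $O_M(\HH)$ by Remark \ref{remarkpoly}, so by Prop.~\ref{polymultcont} applied twice the map $T\mapsto P_L T P_R$ sends $\sop$ continuously into itself, and in particular continuously into $\hs$ with $\|P_LTP_R\|_2$ as a continuous seminorm. Pairing with the fixed Hilbert--Schmidt operator $A$ via the trace gives $|\Phi(T)|\le\|A\|_2\|P_LTP_R\|_2$, establishing $\Phi\in\sopd$.

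For the converse, my plan is to use the sequence-space characterisation of $\sopd$. By Prop.~\ref{Nrepd} there exist $\beta,\beta'\in\indm N$ and $C>0$ with $|\Phi_{\alpha\vee\alpha'}|\le C(\alpha+1)^{\beta}(\alpha'+1)^{\beta'}$. Choose auxiliary multi-indices $\gamma,\gamma'$ with all components $\ge 1$, so that $\sum_{\alpha,\alpha'}(\alpha+1)^{-2\gamma}(\alpha'+1)^{-2\gamma'}<\infty$, and define
\begin{equation*}
A_{\alpha'\vee\alpha}:=\Phi_{\alpha\vee\alpha'}\,(\alpha+1)^{-(\beta+\gamma)}(\alpha'+1)^{-(\beta'+\gamma')}.
\end{equation*}
The divisor bound ensures $(A_{\alpha'\vee\alpha})\in\ell^2(\indm{2N})$, so by Lemma~\ref{hslemma}(b) this is the matrix of a unique $A\in\hs$. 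Setting $K_i:=\tfrac12(Q_i^2+P_i^2)+\tfrac12$ (a polynomial in $Q_i,P_i$ with $K_i|\alpha\rangle=(\alpha_i+1)|\alpha\rangle$), take $P_L:=K^{\beta+\gamma}$ and $P_R:=K^{\beta'+\gamma'}$. A direct computation then yields
\begin{equation*}
\tr[A\,P_L\,|\alpha\rangle\langle\alpha'|\,P_R]=(\alpha+1)^{\beta+\gamma}(\alpha'+1)^{\beta'+\gamma'}\langle\alpha'|A|\alpha\rangle=\Phi_{\alpha\vee\alpha'},
\end{equation*}
so the functional $T\mapsto\tr[AP_LTP_R]$ coincides with $\Phi$ on every rank-one basis operator $\hsbe{\alpha\vee\alpha'}=|\alpha\rangle\langle\alpha'|$. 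Both functionals are continuous on $\sop$ (the new one by the forward direction, $\Phi$ by assumption), so Prop.~\ref{schwartzproperties}(a), which gives convergence of the matrix expansion in the topology of $\sop$, extends the equality to all of $\sop$.

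The main obstacle is essentially bookkeeping rather than analysis: one has to track carefully that $\tr[A|\alpha\rangle\langle\alpha'|]=\langle\alpha'|A|\alpha\rangle$, so the matrix of $A$ appears with the index order transposed relative to $\Phi_{\alpha\vee\alpha'}$, and one must choose the exponents $\gamma,\gamma'$ large enough to absorb the polynomial growth bound while keeping $K^{\beta+\gamma},K^{\beta'+\gamma'}$ genuine polynomials in $Q_i,P_i$ (forcing integer components); both are accommodated by the construction above. Everything else --- continuity, the functional-calculus action of $K^{\beta+\gamma}$ on number states, and the density/continuity argument closing the equality --- is supplied directly by the results already proved in Sections \ref{sec:topol-basic-prop}--\ref{sec:oper-schw-oper}.
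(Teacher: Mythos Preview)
Your proof is correct and follows essentially the same route as the paper: both directions proceed identically, with the converse using the $N$-representation bound from Prop.~\ref{Nrepd}, dividing out enough powers of $(\alpha+1)$ and $(\alpha'+1)$ to make the resulting matrix square-summable, and taking $P_L,P_R$ to be the corresponding powers of $H+\tfrac12$ (your $K$). The only cosmetic differences are that the paper fixes $\gamma=\gamma'=\mathbf{1}$ from the outset and manipulates the full series $\Phi(T)=\sum\Phi_{\alpha\vee\alpha'}\langle\alpha'|T|\alpha\rangle$ directly, whereas you allow a general $\gamma,\gamma'$, verify the identity on rank-one basis operators, and close by density and continuity; your handling of the index transposition is in fact a bit tidier.
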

\begin{proof} Since $P_L T P_R\in \sop$, the formula is well-defined, and we have
$|\Phi(T)|\leq \|A\|_2\|P_L T P_R\|_2$. Since this is bounded above by a linear combination of seminorms of the form \eqref{pseminorms}, we conclude that $\Phi\in \sopd$.

In order to prove the converse, let $\Phi\in \sopd$. We know from Prop. \ref{Nrepd} that $|\Phi_{\alpha\vee\alpha'}|\leq C(\alpha\vee\alpha'+1)^{\beta\vee\beta'}$ for all $\alpha\vee\alpha'\in I_{2N}$, and a fixed $\beta\vee\beta'$. We define $a_{\alpha\vee\alpha'} =(\alpha\vee\alpha'+1)^{-\beta\vee\beta'-1}\Phi_{\alpha\vee\alpha'}$ (where $-1$ is understood as multiindex with all entries $-1$). Then
$$\sum_{\alpha\vee\alpha'} |a_{\alpha\vee\alpha'}|^2\leq C^2\sum_{\alpha\vee\alpha'} (\alpha\vee\alpha'+1)^{-2}=C^2\prod_{i=1}^{N}\left( \sum_{\alpha_i}(1+\alpha_i)^{-2}\sum_{\alpha_i'}(1+\alpha_i')^{-2}  \right)<\infty,$$
so the series
$$
A=\sum_{\alpha\vee\alpha'} a_{\alpha\vee\alpha'} |\alpha\rangle\langle \alpha'|
$$
converges in the Hilbert-Schmidt norm to a bounded (Hilbert-Schmidt) operator $A$. By Prop. \ref{Nrepd} we have, for any $T\in \sop$,
\begin{align*}
\Phi(T)&=\sum_{\alpha\vee\alpha'} \Phi_{\alpha\vee\alpha'} \langle \alpha'|T|\alpha\rangle\\
&= \sum_{\alpha\vee\alpha'} a_{\alpha\vee\alpha'} (\alpha\vee\alpha'+1)^{\beta\vee\beta'+1}\langle \alpha'|T|\alpha\rangle\\
&= \sum_{\alpha\vee\alpha'} a_{\alpha\vee\alpha'}\langle \alpha'|(H+\tfrac 12)^{\beta+1} T (H+\tfrac 12)^{\beta'+1}|\alpha\rangle\\
&= {\rm tr}[A \, (H+\tfrac 12)^{\beta+1} T (H+\tfrac 12)^{\beta'+1}],
\end{align*}
where the last equality follows because $(H+\tfrac 12)^{\beta+1} T (H+\tfrac 12)^{\beta'+1}$ is a Schwartz operator, hence Hilbert-Schmidt.
\end{proof}
\begin{remark}\rm Since $H^{-2}$ is trace class, the operator $H^{-2}AH^{-2}$ is Hilbert-Schmidt for any bounded operator $A$. Hence we observe that the above proposition holds also when the Hilbert-Schmidt operator $A$ is replaced by a general bounded operator.
\end{remark}
\begin{remark}\rm In the preceding section we defined polynomially bounded operators $B$ as distributions $\Phi_{B}$. From the above regularity theorem it might seem that every $\Phi\in \sopd$ is of the form $\Phi=\Phi_B$ for $B=P_RAP_L$, but this is \emph{not} the case, because the operator $A$ does not necessarily map into the domain of $P_R$, in which case $B$ is not a (densely defined) operator. As a simple example in case $N=1$, take
$$\Phi(T) ={\rm tr}[|\varphi\rangle\langle\varphi| \, QTQ],$$
where $\varphi\in L^2(\mathbb R,dq)$, but $\int |q|^2|\varphi(q)|^2=\infty$, i.e. $\varphi\notin \dom(Q)$. Now while $QTQ$ is certainly well defined for any Schwartz operator $T$, the product $Q|\varphi\rangle\langle \varphi|Q$ is only defined on the trivial domain $\{0\}$. There does not exist an operator $B$ with domain $\dom(B)$ including the Schwartz space $\sfunc{N}$, such that $\Phi=\Phi_A$. In fact, if this were the case, then for any $\psi,\psi'\in \sfunc{N}$ we would have $\langle \psi |A\psi'\rangle=\langle \varphi|Q\psi\rangle\langle Q\psi'|\varphi\rangle$, which is impossible because the left-hand side is continuous in $\psi$ with respect to the Hilbert space norm, but the right-hand side is not.

However, the formal expression $B=P_R A P_L$ can always be interpreted as the \emph{quadratic form}
$$
(\psi,\psi')\mapsto B(\psi,\psi'):=\langle P_R^*\psi |AP_L\psi'\rangle, \quad (\psi,\psi')\in \sfunc{N}\times\sfunc{N}.
$$
\end{remark}
In fact, we have the following result:
\begin{prop} \label{prop:3}
Let $B$ be a sesquilinear form on $\sfunc{N}\times\sfunc{N}$. The following conditions are equivalent:
\begin{itemize}
\item[(i)] $B$ is separately continuous in both arguments.
\item[(ii)] $B$ is jointly continuous.
\item[(iii)] There exists $\beta,\beta'\in I_N$ and a constant $C>0$ such that
$$\left|B\left((H+\tfrac 12)^{-\beta}\psi,(H+\tfrac 12)^{-\beta'}\psi'\right)\right|\leq C\|\psi\|\|\psi'\|.$$
\item[(iv)] there exists a Hilbert-Schmidt operator $A$, and polynomials $P_L$, $P_R$ of $Q$ and $P$, such that $B(\psi,\psi')=\langle P_R^*\psi |AP_L\psi'\rangle$.
\end{itemize}
The associations
$$B(\psi',\psi)=\Phi(|\psi \rangle\langle \psi'|)=\langle \psi',F\psi\rangle$$ 
establishe one-to-one correspondences between sesquilinear forms $B$ satisfying the above equivalent conditions,  elements $\Phi\in \sopd$, and continuous linear operators $F:\sfunc N\to\sw'(\Rl^N)$, where the scalar product 
in the expression with $F$ is the canonical bilinear form between $\sfunc N$ and $\sw'(\Rl^N)$, made conjugate linear in the first argument. 
\end{prop}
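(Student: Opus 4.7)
The plan is to prove the chain $(i)\Rightarrow(ii)\Rightarrow(iii)\Rightarrow(iv)\Rightarrow(i)$ and then build the two claimed bijections. First, for $(i)\Rightarrow(ii)$, I would invoke the standard fact that on a Fr\'echet space (which $\sfunc N$ is, as a subspace of $L^2(\Rl^N)$ with the topology induced by the countable family of seminorms) every separately continuous sesquilinear form is jointly continuous, via the Banach--Steinhaus theorem. For $(ii)\Rightarrow(iii)$, joint continuity combined with the fact that the family $\{\psi\mapsto\|(H+\tfrac12)^\beta\psi\|\mid \beta\in\indm N\}$ induces the topology of $\sfunc N$ (Section \ref{sec:topol-basic-prop}) yields multi-indices $\beta,\beta'$ and a constant $C>0$ such that $|B(\psi,\psi')|\le C\|(H+\tfrac12)^\beta\psi\|\,\|(H+\tfrac12)^{\beta'}\psi'\|$. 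Substituting $(H+\tfrac12)^{-\beta}\psi$ and $(H+\tfrac12)^{-\beta'}\psi'$, which lie in $\sfunc N$ since $(H+\tfrac12)^{-\beta}$ maps $\HH$ continuously into $\sfunc N$, produces exactly (iii).

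For $(iii)\Rightarrow(iv)$, the estimate shows that the auxiliary form $\tilde B(\psi,\psi'):=B((H+\tfrac12)^{-\beta}\psi,(H+\tfrac12)^{-\beta'}\psi')$ is norm-bounded on $\sfunc N\times\sfunc N$, and so extends by density to a jointly continuous sesquilinear form on $\HH\times\HH$. Riesz representation then furnishes $\tilde A\in\bh$ with $\tilde B(\psi,\psi')=\langle\psi|\tilde A\psi'\rangle$, whence
\begin{equation*}
B(\psi,\psi')=\langle(H+\tfrac12)^\beta\psi|\tilde A(H+\tfrac12)^{\beta'}\psi'\rangle\quad\text{for all }\psi,\psi'\in\sfunc N.
\end{equation*}
To upgrade the middle operator to a Hilbert--Schmidt one I would set
\begin{equation*}
A:=(H+\tfrac12)^{-1}\tilde A(H+\tfrac12)^{-1},\quad P_R:=(H+\tfrac12)^{\beta+1},\quad P_L:=(H+\tfrac12)^{\beta'+1},
\end{equation*}
where $\beta+1$ denotes component-wise shift by one. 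The operator $A$ is Hilbert--Schmidt (in fact trace class) because $(H+\tfrac12)^{-1}$ is Hilbert--Schmidt---this follows from $\|(H+\tfrac12)^{-1}\|_2^2 = (\pi^2/6)^N$, by the same kind of eigenvalue computation used in Lemma \ref{traceclasslemma}---and $P_L,P_R$ are self-adjoint polynomial expressions in $Q$ and $P$. This yields (iv) with $B(\psi,\psi')=\langle P_R^*\psi|AP_L\psi'\rangle$. The implication $(iv)\Rightarrow(i)$ is immediate from boundedness of $A$ and continuity of polynomial operators on $\sfunc N$.

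For the bijections, given $\Phi\in\sopd$ I would set $B(\psi',\psi):=\Phi(|\psi\rangle\langle\psi'|)$; this is jointly continuous because $(\psi',\psi)\mapsto|\psi\rangle\langle\psi'|$ is continuous from $\sfunc N\times\sfunc N$ into $\sop$, as is apparent from the seminorm estimate appearing in the proof of Lemma \ref{basiclem}(d). Conversely, for $B$ satisfying the equivalent conditions I would use representation (iv) together with Proposition \ref{regprop} to define $\Phi(T):=\tr[AP_LTP_R]\in\sopd$; the two prescriptions agree on the rank-one operators $|\alpha\rangle\langle\alpha'|$, and the linear span of these is dense in $\sop$ by Proposition \ref{schwartzproperties}(b), so $\Phi$ is uniquely determined and the correspondence is bijective. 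For the operator $F$: for fixed $\psi\in\sfunc N$ the map $\psi'\mapsto B(\psi',\psi)$ is a conjugate-linear continuous functional on $\sfunc N$ and therefore defines a unique $F\psi\in\sfdistr N$ via $\langle\psi',F\psi\rangle=B(\psi',\psi)$; linearity of $F$ is clear, continuity in $\psi$ follows by joint continuity, and the reverse assignment from a continuous $F$ to a jointly continuous $B$ is immediate.

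The main obstacle is the step $(iii)\Rightarrow(iv)$: the bare Riesz argument produces only a \emph{bounded} middle operator, whereas the statement requires a Hilbert--Schmidt one. Getting past this rests on the observation that $(H+\tfrac12)^{-1}$ is itself Hilbert--Schmidt, which allows one to shift a single factor of $(H+\tfrac12)^{-1}$ onto each side of $\tilde A$, at the modest cost of raising the polynomial degree of $P_L$ and $P_R$ by one. Everything else is either a standard functional-analytic fact (Banach--Steinhaus, Riesz representation, density of the number-basis span) or an application of results already established earlier in the paper.
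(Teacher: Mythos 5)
Your proof is correct and follows essentially the same skeleton as the paper's: (i)$\Rightarrow$(ii) by the standard Fr\'echet-space fact, (ii)$\Rightarrow$(iii) from the $H$-power seminorms, the upgrade to a Hilbert--Schmidt middle operator by spending one extra factor of $(H+\tfrac12)^{-1}$ on each side (ending with $P_L=(H+\tfrac12)^{\beta'+1}$, $P_R=(H+\tfrac12)^{\beta+1}$, exactly the paper's choice), and the $B\leftrightarrow\Phi$ and $B\leftrightarrow F$ correspondences via Prop.~\ref{regprop}, the density of ${\rm span}\{\hsbe{\alpha}\}$ from Prop.~\ref{schwartzproperties}, and separate continuity against the weak-* topology. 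The only real difference is the mechanics of (iii)$\Rightarrow$(iv): the paper evaluates $B$ on the vectors $(H+\tfrac12)^{-\beta-1}|\alpha\rangle$, $(H+\tfrac12)^{-\beta'-1}|\alpha'\rangle$, bounds these matrix elements by $C(\alpha\vee\alpha'+1)^{-1}$, and invokes Lemma~\ref{hslemma} to obtain the Hilbert--Schmidt operator directly, whereas you first extend the bounded auxiliary form to $\HH\times\HH$, apply Riesz to get a bounded $\tilde A$, and then sandwich it between two copies of the Hilbert--Schmidt operator $(H+\tfrac12)^{-1}$; both arguments rest on the same square-summability of $(\alpha+1)^{-1}$, yours being slightly more operator-theoretic and the paper's avoiding the extension step. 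One small repair is needed: $(H+\tfrac12)^{-\beta}$ with a \emph{finite} multi-index $\beta$ does not map $\HH$ into $\sfunc N$ (a fixed negative power does not produce rapid decay of the number-basis coefficients), so your parenthetical justification in (ii)$\Rightarrow$(iii) is wrong as stated; what the substitution actually needs --- and what is true --- is that $(H+\tfrac12)^{-\beta}$ maps $\sfunc N$ bijectively onto $\sfunc N$, and since $B$ is only defined on $\sfunc N\times\sfunc N$, condition (iii) must in any case be read with $\psi,\psi'\in\sfunc N$, so the step stands once this justification is corrected.
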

\begin{proof} For (i) implies (ii), see \cite{RSI}. Using the seminorms for $\sfunc{N}$ given by powers of $H$, we see that (ii) implies (iii). Assuming (iii), we get $$|B((H+\tfrac 12)^{-\beta-1}|\alpha\rangle,(H+\tfrac 12)^{-\beta'-1}|\alpha'\rangle)|\leq C (\alpha\vee\alpha'+1)^{-1}.$$ Since the $(\alpha\vee\alpha'+1)^{-1}$ are square summable, we have matrix elements of a Hilbert-Schmidt operator on the left hand side. It follows that (iv) holds for $P_L=(H+\tfrac 12)^{\beta+1}$ and $P_R=(H+\tfrac 12)^{\beta'+1}$. Clearly, (iv) implies (i).

The correspondence between $B$ and $\Phi$ follows from the Prop.~\ref{regprop}. Alternatively, we could have used Schwartz kernel theorem and the standard regularity theorem for tempered distributions. The correspondence between $B$ and $F$ follows directly from item ({\em i}), observing that $\sw'(\Rl^N)$ is equipped with the weak-* topology. 
\end{proof}

Some previous work on $\sopd$ has been done in terms of the operators $F$ in the above Proposition. This is particularly natural in the language of rigged Hilbert spaces \cite{antoine} and Wigner-Weyl quantization \cite{hennings}. In this context it is natural to look especially at those $F$ mapping in to $\HH$, or even into $\sfunc N$, like the class $O_M(\HH)$ (see Prop.~\ref{ppalgebras}).

In the case of $\sfud$, the well-known \emph{regularity theorem} (see e.g. \cite[Theorem V.10]{RSI}) states that each tempered distribution can be obtained from a polynomially bounded function by differentiation. The following result is its quantum version:

\begin{thm}\label{regularitythm}
Every element $\Phi\in \sopd$ is of the form
$$
\Phi = \sum_{\alpha: |\alpha|\leq m} D^\alpha \Phi_{A_\alpha},
$$
where each $A_\alpha$ is a densely defined polynomially bounded operator.
\end{thm}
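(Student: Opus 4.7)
The plan is to start from Prop. \ref{regprop}, which already gives a representation $\Phi(T) = \tr[A\,P_L\,T\,P_R]$ with a Hilbert-Schmidt $A$ and polynomials $P_L, P_R$ in the canonical operators. The remaining task is to move the right polynomial $P_R$ across $T$ using commutators and interpret the resulting corrections as distributional derivatives, so that $\Phi$ decomposes into a finite sum of $D^\alpha \Phi_{A_\alpha}$.

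The main technical ingredient is the Leibniz-style identity
\[
T\,R_{a_1}\cdots R_{a_r} \;=\; R_{a_1}\cdots R_{a_r}\, T \;-\; \sum_{k=1}^{r} R_{a_1}\cdots R_{a_{k-1}}\, [R_{a_k},T]\, R_{a_{k+1}}\cdots R_{a_r},
\]
valid for $R_{a_j}\in\{Q_1,\ldots,Q_N,P_1,\ldots,P_N\}$ and any $T\in\sop$, which follows by induction on $r$ from $TR=RT-[R,T]$. Combined with \eqref{liederexpansion}, each single commutator $[R_{a_k},T]$ equals $\pm i\,D^{\gamma_k}T$ for a multi-index $\gamma_k$ of length one, so each remainder term contains $D^{\gamma_k}T$ flanked by monomials of \emph{smaller} total right-degree. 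After reducing to a monomial $P_R=M$ of degree $r$ by linearity, I would prove by strong induction on $r$ the following auxiliary claim: for any polynomial $P$ of $Q,P$, the functional $T\mapsto \tr[A\,P\,T\,M]$ admits a representation $\sum_{|\alpha|\leq r} D^\alpha \Phi_{A_\alpha}$ in which each $A_\alpha$ is $A$ multiplied on the right by a polynomial in $Q,P$, hence densely defined and polynomially bounded.

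The base case $r=0$ gives $\tr[APT]=\Phi_{AP}(T)$. For the inductive step one substitutes the Leibniz identity, producing the clean contribution $\tr[APMT]=\Phi_{APM}(T)$ (the $\alpha=0$ term) together with $r$ remainder terms of the shape $\pm i\,\tr[(A P N_k)\, D^{\gamma_k}T\, M_k]$ in which $N_k, M_k$ are monomials with $\deg M_k<r$. Applying the induction hypothesis to the auxiliary functional $S\mapsto \tr[(A P N_k)\,S\,M_k]$ writes each remainder as $\sum_\beta D^\beta \Phi_{B_\beta}(S)$; specializing to $S=D^{\gamma_k}T$ and invoking the elementary identity
\[
D^\beta \Phi_B(D^\gamma T) \;=\; \pm\, D^{\beta+\gamma}\Phi_B(T),
\]
which follows at once from the definition $D^\alpha\Phi_B(T)=(-1)^{|\alpha|}\Phi_B(D^\alpha T)$, produces terms of the form $D^{\beta+\gamma_k}\Phi_{B_\beta}(T)$ with $|\beta+\gamma_k|\leq (r-k)+1\leq r$, closing the induction.

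The main obstacle I foresee is not conceptual but combinatorial: carefully tracking the $(-1)^{|\alpha|}$ and $\pm i$ factors through the double nested iteration, and verifying at each stage that the coefficient $A\cdot(\text{polynomial})$ remains densely defined and polynomially bounded so that the object $\Phi_{A_\alpha}$ is a well-defined element of $\sopd$. Both verifications are routine---$A$ is bounded and polynomials in $Q,P$ are $\Htot^m$-bounded for suitable $m$ by Remark \ref{remarkpoly}---so no essentially new ingredient beyond Prop. \ref{regprop} and the commutator expansion \eqref{liederexpansion} is required.
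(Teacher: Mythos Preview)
Your proposal is correct and follows essentially the same route as the paper: start from Prop.~\ref{regprop}, then commute the right-hand polynomial through $T$ using the Lie derivatives $\mathcal L_{Q_i},\mathcal L_{P_j}$ so that only a left polynomial remains, and identify the resulting nested commutators with $D^\alpha T$ via \eqref{liederexpansion}. Your inductive Leibniz expansion is simply a more explicit way of justifying what the paper asserts in one sentence, namely that $P_LTQ^\alpha P^\beta$ is a linear combination of terms $\tilde P_L\, D^{\alpha'\vee\beta'}T$; the resulting operators $A\tilde P_L$ are exactly your $A\cdot(\text{polynomial})$.
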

\begin{proof} According to Prop. \ref{regprop}, we can write $\Phi(T)$ as a linear combination of the terms of the form
$${\rm tr}[A (P_LTQ^\alpha P^\beta)]$$ with some bounded operator $A$, a polynomial $P_L$ of $Q$ and $P$, and some $\alpha,\beta\in \indm{N}$. Now the reason why $\Phi$ is not in general given by a densely defined operator, is that $A$ does not need to map $\sfunc N$ into the domain of $Q^\alpha P^\beta$ so we cannot permute the latter to the other side of $A$ inside the trace. However, we can instead commute it through $T$ if we allow for nested commutators with $Q$ and $P$ to appear via iterative applications of the maps $\mathcal L_{Q_i}(\cdot)=[Q_i,(\cdot)]$ and $\mathcal L_{P_i}(\cdot)=[P_i,(\cdot)]$. In fact, $P_LTQ^\alpha P^\beta$ is clearly a linear combination of terms of the form
$$
\tilde P_L \,\mathcal L_{P_N}^{\alpha_N}\circ\cdots \circ \mathcal L_{P_1}^{\alpha_1}\circ\mathcal L_{Q_N}^{\beta_N}\circ\cdots\circ\mathcal L_{Q_1}^{\beta_1}(T),
$$
where $\tilde P_L$ is some polynomial of $Q$ and $P$ (now appearing only on the left side), and some indices $\alpha$ and $\beta$. But according to \eqref{liederexpansion}, this commutator expression is equal to $(-i)^{|\beta|}i^{|\alpha|}D^{\alpha\vee\beta}T$; hence, $\Phi(T)$ is a linear combination of terms of the form ${\rm tr}[A\tilde P_L D^{\alpha\vee\beta} T]$. Since each operator $A\tilde P_L$ is densely defined on $\sfunc N$, and polynomially bounded, the proof is complete.
\end{proof}

\subsection{Convergence of distributions}
\label{sec:conv-distr}

Recall that we have equipped $\sfud$ and $\sopd$ with the weak-* topology. In this section we look at approximations of distributions in the sense of this topology.

\subsubsection{Approximate identity}

The following tool is sometimes useful in this context: We say that a net $(j_\epsilon)_{\epsilon>0}$ of functions $j_\epsilon\in \sfu$ is \emph{an approximate identity} if
$$\lim_{\epsilon\rightarrow 0} \phi*j_\epsilon = \phi, \qquad \text{for each }\phi\in \sfu.$$
The definition was given in terms of the space $\sfud$. However, using the last relation of Prop. \ref{fourierconvd}, as well as Prop. \ref{distrisomorphisms}, we see that $(j_\epsilon)_{\epsilon>0}$ is an approximate identity if and only if
$$
\lim_{\epsilon\rightarrow 0} \Phi*j_\epsilon = \Phi, \qquad \text{for each }\Phi\in \sfud.
$$
If $g\in \sfu$ is such that $\int g(x) dx=1$, then $g_\epsilon(x) = \epsilon^{-2N}g(x/\epsilon)$ defines an approximate identity.

Indeed, $\widehat{g_\epsilon}(x) = \widehat{g}(\epsilon x)$, and $|\widehat{g}(x)|\leq 1= |\widehat{g}(0)|$, so $\widehat{g_\epsilon}$ converges to $1$ uniformly for $x$ in every compact set. Now each $f\in \sfu$, the expression $|x^\alpha D^\beta ((1-\widehat{g_\epsilon}) f)(x)|$ is bounded above by a sum of terms of the form $\epsilon^{|\beta'|} |D^{\beta'}\widehat{g}(\epsilon x)\, x^\alpha D^{\gamma} f(x)|$ with $|\beta'|\neq 0$, and the term $|(1-\widehat{g_\epsilon}) x^\alpha D^\beta f(x)|$. Since $x^\alpha D^\beta f(x)$ vanishes at infinity, it follows that $\|(1-\widehat{g_\epsilon}) f\|_{\alpha,\beta}\rightarrow 0$ as $\epsilon\rightarrow 0+$. Hence, $\lim_{\epsilon\rightarrow 0} g_\epsilon*f=f$ in the topology of $\sfu$, for each $f\in \sfu$, and so $(g_\epsilon)_{\epsilon>0}$ is an approximate identity.

\subsubsection{Approximations of the delta distribution and parity}
By taking $\phi=\delta_0$ in the definition of approximate identity, we notice that any approximate identity $(j_\epsilon)$ approximates the delta distribution, i.e. for $\phi^\epsilon:=\phi_{j_\epsilon}$ we have
\begin{equation}\label{weakdelta}
\lim_{\epsilon\rightarrow 0} \phi^\epsilon=\delta_0.
\end{equation}
Any net $(\phi^\epsilon)_{\epsilon>0}$ satisfying \eqref{weakdelta} is called \emph{an approximation of the delta distribution}. There are various ways of defining functions $g_\epsilon$ such that $(\phi_{g_\epsilon})_{\epsilon>0}$ is an approximation of the delta distribution. For instance, take $g_\epsilon\in L^1(\phaseS)$  such that $\int g_\epsilon(x) dx = 1$ for all $\epsilon>0$, $\sup_{\epsilon>0} \int |g_\epsilon(x)|dx<\infty$, and $\lim_{\epsilon\rightarrow 0} \int_{|x|\geq \delta} |g_\epsilon(x)| dx =0$ for each $\delta>0$.

Since Wigner quantization is an isomorphism between $\sfud$ and $\sopd$, it follows that the Wigner quantization of any approximation of the delta function approximates the distribution $2^N\Phi_{\Pi}$. As an example, we let $g= \wf{T_0}\in \sfu$, i.e. the Wigner function of the ground state of the oscillator. Now
$$
g(q,p)= 2^Ne^{-(q^2+p^2)}.
$$
Since the integral $\int g(x) dx=1$, the net $g_\epsilon(x) = \epsilon^{-2N}g(x/\epsilon)$ is an approximate identity. The kernel of the Wigner quantization of $g_\epsilon$ is given by
$$
K^{\wq{g_\epsilon}}(q,q')= \frac{1}{(\epsilon \sqrt{\pi})^N} e^{-\frac 14\left[(q+q')^2/\epsilon^2+\epsilon^2(q-q')\right]}.
$$
Asymptotically, as $\epsilon\rightarrow 0$, the second term in the exponent becomes negligible, and for fixed $q$, we have
$$\lim_{\epsilon\rightarrow 0} \frac{1}{(\epsilon \sqrt{\pi})^N} e^{-\frac 14(q+\cdot)^2/\epsilon^2}=2^N\delta_{-q}$$
in the dual of $\sfunc N$, so indeed, $\wq{g_\epsilon}\in \sop$ approximates the distribution $2^N\Phi_{\Pi}$. Note that $\wq{g_1}=T_0$, a rank one operator, which is not the case for any $\wq{g_\epsilon}$ with $\epsilon\in (0,1)$. Moreover, $\lim_{\epsilon\rightarrow 0}\|\wq{g_\epsilon}\|_2=\infty$. However, ${\rm tr}[ \wq{g_\epsilon}] = \int dx g_\epsilon(x) dx = 1$ for all $\epsilon>0$, so the trace converges to $1$, even though the parity operator is not in the trace class.

\subsubsection{Finite dimensional approximations}

Consider sequences $(A^{(n)})_n$ and $(B^{(n)})_n$ of operators which are
finite rank, diagonal in the number basis and satisfying
\begin{displaymath}
  \lim_{n \rightarrow \infty} \|(A^{(n)}-\idty) H^{- \gamma}\| =
  \lim_{n\rightarrow\infty} \| (B^{(n)} - \idty) H^{-\gamma'}\| = 0.
\end{displaymath}
For each distribution $\Phi \in \sopd$ we get a sequence $(\Phi^{(n)}_n)$ by
\begin{displaymath}
  \Phi^{(n)}(T) = \Phi(A^{(n)}TB^{(n)}), \quad T \in \sop.
\end{displaymath}
Since $A^{(n)}$, $B^{(n)}$ are of finite rank the distribtution $\Phi^{(n)}$
is given in terms of a finite rank operator with matrix elements
\begin{displaymath}
  \Phi^{(n)}_{\alpha \vee \beta} = A^{(n)}_\alpha \Phi(E_{\alpha \vee \beta}) B^{(n)}_\beta
\end{displaymath}
and where $A^{(n)}_\alpha$, $B^{(n)}_\beta$ are the eigenvalues of $A^{(n)}$
and $B^{(n)}$, respectively. The rank of $\Phi^{(n)}$ is obviously smaller or
equal to the ranks of $A^{(n)}$ and $B^{(n)}$. The following proposition shows that
the $\Phi^{(n)}$ provide an approximation of $\Phi$ in $\sopd$. In other
words: every distribution $\Phi$ can be approximated by a sequcence of finite
rank operators. We will provide an explicit example for this result in
Sect. \ref{sec:finite-dimens-appr}.

\begin{prop} \label{prop:8}
  Considering $(A^{(n)})_n$, $(B^{(n)})_n$, $\Phi \in \sopd$ and
  $(\Phi^{(n)})_n$ as just described, we get $\lim_{n \rightarrow \infty}
  \Phi^{(n)} = \Phi$ in $\sopd$.
\end{prop}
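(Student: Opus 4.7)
Since $\sopd$ carries the weak-$*$ topology, the statement $\Phi^{(n)}\to\Phi$ reduces to the pointwise claim $\Phi^{(n)}(T)\to\Phi(T)$ for every $T\in\sop$. By continuity of $\Phi$ on the Fr\'echet space $\sop$, it is therefore enough to prove the stronger fact that
\begin{equation*}
A^{(n)} T B^{(n)} \longrightarrow T \quad\text{in the topology of }\sop,\qquad\text{for every fixed }T\in\sop.
\end{equation*}
Writing the error as
\begin{equation*}
A^{(n)}TB^{(n)}-T = (A^{(n)}-\idty)\,T\,B^{(n)} + T\,(B^{(n)}-\idty),
\end{equation*}
it suffices to show that each summand tends to $0$ with respect to every seminorm defining the topology of $\sop$. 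I will use the Hilbert--Schmidt seminorms of the form $T\mapsto \|\Htot^{m} T\Htot^{m'}\|_{2}$ (for all $m,m'\in\Nl$), which induce the topology of $\sop$ by the discussion following Prop.~\ref{isomorphisms}.

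The crucial structural fact is that $A^{(n)}$ and $B^{(n)}$ are diagonal in the number basis, hence commute with every bounded Borel function of $\Htot$ (in particular with the powers $\Htot^{k}$ and the negative powers $\Htot^{-k}$). Using this, one inserts the identity $\idty = H^{-\gamma}\cdot H^{\gamma}$ (respectively $\idty = H^{\gamma'}\cdot H^{-\gamma'}$) between $(A^{(n)}-\idty)$ and everything else, and similarly for the second term. For the first summand I would estimate
\begin{equation*}
\bigl\|\Htot^{m}(A^{(n)}-\idty)TB^{(n)}\Htot^{m'}\bigr\|_{2}
\le \bigl\|(A^{(n)}-\idty)\Htot^{-\gamma}\bigr\|\cdot\bigl\|\Htot^{m+\gamma}T\Htot^{m'+\gamma'}\bigr\|_{2}\cdot\bigl\|\Htot^{-\gamma'}B^{(n)}\bigr\|,
\end{equation*}
where the commutativity of $B^{(n)}$ with $\Htot$ has been used to move $B^{(n)}$ to the right and absorb the extra $\Htot^{\gamma'}$ into the Schwartz factor. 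The first factor tends to $0$ by hypothesis, the middle factor is finite because $T\in\sop$ (Prop.~\ref{isomorphisms} and the Hilbert--Schmidt seminorm description), and the third factor is uniformly bounded since $\|\Htot^{-\gamma'}B^{(n)}\|\le \|\Htot^{-\gamma'}(B^{(n)}-\idty)\|+\|\Htot^{-\gamma'}\|$, both terms bounded by the convergence hypothesis. An entirely analogous estimate handles the second summand $T(B^{(n)}-\idty)$.

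Assembling the two estimates gives $\|\Htot^{m}(A^{(n)}TB^{(n)}-T)\Htot^{m'}\|_{2}\to 0$ for every $m,m'\in\Nl$, i.e.\ $A^{(n)}TB^{(n)}\to T$ in $\sop$, and hence $\Phi^{(n)}(T)=\Phi(A^{(n)}TB^{(n)})\to\Phi(T)$. The only point requiring a little care is the mild subtlety that $\|B^{(n)}\|$ or $\|A^{(n)}\|$ need not be uniformly bounded by themselves; the proof avoids this by always pairing each $A^{(n)}$ or $B^{(n)}$ with an appropriate negative power of $\Htot$, which is made possible by the diagonality of $A^{(n)}$, $B^{(n)}$ in the number basis and by the fact that every Schwartz seminorm $\|\Htot^{k}T\Htot^{k'}\|_{2}$ is finite, no matter how large $k,k'$ are chosen.
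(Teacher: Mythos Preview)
Your proof is correct and follows essentially the same route as the paper: reduce weak-$*$ convergence to $A^{(n)}TB^{(n)}\to T$ in $\sop$, split the error telescopically, insert $H^{-\gamma}H^{\gamma}$, and exploit that $A^{(n)},B^{(n)}$ commute with the number operator. You are in fact slightly more careful than the paper in explicitly bounding $\|\Htot^{-\gamma'}B^{(n)}\|$ uniformly in $n$; the paper leaves the factor $\|H^{\gamma+\alpha}TB^{(n)}H^{\alpha'}\|_2$ without comment on its $n$-dependence. One cosmetic point: the hypothesis is phrased with the multi-index powers $H^{-\gamma}$ rather than $\Htot^{-\gamma}$, so either switch to the seminorms $\|H^{\alpha}TH^{\alpha'}\|_2$ (as the paper does) or note once that $H^{\gamma}\Htot^{-|\gamma|}$ is bounded, which converts the assumption into the form you use.
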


\begin{proof}
  We have to show that $\lim_{n \rightarrow \infty} \Phi^{(n)}(T) = \Phi(T)$
  for all $T \in \sop$, which is equivalent to
  \begin{displaymath}
    \lim_{n \rightarrow\infty} A^{(n)} T B^{(n)} = T \quad\text{in $\sop$}.
  \end{displaymath}
  Hence, with the results of Sect. \ref{sec:kern-matr-repr} we have to look at
  \begin{align}
    \| H^\alpha &(A^{(n)} T B^{(n)} - T) H^{\alpha'}\|_2 \leq \nonumber \\
    & \leq \| H^\alpha (A^{(n)} T B^{(n)} - T B^{(n)})H^{\alpha'}\|_2 + \|
    H^\alpha (TB^{(n)} - T) H^{\alpha'}\|_2 \nonumber \\
    &\leq \| H^\alpha (A^{(n)} - \idty) H^{(-\gamma)} H^{(\gamma)} T B^{(n)} H^{\alpha'}
    \|_2 + \| H^\alpha T H^{\gamma'} H^{-\gamma'} (B^{(n)} - \idty)
    H^{\alpha'}\| \label{eq:20}
  \end{align}
  The first term on the right hand side of (\ref{eq:20}) can be estimated as
  \begin{align}
   \| H^\alpha (A^{(n)} - \idty) H^{(-\gamma)} H^{(\gamma)} T B^{(n)} H^{\alpha'}
    \|_2 &= \| (A^{(n)} - \idty) H^{(-\gamma)} H^{(\gamma+\alpha)} T B^{(n)} H^{\alpha'}
    \|_2 \nonumber \\
    &\leq \| (A^{(n)} - \idty) H^{(-\gamma)} \| \|H^{(\gamma+\alpha)} T
    B^{(n)} H^{\alpha'}\|_2 \label{eq:21}.
  \end{align}
  By assumption $\| (A^{(n)} - \idty) H^{(-\gamma)} \| \rightarrow 0$ if $n
  \rightarrow \infty$. Hence the left hand side of (\ref{eq:21}) vanishes in
  the limit. The second term in (\ref{eq:20}) can be handled similarly, which
  completes the proof.
\end{proof}

% For each $\alpha \in \indm{n}$ consider the projections
% \begin{displaymath}
%   F_\alpha = \sum_{\alpha' < \alpha} E_{\alpha'} = \sum_{\alpha' < \alpha}
%   \ketbra{\alpha'}{\alpha'},\quad \alpha' \in \indm{n},\ \alpha' < \alpha \iff \alpha'_j < \alpha_j \
%   \forall j=1,\dots,n.
% \end{displaymath}
% They satisfy $F_\alpha \leq F_\beta$ if $\alpha < \beta$ and the net
% $(F_\alpha)_\alpha$ converges strongly to $\idty$.

\section{Application 2: Spectral densities and distribution valued measures}
\label{sec:spectr-dens-distr}

We now look at a natural application of the theory of the preceding
section. Consider a Hilbert space $\mathcal{H} = L^2(\mathbb{R}^n)$ and a positive operator
valued measure $E: \mathcal{B}(\mathbb{R}) \rightarrow \mathcal{B}(\mathcal{H})$,
where $\mathfrak{B}(\mathbb{R})$ denotes the Borel sigma algebra of the real
line. Since each bounded operator $E(\Delta) \in \mathcal{B}(\mathcal{H})$ can
be regarded as a distribution, we can regard $E$ as a \emph{positive
  distribution} valued measure. More generally, we define $\Phi \in \sopd$ to be
positive if $\Phi(\myrho) \geq 0$ for all positive $\myrho \in \sop$ and a map $E:
\mathfrak{B}(\mathbb{R}) \rightarrow \sopd$ is called a positive distribution
valued measure if $E$ is $\sigma$-additive, $E(\Delta)$ is positive for all
$\Delta \in \mathfrak{B}(\mathbb{R})$ and $E(\emptyset) = 0$, $E(\mathbb{R}) =
\idty$. holds. This reinterpretation opens several possibilities which are
not available for operator valued or spectral measures. We will discuss some
of them using the position operator as an illustrating example. This includes
in particular the reinterpretation and extension of results from mean field
theory \cite{Fluct} where the position operator is approximated in terms of
fluctuation operators of finite spin systems.

\subsection{Spectral densities}

For the rest of this section we consider the case $N=1$, and concentrate on the position operator $Q$. The same discussion can be done for the momentum $P$ on the momentum space, and then transforming back using the Fourier transform. Moreover, the generalisation to more
degrees of freedom is easily possible by adding more tensor factors as
innocent bystanders.

The spectral measure $E(\Delta)$ of $Q$ is
given by
\begin{equation}\label{eq:15}
  E(\Delta) \psi = \chi_\Delta \psi \quad \Delta \in
  \mathfrak{B}(\mathbb{R}),\quad \psi \in \mathcal{H},
\end{equation}
where $\chi_\Delta$ denotes the characteristic function of $\Delta$ and appears
here as the corresponding multiplication operator. It is well known that $E$
does not admit a density with respect to the Lebesgue measure $dq$,
i.e. there is no map $\varepsilon : \mathbb{R} \rightarrow
\mathcal{B}(\mathcal{H})$ such that
\begin{displaymath}
  E(\Delta) = \int_\Delta \varepsilon(q) dq.
\end{displaymath}
holds. However, this problem can be solved if we consider $E$ as a
distribution valued measure instead. To see this look at the quadratic forms
\begin{equation} \label{eq:5}
  \varepsilon_q : \mathfrak{S}(\mathbb{R}) \times \mathfrak{S}(\mathbb{R})
  \rightarrow \mathbb{C}, \quad (\psi,\phi) \mapsto \varepsilon_q(\psi,\phi) =
  \overline{\psi(q)} \phi(q).
\end{equation}
They are positive but not closable. Furthermore they satisfy
$\varepsilon_q(Q\psi,\phi) = \varepsilon_q(\psi,Q \phi) = q
\varepsilon_q(\psi,\phi)$, which resembles a condition on an
``eigenprojection'' of $Q$. In other words we would like to have
\begin{equation} \label{eq:4}
  Q \varepsilon_q = \varepsilon_q Q = q \varepsilon_q\quad \forall q \in \mathbb{R}
\end{equation}
Of course $\varepsilon_q$ is not even an operator. However, we can show that
it is a distribution, and in this sense the statement of Eq. (\ref{eq:4}) is
perfectly well-defined. More precisely the following proposition holds:

\begin{prop} \label{prop:7}
  The family of quadratic forms $\varepsilon_q$ defined in Eq. (\ref{eq:5})
  has the following properties:
  \begin{itemize}
  \item[(a)] \label{item:1}
    $\varepsilon_q$ defines via Thm \ref{prop:3} a unique distribution $\varepsilon_q
    \in \sopd$. The corresponding kernel distribution is $\delta_q \otimes
    \delta_q$ where $\delta_q$ denotes the Dirac delta distribution concentrated at $q$.
  \item[(b)] \label{item:2}
    Considering mutiplications of distributions with polynomially bounded
    operators (cf. Sect. \ref{sec:polyn-bound-oper}) we have
    \begin{displaymath}
      f(Q) \varepsilon_q = \varepsilon_q f(Q) = f(q) \varepsilon_q \quad
      \forall q \in \mathbb{R}
    \end{displaymath}
    for all polynomials $f:\mathbb R\to \Cx$.
  \item[(c)] \label{item:3}
    $Q$ -- as a distribution -- can be reconstructed as a weak integral over
    the $\varepsilon_q$. More precisely
    \begin{displaymath}
      \tr(\myrho f(Q)) = \int_\mathbb{R} f(q) \varepsilon_q(\myrho) dq
    \end{displaymath}
    holds for all $f \in L^\infty(\mathbb{R},dq)$ and $T \in \sop$.
    \item[(d)] \label{item:4} The Wigner function of $\varepsilon_q$ is the delta-distribution on the $q$-variable, i.e.
   $$\wf{\varepsilon_q}=\delta_q\otimes \idty,\quad \text{ for all } q\in \mathbb R.$$
  \end{itemize}
\end{prop}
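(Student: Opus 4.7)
My plan is to prove the four parts sequentially, reducing each to a direct computation on rank-one operators $T=|\phi\rangle\langle\psi|$ with $\phi,\psi\in\sfunc{1}$ and then extending by continuity and density. For (a), I would first observe that evaluation $\psi\mapsto\psi(q)$ is a continuous linear functional on $\sfunc{1}$ (the sup-norm is dominated by the standard Schwartz seminorms), so $\varepsilon_q(\psi,\phi)=\overline{\psi(q)}\phi(q)$ is jointly continuous on $\sfunc{1}\times\sfunc{1}$, and Prop.~\ref{prop:3} produces a unique $\varepsilon_q\in\sopd$. To identify its kernel distribution via the homeomorphism of Prop.~\ref{prop:4}, I would compare
\[
\varepsilon_q(\psi,\phi)=\overline{\psi(q)}\phi(q)=(\delta_q\otimes\delta_q)(\phi\otimes\overline{\psi})
\]
with the Hilbert--Schmidt kernel $K^{|\phi\rangle\langle\psi|}=\phi\otimes\overline{\psi}$ and conclude by density of rank-one Schwartz operators.

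For (b), I would first invoke Remark~\ref{remarkpoly} to place $f(Q)$ in $O_M(\HH)$, which makes both $f(Q)\varepsilon_q$ and $\varepsilon_q f(Q)$ well-defined elements of $\sopd$. On $T=|\phi\rangle\langle\psi|$ one has $f(Q)T=|f(Q)\phi\rangle\langle\psi|$ and $Tf(Q)=|\phi\rangle\langle f(Q)^*\psi|$; substituting into the defining formula for $\varepsilon_q$ and using $(f(Q)\phi)(q)=f(q)\phi(q)$ together with $\overline{(f(Q)^*\psi)(q)}=f(q)\overline{\psi(q)}$ yields $f(q)\varepsilon_q(T)$ in both cases. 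The identity then extends to all $T\in\sop$ via the density of $\sopnot$ (Prop.~\ref{schwartzproperties}) and continuity of the multiplication maps (Prop.~\ref{mdcont}).

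For (c), I would note that both sides are well-defined: the left because $\myrho\in\tc$ and $f(Q)$ is bounded, the right because $q\mapsto K^{\myrho}(q,q)=\varepsilon_q(\myrho)$ is itself a Schwartz function (Prop.~\ref{isomorphisms}), hence integrable against $f\in L^\infty$. Since $f(Q)$ acts on kernels by $K^{\myrho f(Q)}(q,q')=K^{\myrho}(q,q')f(q')$, computing the trace as the diagonal integral of the kernel delivers the identity (this integral representation of the trace can be justified via the singular value decomposition and Lemma~\ref{basiclem}(c)). For (d), I would simply plug into the definition $\wf{\varepsilon_q}(f)=\varepsilon_q(\wq{f})=K^{\wq{f}}(q,q)$ and apply the explicit Weyl kernel formula \eqref{wignerkernel} with $N=1$ to obtain $K^{\wq{f}}(q,q)=(2\pi)^{-1}\int f(q,p)\,dp$, which is exactly $(\delta_q\otimes\idty)(f)$ with respect to the phase-space measure $dx=dq\,dp/(2\pi)$.

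The main subtlety sits in (b), since $f(Q)$ is unbounded and $\varepsilon_q$ is not an operator, so the eigenvalue-like identity $f(Q)\varepsilon_q=f(q)\varepsilon_q$ is meaningful only through the distributional multiplication rules of Subsect.~\ref{sec:oper-distr}; once Remark~\ref{remarkpoly} guarantees $f(Q)\in O_M(\HH)$ and the extension from rank-one Schwartz operators is made via Props.~\ref{schwartzproperties} and~\ref{mdcont}, the remaining parts reduce to straightforward kernel bookkeeping.
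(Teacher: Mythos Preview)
Your proposal is correct. Parts (a)--(c) follow essentially the same strategy as the paper, with only cosmetic differences: in (a) you go form $\to$ distribution $\to$ kernel while the paper goes kernel $\to$ distribution $\to$ form; in (b) you work on rank-one operators and extend by density, while the paper works directly with the kernel $K^\myrho$ of a general $\myrho\in\sop$ (noting that $K^{Q\myrho}=(Q\otimes\idty)K^\myrho$ and $K^{\myrho Q}=(\idty\otimes Q)K^\myrho$), which avoids the density step but is otherwise the same computation; in (c) the paper also reduces to rank-one by density and continuity, whereas you invoke the diagonal-integral trace formula for general Schwartz kernels, which amounts to the same thing once unpacked.

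Part (d) is where your route genuinely differs, and it is cleaner. The paper computes the Weyl transform $\widehat{\varepsilon_q}$ via Lemma~\ref{wtunitary} and then takes a second symplectic Fourier transform to reach $\wf{\varepsilon_q}$. You instead use the identity $\wf{\varepsilon_q}(f)=\varepsilon_q(\wq{f})=K^{\wq{f}}(q,q)$ and read off the result directly from the explicit kernel formula~\eqref{wignerkernel}. This bypasses the two-step Fourier computation entirely and is a nice shortcut.
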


\begin{proof}
  Part (a). Consider the distribution $\delta_q\otimes\delta_q\in
  \sw'(\Rl^{2N})$. According to Prop. \ref{prop:4} it defines a distribution
  $\varepsilon_q \in \sopd$ such that $\varepsilon_q(\myrho) = \delta_{q}\otimes \delta_q (\kernel{\myrho})$ holds for any Schwartz operator $\myrho$ and its
  kernel function $\kernel{\myrho}$. Hence we have
  \begin{displaymath}
    \varepsilon_q(\ketbra{\phi}{\psi}) = \delta_q\otimes\delta_q (\phi
    \otimes \overline{\psi}) = \phi(q) \overline{\psi}(q),
  \end{displaymath}
  which is the quadratic form from Eq. (\ref{eq:5}). Hence Part (a)
  follows from Prop. \ref{prop:3}.

  Part (b). If $\kernel \myrho$ is the kernel function of $\myrho \in \sop$,
  the kernel functions of $Q \myrho$ and $\myrho Q$ are
  \begin{displaymath}
    \bigl[(Q \otimes \idty) \kernel{\myrho}\bigr](q_1,q_2) = q_1 \kernel{\myrho}(q_1,
    q_2)\quad \text{and}\ \bigl[(\idty \otimes Q)\bigr]\kernel{\myrho}(q_1, q_2) = q_2
    \kernel{\myrho}(q_1,q_2).
  \end{displaymath}
  Hence we get
  \begin{align*}
    \delta_q \otimes \delta_q ( (Q \otimes \idty) \kernel{\myrho}) &= q \kernel{\myrho}(q,q) =
    q \delta_q \otimes \delta_q (
    \kernel{\myrho} ), \\  \delta_q \otimes \delta_q ((\idty \otimes Q) \kernel{\myrho}) &= q \kernel{\myrho}(q,q) =
    q \delta_q \otimes \delta_q( \kernel{\myrho} ).
  \end{align*}
  and with a polynomial $f$
  \begin{displaymath}
    \delta_q \otimes \delta_q ( (f(Q) \otimes \idty) \kernel{\myrho}) = f(q) \delta_q
    \otimes \delta_q ( \kernel{\myrho} ), \quad \delta_q \otimes \delta_q ( (\idty
    \otimes f(Q)) \kernel{\myrho}) = f(q) \delta_q \otimes \delta_q ( \kernel{\myrho} ),
  \end{displaymath}
  which is the statement written in terms of kernels rather than operators and
  distributions in $\sopd$. Hence part (b) follows from
  Prop. \ref{prop:4}.

  Part (c). We use the fact that any $\myrho \in \sop$ can be written
  as a convergent (in the topology of $\sop$) series of terms
  $\ketbra{\psi}{\phi}$ with Schwartz functions $\psi, \phi$;
  cf. Prop. \ref{schwartzproperties}. Furthermore the functionals $\myrho
  \mapsto \tr(\myrho f(Q))$ and $\myrho \mapsto \varepsilon_q(\myrho)$ are continuous
  in this topology (cf. Sect. \ref{sec:polyn-bound-oper}). Hence it is sufficient
  to prove the statement for $\myrho = \ketbra{\psi}{\phi}$, and we get
  \begin{align*}
    \tr(\myrho f(Q)) = \langle \phi, f(Q) \psi \rangle = \int_\mathbb{R} f(q)
    \overline{\phi(q)} \psi(q) dq &= \int_\mathbb{R} f(q) \varepsilon_q
    (\ketbra{\psi}{\phi}) dq\\ &= \int_\mathbb{R} f(q)
    \varepsilon_q(\myrho) dq
  \end{align*}
  which was to show.

  Part (d). We first note that by Lemma \ref{wtunitary}, the kernel is given in term terms of the Weyl transform via $K^\myrho=V^*(\id\otimes F)U^*\widehat \myrho$, where $U$ and $V$ are given explicitly in that Lemma. Hence we can directly compute the Weyl transform:
   \begin{align*}
 \widehat \varepsilon_q(f) &= \varepsilon_q(\check f_-)=\delta_{q}\otimes \delta_q(K_{\check f_-})=K_{\check f_-}(q,q)=((\idty\otimes F)U^*f_-)(0,q)\\
 &=\frac{1}{(2\pi)^{N}}\int dp\, e^{iq\cdot p} f(0,p).
 \end{align*}
Now
\begin{align*}
\wf{\varepsilon_q}(f)&=\widehat{\widehat{\varepsilon_q}}(f_-)=\widehat{\varepsilon_q}(\widehat f)
=\frac{1}{(2\pi)^{N}}\int dp e^{iq\cdot p} \frac{1}{(2\pi)^{N}}\int e^{-i\{(0,p),(q',p')\}} f(q',p') dq'dp'\\
&=\frac{1}{(2\pi)^{N}}\int dp'\left(\frac{1}{(2\pi)^{N}}\int dp e^{iq\cdot p} \int e^{-ip\cdot q'} f(q',p') dq'\right)\\
&= \frac{1}{(2\pi)^{N}}\int dp' f(q,p') =(\delta_q\otimes \idty)(f).
\end{align*}
This completes the proof.
\end{proof}

Part (c) leads to a simple corollary concerning the expectation values $\tr(\myrho E(\Delta))$ which will
be of use in Subsection \ref{sec:appr-posit}.

\begin{cor} \label{cor:1}
  For a Schwartz operator $\myrho$ with kernel $\kernel{\myrho}$, the
  following equation holds for all $a\leq b$:
  \begin{equation}
    \tr(E([a,b]) \myrho) = \int_a^b \kernel{\myrho}(q,q) dq
  \end{equation}
\end{cor}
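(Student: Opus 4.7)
The plan is to apply Proposition \ref{prop:7}(c) with the bounded Borel function $f = \chi_{[a,b]} \in L^\infty(\Rl, dq)$, which yields
\begin{equation*}
\tr(\myrho\, f(Q)) = \int_\Rl \chi_{[a,b]}(q)\, \varepsilon_q(\myrho)\, dq = \int_a^b \varepsilon_q(\myrho)\, dq.
\end{equation*}
Two identifications then finish the proof. First, by the definition of the spectral measure in Eq. \eqref{eq:15}, the operator $f(Q) = \chi_{[a,b]}(Q)$ is exactly $E([a,b])$, so the left-hand side equals $\tr(E([a,b])\myrho)$. Second, by Proposition \ref{prop:7}(a) the distribution $\varepsilon_q$ has kernel $\delta_q \otimes \delta_q$, so applied to $\myrho \in \sop$ (whose kernel $\kernel{\myrho}$ is a Schwartz function on $\Rl^2$ by Prop. \ref{isomorphisms}) it evaluates as
\begin{equation*}
\varepsilon_q(\myrho) = (\delta_q \otimes \delta_q)(\kernel{\myrho}) = \kernel{\myrho}(q,q).
\end{equation*}
Substituting this into the previous identity gives exactly the claimed formula.

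The only subtlety is that Prop. \ref{prop:7}(c) is stated for $f \in L^\infty(\Rl, dq)$, so $\chi_{[a,b]}$ is indeed admissible directly; no approximation argument is needed. I do not anticipate a real obstacle, since the Schwartz property of $\kernel{\myrho}$ guarantees that $q \mapsto \kernel{\myrho}(q,q)$ is continuous and rapidly decreasing, hence integrable over $[a,b]$ — so the right-hand side is well-defined and finite, matching the left-hand side which is finite because $E([a,b])$ is bounded and $\myrho$ is trace class.
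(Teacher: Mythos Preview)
Your proof is correct and follows essentially the same route as the paper's own proof: apply Prop.~\ref{prop:7}(c) with $f=\chi_{[a,b]}$, identify $\chi_{[a,b]}(Q)=E([a,b])$, and use Prop.~\ref{prop:7}(a) to evaluate $\varepsilon_q(\myrho)=\kernel{\myrho}(q,q)$. Your added remarks on why the right-hand side is well-defined are a nice touch but not strictly needed.
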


\begin{proof}  Denote $\Delta = [a,b]$; then we have $E(\Delta) = \chi_\Delta(Q)$, where $\chi_\Delta$ denotes the characteristic function of
  $\Delta$. Hence from Prop. \ref{prop:7} (c) we get
  \begin{displaymath}
    \tr(E(\Delta) \myrho) = \tr(\chi_\Delta(Q) \myrho) = \int_{\mathbb{R}} \chi_\Delta(q) \varepsilon_q(\myrho) dq =
    \int_a^b \varepsilon_x(\myrho) dq.
  \end{displaymath}
  Using the fact that the kernel distribution of $\varepsilon_q$ is, according to
  Prop. \ref{prop:7} (a), given by $\delta_q \otimes \delta_q$ we get (cf. also Prop. \ref{prop:4}):
  \begin{displaymath}
    \int_a^b \varepsilon_q(\myrho) dq = \int_a^b \delta_q \otimes \delta_q (\kernel{\myrho}) dq = \int_a^b \kernel{\myrho}(q,q) dq
  \end{displaymath}
  what was to show.
\end{proof}

\subsection{Finite dimensional approximations}
\label{sec:finite-dimens-appr}

We will turn now to a slightly different topic, namely an example of the
approximation results from Sect. \ref{sec:conv-distr}. To this end consider
the $M-$fold symmetric tensor product $\mathcal{K}_M = (\mathbb{C}^2)^{\otimes M}_+$ of
the two dimensional Hilbert space $\mathbb{C}^2$. The number basis in
$\mathcal{K}_M$ is denoted by $\ket{n;M}$, i.e.
\begin{displaymath}
  \ket{n;M} = {M \choose n}^{1/2} S_M \ket{0}^{\otimes (M-n)} \otimes
  \ket{1}^{\otimes n}
\end{displaymath}
where $S_M$ is projection $(\mathbb{C}^2)^{\otimes M} \rightarrow
(\mathbb{C}^2)^{\otimes M}_+$ and $\ket{0},\ket{1}$ denotes the canonical
basis. On $\mathcal{K}_M$ we can define for each $a \in
\mathcal{B}(\mathbb{C}^2)$ and the density operator $\vartheta = \ketbra{0}{0}$
on $\mathbb{C}^2$ the fluctuation operators
\begin{displaymath}
  F_M(a) = \frac{1}{\sqrt{M}} \left( \sum_{j=1}^M a^{(j)} - \tr(a \vartheta)
  \right),\quad a^{(n)} = \idty^{\otimes (n-1)} \otimes a \otimes
  \idty^{\otimes (M-n)}.
\end{displaymath}
The $F_M(a)$ measure small (of order $\sqrt{M}$) quantum fluctuations around
the \emph{reference state} $\vartheta$. They play a crucial role in
non-commutative versions of the central limit theorem
(cf. e.g. \cite{GVer,Gvet,Taku,RW}) and more recently in the theoretical
discussion of matter-light interactions \cite{Fluct,Narnhofer1,Narnhofer2}.

We will now consider in particular $a=2^{-1/2}\sigma_{1/2}$, where
$\sigma_{1/2}$ denote the Pauli operators, i.e.
\begin{displaymath}
  Q_M = F_M\left(\frac{\sigma_1}{\sqrt{2}}\right) = \sqrt{\frac{2}{M}} L_{M,1}
  \quad
  P_{M} = F_M\left(\frac{\sigma_2}{\sqrt{2}}\right) = \sqrt{\frac{2}{M}} L_{M,2}
\end{displaymath}
here $L_{M,\alpha}$ denote global pseudo-spin operators given by
\begin{displaymath}
  L_{\alpha,M} = \frac{1}{2} \sum_i \sigma_{\alpha}^{(i)}, \quad \alpha=1,2,3.
\end{displaymath}
In addition we can introduce ladder operators
\begin{displaymath}
  A_{M} = \frac{L_{M,+}}{\sqrt{M}} = \frac{1}{\sqrt{2}} (Q_M + i P_M),\quad
  A^*_M = \frac{L_{M,-}}{\sqrt{M}} = \frac{1}{\sqrt{2}} (Q_M - i P_M)
\end{displaymath}
with
\begin{displaymath}
  L_{M,\pm} = L_{M,1} \pm i L_{M,2} =  \frac{1}{2}\sum_{j=1}^M\sigma_\pm^{(j)}
\end{displaymath}
with $\sigma_\pm = \sigma_1\pm i\sigma_2$ in terms of Pauli matrices.

The $Q_M$, $P_M$ are defined on the Hilbert space $\mathcal{K}_M$. However we
can embed the latter into $\mathcal{H} = L^2(\mathbb{R})$ if we identify the basis
elements $\ket{n;M} \in \mathcal{K}_M$ with the $n^{\rm th}$ Hermite function
$\ket{n} \in \mathcal{H}$. In that way the $Q_M, P_M$ and $a_M, a_M^*$ become
finite rank operators on $\mathcal{H}$. We can relate them to odinary position
and momentum and their creation and annihilation operators
\begin{displaymath}
    A = \frac{1}{\sqrt{2}} (Q + i P),\quad
  A^* = \frac{1}{\sqrt{2}} (Q - i P)
\end{displaymath}
by
\begin{equation} \label{eq:22}
  A_M = \omega_M(H-\idty/2) A = A \omega_M(H \idty/2) ,\quad A_M^* = A^*
  \omega_M(H-\idty/2) = \omega_M(H + \idty/2) A^*
\end{equation}
where $A^*A = H - \idty/2$ is the number operator and $\omega_M$ is the function
given by
\begin{displaymath}
  \theta_M(n) =
  \begin{cases}
    \sqrt{1- \frac{n}{M}} & \text{if $0 \leq n \leq M$}\\
    0 & \text{otherwise}.
  \end{cases}
\end{displaymath}
If we introduce in addition the projections $P_K$ onto the span of
$\{\ket{0},\dots,\ket{K}\}$ -- i.e. the image of the embedding introduced
above -- we can rewrite $A_M, A_M^*$ again as
\begin{displaymath}
  A_M = \omega_M(H-\idty/2) a P_{M+1} \quad A_M^* = \omega_M(H+\idty/2) A^* P_M,
\end{displaymath}
since all $\psi \in \mathcal{H}$ with $P_{M+1} \psi = 0$ (or $P_M \psi = 0$)
are anyway in the kernel of $\omega_M(H-\idty/2) A$  (or $\omega_M(H+\idty/2)
A^*$).

Now it is easy to see that $\omega_M$ satisfies
\begin{displaymath}
    \left| 1 -\omega_M(n) \right| \leq \sqrt{\frac{n}{M}}.
\end{displaymath}
which leads to
\begin{displaymath}
  \lim_{M \rightarrow \infty} \| (\omega_M(H-\idty/2) - \idty) H^{-1}\| =
  \lim_{M \rightarrow \infty} \sup_n \frac{1 - \omega_M(n)}{(n + 1/2)} = 0.
\end{displaymath}
Similarly we get for the $P_K$:
\begin{displaymath}
  \lim_{M \rightarrow \infty} \| (P_M - \idty) H^{-1} \| = 0,
\end{displaymath}
which shows that we can apply Prop. \ref{prop:8} to see that the finite rank
operators $A_M, A_M^*$ converge as distributions (i.e. weakly in $\sopd$) to
$A, A^*$ (regarded as elements of $\sopd$, too;
cf. Sect. \ref{sec:oper-as-distr}). The same argument can be applied to
monomials of $A_M$ and $A_M^*$, if we move all $\omega_M$ terms to the left
and all $P_K$ terms to the right (cf. the commutation relations in
(\ref{eq:22}); similar equations also holds for the$P_K$). Since we can
rewrite any polynomial in $P,Q$ as a polynomial in $A,A^*$ we finally get:

\begin{prop} \label{prop:5}
  For each polynomial $f$ the sequence $f(Q_M,P_M)$ converges in $\sopd$ to
  $f(Q,P)$, i.e.
  \begin{displaymath}
    \lim_{M\rightarrow\infty} \tr(f(Q_M,P_M) \myrho) = \tr(f(Q,P) \myrho) \quad \forall
    \myrho \in \sop.
  \end{displaymath}
\end{prop}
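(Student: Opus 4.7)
The plan is to reduce to monomials in the ladder operators $A,A^*$ and then exploit the norm decay $\|(\omega_M(H+c)-\idty)H^{-1}\|\to 0$ already established in the excerpt. Writing $Q=(A+A^*)/\sqrt{2}$ and $P=(A-A^*)/(i\sqrt{2})$, every polynomial $f(Q,P)$ is a polynomial in $A,A^*$, and the corresponding polynomial in $Q_M,P_M$ is the same polynomial in $A_M,A_M^*$. By linearity it therefore suffices to treat a single monomial $Y_1\cdots Y_d$ with $Y_i\in\{A,A^*\}$; denote by $Y_{M,i}$ the $M$-analogue.

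By \eqref{eq:22} I may write $Y_{M,i}=\omega_M(H+c_i)\,Y_i$ with $c_i=-\tfrac12$ when $Y_i=A$ and $c_i=+\tfrac12$ when $Y_i=A^*$. The elementary commutation rules $Af(H)=f(H+\idty)A$ and $A^*f(H)=f(H-\idty)A^*$ (immediate from $[H,A]=-A$, $[H,A^*]=A^*$) allow me, inductively, to commute every $\omega_M$-factor to the leftmost position in the product. The outcome, valid on $\sfunc N$, is
\begin{equation*}
Y_{M,1}\cdots Y_{M,d} \;=\; G_M \cdot Y_1\cdots Y_d, \qquad G_M=\prod_{i=1}^{d}\omega_M(H+\tilde c_i),
\end{equation*}
where each $\tilde c_i$ is $c_i$ plus the cumulative integer shift contributed by $Y_1,\ldots,Y_{i-1}$; each factor of $G_M$ is bounded by $1$ in operator norm and diagonal in the number basis.

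The central estimate is $\|(G_M-\idty)H^{-1}\|\to 0$, which I obtain by the telescoping identity
\begin{equation*}
G_M-\idty \;=\; \sum_{k=1}^{d}\Bigl(\prod_{i<k}\omega_M(H+\tilde c_i)\Bigr)\bigl(\omega_M(H+\tilde c_k)-\idty\bigr),
\end{equation*}
combined with $\|(\omega_M(H+\tilde c_k)-\idty)H^{-1}\|\to 0$ for each fixed $\tilde c_k$, which is the spectral calculation already carried out in the excerpt immediately before Proposition \ref{prop:5} (the prefactor products have operator norm at most $1$).

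For $T\in\sop$, the operator $HY_1\cdots Y_d$ lies in $O_M(\HH)$ by Remark \ref{remarkpoly} and Proposition \ref{ppalgebras}, so Proposition \ref{polymultcont} gives $HY_1\cdots Y_d\,T\in\sop\subset\tc$. Therefore
\begin{equation*}
\bigl|\tr[(Y_{M,1}\cdots Y_{M,d}-Y_1\cdots Y_d)T]\bigr|
= \bigl|\tr\bigl[(G_M-\idty)H^{-1}\cdot HY_1\cdots Y_d\,T\bigr]\bigr|
\le \|(G_M-\idty)H^{-1}\|\,\|HY_1\cdots Y_d\,T\|_1,
\end{equation*}
which tends to $0$, proving the claimed convergence in $\sopd$. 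The main technical nuisance is correctly bookkeeping the shifts $\tilde c_i$ during the commutation; once this is pinned down, the telescoping and the trace estimate are routine, and one does not need the full force of Proposition \ref{prop:8}.
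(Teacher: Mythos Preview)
Your reduction to monomials and the strategy of commuting the $\omega_M$-factors to the left is exactly the route the paper sketches. However, there is a transcription error and, more importantly, a genuine gap in the central estimate.

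First, the constant $c_{A^*}=+\tfrac12$ is incorrect: taking adjoints of $A_M=\omega_M(H-\tfrac12)A$ gives $A_M^*=A^*\omega_M(H-\tfrac12)=\omega_M(H-\tfrac32)A^*$, so the right value is $c_{A^*}=-\tfrac32$ (the second identity in \eqref{eq:22} is a typo). With your value the factorisation $Y_{M,1}\cdots Y_{M,d}=G_M\,Y_1\cdots Y_d$ is already false for $A_M^*A_M$, as a direct check on $|n\rangle$ shows.

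Second, and more substantially, the claim $\|(\omega_M(H+\tilde c_k)-\idty)H^{-1}\|\to 0$ fails whenever $\tilde c_k<-\tfrac12$. Since $\omega_M$ vanishes on negative arguments, for every integer $n\geq 0$ with $n+\tfrac12+\tilde c_k<0$ one has
$(\omega_M(H+\tilde c_k)-\idty)H^{-1}|n\rangle=-(n+\tfrac12)^{-1}|n\rangle$
independently of $M$, so the operator norm is bounded below by $2$ and cannot tend to zero. Such shifts do occur: with the correct $c$'s already $\tilde c_1=-\tfrac32$ for $Y_1=A^*$, and even with your $c$'s the monomial $A^*A$ produces $\tilde c_2=-\tfrac32$. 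Your telescoping bound therefore does not give $\|(G_M-\idty)H^{-1}\|\to 0$.

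The missing observation is that $G_M$ is only ever applied to vectors in the range of $Y_1\cdots Y_d$. A short combinatorial check (tracking the partial sums of the raising/lowering pattern of the word $Y_1\cdots Y_d$) shows that whenever $m+\tfrac12+\tilde c_i<0$ for some $i$ one also has $\langle m|Y_1\cdots Y_d=0$. Hence there is an $m_0$ depending only on the monomial with $\Pi_{\geq m_0}Y_1\cdots Y_d=Y_1\cdots Y_d$ and all $\omega_M$-arguments nonnegative on $\Pi_{\geq m_0}\HH$; inserting this projection gives
\[
\bigl|\tr[(G_M-\idty)Y_1\cdots Y_d\,T]\bigr|
\le\|(G_M-\idty)\Pi_{\geq m_0}H^{-1}\|\,\|HY_1\cdots Y_d\,T\|_1,
\]
and the first factor now does tend to zero by the spectral calculation you cite. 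Without this projection the argument breaks down.
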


If $\myrho \in \sop$ is a density operator (i.e. positive and normalized) the
trace $\tr(f(Q,P)\myrho)$ describes the expectation value of the observable
$f(Q,P)$ (provided it is selfadjoint or has a selfadjoint extension). Hence,
one way to interpret the previous proposition is to regard the operators $Q_M,
P_M$ as finite dimensional approximations of canonical position and momentum. We
will come back to this point in the next subsection.

% Before let us consider a
% supplement to Sect. \ref{sec:oper-as-distr}. We have seen the functional $\sop
% \ni \myrho \mapsto \tr(A \myrho)$ defines for polynomially bounded operators $A$ a
% distribution. But what if we multiply $A$ from the right? At least if $A$ is a
% polynomial in $Q$ and $P$ it follows immediately from the discussion of
% Section \ref{sec:schwartz-operators} that $\myrho A$ is again a Schwartz
% operator and the functional $\myrho \mapsto \tr(\myrho A)$ is continuous, hence a
% distribution. We can use the analysis from Prop. \ref{prop:5} to show that
% $\tr(\myrho A) = \tr(A \myrho)$ holds as expected (note that known results, like
% \cite[VI.25]{RSI} require $A$ to be bounded, hence this has to be proven if
% $A$ is a polynomial in $Q$ and $P$).

% \begin{cor}
%   For any polynomial $f(Q,P)$ in $Q$ and $P$ and a Schwartz operator $\myrho$ we have
%   $\tr(\myrho f(Q,P)) = \tr(f(Q,P) \myrho)$.
% \end{cor}

% \begin{proof}
%   Using Lemma \ref{lem:1} we can show as in the proof that the distributions
%   $\myrho \mapsto \tr(\myrho f(Q_N,P_N))$ converge in $\sopd$ to $\myrho \mapsto
%   \tr(\myrho f(Q,P))$. For each $N$, however, we have $\tr(\myrho f(Q_N,P_N)) =
%   \tr(f(Q_N,PN) \myrho)$ since $f(Q_N,PN)$ is a finite rank operator and
%   therefore bounded. The statement follows from Prop. \ref{prop:5}
% \end{proof}

\subsection{Approximations of position}
\label{sec:appr-posit}

Let us concentrate again on the position operator $Q$. According to the proposition just
proven all the moments
\begin{displaymath}
  m_n = \int_{\mathbb{R}} q^n \tr(\myrho E(dq)) = \tr(\myrho Q^n), \quad n \in \mathbb{N}
\end{displaymath}
exists for any Schwartz operator and the sequences
\begin{displaymath}
    m_{n,M} = \tr(\myrho Q^n_M), \quad M \in \mathbb{N}
\end{displaymath}
converge for each $n$ to $m_n$. Hence, if the measure $\mu(\Delta) = \tr(\myrho
E(\Delta))$ is uniquely determined by the moments, the measures $\mu_M(\Delta)
= \tr(\myrho E_M(\Delta))$, where
\begin{equation} \label{eq:6}
  E_M : \mathfrak{B}(\mathbb{R}) \rightarrow \mathcal{B}(\mathcal{H})
\end{equation}
are the spectral measures of the finite dimensional approximations $Q_M$,
converge weakly to $\mu$ (this is called the method of moments; cf
\cite{moments}). If we could show
this for all $\myrho \in \sop$ we could show that the $E_M$ converge -- as
distribution valued measures -- weakly to $E$. The problem with this reasoning
is that we do not know whether the measure $\mu$ is always (i.e. for each
$\myrho \in \sop$) uniquely defined by the moments (most likely this is not the
case). Fortunately, there is an independent argument to prove the desired
result.

\begin{prop} \label{prop:6}
  Consider the spectral measures $E_M$ (\ref{eq:6}) and $E$ (\ref{eq:15}) of
  the operators $Q_M$ and $Q$ respectively.
  \begin{itemize}
  \item[(a)] \label{item:5}
    For each interval $\Delta = (a,b)$ the operators $E_M(\Delta)$ converge
    strongly to $E(\Delta)$, and
 % \item[(b)] \label{item:6}
    $$\lim_{M\rightarrow\infty}\tr(E_M(\Delta) \myrho) = \tr(E(\Delta) \myrho)\quad  \text{ for each }\myrho \in
    \sop.$$
  \item[(b)] \label{item:7}
    As a distribution valued measure the $E_M$ converge weakly to $E$, i.e.
  \begin{equation}
    \lim_{M\rightarrow\infty} \int_{\mathbb{R}} f(q) \tr(\myrho E_M(dq)) =
    \int_{\mathbb{R}} f(q) \tr(\myrho E(dq))
  \end{equation}
  holds for each $\myrho \in \sop$ and each continuous
  function $f: \mathbb{R} \rightarrow \mathbb{C}$ vanishing at infinity.
    \end{itemize}
\end{prop}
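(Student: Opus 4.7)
The plan is to reduce both parts of the proposition to the classical theorem on strong resolvent convergence (SRC): if self-adjoint operators $Q_M$ converge to $Q$ in strong resolvent sense, then $f(Q_M)\to f(Q)$ strongly for every bounded continuous $f$, and $\chi_\Delta(Q_M)\to \chi_\Delta(Q)$ strongly for every interval $\Delta=(a,b)$ whose boundary points are not eigenvalues of $Q$. For the position operator $Q$ on $L^2(\Rl)$ the point spectrum is empty, so the latter condition is automatic for every $a<b$.

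The central step will therefore be to establish that $Q_M\to Q$ in strong resolvent sense. For this I would apply the standard criterion \cite[Thm.~VIII.25]{RSI}: since each $Q_M$ is bounded, $\sfunc{1}$ is trivially a core for every $Q_M$, and it is a classical fact that $\sfunc{1}$ is also a core for $Q$; it therefore suffices to show that $Q_M\psi\to Q\psi$ for every $\psi\in\sfunc{1}$. To prove this, I would use the factorisations \eqref{eq:22}, i.e. $A_M=A\,\omega_M(H+\idty/2)$ and $A_M^*=\omega_M(H+\idty/2)A^*$, together with the elementary estimate $|1-\omega_M(n)|\leq \sqrt{n/M}$ that is already in the excerpt. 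Expanding $(A_M-A)\psi$ in the number basis and exploiting the super-polynomial decay of the Hermite coefficients of a Schwartz vector, dominated convergence in $\ell^2$ will give $\|A_M\psi-A\psi\|\to 0$, and symmetrically $\|A_M^*\psi-A^*\psi\|\to 0$. Summing these produces $Q_M\psi\to Q\psi$ on $\sfunc{1}$.

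With SRC in hand, part~(a) is immediate: the uniformly bounded projections $E_M(\Delta)$ converge strongly to $E(\Delta)$ by the absence of point spectrum of $Q$, and the trace statement $\tr(E_M(\Delta)\myrho)\to \tr(E(\Delta)\myrho)$ then follows from the routine fact that $\tr(T_M\myrho)\to \tr(T\myrho)$ whenever $T_M\to T$ strongly with $\sup_M\|T_M\|<\infty$ and $\myrho\in\tc$; one checks this by approximating $\myrho$ in trace norm by a finite rank operator, controlling the error by the uniform bound on $\|T_M\|$, and passing to the limit termwise on the finite-rank piece. Part~(b) is proved in exactly the same way, now with $T_M=f(Q_M)$, which converges strongly to $f(Q)$ for $f\in C_0(\Rl)$ by the same theorem and carries the uniform bound $\|T_M\|\leq \|f\|_\infty$.

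The only step that is not purely mechanical is the verification of $Q_M\psi\to Q\psi$ on $\sfunc{1}$, and even this should present no serious obstacle: the rapid decay of the Hermite coefficients of Schwartz vectors dominates the $\sqrt{n/M}$ defect of the $\omega_M$ cut-off through a straightforward dominated convergence argument. Once strong resolvent convergence is in hand, the rest of the proof is a matter of invoking standard functional analytic facts about functions of self-adjoint operators and about traces against trace-class operators.
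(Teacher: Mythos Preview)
Your proposal is correct. For part~(a) it follows essentially the same line as the paper: establish strong resolvent convergence $Q_M\to Q$ via the core criterion \cite[Thm.~VIII.25]{RSI}, then invoke \cite[Thm.~VIII.24]{RSI} together with the absence of point spectrum of $Q$ to get strong convergence of $E_M(\Delta)$, and finally pass to traces. The only cosmetic difference is that the paper uses the \emph{finite} span of Hermite functions as the common core, which makes the verification of $Q_M\psi\to Q\psi$ a one-line pointwise limit rather than a dominated-convergence argument in $\ell^2$; your choice of $\sfunc{1}$ also works but needs the extra estimate you describe.

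For part~(b) your argument is genuinely shorter than the paper's. You apply the functional-calculus consequence of strong resolvent convergence directly: $f(Q_M)\to f(Q)$ strongly for every $f\in C_0(\Rl)$, and then the trace convergence follows by the same uniform-boundedness-plus-finite-rank-approximation device as in~(a). The paper instead avoids quoting that theorem and argues by hand: it first restricts to a compact interval using the vanishing of $f$ at infinity, then approximates $f$ by step functions on a fine partition, and finally feeds in part~(a) on each subinterval. Your route is cleaner; the paper's route is more self-contained and shows explicitly how~(b) reduces to~(a).
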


\begin{proof}
  From Eq. (\ref{eq:22}) it follows
  immediately that for each Hermite function $\psi_n$, $n \in \mathbb{N}$ the
  sequences $A_M \psi_n$ and $A_M^* \psi_n$, $M \in \mathbb{N}$ converge for $M
  \rightarrow \infty$ and fixed $n$ to $A\psi_n$ and $A^*\psi_n$. Furthermore
  it is well known that the operator $Q$ is self-adjoint and admits the
  space $F$ of finite linear combinations of Hermite functions as a core (cf. Example 2 in Sect. X.6 of \cite{RSII}). Since
  the operators $Q_M$ are bounded they are self-adjoint, too (they are
  obviously symmetric) and $F$ is again a core. Therefore, we can apply
  \cite[VIII.25]{RSI} to conclude that the $Q_M$ converge to $Q$ in the strong
  resolvent sense. Together with \cite[VIII.24]{RSI} this shows the first part of (a). Since the strong and
  $\sigma-$strong topology are identical on the unit ball in
  $\mathcal{B}(\mathcal{H})$ we get $\sigma$-strong convergence and since
  $\mathcal{B}(\mathcal{H}) \ni B \mapsto \tr(\myrho B) \in \mathbb{C}$ is
  $\sigma$-weakly continuous for all trace-class operators $\myrho$ we get
  $\lim_{M \rightarrow \infty} \tr(\myrho E_M(\Delta)) = \tr(\myrho E(\Delta))$. Hence we have proved (a).

  To prove (b) note first that it would follow automatically if (a) would hold for all Borel sets. However, since we have shown it only for intervals, (b) requires an additional
  argument. To this end consider first for $\epsilon>0$ an interval $I =
  [a,b]$ such that $\sup_{q \not\in I} |f(q)| < \epsilon/2$. This is always
  possible, since $f$ is (by assumption) vanishing at infinity. Since $\tr(\myrho E(\Delta))$ and $\tr(\myrho
  E_M(\Delta))$, are probabilitiy measures we have $\tr(\myrho E(\mathbb{R} \setminus I)) \leq 1$ and $\tr(\myrho
  E_M(\mathbb{R} \setminus I)) \leq 1$. Hence
  \begin{displaymath}
    \int_{\mathbb{R} \setminus I} |f(q)| \tr(\myrho E(dq)) < \frac{\epsilon}{2} \quad \int_{\mathbb{R} \setminus I}
    |f(q)| \tr(\myrho E_M(dq)) < \frac{\epsilon}{2},
  \end{displaymath}
  this leads to
  \begin{align*}
    \Biggl| \int_{\mathbb{R}} f(q) \tr(\myrho E(dq))  &- \int_{\mathbb{R}} f(q) \tr(\myrho E_M(dq)) \Biggr| \\
    % &= \Biggl| \int_{\mathbb{R} \setminus I} f(x) \tr(\myrho E(dx)) - \int_{\mathbb{R} \setminus I} f(x) \tr(\myrho
    % E_M(dx)) + \\
    % &\phantom{= \Biggl| \int_{\mathbb{R} \setminus I}} + \int_a^b f(x) \tr(\myrho E(dx)) -
    % \int_a^b f(x) \tr(\myrho E_M(dx)) \Biggr | \\
    % &\leq \Biggl| \int_{\mathbb{R} \setminus I} f(x) \tr(\myrho E(dx))\Biggr| + \Biggl| \int_{\mathbb{R} \setminus I} f(x)
    % \tr(\myrho E(dx))\Biggr| \\
    % &\phantom{= \Biggl| \int_{\mathbb{R} \setminus I}} + \Biggl| \int_a^b f(x)
    % \tr(\myrho E(dx)) - \int_a^b f(x) \tr(\myrho E_M(dx)) \Biggr| \\
    % &\leq \int_{\mathbb{R} \setminus I} |f(x)| \tr(\myrho E(dx)) + \int_{\mathbb{R} \setminus I} |f(x)|
    % \tr(\myrho E(dx)) \\
    % &\phantom{= \Biggl| \int_{\mathbb{R} \setminus I}} + \Biggl| \int_a^b f(x) \tr(\myrho E(dx)) - \int_a^b f(x)
    % \tr(\myrho E_M(dx)) \Biggr| \\
    & \leq \epsilon + \Biggl| \int_a^b f(q) \tr(\myrho E(dq)) - \int_a^b f(q) \tr(\myrho E_M(dq)) \Biggr|
  \end{align*}
  which shows that we can restrict our analysis to integrals over the interval $I$.

  To estimate the latter let us choose for each $n \in \mathbb{N}$ a partition $P_n$ of $I$ into $n$ subintervals
  of equal length. Without loss of generality we will assume now that $f$ is
  real valued (otherwise treat real and imaginary part separately). The $P_n$
  give then rise to a sequence of step functions $f_n: [a,b] \rightarrow
  \mathbb{R}$ which are defined by   $f_n(q) = \inf_{y \in J}f(y)$ for $q \in J$
  and $J \in P_n$. It follows immediately that $\lim_{n \rightarrow \infty}
  f_n(q) = f(q)$ and $f_n(q) \leq f(q)$. Dominated convergence therefore implies that there is an
  $n_{\epsilon,1}$ with
  \begin{displaymath}
    \left| \int_a^b f(q) \tr(\myrho E(dq)) - \int_a^b f_n(q) \tr(\myrho E(dq)) \right| < \frac{\epsilon}{2}
    \quad \forall n > n_{\epsilon,1}.
  \end{displaymath}
  Furthermore, since $f$ is continuous and $I$ compact, we can find another
  $n_{\epsilon,2} \in \mathbb{N}$ such that
  \begin{equation}\label{eq:16}
    \sup_{J \in P_n} (\sup_{y \in J} f(y) - \inf_{y \in J} f(y)) <
    \frac{\epsilon}{2 (\tr(\myrho E(I)) +1)} \quad \forall n > n_{\epsilon,2}.
  \end{equation}
  For the rest of the proof we choose one fixed $n > \max(n_{\epsilon,1},n_{\epsilon,2})$.

  Now let us come back to (b). There is an $M_\epsilon \in \mathbb{N}$ such that $M > M_\epsilon$
  implies
  \begin{equation} \label{eq:17}
    \max_{J \in P_n} |\tr(\myrho E(J)) - \tr(\myrho E_M(J))| < \frac{\epsilon}{2n},
  \end{equation}
  and therefore
  |$\tr(\myrho E(I)) - \tr(\myrho E_M(I))| < \epsilon/2$. If we require in addition $\epsilon < 2$ we see that in
  particular $\tr(\myrho E_M(I)) <  \tr(\myrho E(I)) + 1$ holds for all $M > M_\epsilon$. Hence with (\ref{eq:16})
  we get
  \begin{displaymath}
    \left| \int_a^b f_n(q) \tr(\myrho E_M(dq)) - \int_a^bf(x) \tr(\myrho E_M(dq)) \right| < \frac{\epsilon}{2}.
  \end{displaymath}
  Furthermore we get
  \begin{displaymath}
    \left| \int_a^b f_n(q) \tr(\myrho E_M(dq)) - \int_a^b f_n(q) \tr(\myrho E(dq)) \right| < \frac{\epsilon}{2}.
  \end{displaymath}
  from the bound (\ref{eq:17}).

  Now we are abe to pick up all the building blocks. For $M > M_\epsilon$ we have
  \begin{align*}
    \Biggl| \int_a^b f(q) \tr(\myrho E(dq))&- \int_a^b f(q) \tr(\myrho E_M(dq)) \Biggr| \leq \\
    & \leq \Biggl| \int_a^b f(q) \tr(\myrho E(dq)) - \int_a^b f_n(q) \tr(\myrho E(dq)) \Biggr| \\
    & \phantom{\Biggl| \int_a^b} + \Biggl| \int_a^b
    f_n(x) \tr(\myrho E_M(dq)) - \int_a^b f(q) \tr(\myrho E_M(dq)) \Biggr| \\
    &< \epsilon,
  \end{align*}
  what was to show.
\end{proof}

Weak convergence usually does not imply pointwise convergence. We can still find a sequence of (rescaled!)
projections $E_{M_k}(\Delta_{M_k})$, $k \in \mathbb{N}$ which converges in $\sopd$ to $\varepsilon_q$ for some
$q$. More precisely the following Corollary holds:

\begin{cor}
  For each $q \in \mathbb{R}$ there are sequences $(M_k)_k$ and $(I_k)_k$ of positive integers $M_k$ and
  intervals $I_k = [a_k,b_k]$ such
  that
  \begin{itemize}
  \item[(i)]
    For all $k$ we have $a_k < a_{k+1} < b_{k+1} < b_k$.
  \item[(ii)]
    $\lim_{k \rightarrow \infty} a_k = q = \lim_{k\rightarrow\infty} b_k$
  \item[(iii)]
    The sequence of operators $(b_k - a_k)^{-1}(E_{M_k}(I_k))$ converges in $\sopd$ to $\varepsilon_q$.
  \end{itemize}
\end{cor}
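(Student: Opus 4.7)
The plan is to combine a two-stage approximation with a diagonal argument. Choose nested intervals $I_k = [a_k, b_k]$ with $a_k \nearrow q$ and $b_k \searrow q$ (e.g.\ $I_k = [q - 1/k, q + 1/k]$), securing (i) and (ii) automatically. The remaining task is to construct $(M_k)$ so that $\Phi_k := (b_k - a_k)^{-1} E_{M_k}(I_k)$ converges to $\varepsilon_q$ in the weak-$*$ topology of $\sopd$, i.e.\ $\Phi_k(\rho) \to \varepsilon_q(\rho)$ for every $\rho \in \sop$.

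As a preliminary, consider the auxiliary sequence $\Phi_k' := (b_k - a_k)^{-1} E(I_k) \in \sopd$. By Corollary \ref{cor:1}, for any $\rho \in \sop$,
\[
\Phi_k'(\rho) \;=\; \frac{1}{b_k - a_k} \int_{a_k}^{b_k} K^\rho(q', q')\, dq'.
\]
Since $\rho \mapsto K^\rho$ is a topological isomorphism $\sop \to \sfunc{2N}$ (Prop.\ \ref{isomorphisms}), restriction to the diagonal is continuous, so $\sup_{q'} |K^\rho(q', q')|$ is a continuous seminorm on $\sop$; this dominates $|\Phi_k'(\rho)|$ uniformly in $k$, making $\{\Phi_k'\}$ equicontinuous. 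Continuity of $K^\rho|_{\mathrm{diag}}$ at $q$ then gives $\Phi_k'(\rho) \to K^\rho(q,q) = \varepsilon_q(\rho)$ (the last equality via Prop.\ \ref{prop:7}(a)), so $\Phi_k' \to \varepsilon_q$ in $\sopd$.

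Next comes the diagonal argument. Since $\sop \cong \sfunc{2N}$ is separable, fix a countable dense subset $\{\rho_j\}_{j \geq 1}$. By Prop.\ \ref{prop:6}(a), for each fixed $k$ and each $j$, $\tr(\rho_j E_M(I_k)) \to \tr(\rho_j E(I_k))$ as $M \to \infty$; pick $M_k > M_{k-1}$ large enough that
\[
\bigl|\tr\bigl(\rho_j [E_{M_k}(I_k) - E(I_k)]\bigr)\bigr| < (b_k - a_k)/k \qquad \text{for } j = 1, \ldots, k.
\]
Then for each $j$, $|\Phi_k(\rho_j) - \Phi_k'(\rho_j)| < 1/k$ whenever $k \geq j$, and combining with the preliminary step yields $\Phi_k(\rho_j) \to \varepsilon_q(\rho_j)$.

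The main obstacle is promoting this pointwise convergence on $\{\rho_j\}$ to pointwise convergence on all of $\sop$, which is exactly weak-$*$ convergence in $\sopd$. The natural route is to establish equicontinuity of the family $\{\Phi_k\}$; together with the preceding step a standard $3\epsilon$-approximation then finishes the argument. By the splitting $\Phi_k = \Phi_k' + (\Phi_k - \Phi_k')$ and the equicontinuity of $\{\Phi_k'\}$ already established, equicontinuity of $\{\Phi_k\}$ reduces to a uniform bound on the remainder. For positive $\rho$, Prop.\ \ref{prop:1} gives $\sqrt\rho \in \sop \subset \hs$, so
\[
\Phi_k(\rho) - \Phi_k'(\rho) \;=\; \frac{1}{b_k - a_k}\Bigl(\|E_{M_k}(I_k)\sqrt\rho\|_2^2 - \|E(I_k)\sqrt\rho\|_2^2\Bigr),
\]
and Hilbert-Schmidt norm convergence $E_{M_k}(I_k)\sqrt\rho \to E(I_k)\sqrt\rho$ follows from strong convergence of the $E_M(I_k)$ combined with compactness of $\sqrt\rho$. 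Refining the diagonal choice of $M_k$ so that the relevant HS-differences on the dense subset are $o(b_k - a_k)$, and then handling a general $\rho \in \sop$ via decomposition into four positive parts (as in the proof of Cor.\ \ref{tracecycling}) together with linearity, should supply the uniform bound required for equicontinuity.
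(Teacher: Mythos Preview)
Your two-stage splitting and the treatment of $\Phi_k' \to \varepsilon_q$ mirror the paper's argument (which uses the mean-value identity for $\tr(E(I)\rho)$ via Cor.~\ref{cor:1} to the same end). You are also right to insist on $\rho$-uniformity: the paper's choices of $n_k$ and $M_k$ are written as if they may depend on a fixed test operator, and your diagonal construction over a dense set is a reasonable attempt to repair this.

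The genuine gap is in your last paragraph. Controlling the HS-differences $\|(E_{M_k}(I_k)-E(I_k))\sqrt{\rho_j}\|_2$ only for $\rho_j$ in the countable dense set does \emph{not} yield equicontinuity of $\{\Phi_k-\Phi_k'\}$ on $\sop$: when you decompose a general $\rho$ into four positive parts, those parts are not elements of $\{\rho_j\}$, so the diagonal bound says nothing about them, and approximating them in $\sop$ does not help because any approximation error gets amplified by the diverging factor $(b_k-a_k)^{-1}$. The clean fix is to drop the dense-set argument for the remainder and use compactness directly. Since $H^{-1}$ is compact (eigenvalues $(n+\tfrac12)^{-1}\to 0$) and $E_M(I_k)\to E(I_k)$ strongly with uniformly bounded norm (Prop.~\ref{prop:6}(a)), one has $\|(E_M(I_k)-E(I_k))H^{-1}\|\to 0$ for each fixed $k$. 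Choose $M_k$ so that $\|(E_{M_k}(I_k)-E(I_k))H^{-1}\|<(b_k-a_k)/k$; then for every $\rho\in\sop$,
\[
\bigl|(\Phi_k-\Phi_k')(\rho)\bigr| \;=\; (b_k-a_k)^{-1}\Bigl|\tr\bigl[(E_{M_k}(I_k)-E(I_k))H^{-1}\cdot H\rho\bigr]\Bigr| \;\le\; \tfrac{1}{k}\,\|H\rho\|_1 \;\longrightarrow\; 0,
\]
which gives weak-$*$ convergence of the remainder for all $\rho$ at once, with no positive decomposition or separability needed.
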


\begin{proof}
  We have to construct the sequence $E_{M_k}(I_k)$ such that for each positive $\myrho \in \sop$ and each
  $\epsilon > 0$ there is a $K_{\myrho,\epsilon}$ with
  \begin{equation} \label{eq:18}
    \left| \frac{\tr(E_{M_k}(I_k) \myrho)}{b_k - a_k} - \varepsilon_q(\myrho) \right| < \epsilon \quad \forall k >
    K_{\myrho,\epsilon}
  \end{equation}
  holds. Now consider the kernel $\kernel{\myrho}$ of $\myrho$. According to Cor. \ref{cor:1} we have
  \begin{displaymath}
    \tr(E(I) \myrho) = \int_a^b \kernel{\myrho}(q,q) dq = \bigl(\Re \kernel{\myrho}(\xi_r,\xi_r) + i \Im \kernel{\myrho}(\xi_i,\xi_i)\bigr) (b-a)
  \end{displaymath}
  for $\xi_r, \xi_i \in [a,b]$. Hence, for any pair of sequences $(a_n)_n$ and $(b_n)_n$ converging to $q$
  such that $(a_n)_n$ is strictly increasing and $(b_n)_n$ is strictly decreasing we have
  \begin{displaymath}
    \lim_{n \rightarrow \infty} \frac{\tr(E(I_n) \myrho)}{b_n - a_n} = \kernel{\myrho}(q,q) = \varepsilon_q(\myrho) \quad
    I_n = [a_n,b_n].
  \end{displaymath}
  Now choose for $k \in \mathbb{N}$ the index $n_k \in \mathbb{N}$ such that
  \begin{displaymath}
    \left| \frac{\tr(E(I_{n_k}) \myrho)}{b_{n_k} - a_{n_k}} - \varepsilon_q(\myrho) \right| < \frac{1}{2k},
  \end{displaymath}
  and for this fixed $n_k$ choose $M_k \in \mathbb{N}$ with (cf. Prop. \ref{prop:6}):
  \begin{displaymath}
    \left| \tr(E(I_{n_k}) \myrho) - \tr(E_{M_k}(I_{n_k})\right| < \frac{b_{n_k} - a_{n_k}}{2k}.
  \end{displaymath}
  Together we get
  \begin{align*}
    \Biggl| \frac{\tr(E_{M_k}(I_{n_k}) \myrho)}{b_{n_k} - a_{n_k}} - \varepsilon_q(\myrho) \Biggr| \leq \Biggl|
      \frac{\tr(E_{M_k}(I_{n_k}) \myrho)}{b_{n_k} - a_{n_k}} &- \frac{\tr(E(I_{n_k}) \myrho)}{b_{n_k} - a_{n_k}}
    \Biggr| \\
    &+ \Biggl| \frac{\tr(E(I_{n_k}) \myrho)}{b_{n_k} - a_{n_k}} - \varepsilon_q(\myrho) \Biggr| < \frac{1}{n},
  \end{align*}
  which implies (\ref{eq:18}) if we choose $I_k = I_{n_k}$.
\end{proof}

In other words: averaged densities of the measures $E_{M_k}$ converge weakly to the densities
$\epsilon_q$.
% In Fig. we have demonstrated this by evaluating the densities
% for several $N_K$ on the oscillator ground state $\ket{0}$. The intervals
% $I_k$ are choosen here such that they contain exactly one eigenvalue of
% $Q_{N_k}$.

% \begin{figure}[h]
%     \centerline{\includegraphics[scale=.75]{OrtEF-2.png}}
% \vspace*{13pt}
% \caption{The figure shows averaged probability densities for the observables
%   $Q_N$ in the osczillator ground states. The curves clearly indicate
%   convergence towards a delta function.}
% \end{figure}

\section{Conclusions}

We have introduced Schwartz operators as a non-commutative analog of Schwartz
functions and have seen that they form a very well behaved class which can be
applied in many contexts of quantum mechanics. For  many unbounded
observables they allow in particular an easy discussion of  expectation values
and related concepts which now can basically be treated in the same way as we would do for bounded observables.

In contrast to other well behaved sets of states like Gaussian
states, Schwartz operators are dense in the set of trace class operators and
can therefore approximate a general quantum state with arbitrary
precision. Therefore one main message of our paper is: If Gaussian states are
too special for your purpose, try Schwartz operators as your next best choice.

If less regular objects have to be discussed Schwartz operators can also be of use; the associated dual (containing more general objects than operators) can be regarded as a
non-commutative version of ordinary tempered distributions. A large family
of quadratic forms is covered, and therefore constructions like products with
bounded and unbounded operators, and harmonic analysis are made available to
otherwise very singular objects. We have formulated non-commutative analogs of some selected elements of the theory of tempered distributions, in particular multiplication by polynomially bounded operators, the distributional derivative, the regularity theorem, Fourier transform and convolutions. Naturally, there are many other topics from the theory of distributions having counterparts in our non-commutative setting; developing a more comprehensive theory is however clearly beyond the scope of a single paper.

% \subsection{Physical interpretation}

% \todo Write this subsection.

% \subsection{From projections to distributions}

% Finally we want to use the convergence results from the preceding subsection to exhibit finite-dimensional approximation of the "eigenprojections" $\varepsilon_x$ of $Q$ discussed in Section \ref{}. Hence
% consider a sequence $(x_N)_N$ such that $X_N$ is in the spectrum of $E_N$ and
% $\lim_{N \rightarrow \infty} x_N = x$ for an $x \in \mathbb{R}$. All $E_N$ are
% finite rank. Hence their spectrum is pure point and we assume that all
% eigenvalues are non-degenerate [\todo Maybe we can show that this has to be
% the case]. For each $N$ let $\psi_N$ be the normalized eigenvector $E_N \psi_n
% = x_N \psi_N$. [\todo We have to be more careful about $x_N = 0$ and
% $x=0$. This requires a better treatment of the cut-offs $\theta_N$.]  The
% projections $F_N = \kettbra{\psi_N}$ are distributions, and they converge in
% $\sopd$ to $\varepsilon_x$.

% \begin{thm}
%   The sequence of one-dimensional projections $(F_N)_N$ converge in $\sopd$ to
%   $\varepsilon_x$, i.e.
%   \begin{displaymath}
%     \lim_{N \rightarrow \infty} \langle \psi_N, \myrho \psi_N\rangle =
%     \varepsilon_x(\myrho)
%   \end{displaymath}
%   holds for all Schwartz operators $\myrho \in \sop$.
% \end{thm}

% \begin{proof}
%   \todo Depends on previous results. I have a very concrete idea to do this proof.
% \end{proof}

% \todo Discuss this physically, in particular wrt the fluctuation operator stuff.

\section*{Acknowledgements} J.K. acknowledges support from the European CHIST-ERA/BMBF project CQC, and the \\ EPSRC project EP/J009776/1.

\end{document}